\newtheorem{theorem}{Theorem}[section]
\newtheorem{definition}{Definition}
\newtheorem{lemma}[theorem]{Lemma}
\newtheorem{fact}[theorem]{Fact}
\newtheorem{corollary}[theorem]{Corollary}
\newtheorem{example}{Example}
\newtheorem{remark}{Remark}
\newcommand{\shortcite}{\cite}
\newcommand{\qed}{\mbox{\ \ \ }\rule{6pt}{7pt} \bigskip}
\renewcommand{\comment}[1]{}
\newenvironment{proof}{\noindent{\em Proof:}}{\hfill\qed}
\newcommand{\hidecite}[1]{}
\newcommand{\argmax}{\operatorname{argmax}}
\newcommand{\argmin}{\operatorname{argmin}}
\DeclareMathOperator*{\limessinf}{lim\,ess\,inf}
\DeclareMathOperator*{\limesssup}{lim\,ess\,sup}
\DeclareMathOperator*{\esssup}{ess\,sup}
\DeclareMathOperator*{\essinf}{ess\,inf}
\newcommand{\macroPr}[1]{\Pr\left(#1\right)}
\newcommand{\macroCPr}[2]{\macroPr{\left.#1\right|#2}}
\newcommand{\val}{v}
\newcommand{\vali}[1][i]{{\val_{#1}}}
\newcommand{\valj}[1][j]{{\val_{#1}}}
\newcommand{\Val}{V}
\newcommand{\Vali}[1][i]{{\Val_{#1}}}
\newcommand{\price}{P}
\newcommand{\pricei}[1][i]{{\price_{#1}}}
\newcommand{\psc}{\mathcal P}
\newcommand{\psci}[1][i]{\mathcal P_{#1}}
\newcommand{\pibar}{\bar \pi}
\newcommand{\optMidrRedn}{\text{{\tt MIDRtoMech}}}
\newcommand{\BKSRedn}{\text{{\tt SPtoMechBKS}}}
\newcommand{\expect}{\mathbf{E}}
\newcommand{\asc}{\mathcal A}
\newcommand{\asci}{\mathcal A_i}
\newcommand{\alphap}{\alpha_P}
\newcommand{\alphar}{\alpha_R}
\newcommand{\alphaw}{\alpha_W}
\newcommand{\real}{\mathbf R}
\newcommand{\fbks}{{BKS}}
\newcommand{\mub}[1][b]{\mu_{#1}}
\newcommand{\mubpsci}{\nu_{b,i}}
\newcommand{\rhob}[1][\mu]{{\rho_{b}^{#1}}}
\newcommand{\rhobi}[1][\mu]{{\rho_{b,i}^{#1}}}
\renewcommand{\Re}{{\mathbb R}}
\newcommand{\pr}{{\operatorname{Pr}}}
\newcommand{\var}{{\operatorname{Var}}}
\newcommand{\me}{\mathcal{M}}
\newcommand{\al}{A}
\newcommand{\pfn}{f_i}
\newcommand{\ocm}{o}
\newcommand{\Ocm}{\mathcal{O}}
\newcommand{\tyi}{t_i}
\newcommand{\Ty}{T}
\newcommand{\Tyi}{T_i}
\newcommand{\Dist}{\mathcal{D}}
\newcommand{\dist}{D}
\newcommand{\ex}{\mathbf{E}}
\newcommand{\TCSeas}{}
\newcommand{\TCSdd}{}
\newcommand{\TCSe}{}
\newcommand{\TCSeafeed}{}
\newcommand{\rect}{\mathbb{S}}
\begin{document}
\title{Single-Call Mechanisms}

\author{Christopher A. Wilkens\thanks{Computer Science Dept., University of
California at Berkeley. {\tt cwilkens@cs.berkeley.edu}. Supported in part by NSF grant CC-0964033 and by a Google University Research Award. Part of this work was done while the author was an intern at Microsoft Research, Redmond, WA.}
\and Balasubramanian Sivan\thanks{Computer Sciences Dept., University
 of Wisconsin - Madison.  {\tt balu2901@cs.wisc.edu}. Supported in part by NSF award 
  CCF-0830494. Part of this work was done while the author was an intern at Microsoft Research, Redmond, WA.}
}
\date{}
\maketitle{}

\thispagestyle{empty}

\begin{abstract}
Truthfulness is fragile and demanding. It is oftentimes computationally harder than solving the original problem. Even worse, truthfulness can be utterly destroyed by small uncertainties in a mechanism's outcome. One obstacle is that {\em truthful payments depend on outcomes other than the one realized}, such as the lengths of non-shortest-paths in a shortest-path auction. Single-call mechanisms are a powerful tool that circumvents this obstacle --- they implicitly charge truthful payments, guaranteeing truthfulness in expectation using only the outcome realized by the mechanism. The cost of such truthfulness is a trade-off between the expected quality of the outcome and the risk of large payments.

We largely settle when and to what extent single-call mechanisms are possible. The first single-call construction was discovered by Babaioff, Kleinberg, and Slivkins~\cite{BKS10} in single-parameter domains. They give a transformation that turns any monotone, single-parameter allocation rule into a truthful-in-expectation single-call mechanism. Our first result is a natural complement to~\cite{BKS10}: we give a new transformation that produces a single-call VCG mechanism from any allocation rule for which VCG payments are truthful. Second, in both the single-parameter and VCG settings, we precisely characterize the possible transformations, showing that that a wide variety of transformations are possible but that all take a very simple form. Finally, we study the inherent trade-off between the expected quality of the outcome and the risk of large payments. We show that our construction and that of~\cite{BKS10} simultaneously optimize a variety of metrics in their respective domains.

Our study is motivated by settings where uncertainty in a mechanism renders other known techniques untruthful. As an example, we analyze pay-per-click advertising auctions, where the truthfulness of the standard VCG-based auction is easily broken when the auctioneer's estimated click-through-rates are imprecise.

\end{abstract}

\section{Introduction}
\label{sec:intro}

In their seminal work that sparked the field of Algorithmic Mechanism Design,
Nisan and Ronen~\cite{NR01} made a striking observation: na\"{\i}vely
computing VCG payments for shortest-path auctions requires computing ``$n$
versions of the original problem.'' In their case, it requires solving $n+1$
different shortest path problems in a network. Over the next decade, as
researchers studied computation in mechanisms, they repeatedly noticed that
computing payments is harder than solving the original problem. Babaioff et
al.~\cite{BBNS08} exhibited a problem for which deterministic truthfulness
is precisely $(n+1)$-times harder than the original problem. In the case of
Nisan and Ronen's own path auction, Hershberger et al.~\cite{HSB07} showed
that computing VCG prices for a directed graph requires time equivalent to
$\sqrt n$ shortest path computations.\footnote{ Interestingly, the undirected
case is easier. Hershberger and Suri~\cite{HS01,HS02} show that it only
requires time equivalent to a single shortest-path computation. Their work is
orthogonal to our own --- single-call mechanisms achieve truthfulness in a
limited-information setting using only one shortest-path computation,
while~\cite{HSB07,HS01,HS02} assume complete information and study an
algorithmic problem.}

Surprisingly, Babaioff, Kleinberg, and Slivkins~\cite{BKS10} recently
showed that randomization eliminated this difficulty for a large class of
problems. They showed that, if in a single-parameter domain payments need only
be {\em truthful in expectation}, then they may be computed by solving the
original problem only once. They apply their result to Nisan and Ronen's path
auctions to get a truthful-in-expectation mechanism that uses precisely one
shortest-path computation and chooses the shortest path with probability
arbitrarily close to 1. We call this a {\em single-call mechanism}.

The usefulness of Babaioff, Kleinberg, and Slivkins' result goes far beyond
speeding up computation: {\em Their construction enables truthfulness in cases
in which computing ``$n$ versions of the original problem'' is informationally
impossible.}  To use again the Nisan-Ronen path auction, suppose that the graph
represents a packet network with existing traffic. In this case, the actual
transit times (i.e. costs to edges) may be increased by congestion. While it is
possible to estimate congestion ex ante, it is generally impossible to precisely
know its effect without transmitting a packet and explicitly measuring its
transit time. Unfortunately, since VCG prices depend on the transit times for
many different paths, na\"{\i}vely computing them will inherit any estimation
errors. Even worse, when bidders have conflicting beliefs about such errors,
{\em na\"{\i}vely computing``VCG'' prices with bad estimates may not guarantee
truthfulness} even if the errors are small enough that they not affect the path
chosen by the mechanism. In such a case, truthfulness may be regained using a
mechanism that only requires measurements along a single path, that is, a
mechanism that only requires measurements returned by a single call to the
shortest-path algorithm. We will concretely demonstrate this phenomenon later
using an example based on pay-per-click advertising auctions.

An important question arises then:  {\em In which mechanism design problems, and
to what extent, are single-call mechanisms possible?}  In this paper we study,
and largely settle, this question.  First, we show that this it is possible to
transform any mechanism that charges VCG prices in expectation into a roughly
equivalent single-call mechanism.   While similar in spirit to~\cite{BKS10}, our
reduction charges prices that are fundamentally different from the mechanism in
that paper --- they do not coincide even when applied to the same allocation
rule. Second, we give characterization theorems, delineating precisely the
single-call mechanisms that are possible, for both the VCG and single-parameter
settings. Finally, single-call constructions offer a tradeoff between
expectation and risk. Our characterization theorems allow us to derive lower
bounds on this tradeoff, establishing that our VCG construction and the
construction of~\cite{BKS10} are optimal in a general sense.

\paragraph{Mechanisms, Allocations, and Payments} One cornerstone of mechanism
design is the decomposition of a mechanism into two distinct parts: an
allocation function and a payment function. This approach has borne much fruit
--- it first revealed fundamental relationships between allocation functions and
their nearly unique truthful prices, and it subsequently allowed researchers to
study the the two problems in isolation. Like~\cite{BKS10}, we leverage this
decomposition to study payment techniques that apply to large classes of
allocation functions --- naturally, our primary requirement is that the
allocation function may only be evaluated once.

We will focus on single-call mechanisms for two classes of allocation functions
that, together, comprise most allocation functions for which truthful payments
are known: {\em monotone single-parameter functions} and {\em maximal in
distributional range (MIDR) functions.}

An allocation function is said to be {\em single-parameter} if an agent's bid
can be expressed as a single number. This setting was first studied by
Myerson~\cite{M81} in the context of single-item auctions. Subsequent
generalizations showed that truthful prices existed if and only if a
single-parameter allocation is monotone and provided an explicit
characterization of truthful payments. We will use one such characterization
developed by Archer and Tardos~\cite{AT01}.

An allocation function is said to be {\em maximal in distributional range}
(MIDR) if, for some fixed set of distributions over outcomes, the allocation
always chooses one that maximizes the social welfare of the bidders. MIDR
allocation functions are important because they are precisely the ones for which
VCG payments are truthful~\cite{DD09}.

\paragraph{Truthfulness Under Uncertainty} Our motivation for developing and
optimizing single-call mechanisms comes from scenarios where nature prohibits
computing an allocation more than once, most often due to parameter uncertainty.
We give a few examples here; more generally, we conjecture that most mechanism
design problems have similar variants.

In the uncertain shortest-path auction described earlier, truthful prices will
depend on the incremental effect of transit times adjusted for congestion. If
the auctioneer generates the network traffic, he may be able to predict the
congestion in an edge better than the edge itself and use this prediction when
computing the shortest path. However, each edge may individually disagree with
the auctioneer's estimate, and these beliefs are generally unknown to the
auctioneer. If the auctioneer were to simply compute VCG payments by combining
his estimates with players' bids, the prices would likely not be truthful. On
the other hand, we can require that payments are computed using measured transit
times instead of estimates; however, it is informationally impossible to know
the precise delay along edges that were not actually traversed. A single-call
mechanism sidesteps this hurdle by using only the delays along traversed edges
for which the delay had been precisely known.

Machine scheduling offers another application for single-call mechanisms. In
some applications (e.g. cloud services), it is common for machines to bid in
terms of cost per unit time (or other resource). It is then the responsibility
of the scheduler to estimate the time required for the job on that machine. If
the scheduler's estimates differ from a machine's belief about a job's runtime,
then we find ourselves in the same situation as the path auction --- the
standard truthful prices for this single-parameter setting will depend on
machines' beliefs about the runtimes of jobs under alternate schedules. A
single-call mechanism sidesteps this problem because it requires only the
runtimes of jobs under the schedule chosen by the mechanism, which may be
measured.

Another interesting example arises in the application of learning procedures
such as multi-arm-bandits (MABs). In recurring mechanisms, it is natural for the
auctioneer to run a learning algorithm across multiple auctions. For example,
when an online advertising auction is repeated, the auctioneer tries to learn
the likelihood that a particular ad will get clicked. Computing truthful prices
requires knowing what would have happened if the learner had been initialized
with a different set of bids. This setting was the original motivation
of\hidecite{ Babaioff, Kleinberg, and Slivkins}~\cite{BKS10}, where they showed
that their single-call construction allowed a MAB to be implemented truthfully
with $O(\sqrt T)$ regret. This contrasts with results of Babaioff, Sharma, and
Slivkins~\cite{BSS09}  and Devanur and Kakade~\cite{DK09} who showed
that any universally truthful mechanism must have regret at least
$\Omega(T^{\frac23})$ for different measurements of regret.

Finally, in Section~\ref{sec:apps} we analyze {\em single-shot pay-per-click
(PPC) advertising auctions}. A PPC advertising auction ranks bidders using their
pay-per-click bid (i.e. they only pay when they receive a click) and an estimate
of the probability of a click (the click-through rate, or CTR). If the bidders'
estimates of their own CTRs are different from the auctioneer's, truthful prices
necessarily depend on bidders' beliefs about the CTRs, which are unknown.

\paragraph{Single-Call Mechanisms and Reductions}  Our tool for creating
single-call mechanisms is {\em the single-call reduction}, the main object of
study in this paper. A {single-call reduction} is a transformation that takes an
allocation function as a black box and produces a truthful-in-expectation
mechanism that calls the allocation function once.  Since the expected payment
is equal to the truthful payment for the resulting mechanism, the payments are
dubbed {\em implicit}.

Babaioff, Kleinberg, and Slivkins~\cite{BKS10} discovered such a reduction
for single-parameter domains. Using only the guarantee that the black-box
allocation rule is monotone, their reduction produces a truthful-in-expectation
mechanism that implements the same outcome as the original allocation rule with
probability arbitrarily close to 1.\footnote{The authors of~\cite{BKS10} have
observed that their construction may be extended to any domain where the bid
space is convex.}

VCG is a mechanism design framework much broader than single-parameter.  {\em
Can we construct similar single-call mechanisms that charge VCG prices?}  We
answer this in the affirmative by giving a reduction producing, for any MIDR
allocation function, a single-call mechanism that charges VCG prices in
expectation. Analogous to~\cite{BKS10}, our reduction transforms any MIDR
allocation rule into a truthful-in-expectation mechanism that implements the
same outcome as the original allocation rule with probability arbitrarily close
to 1. However, our construction is fundamentally different in that the
distribution of payments does not coincide with~\cite{BKS10} when an allocation
is both MIDR and single-parameter. This reduction can guarantee truthfulness in
multi-parameter mechanisms with uncertainty, as described above, and can also be
used to speed up payment computation in MIDR settings like Dughmi and
Roughgarden's~\cite{DR10} truthful FPTAS for welfare-maximization packing
problems.

We next ask {\em what single-call reductions are possible}?  Babaioff et
al.\hidecite{~\cite{BKS10}} generalize to a class of self-resampling procedures.
Subsequent research~\cite{H11} generalized further (and simplified
substantially), but concisely characterizing single-call reductions remained an
open question. We give tight characterization theorems, showing that a wide
variety of reductions are possible and that payments have a very simple
characterization in both scenarios. The key technical idea is a simple proof
equating a reduction's expected payments with those required for truthfulness,
giving a sharp characterization of the parameters in the reduction. Our
technique is a very simple alternative to the contraction mapping argument
in~\cite{BKS10}.

Finally, we ask {\em what are the best single-call reductions}? As noted above,
known single-call reductions choose an outcome different from the original
allocation rule with some small probability $\delta$. The penalty for making
$\delta$ small is that the payments may occasionally be very large --- we study
this tradeoff. Our study is not unprecedented:~\cite{BKS10} asked, as an open
question, if their reduction optimized payments with respect to the welfare
loss, and Lahaie~\cite{L10} show a similar tradeoff between the size and
complexity of kernel-based payments achieving $\epsilon$-incentive compatibility
in single-call combinatorial auctions.

We study the tradeoff inherent to single-call mechanisms with respect to three
measures of expectation --- welfare, revenue, and a technical (but natural)
precision metric --- and two measures of risk --- variance and worst-case
payments. We show that our VCG reduction and the single-parameter reduction
of~\cite{BKS10} {\em simultaneously} optimize the tradeoff between expectation
and risk for all these criteria.

\section{Preliminaries}
\label{sec:prelim}

A mechanism is a protocol among $n$ rational agents that implements a social
choice function over a set of outcomes $\Ocm$. Agent $i$ has preferences over
outcomes $\ocm\in\Ocm$ given by a \emph{valuation function} $\vali: \Ocm
\rightarrow \real$. The function $\vali$ is private but is drawn from a publicly
known set $\Vali \subseteq \real^{\Ocm}$.

A {\em deterministic direct revelation mechanism} $\me$ is a social choice
function $\al: \Vali[1] \times \dots \Vali[n] \rightarrow \Ocm$, also known as
an {\em allocation rule}, and a vector of payment functions $\pricei[1], \dots,
\pricei[n]$ where $\pricei: \Vali[1] \times \dots \Vali[n] \rightarrow \real$
is the amount that agent $i$ pays to the mechanism designer.  When a direct
revelation\footnote{``Direct revelation'' means that an agent's bid $b_i$ is an
element of $V_i$. In general this need not be the case; however, by the
revelation principle, any social choice rule that may be truthfully implemented
may be implemented as a direct revelation mechanism that charges the same
payments in equilibrium.} mechanism is instantiated, each agent reports a {\em
bid} $b_i\in\Vali$. The mechanism uses bids $b = (b_1,\dots,b_n)$ to choose an
outcome $\al(b)\in\Ocm$ and to compute payments $\pricei(b)$.  The utility
$u_i(\vali,\ocm)$ that agent $i$ receives is $u_i(\vali,\ocm) = \vali(\ocm) -
\pricei$.  A mechanism is {\em truthful} (or incentive compatible) if bidding
truthfully (i.e. $b_i=\vali$) is a dominant strategy. Formally, for each $i$,
each $\vali[-i] \in \Vali[-i]$, and every $\vali,\vali' \in \Vali$, we have
$u_i(\vali,A(\val)) \geq u_i(\vali,A(\vali',\val_{-i}))$, where $\val_{-i}$
denotes the vector of valuations for all agents except agent $i$.

A mechanism is {\em ex-post individually rational} (IR) if agents always get
non-negative utility, and mechanism has {\em no positive transfers} (NPT) if for
each agent $i$ and each $\val \in \Val$, $\pricei(\val) \geq 0$, i.e., the
mechanism never pays a player money.

A randomized mechanism is a distribution over deterministic mechanisms. Thus,
$\al(b)$ and $\pricei(b)$ are random variables. For randomized mechanisms,
properties like truthfulness may be said to hold universally or in expectation.
A randomized mechanism is {\em universally truthful} if it is truthful for every
deterministic mechanism in its support. It is {\em truthful in expectation} if,
in expectation over the randomization of the mechanism, truthful bidding is a
dominant strategy.  Henceforth, we use truthful, IR, and NPT to mean truthful in
expectation unless otherwise noted.

\paragraph{MIDR Allocation Rules}

MIDR mechanisms are variants of {\em VCG mechanisms}, mechanisms that maximize
social welfare and charge ``VCG payments''. Formally, a VCG mechanism's social
choice rule satisfies $\al(\val) \in$ $\underset{\ocm \in \Ocm}{\mathrm{argmax}}
\sum_j \valj(\ocm)$, and its payments are $\pricei(\val) = h_i(\val_{-i}) -
\sum_{j \neq i}\valj(\al(\val))$ for some function $h_i:\Vali[-i] \rightarrow
\Re$.  VCG payments are the only universal technique known to induce truthful
bidding.  The most common implementation of VCG payments uses the Clarke-Pivot
payment rule: set $h_i(\val_{-i}) = \underset{\ocm \in \Ocm}{\max}(\sum_{j \neq
i}\valj(\ocm))$, which gives the only payments that simultaneously satisfy
truthfulness, IR, and NPT.

More generally, any allocation rule that maximizes an affine function of agents'
valuations can be truthfully implemented with VCG payments.  Moreover, Roberts'
theorem~\cite{Rob79} implies that in a general setting (when $\Vali =
\real^{\Ocm}$), if $\al$ is onto (every outcome can be realized), then $\al$ has
truthful payments if and only if it is an affine maximizer.  If the ``onto''
restriction is relaxed, a social choice function is truthfully implementable
with VCG payments if and only if it is (weighted) maximal-in-range
(MIR)~\cite{NR00} or, for randomized mechanisms, maximal-in-distributional-range
(MIDR)~\cite{DD09}:

\begin{definition} An allocation rule $\al$ is {\em MIDR} if there is a set
$\Dist$ of probability distributions over outcomes such that $\al$ outputs a
random sample from the distribution $\dist \in \Dist$ that maximizes expected
welfare. Formally, for each $\val \in \Val$, $\al(\val) = \ocm \sim \dist^*$
where $\dist^* \in \underset{\dist \in \Dist}{\mathrm{argmax}}\ \ex_{\ocm \sim
\dist}[\sum_i \vali(\ocm)]$.  \end{definition} A weighted MIDR allocation rule
maximizes the weighted social welfare $\sum_i w_i\vali(\ocm)$ for $w_i\geq0$.

\paragraph{Single-Parameter Domains} A larger class of social choice rules can
be implemented when $\Vali$ is single dimensional. We say that a social choice
rule has a single-parameter domain if $\vali(\ocm) = \tyi\pfn(\ocm)$ for some
publicly known function $f_i:\Ocm\rightarrow\Re_+$. The value $\tyi \in \Tyi$ is
an agent's type ($\Tyi$ is her type-space, and $\Ty=\Ty_1 \times \dots \times
\Ty_n$), and submitting $i$'s bid precisely requires stating $b_i=\tyi$. When
$\Ty=\Re^n_+$, we say that bidders have {\em positive types}. We also use
$\al_{i}(b)=\pfn(\al(b))$ as shorthand, and we say $\al$ is bounded if the
functions $\al_i$ are bounded functions.

A single-parameter social choice rule may be implemented if and only if it is
{\em monotone}, where $\al:\Ty \rightarrow \Ocm$ is said to be monotone if for
each agent $i$, for all $b_{-i} \in \Ty_{-i}$ and for every two bids $b_i \geq
b_i'$, we have  $\al_i(b_i, b_{-i}) \geq \al_i(b_i', b_{-i})$.  This was first
shown for a single item auction by Myerson~\shortcite{M81}; Archer and
Tardos~\shortcite{AT01} gave the current generalization:

\begin{theorem}\label{thm:MAT}[Myerson + Archer-Tardos] For a single parameter
domain, an allocation rule $\al$ has truthful payments $(\pricei[1], \dots,
\pricei[n])$ if and only if $\al$ is monotone. These payments take the form
\[\pricei(b) = h_i(b_{-i}) + b_i\al_{i}(b_i,b_{-i})-
\int_{0}^{b_i}\al_i(u,b_{-i})\,du, \] where $h_i(b_{-i})$ is independent of
$b_i$.  \end{theorem}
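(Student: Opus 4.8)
The plan is to fix an agent $i$ together with the other bids $b_{-i}$ and to work with the single-variable functions $g(u) := \al_i(u,b_{-i})$ and, for the ``only if'' direction, $p(u) := \pricei(u,b_{-i})$. In these terms, truthfulness for agent $i$ says exactly that $t\,g(t) - p(t) \ge t\,g(t') - p(t')$ for all $t,t' \in \Tyi$, since a type-$t$ agent reporting $t'$ against $b_{-i}$ obtains utility $t\,g(t') - p(t')$. The first step is the standard trick of writing this inequality once as stated and once with $t$ and $t'$ interchanged and adding the two: the payment terms cancel and what remains is $(t-t')\big(g(t)-g(t')\big) \ge 0$, i.e.\ $g$ is nondecreasing, which is monotonicity of $\al$. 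Keeping the two inequalities separate instead gives, for $t > t'$, the sandwich $g(t')\,(t-t') \le p(t) - p(t') \le g(t)\,(t-t')$.

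Next I would extract the payment formula from this sandwich. A monotone $g$ is bounded and Riemann integrable on each interval $[0,b_i]$, so I can apply the sandwich across a partition $0 = u_0 < u_1 < \cdots < u_m = b_i$: the middle terms telescope to $p(b_i) - p(0)$, while the outer sums are precisely the lower and upper Darboux sums of $g$. Refining the partition squeezes $p(b_i) - p(0)$ onto the Riemann--Stieltjes integral $\int_0^{b_i} u\,dg(u)$, and integration by parts (the boundary term at the lower endpoint is $0 \cdot g(0) = 0$) yields $p(b_i) = p(0) + b_i\,g(b_i) - \int_0^{b_i} g(u)\,du$. Taking $h_i(b_{-i}) := \pricei(0,b_{-i})$, which plainly does not depend on $b_i$, gives exactly the asserted form.

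For the converse, given a monotone $\al$ I would \emph{define} $\pricei$ by the displayed formula, with $h_i$ any function of $b_{-i}$, and verify truthfulness directly. Subtracting the formula at $t'$ from the formula at $t$ gives $p(t) - p(t') = t\,g(t) - t'\,g(t') - \int_{t'}^{t} g(u)\,du$, so the incentive inequality $p(t) - p(t') \le t\big(g(t) - g(t')\big)$ collapses, after the $t\,g(t)$ terms cancel, to $(t-t')\,g(t') \le \int_{t'}^{t} g(u)\,du$. Because $g$ is nondecreasing this is immediate in both directions: when $t > t'$ the integrand is $\ge g(t')$ throughout $[t',t]$, and when $t < t'$ it is $\le g(t')$ throughout $[t,t']$ while both sides of the inequality flip sign. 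Hence truthful bidding is a dominant strategy.

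I do not expect a serious obstacle --- the mathematical content is the two-line incentive manipulation above --- but the one delicate point is the analytic passage from the sandwich to an exact integral: it invokes Riemann integrability of monotone functions, a Riemann--Stieltjes integral whose integrator may have jumps (harmless, since the integrand $u$ is continuous), and care at the lower endpoint. Implicitly the statement uses $0$ as the base point, which matches the ``positive types'' setting $\Tyi \subseteq \Re_+$ with $0$ in the closure of the type space; for a type space with a different infimum the same argument goes through with $0$ replaced by $\inf \Tyi$ and the resulting constant absorbed into $h_i$.
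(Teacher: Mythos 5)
Your overall plan is the standard Myerson/Archer--Tardos argument (the paper itself does not prove Theorem~\ref{thm:MAT}; it cites it), and your monotonicity derivation and your verification of the converse are correct. However, the step where you ``keep the two inequalities separate'' contains a genuine error: truthfulness gives, for $t>t'$, the sandwich $t'\bigl(g(t)-g(t')\bigr)\;\le\;p(t)-p(t')\;\le\;t\bigl(g(t)-g(t')\bigr)$, i.e.\ the \emph{types} multiply the allocation increments, not the allocations multiplying the type increments as you wrote. The sandwich you state, $g(t')(t-t')\le p(t)-p(t')\le g(t)(t-t')$, is false in general: take a single bidder with the monotone step allocation $g(u)=0$ for $u<1$, $g(u)=1$ for $u\ge 1$, and the truthful payment with $p(0)=0$, namely $p(t)=0$ for $t<1$ and $p(t)=1$ for $t\ge 1$; with $t=1$ and $t'=0.9$ your upper bound reads $1\le g(1)\cdot 0.1=0.1$. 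Your write-up is also internally inconsistent at this point: telescoping \emph{your} sandwich would produce Darboux sums of $g$ converging to $\int_0^{b_i}g(u)\,du$, which would yield $p(b_i)=p(0)+\int_0^{b_i}g(u)\,du$, not the claimed limit $\int_0^{b_i}u\,dg(u)$ and not the formula in the theorem.

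The fix is local: use the correct sandwich above. Telescoping it over a partition $0=u_0<\cdots<u_m=b_i$ makes the outer sums exactly the lower and upper Riemann--Stieltjes sums of $\int_0^{b_i}u\,dg(u)$ (the integrand $u$ is increasing and the integrator $g$ is nondecreasing), so the squeeze gives $p(b_i)-p(0)=\int_0^{b_i}u\,dg(u)$, and your integration by parts then yields $p(b_i)=p(0)+b_i g(b_i)-\int_0^{b_i}g(u)\,du$ with $h_i(b_{-i})=\pricei(0,b_{-i})$ (or the infimum of $\Tyi$ as base point, as you note). With that correction, your argument---including the analytic care about integrability of monotone functions and jumps of the integrator, and the direct verification that the formula is truthful, which indeed reduces to the correct inequality $(t-t')g(t')\le\int_{t'}^{t}g(u)\,du$---is the standard proof of the theorem.
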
 These payments simultaneously satisfy IR and NPT if and
only if $\pricei^0(b_{-i}) = 0$. Such a mechanism is said to be normalized.

\section{Single-call mechanisms}
\label{sec:sc}
We call a mechanism a single-call mechanism if it only evaluates the allocation
function once:

\begin{definition} A {\em single-call mechanism} $\me$ for an allocation rule
$\al$ is a truthful mechanism that has only oracle access to $\al$ and computes
both the allocation and payments with a single call to $\al$.
\end{definition}

To construct a single-call mechanism, we must first specify the possible
allocation functions $\al$ and then construct one procedure that yields a
single-call mechanism for any $\al$ in this set. Thus, the tool for creating a
single-call mechanism is a single-call reduction: 
\begin{definition} A {\em
single-call reduction} is a procedure that takes any allocation function $\al$
from a fixed set (as a black box) and returns a single-call mechanism.
\end{definition} 
For example, the procedure of~\cite{BKS10} is a single-call
reduction that takes any $\al$ drawn from the set of all monotone, bounded,
single-parameter allocation rules and returns a single-call mechanism.
Similarly, our construction for VCG prices is a single-call reduction that takes
any $\al$ that is MIDR and returns a single-call mechanism.

To formalize single-call reductions, we first note the following requirements:
\begin{itemize}
\item A reduction must
take a bid vector $b$ and a black-box allocation function $\al$ as input.
\item A reduction must evaluate $\al$ on at most one bid vector $\hat b$,
causing the outcome $\al(\hat b)$ to be realized.\footnote{Strictly speaking,
there may be settings where a single-call reduction could realize an outcome
other than $\al(\hat b)$. However, our restriction follows naturally in
scenarios where ``computing $\al(b)$'' means realizing $\al(b)$ and making
measurements. It is also required for complete generality because there is no
reason to believe that the designer knows how to realize any outcome other than
$\al(\hat b)$.} \item A reduction must charge payments $\lambda_i$ that are a
function of  $b$, $\hat b$, and $\al(\hat b)$ (and possibly its own
randomness).
\end{itemize}
These requirements suggest the following generic definition of a single-call reduction
 to turn an allocation function $\al$ into a truthful-in-expectation single-call mechanism $\me=(\asc,\{\psci\})$:
\begin{enumerate} \item Solicit the bid vector $b$ from agents. 

\item Use $b$ to compute the modified bid vector $\hat b$. This implicitly defines a
probability measure $\mu_b(B)$ denoting the probability of choosing $\hat b\in
B\subseteq\Vali[1]\times \cdots \times \Vali[n]$ as the modified (resampled) bid vector when $b$ is the actual bid vector. 
When $\hat b_i\neq b_i$, we say that $i$'s bid was resampled.  

\item Declare the outcome to be $\al(\hat b)$, i.e. evaluate $\al$ at the
modified bid vector $\hat b$. This implicitly defines the allocation function
$\asc(b)$ which samples $\hat b\sim \mu_b$ and chooses the outcome $\al(\hat
b)$.  The resampling procedure must ensure that truthful payments $\psc(b)$ exist for 
$\asc(b)$; Note that $\asc(b)$ and $\psc(b)$ are random variables that depend on
the randomly resampled bid vector $\hat b$. Also, $\asc(b)$ and $\psc(b)$ are randomized 
even if $\al(b)$ and $\price(b)$ are deterministic;

\item Use $b$, $\hat b$, and $\al(\hat b)$ to compute payments
$\lambda_i(\al(\hat b),\hat b,b)$ that satisfy truthfulness in expectation, that is, charge player $i$ a payment
$\lambda_i(\al(\hat b),\hat b,b)$ such that $\underset{\hat b}{\expect}[\lambda_i(A(\hat
b), \hat b,b)] = \underset{\hat b}{\expect}[\psci(b)]$. 
\end{enumerate}
This general procedure is illustrated in Algorithm~\ref{alg:gsc}.

\begin{algorithm}[tb]\label{alg:gsc}
\SetKwInOut{Input}{input}\SetKwInOut{Output}{output}
\Input{Black box access to an allocation function $\al$, which is drawn from a known set.}
\Output{Truthful-in-expectation mechanism $\me=(\asc,\{\psci\})$.}
\BlankLine
\nl Solicit bid vector $b$ from agents\;
\nl Sample $\hat b\sim\mub$\;
\nl Realize the outcome $\al(\hat b)$\tcp*[r]{$\asc(b)$ is the random function $\al(\hat b)$ where $\hat b\sim\mub$}
\nl Charge payments $\lambda(\al(\hat b),\hat b,b)$\tcp*[r]{$\psci(b)$ is the random function $\lambda_i(\al(\hat b),\hat b,b)$ where $\hat b\sim\mub$}
\caption{Generic Single-Call Reduction $(\mu,\{\lambda_i\})$}
\end{algorithm}

We describe a single-call reduction in the above framework by the tuple
$(\mu,\{\lambda_i\})$, where $\mu$ implies specifying the resampling measure
$\mu_b$ for all $b\in\Vali[1] \times \cdots \times \Vali[n]$.  Since payments should be finite, we require that
$\lambda_i$ be finite everywhere, and we also require that it be integrable. For
the rest of this paper, we assume that $\lambda_i$'s are deterministic. For
randomized $\lambda_i$'s, the characterization theorems still hold with
$\lambda_i$'s replaced by their expectations over the randomness used. 

We say that a reduction is {\em normalized} if $b_i(\al(b))=0$ for all $i$ implies
$\lambda_i(\al(\hat b),\hat b,b)=0$, i.e. when every agent receives zero value,
all payments are zero.

\subsection{Optimal Reductions --- Expectation vs. Risk}

There are two downsides to the mechanisms produced by single-call reductions.
First, there is a penalty in {\em expectation}, i.e., the expected outcome 
$\ex_{\hat b}[\al(\hat b)]$ produced by the reduction is not identical to the 
desired outcome, $\al(b)$. This modified outcome may reduce the expected welfare or revenue of the mechanism, or
it may simply cause it to do the ``wrong'' thing.

Second, there is a penalty in {\em risk} because the payments
$\lambda$ may vary significantly, i.e. for a fixed $b$ the payments at
different resampled bids $\hat b$ could be very different.  In particular, the
magnitude of the payment charged by the single-call mechanism may be much larger
than the payments in the original mechanism, i.e. it may be that
$|\lambda_i|\gg|\pricei|$ for certain outcomes.

Our characterization theorems reveal that there is a fundamental trade-off between
expectation and risk. Thus, we call a reduction optimal if it minimizes risk with respect
to a lower bound on the expectation.

\subsubsection{Expectation} We study three criteria for measuring the expectation of
a reduction: $\Pr(\hat b=b|b)$,
social welfare, and revenue.

The first criterion, $\Pr(\hat b=b|b)$ (the precision), measures the likelihood that
the reduction modifies players' bids. This criterion is natural when modifying
bids is inherently undesirable:
\begin{definition}
The {\em precision} of a reduction $\alphap$ is the probability that the reduction does not alter any player's bid:
\[\alphap\equiv\min_b\Pr(\hat b=b|b)\enspace.\]
\end{definition}

The other criteria measure standard quantities in mechanism design:

\begin{definition}
The {\em welfare approximation} $\alphaw$ of a single-call reduction is given by
the worst-case ratio between the welfare of the single-call mechanism and the
welfare of the original allocation function:
\[\alphaw=\min_{\al,b}\frac{\ex_{\hat
b}\left[\sum_i b_i(\asci(b))\right]}{\sum_ib_i(\al_i(b))}\enspace.\]
When the welfare of $\al$ is zero, $\alphaw=1$ if the welfare of $\asc$ is also zero and unbounded otherwise.
\end{definition}

\begin{definition}
The {\em revenue approximation} $\alphar$ of a single-call reduction is given by
the worst-case ratio between the revenue of the single-call mechanism and the
revenue of the original allocation function:
\[\alphar=\min_{\al,b}\frac{\ex_{\hat b}\left[\sum_i\psci(b)\right]}{\sum_i\pricei(b)}\enspace.\]
When the revenue of $\al$ is zero, then $\alphar=1$ when the revenue of $\asc$ is also zero and unbounded otherwise.
\end{definition}

In the case of continuous spaces we replace $\min$/$\max$ with $\inf$/$\sup$ as appropriate for infinite domains.

\subsubsection{Risk} We measure risk through both the variance of payments and their worst-case
magnitude.\footnote{Intuition suggests optimizing with respect to a high-probability bound. Unfortunately,
this is problematic because ignoring low-probability events can dramatically change the expected
payment. Thus, in general it is not reasonable to conclude a priori that low-probability events can be ignored.}
In order to make a meaningful comparison across different allocation functions and bids, we normalize by players'
bids:\footnote{Intuition also suggests normalizing by the truthful prices for $\al$ (i.e. by $\pricei$), but constant
allocation functions such as $\al_i(b)=1$ have $\pricei=0$, making this impossible. Bid-normalized payments
are a next logical choice.}
\begin{definition} Decompose $\lambda_i$ into terms which depend only on the payoff to a single bidder $j$ (i.e. on $b_j(\al(\hat b))$ instead of $\al(\hat b)$):
\[\lambda_i(\al(\hat b),\hat b,b) = \sum_j \lambda_{ij}(b_j(\al(\hat b)),\hat b,b)\]
(our characterizations in Sections~\ref{sec:vcg-char} and~\ref{sec:sp-char} show that this is possible for
our settings). Then the
{\em bid-normalized payments} of the reduction are given by
\[\sum_j \frac{\lambda_{ij}(b_j(\al(\hat
b)),\hat b,b)}{b_j(\al(\hat b))}\enspace.\]
\end{definition}
We can thus write the variance of bid-normalized payments as
\[\max_{\al,i}\var_{\hat b\sim\mub}\left(\sum_j \frac{\lambda_{ij}(b_j(\al(\hat
b)),\hat b,b)}{b_j(\al(\hat b))}\right)\]
and the worst-case magnitude as
\[\max_{\al,i,\hat b}\left|\sum_j \frac{\lambda_{ij}(b_j(\al(\hat
b)),\hat b,b)}{b_j(\al(\hat b))}\right|\]
where we replace $\min$/$\max$ with $\inf$/$\sup$ as appropriate for infinite domains.

\subsubsection{Optimality} We define an optimal reduction as one that
{\em simultaneously} optimizes the six-way trade-off between expectation and risk:
\begin{definition}
A single-call reduction optimizes the {\em variance of/worst-case} payments with respect to
{\em precision/welfare/revenue} for a set of
allocation functions if for every bid $b$, it minimizes the variance of/worst-case
normalized payments over all possible reductions that achieve
a precision of $\alphap$ / welfare approximation of $\alphaw$ / revenue
approximation of $\alphar$.
\end{definition}

\section{Maximal-in-distributional-range reductions}
\label{sec:vcg-char}

In this section, we show how to construct a single-call reduction for MIDR
allocation rules, i.e. we show how to construct a randomized, truthful
mechanism from an arbitrary MIDR allocation rule $\al$ using only a single
black-box call to $\al$. The main results are Theorem \ref{thm:vcg-char}, a
characterization of all reductions that use VCG payments for an arbitrary MIDR
allocation rule, and an explicit construction that optimizes the expectation-risk
tradeoff.

Truthful payments for MIDR allocation rules are given by VCG payments with the
Clarke-Pivot rule:\footnote{If we relax the no positive transfers requirement, a
trivial way to construct a single-call mechanism is to ignore the first term in~\eqref{eqn:vcg_payment}. However, the resulting mechanism would
make a huge loss because no agent would ever pay the mechanism.}
\begin{eqnarray}\label{eqn:vcg_payment}
\ex[p_i]&=&\ex[\mbox{total welfare of bidders without
$i$}]-\ex[\mbox{total welfare of bidders $j\neq i$ with $i$}]
\end{eqnarray}
 
(where the expectation is over the randomization in the given MIDR allocation
rule).  The reduction comes from this formula for $\ex[p_i]$: we need to measure
the welfare without agent $i$ (the first term in the RHS), so, with some
probability, we ignore agent $i$ and maximize the welfare of the remaining
agents. Intuitively, this is equivalent to evaluating the allocation function
where $i$'s bid is changed to a ``zero'' bid while other bids remain the same.

Unfortunately, having removed agent $i$, even with a small probability, means
that computing truthful payments for agent $j\neq i$ requires knowing the
allocation where both $i$ and $j$ are ignored. By induction, a single-call
mechanism must generate all sets of agents $M\subseteq[n]$ with some
probability.  Thus, we get an intuitive picture of the reduction's behavior: it
will randomly pick a set of bidders $M\subseteq[n]$ and zero the bids of agents
not in $M$.

\subsection{Characterizing Truthfulness} We consider reductions in which
$i$'s resampled bid $\hat b_i$ is always $b_i$ or zero,\footnote{Even if explicit ``zero'' bids are not known to the reduction, we assume that the reduction can induce $\al$ to optimize the utility of an arbitrary
subset of agents. Note that a black-box allocation function can only be turned
into a truthful mechanism (even if multiple calls to $\al$ are allowed) if it
can ignore at least one bidder at a time, so our assumption is not unreasonable.}
where ``zero'' means that the agent has a valuation of zero for all outcomes. That is, the resampling measure $\mu_b(B)$ represents a discrete distribution
over the bids $\{\hat b^M\}$ where $M\subseteq[n]$ is a set of agents and
\begin{equation*}
\hat b_i^M=\begin{cases}
b_i&i\in M\\
0&i\not\in M
\end{cases}
\end{equation*}
Resampling to $\hat b^M$ is equivalent to ignoring the welfare of agents
outside $M$ and evaluating $A$ at $b$.

In the most general setting, our restriction to zeroing reductions is without 
loss of generality because $b$ and
zero are the only bids that are guaranteed to be valid inputs to $\al$ for all
MIDR allocation functions $\al$. That said, even if a
multi-parameter bid structure were known, VCG payments do not depend on the outcome at any other
bid. Thus, intuition suggests that resampling to other bids will not be helpful even if it is possible. This intuition can be formalized, but we do not do it here.

Let $\pi(M)$ be a distribution over sets $M\subseteq [n]$. 
We define the \emph{associated coefficients}
$c_i^\pi(M)$ as:
\begin{equation*}\label{eqn:midr_associated_coeff}
c_i^\pi(M)=\begin{cases}
-1,&i\in M\\
\frac{\pi(M\cup\{i\})}{\pi(M)},&i\not\in M
\end{cases}
\end{equation*}
Intuitively, $c_i^\pi$ is the weighting that ensures $-\pi(M\cup\{i\})c_i^\pi(M\cup\{i\})=\pi(M)c_i^\pi(M)$ (where $i\not\in M$) to
match the terms in~\eqref{eqn:vcg_payment}.

We prove the following characterization of all truthful MIDR reductions $(\pi,\{\lambda_i\})$ that work for all MIDR $\al$:

\begin{theorem} \label{thm:vcg-char} 
A normalized single-call reduction, with VCG payments, for the set of all MIDR
allocation rules satisfies truthfulness, individual rationality, and no positive
transfers in an ex-post sense if and only if it takes the form
$(\pi,\{\lambda_i\})$ where $\pi(M)$ is a distribution over sets
$M\subseteq[n]$, the coefficients $c_i^{\pi(M)}$ are finite, and payments take
the form $$\lambda_i(\al(\hat b^M),\hat b^M,b)=c_i^\pi(M)\sum_{j\neq
i}b_j(\al(\hat
b^M))\enspace.$$ 
\end{theorem}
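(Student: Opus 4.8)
The plan is to route both directions through one identity: the reduction's expected payment $\ex_{M \sim \pi}[\lambda_i]$ must coincide with the Clarke--Pivot VCG payment of the randomized allocation rule $\asc$ that the reduction realizes. The key observation is that, from agent $i$'s vantage point, drawing a coalition $M$ with $i \notin M$ is exactly the event ``agent $i$ is ignored'', so running $\asc$ ``without $i$'' means sampling $M \sim \pi$ and outputting $\al(\hat b^{M \setminus \{i\}})$. Since $\al$ is MIDR, each $\al(\hat b^M)$ maximizes the welfare of the agents in $M$ over the common distributional range --- and an agent outside $M$ contributes zero, consistently with being ignored --- so, recording the sampled $M$ as part of the realized outcome, $\asc$ is itself MIDR. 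Hence its VCG payments are truthful, IR and NPT, and equal $\ex_{M}[\sum_{j\neq i} b_j(\al(\hat b^{M\setminus\{i\}}))] - \ex_{M}[\sum_{j\neq i} b_j(\al(\hat b^{M}))]$, i.e.\ ``welfare of the others without $i$'' minus ``welfare of the others with $i$'' as in~\eqref{eqn:vcg_payment}.

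For the \emph{if} direction I would take any reduction of the stated form and expand $\ex_{M}[\lambda_i] = \sum_{M} \pi(M)\, c_i^\pi(M) \sum_{j \neq i} b_j(\al(\hat b^M))$, splitting the outer sum according to whether $i \in M$. The defining relation of the coefficients, $\pi(M)\,c_i^\pi(M) = -\pi(M \cup \{i\})\,c_i^\pi(M \cup \{i\})$ for $i \notin M$, re-pairs the positive contribution of each $M \not\ni i$ with the negative contribution of $M \cup \{i\}$, after which the sum collapses to exactly the VCG expression above; finiteness of the $c_i^\pi(M)$ (equivalently, downward-closedness of $\operatorname{supp}(\pi)$) is what licenses this re-pairing. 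Truthfulness, IR and NPT then follow from the MIDR/VCG property of $\asc$, and the form plainly makes the reduction normalized and its expected payments the VCG ones.

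For the \emph{only if} direction I would start from an arbitrary normalized zeroing reduction $(\pi,\{\lambda_i\})$ with VCG payments satisfying the three properties, so that $\ex_M[\lambda_i]$ is already pinned to the VCG payment above. What remains is to show that $\lambda_i(\al(\hat b^M),\hat b^M,b)$ has the claimed linear form pointwise, not merely in $\pi$-average. Here I would use that truthfulness and this payment identity must hold for \emph{every} MIDR $\al$: by instantiating $\al$ with ranges that are ``active'' at a single resampled bid vector $\hat b^M$ and trivial elsewhere, and with ranges that let agent $i$ (or an auxiliary agent) steer the realized outcome among distributions whose welfare profiles are linearly independent, the constraints force $\lambda_i$ to be additively separable, $\lambda_i = \sum_j \lambda_{ij}(b_j(\al(\hat b^M)),\hat b^M,b)$, then linear in each $b_j(\al(\hat b^M))$, and finally to have slope exactly $c_i^\pi(M)$ with $\lambda_{ii}\equiv 0$. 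Finiteness of $\lambda_i$ everywhere then forces $\pi(M)>0$ whenever $\pi(M\cup\{i\})>0$; IR, NPT and normalization remove any remaining additive slack.

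I expect the \emph{only if} direction to be the main obstacle --- concretely, exhibiting the right family of MIDR instances to certify that $\lambda_i$ is genuinely linear in the realized welfare profile with precisely the coefficients $c_i^\pi(M)$, rather than only matching them on $\pi$-average. The \emph{if} direction is the telescoping computation plus a textbook VCG argument, so the only care it needs is bookkeeping of which agents are present in each $\hat b^M$ and the degenerate cases ($\pi(M)=0$, or instances of zero welfare).
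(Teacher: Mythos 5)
Your overall strategy --- equating the reduction's expected payment with the Clarke--Pivot payment of $\asc$ and then forcing per-$M$ agreement by varying the black-box $\al$, with the ``if'' direction being the reverse telescoping --- is the same route the paper takes. However, your ``only if'' direction has a genuine gap: you start from ``an arbitrary normalized zeroing reduction $(\pi,\{\lambda_i\})$,'' i.e.\ you assume from the outset that the resampling distribution is a single, bid-independent $\pi(M)$. In the paper's framework a zeroing reduction specifies a distribution $\pi_b(M)$ that may depend on the reported bids, and a substantive half of the paper's necessity proof is devoted to ruling this out: it argues that $\asc$ must be MIDR for every MIDR $\al$ (since only MIDR rules admit truthful VCG payments), and then shows by contradiction that if $\pi_x(M)\neq\pi_y(M)$ for some bids $x,y$, one can find a set $S$ with $\Pr_\pi(M\subseteq S\mid x)\neq\Pr_\pi(M\subseteq S\mid y)$ and an allocation function whose welfare is $0$ when $M\subseteq S$ and $1$ otherwise, so that at one of $x,y$ the rule $\asc$ fails to maximize welfare over its range. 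Without an argument of this kind your characterization is incomplete: you have not excluded reductions whose zeroing probabilities vary with $b$.

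Two smaller points. First, your justification that $\asc$ is MIDR (``recording the sampled $M$ as part of the realized outcome,'' agents outside $M$ ``contribute zero'') does not work as stated: an agent outside $M$ still derives its true value $b_j(\al(\hat b^M))$ from the realized outcome, so re-labelling the outcome does not turn $\asc$ into a welfare maximizer for the actual valuations; the paper instead invokes the Dughmi--Roughgarden fact that a bid-independent mixture of MIDR rules is MIDR, and you should lean on that (or prove it) rather than on the re-labelling. Second, for pinning down the pointwise form of $\lambda_i$ you plan a separability/linear-independence argument, which is heavier than necessary: since $\lambda_i$ can depend on $\al$ only through the realized outcome, the paper simply fixes $\bar M$, takes two MIDR functions $\al,\al'$ agreeing at every $M\neq\bar M$ with $\al'$ having zero welfare at $\bar M$, and subtracts the two payment identities to isolate the $\bar M$ term; this directly yields $\pi(M)\lambda_i(\al(\hat b^M),\hat b^M,b)=\pi(M\cup\{i\})\sum_{j\neq i}b_j(\al(\hat b^M))$ for $i\notin M$ (and the corresponding $-\pi(M)$ identity for $i\in M$), from which the stated payment form and the finiteness requirement on the coefficients $c_i^\pi$ follow at once.
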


\begin{proof} 
Recall that in general, a multi-parameter allocation function that can be rendered
truthful by VCG payments must be MIDR. Thus, our reduction must ensure that
$\asc$ is MIDR, and we first derive the implications of this requirement on the
single-call reduction. We have already assumed that $\mu_b(B)$ is a distribution over
bids $\{\hat b^M\}$. Let $\pi_b(M)$ be the probability of selecting $\hat b^M$ given $b$.

First, we show that $\asc$ is always MIDR if and only if $\pi_b(M)$ does not depend on $b$. For the
{\em if} direction, if $\pi_b(M)$ is independent of $b$ then $\asc$ is a distribution over MIDR allocation rules, and
by~\cite{DR10}, such an allocation rule is MIDR.

For the {\em only if} direction, we use contradiction. Assume that there are some bids $x$ and $y$
such that $\pi_x(M)\neq\pi_y(M)$ for some $M$. Then there exists a set $S\subseteq[n]$ such that
$\pr_\pi(M\subseteq S|x)\neq\pr_\pi(M\subseteq S|y)$ (by contradiction and induction, start with $S=\emptyset$).
Consider an allocation function that has welfare $\sum_ib_i(\al(\hat b^M))=0$ for $M\subseteq S$ and
$\sum_ib_i(\al(\hat b^M))=1$ otherwise. The welfare of $\asc$ will be precisely $1-\pr_\pi(M\subseteq S)$,
implying that for either $x$ or $y$, $\asc$ did not chose the distribution that maximized social welfare
and is therefore not MIDR. Thus, the allocation rule $\asc$ is MIDR for all MIDR $\al$ if and only if $\mu_b(B)$ is
a discrete distribution $\pi(M)$ independent of $b$.

Next, we write VCG payments for $\asc$ that satisfy individual rationality 
and no positive transfers using the Clarke-Pivot payment rule:
\begin{eqnarray}
\ex[\psci]&=&\sum_{j\neq i}\sum_{M\subseteq[n]}\pi(M)
b_j(\al(\hat b^{M\setminus\{i\}}))-\sum_{j\neq i}\sum_{M\subseteq[n]}\pi(M) b_j(\al(\hat b^{M}))\nonumber\\
&=&\sum_{M|i\not\in M}\pi(M\cup\{i\})\sum_{j\neq i}b_j(\al(\hat
b^M))\TCSeafeed-\sum_{M|i\in M}\pi(M)\sum_{j\neq i}b_j(\al(\hat b^M))\enspace.\label{eqn:vcg-char-ep1}
\end{eqnarray}

By definition of $\lambda_i(A(\hat b^M),\hat b^M,b)$, we know that the expected
payment made by $i$ will be
\begin{equation}
\label{eqn:vcg-char-ep2}\ex[\psci]=\sum_{M\subseteq[n]}\pi(M)\lambda_i(A(\hat b^M),\hat b^M,b)\enspace.
\end{equation}
The two formulas for payments in~\eqref{eqn:vcg-char-ep1} and~\eqref{eqn:vcg-char-ep2} 
must be equal:
\begin{eqnarray*} \sum_{M\subseteq[n]}\pi(M)\lambda_i(A(\hat b^M),\hat
b^M,b)\TCSeas&=&\sum_{M|i\not\in M}\pi(M\cup\{i\})\sum_{j\neq i}b_j(\al(\hat
b^M))\TCSeafeed-\sum_{M|i\in M}\pi(M)\sum_{j\neq i}b_j(\al(\hat b^M))\enspace.
\end{eqnarray*}
Since $\al$ may be any MIDR allocation function, the only way this can hold is when terms corresponding to 
each $M$ are equal, i.e., for all $i,\ M$
\begin{equation}\label{eqn:vcgPaymentEquation}
\pi(M)\lambda_i(A(\hat b^M),\hat b^M,b)=\begin{cases}\pi(M\cup\{i\})\sum_{j\neq
i}b_j(\al(\hat b^M),&i\not\in M\\
-\pi(M)\sum_{j\neq i}b_j(\al(\hat b^M))&i\in M\enspace.
\end{cases}
\end{equation}
To see that this is necessary, construct two allocation functions $\al$ and
$\al'$ such that $b_j(\al(\hat b^M))=b_j(\al'(\hat b^M))$ for all $M\neq\bar M$ and
$b_j(\al(\hat b^{\bar M}))=0$. It immediately follows that if the reduction works
for both $\al$ and $\al'$, then~\eqref{eqn:vcgPaymentEquation} must hold for $\bar M$ under
$\al$. Since $\bar M$ is arbitrary, it follows that~\eqref{eqn:vcgPaymentEquation} must hold for all $M$.

The theorem immediately follows from the above equality. 
\end{proof}

\begin{remark} Note that this theorem forbids some distributions $\pi(M)$ from
being used to construct a single-call reduction --- in particular, it requires
that $\pi(M)>0$ for all $M\subseteq[n]$, otherwise some payment
$\lambda_i(\cdot)$ will be infinite for nontrivial allocation rules.
For example, an obviously forbidden distribution
is the one that never changes bids, i.e. the one with $\pi([n])=1$. This matches
the intuition that a single-call mechanism must occasionally modify bids.
\end{remark}

\subsection{A Single-Call MIDR Reduction}
We now give an explicit single-call reduction for MIDR allocation functions. Our reduction
$\optMidrRedn(\al,\gamma)$ (illustrated in Algorithm~\ref{alg:vcgsc}) is defined by the following resampling distribution $\pibar$ 
parameterized by a constant $\gamma\in(0,1)$:
\begin{equation}\label{eqn:piBarDefn}
\pibar(M)=\gamma^{n-|M|}(1-\gamma)^{|M|}
\end{equation}
That is, each agent $i$ is independently dropped from $M$ with probability
$\gamma$. Thus sampling from the distribution $\pibar$ is computationally easy. Following Theorem \ref{thm:vcg-char}, we charge payments
$\lambda_i(\al(\hat b^M),\hat b^M,b)=c_i^{\pibar}(M)\sum_{j\neq i}b_i(\al(\hat b^M))$
where
\begin{equation*}
c_i^{\pibar}(M)=\begin{cases}
-1,&i\in M\\
\frac{1-\gamma}{\gamma},&i\not\in M
\end{cases}
\end{equation*}

\begin{corollary}[of Theorem~\ref{thm:vcg-char}]\label{cor:vcg-redn}
The mechanism
\[\me=(\asc,\{\psci\})=\optMidrRedn(\al,\gamma)\]
calls $\al$ once
and it satisfies truthfulness, individual rationality, and no positive transfers in an ex-post sense for all MIDR $\al$.
\end{corollary}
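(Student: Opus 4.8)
The plan is to verify that $\optMidrRedn(\al,\gamma)$ is exactly an instance of the family characterized by Theorem~\ref{thm:vcg-char}, so that ex-post truthfulness, IR, and NPT follow immediately; the only real work is checking the three hypotheses of that theorem for the specific choice $\pi=\pibar$, plus reading the single-call property off the algorithm.

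First I would confirm that $\pibar$ is a genuine probability distribution over $M\subseteq[n]$ that does not depend on $b$. Independence of $b$ is immediate from~\eqref{eqn:piBarDefn}, which mentions only $n$, $|M|$, and $\gamma$. That the masses sum to one is the binomial theorem: $\sum_{M\subseteq[n]}\gamma^{n-|M|}(1-\gamma)^{|M|}=\sum_{k=0}^{n}\binom{n}{k}\gamma^{n-k}(1-\gamma)^{k}=(\gamma+(1-\gamma))^{n}=1$. Equivalently --- and this is the observation that makes Step~2 of Algorithm~\ref{alg:vcgsc} a single easy pass over the agents --- $\pibar$ is the law of the random set obtained by including each agent independently with probability $1-\gamma$.

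Second I would check finiteness of the associated coefficients. Since $\gamma\in(0,1)$ we have $\pibar(M)=\gamma^{n-|M|}(1-\gamma)^{|M|}>0$ for every $M$, so the ratio $\pibar(M\cup\{i\})/\pibar(M)$ is well defined; substituting the definition of $\pibar$, all but one factor cancels and the ratio equals $(1-\gamma)/\gamma$, which is finite. Hence $c_i^{\pibar}(M)=-1$ for $i\in M$ and $(1-\gamma)/\gamma$ for $i\notin M$, exactly as asserted in the corollary, and every coefficient is finite. Third, the payment rule used by $\optMidrRedn$ is by construction $\lambda_i(\al(\hat b^M),\hat b^M,b)=c_i^{\pibar}(M)\sum_{j\neq i}b_j(\al(\hat b^M))$, precisely the form dictated by Theorem~\ref{thm:vcg-char}; it is normalized because whenever the realized outcome gives every agent value zero, the sum $\sum_{j\neq i}b_j(\al(\hat b^M))$ vanishes. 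Applying Theorem~\ref{thm:vcg-char} then gives ex-post truthfulness, IR, and NPT for every MIDR $\al$.

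Finally, the single-call claim is read directly off the algorithm: $\hat b$ is drawn from $\pibar$ without consulting $\al$, the allocation $\al$ is evaluated exactly once at $\hat b^M$, and the payments depend only on $b$, $\hat b^M$, and the realized outcome $\al(\hat b^M)$ (the quantities $b_j(\al(\hat b^M))$ are just the reported values of the agents for the realized outcome). There is essentially no obstacle: all the substance of the corollary lives in Theorem~\ref{thm:vcg-char}. The one point deserving a moment's care is matching the cancellation in $\pibar(M\cup\{i\})/\pibar(M)$ to confirm the coefficients are finite --- equivalently, that $\pibar$ puts positive mass on every subset --- since the Remark following Theorem~\ref{thm:vcg-char} shows this positivity is exactly what any valid MIDR reduction must have.
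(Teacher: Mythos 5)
Your proof is correct and follows the same route the paper intends: the corollary is exactly an instantiation of Theorem~\ref{thm:vcg-char} with $\pi=\pibar$, and your verification that $\pibar$ is a bid-independent distribution with strictly positive mass on every $M$ (hence finite coefficients $c_i^{\pibar}(M)\in\{-1,(1-\gamma)/\gamma\}$) together with the payment form and the single-call property read off Algorithm~\ref{alg:vcgsc} is precisely the routine check the paper leaves implicit.
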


\begin{algorithm}[tb]\label{alg:vcgsc}
\SetKwInOut{Input}{input}\SetKwInOut{Output}{output}
\SetKwIF{WP}{ElseWP}{Otherwise}{with probability}{}{with probability}{otherwise}{end}
\Input{MIDR allocation function $\al$.}
\Output{Truthful-in-expectation mechanism $\me=(\asc,\{\psci\})$.}
\BlankLine
\nl Solicit bids $b$ from agents\;
\nl \For{$i\in[n]$}{
 \eWP {$1-\gamma$}{
   Add agent $i$ to set $M$\;}{
   Drop agent $i$ from $M$\;}}
\nl Realize the outcome $\al(\hat b^M)$\;
\nl Charge payments\\
$\lambda_i(\al(\hat b^M),\hat b^M,b)=\left(\sum_{j\neq i}b_j(\al(\hat b^M))\right)\times\begin{cases}-1,&i\in M\\\frac{1-\gamma}{\gamma},&i\not\in M\end{cases}$\;
\caption{$\optMidrRedn(\al,\gamma)$ --- A single-call reduction for MIDR allocation functions}
\end{algorithm}

\subsection{Optimal Single-Call MIDR Reductions}
\label{sec:vcg-opt}

\label{sec:vcg-char-opt}

We now prove that the construction $\optMidrRedn(\al,\gamma)$ is optimal for the definitions of optimality given in Section~\ref{sec:sc}. Theorem~\ref{thm:vcg-char}
implies that the bid-normalized payments will be
\[\sum_j \frac{\lambda_{ij}(b_j(\al(\hat b)),\hat b,
b)}{b_j(\al(\hat b))}=(n-1)c_i^\pi(M)\]
Thus, it is sufficient to optimize the variance as $\max_i\var_{M\sim\pi} c_i^\pi(M)$ and the worst-case as $\max_{i,M}|c_i^\pi(M)|$.
\subsubsection{Optimizing Risk vs. Precision} 

\begin{theorem}
\label{thm:vcg-opt}
The reduction $\optMidrRedn(\al,\gamma)$ uniquely minimizes both the payment variance
 and the worst-case payment among all reductions that achieve a precision of at least $\alphap=(1-\gamma)^n$.

That is, for any other distribution $\pi$ with precision $\pi([n])\geq(1-\gamma)^n$, the payment
variance is larger, i.e.
\[\max_i\var_{M\sim\pi} c_i^\pi(M)>\max_i\var_{M\sim\pibar} c_i^{\pibar}(M)\enspace,\]
and the worst-case payment is larger, i.e.
\[\max_{i,M}|c_i^\pi(M)|>\max_{i,M}|c_i^{\pibar}(M)|\enspace.\]
\end{theorem}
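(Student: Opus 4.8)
The two claims --- smallest worst-case payment and smallest variance --- share a starting point. First observe that \emph{every} admissible reduction satisfies $\ex_{M\sim\pi}[c_i^\pi(M)]=0$: the sets $M\ni i$ contribute $-\Pr_\pi(i\in M)$, while the sets $M\not\ni i$ contribute $\sum_{M\not\ni i}\pi(M\cup\{i\})=\Pr_\pi(i\in M)$, and these cancel. Hence $\var_{M\sim\pi}c_i^\pi(M)$ is exactly the second moment, $\sum_M\pi(M)\,(c_i^\pi(M))^2 = q_i+\sum_{M\not\ni i}\frac{\pi(M\cup\{i\})^2}{\pi(M)}$, where $q_i:=\Pr_\pi(i\in M)$. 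For $\pibar$ one has $c_i^{\pibar}(M)\in\{-1,\tfrac{1-\gamma}{\gamma}\}$ and $q_i=1-\gamma$, so $\var_{M\sim\pibar}c_i^{\pibar}(M)=\tfrac{1-\gamma}{\gamma}$ for every $i$ and the worst-case normalized magnitude is $\max\{1,\tfrac{1-\gamma}{\gamma}\}$; the goal is to beat both strictly whenever $\pi\ne\pibar$.

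For the worst-case part, set $W=\max_{i,\,M\not\ni i}\tfrac{\pi(M\cup\{i\})}{\pi(M)}$ (finite, by Theorem~\ref{thm:vcg-char}). Walking any maximal chain from a set $M$ up to $[n]$ and multiplying the up-ratios, each at most $W$, gives $\pi(M)\ge\pi([n])\,W^{-(n-|M|)}$; summing over all $M\subseteq[n]$ and using $\sum_M\pi(M)=1$ yields
\[1\ \ge\ \pi([n])\sum_{k=0}^{n}\binom{n}{k}W^{-(n-k)}\ =\ \pi([n])\,\bigl(1+\tfrac1W\bigr)^{n}\enspace,\]
so $\pi([n])\le(1+1/W)^{-n}$, and the hypothesis $\pi([n])\ge(1-\gamma)^n$ then forces $1+1/W\le 1/(1-\gamma)$, i.e.\ $W\ge\tfrac{1-\gamma}{\gamma}$. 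Equality everywhere above requires $\pi(M)=\gamma^{\,n-|M|}(1-\gamma)^{|M|}=\pibar(M)$ for all $M$; hence any $\pi\ne\pibar$ has $W>\tfrac{1-\gamma}{\gamma}$. In the relevant regime $\gamma\le\tfrac12$ (where $\tfrac{1-\gamma}{\gamma}\ge1$ dominates the magnitude $1$ coming from sets containing $i$), this is precisely the assertion that $\pi$'s worst-case normalized payment strictly exceeds $\pibar$'s.

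For the variance part, apply Cauchy--Schwarz (in the form $\sum_k x_k^2/y_k\ge(\sum_k x_k)^2/\sum_k y_k$) to the second-moment formula: $\sum_{M\not\ni i}\tfrac{\pi(M\cup\{i\})^2}{\pi(M)}\ge\tfrac{q_i^2}{1-q_i}$, hence $\var_{M\sim\pi}c_i^\pi(M)\ge q_i+\tfrac{q_i^2}{1-q_i}=\tfrac{q_i}{1-q_i}$, with equality if and only if $\pi(M\cup\{i\})/\pi(M)$ is independent of $M$ over the sets $M\not\ni i$. If this equality holds for \emph{every} $i$ simultaneously, then $\pi$ is the product distribution with marginals $q_i$, so $\pi([n])=\prod_i q_i$; the precision hypothesis then forces $\max_i q_i\ge1-\gamma$ (otherwise $\prod_i q_i<(1-\gamma)^n$), whence $\max_i\var_{M\sim\pi}c_i^\pi(M)=\tfrac{\max_iq_i}{1-\max_iq_i}\ge\tfrac{1-\gamma}{\gamma}$, with equality only when every $q_i=1-\gamma$, i.e.\ $\pi=\pibar$.

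The step I expect to be the main obstacle is the remaining case, where the Cauchy--Schwarz inequality is slack for some coordinate $i$, so that $q_i$ can dip just below $1-\gamma$ and the cheap bound $\var\ge q_i/(1-q_i)$ no longer closes the gap. One must then show that the \emph{amount} of slack, together with the global constraint $\pi([n])\ge(1-\gamma)^n$, still forces some $\var_{M\sim\pi}c_i^\pi(M)$ strictly above $\tfrac{1-\gamma}{\gamma}$. The approach I would take is to treat this as minimizing the convex functional $\max_i\sum_M\pi(M)(c_i^\pi(M))^2$ (each summand $x^2/y$ being jointly convex) over the polytope $\{\pi\ge0:\sum_M\pi(M)=1,\ \pi([n])\ge(1-\gamma)^n\}$, and to push the first-order/KKT stationarity conditions to conclude that at a minimizer all up-ratios $\pi(M\cup\{i\})/\pi(M)$ must coincide --- which again pins down $\pi=\pibar$. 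Controlling the boundary of the polytope and verifying that the symmetric stationary point is the global optimum (rather than merely a critical point) is where the real work lies.
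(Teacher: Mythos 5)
Your worst-case half is correct, and it is essentially the paper's argument by another route: the paper fixes the ratio bound $\frac{1-\gamma}{\gamma}$ and inducts downward from $[n]$ to conclude $\pi(M)\geq\pibar(M)$ for every $M$, then sums to $1$ to force $\pi=\pibar$, whereas you multiply the ratio bound along chains to get $\pi([n])\leq(1+1/W)^{-n}$ and play that off against the precision hypothesis; both rest on the same chain-product idea and both pin down $\pibar$ at equality. (Your caveat about the regime $\gamma\leq\frac12$ is fair; the paper's own proof in fact works with the signed maximum $\max_{i,M}c_i^\pi(M)$ rather than the absolute value appearing in the statement, which is how it sidesteps the $\gamma>\frac12$ issue.)

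The variance half, however, has a genuine gap, and it is exactly the step you flagged. The bound $\var_{M\sim\pi}c_i^\pi(M)\geq q_i/(1-q_i)$ closes the argument only when the up-ratios $\pi(M\cup\{i\})/\pi(M)$ are constant for every $i$, i.e.\ only on the product-form slice of the feasible set, and the per-$i$ marginal bound genuinely fails off that slice: a correlated $\pi$ can have every marginal $q_i$ far below $1-\gamma$ while still satisfying $\pi([n])\geq(1-\gamma)^n$ (for instance, mass roughly $(1-\gamma)^n$ on $M=[n]$, most of the rest on $M=\emptyset$, and tiny mass on intermediate sets), so $q_i/(1-q_i)$ is far below $\frac{1-\gamma}{\gamma}$; in such distributions the variance excess comes from individual ratio terms such as $\pi([n])/\pi([n]\setminus\{i\})$, not from the marginals, so a genuinely global argument is needed. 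That missing step is precisely the content of the paper's Lemma~\ref{lem:babaioff-hfunc2}, applied with $\eta=\pi$, $\alpha=(1-\gamma)^n$, $\beta=1$: for \emph{every} admissible distribution with $\pi([n])\geq(1-\gamma)^n$ one has $\max_i\sum_{M\not\ni i}\left(\pi(M)+\pi(M\cup\{i\})\right)\frac{\pi(M\cup\{i\})}{\pi(M)}\geq\frac{1-\gamma}{\gamma}$, and its proof is the first-order/stationarity argument you only sketch (at a minimizer the up-ratios for each $i$ are constant, after which $\prod_i(1-\phi_i)\geq(1-\gamma)^n$ forces some $\phi_i\leq\gamma$). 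So your plan is the right one, but until that optimization step is actually carried out, the variance claim (and its strict uniqueness) is verified only for product-form $\pi$, not for all $\pi$ with the stated precision.
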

\begin{proof} First we prove optimality for the worst-case payment $\max_{i,M}|c_i^\pi(M)|$ by contradiction. Assume that some distribution $\pi(M)$ does as well as $\pibar(M)$. 
Then it must be that $\max_{i,M} c_i^\pi(M)\leq\max_{i,M}c_i^{\pibar}(M)$ (the largest coefficient is not bigger), 
and $\pi([n])\geq\pibar([n]) = \alphap$ (it respects the lower bound on precision). 
Since $\max c_i^{\pibar}(M)=\frac{1-\gamma}{\gamma}$, it must be that for all $M$ and $i\not\in M$,
\begin{equation*}
\frac{\pi(M\cup\{i\})}{\pi(M)}\leq\max_{i,M}c_i^{\pibar}(M)=\frac{1-\gamma}{\gamma}=\frac{\pibar(M\cup\{i\})}{\pibar(M)}\enspace.
\end{equation*}
Therefore, for any bidder $i$, it must be that
\begin{equation*}
\frac{\pi([n])}{\pi([n]\setminus\{i\})}\leq\frac{\pibar([n])}{\pibar([n]\setminus\{i\})}\enspace.
\end{equation*}
Since $\pi([n])\geq\pibar([n])$, it follows that $\pi([n]\setminus\{i\})\geq\pibar([n]\setminus\{i\})$. Repeating this argument, it follows by induction that $\pi(M)\geq\pibar(M)$ for any set $M$.

However, we also know that both $\pi(M)$ and $\pibar(M)$ are distributions so both have
to sum to one over all $M$. Given that $\pi(M)\geq\pibar(M)$ for all $M$, this implies $\pi(M)=\pibar(M)$.
Thus, $\pibar(M)$ is uniquely optimal.

Second, we argue that $\pibar$ optimizes the payment variance. The variance of bidder $i$'s payments is
\begin{align*}
\var_{M\sim\pi}c_i^\pi(M)&=\sum_{M\subseteq[n]}\pi(M)\left(c_i^\pi(M)\right)^2-\left(\sum_{M\subseteq[n]}\pi(M)c_i^\pi(M)\right)^2\\
&=\sum_{M\subseteq[n]}\pi(M)\left(c_i^\pi(M)\right)^2-0\\
&=\sum_{M\subseteq[n]\setminus\{i\}}(\pi(M)+\pi(M\cup\{i\})\frac{\pi(M\cup\{i\})}{\pi(M)}
\end{align*}
This is minimized when $\pr(i\in M)$ is independent of other bidders (Lemma~\ref{lem:babaioff-hfunc2}), i.e.
$\frac{\pi(M\cup\{i\})}{\pi(M)}=\frac{1-\gamma_i}{\gamma_i}$ for some constant $\gamma_i$. For such a distribution,
the precision will be
\[\pi([n])=\prod_i(1-\gamma_i)\enspace.\]
It follows that the maximum variance is $\max_i\frac{1-\gamma_i}{\gamma_i}$, and it will only be minimized when $\gamma_i=\gamma_j$ for all $i\neq j$, which corresponds precisely
to the distribution $\pibar$. 
\end{proof}

\subsubsection{Optimizing Risk vs. Welfare} 

A natural optimization metric is the social welfare of $\asc$ (indeed, this
was an open question from\hidecite{ Babaioff, Kleinberg, and Slivkins}~\cite{BKS10} in
the single-parameter setting).

Unfortunately, since MIDR allocation rules may generate negative utilities and
remain MIDR under additive shifts of the valuation function, one can make the
welfare approximation arbitrarily bad (indeed, even undefined) by subtracting a
constant from each player's valuation. Thus, if valuation functions may be
negative, we cannot meaningfully optimize the loss in social welfare.

However, when valuation functions are known to be nonnegative, then the following lemma shows that the
worst-case welfare approximation is bounded: 
\begin{lemma}\label{lem:vcg-swapx}
The reduction $\optMidrRedn(\al,\gamma)$ obtains an
$\alphaw=\min_i\Pr_\pi(i\in M)=1-\gamma$ approximation to the social welfare, and
there is an allocation function $\al$ and bid $b$ such that this bound is
tight.  
\end{lemma}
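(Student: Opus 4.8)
The plan is to directly compute the expected welfare of $\asc$ under the reduction $\optMidrRedn(\al,\gamma)$ and compare it termwise to the welfare of $\al$. Recall that $\asc(b)$ samples $M\sim\pibar$ and realizes $\al(\hat b^M)$, where $\hat b^M$ zeros out the bidders outside $M$. Since $\al$ is MIDR and evaluated on the bid vector $\hat b^M$, it chooses a distribution maximizing the welfare of the agents in $M$ (the others contribute zero to the welfare it optimizes). The key structural observation I would exploit is: because valuations are nonnegative, for \emph{any} outcome (and hence any distribution in the range), the welfare of the full set $[n]$ is at least the welfare of any subset $M$; conversely, since $\al(\hat b^M)$ maximizes the welfare restricted to $M$ over the entire distributional range, and the distribution $\dist^*$ chosen by $\al$ on the true bids $b$ lies in that same range, we get $\ex[\sum_{j\in M} b_j(\al(\hat b^M))] \ge \ex[\sum_{j\in M} b_j(\al(b))]$.

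The main computation is then:
\[
\ex_{\hat b}\Big[\sum_i b_i(\asci(b))\Big] \;=\; \sum_{M\subseteq[n]} \pibar(M)\,\ex\Big[\sum_{i\in[n]} b_i(\al(\hat b^M))\Big] \;\ge\; \sum_{M\subseteq[n]} \pibar(M)\,\ex\Big[\sum_{i\in M} b_i(\al(\hat b^M))\Big],
\]
using nonnegativity to drop the agents outside $M$. Applying the MIDR optimality of $\al$ on $\hat b^M$ against the feasible distribution $\al(b)$ gives a lower bound of $\sum_M \pibar(M)\,\ex[\sum_{i\in M} b_i(\al(b))]$. Swapping the order of summation, this equals $\sum_i \Pr_{\pibar}(i\in M)\,\ex[b_i(\al(b))] = (1-\gamma)\sum_i \ex[b_i(\al(b))]$, since under $\pibar$ each agent is included independently with probability $1-\gamma$. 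Dividing by the welfare of $\al$ yields $\alphaw \ge 1-\gamma = \min_i\Pr_\pi(i\in M)$.

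For tightness, I would construct a simple instance where the bound is met with equality: take a single-item-like allocation where only one agent, say agent $1$, can ever receive positive value, with $b_1(\al(b))=1$ and all other values zero. Whenever $1\notin M$ (probability $\gamma$), the welfare of $\al(\hat b^M)$ is zero; whenever $1\in M$ (probability $1-\gamma$), it is $1$. So the expected welfare of $\asc$ is exactly $1-\gamma$ while that of $\al$ is $1$, giving ratio exactly $1-\gamma$. The main obstacle — and the only nontrivial point — is justifying the inequality $\ex[\sum_{i\in M} b_i(\al(\hat b^M))] \ge \ex[\sum_{i\in M}b_i(\al(b))]$: it requires observing that the distributional range $\Dist$ is the same regardless of which bids $\al$ is called on, so $\al(b)$'s distribution is always a feasible competitor, and that on input $\hat b^M$ the objective $\al$ maximizes is precisely $\sum_{i\in M} b_i(\cdot)$ because the zeroed agents contribute nothing. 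Everything else is a termwise comparison plus the independence structure of $\pibar$.
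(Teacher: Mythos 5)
Your proposal is correct and matches the paper's own argument essentially step for step: drop the agents outside $M$ using nonnegativity, invoke MIDR optimality of $\al$ at $\hat b^M$ against the (feasible) distribution chosen at the true bids, swap the order of summation to get $\min_i\Pr_\pi(i\in M)=1-\gamma$, and exhibit tightness with the same one-agent construction. Your explicit remark that the distributional range $\Dist$ is independent of the bids, so $\al(b)$'s distribution is a valid competitor at $\hat b^M$, is exactly the justification the paper gives (in slightly terser form) for its key inequality.
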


The idea for the lower bound is that the sum of welfare of bidders in $M$ cannot be 
lower at $\al(\hat b^M)$ than at $\al(\hat b^{[n]})$ because that would imply
$\al$ did not maximize the social welfare of bidders in $M$ at $\hat b^M$. The
worst case scenario occurs when one player receives all the welfare. The proof
is given in Appendix~\ref{sec:a-midr-opt}.

Using this lemma, we can show that $\optMidrRedn(\al,\gamma)$ is optimal:
\begin{theorem}\label{thm:vcg-swopt}
The reduction $\optMidrRedn(\al,\gamma)$ minimizes payment variance
and worst-case payments among all reductions that achieve a welfare approximation of at least $\alphaw=1-\gamma$.
\end{theorem}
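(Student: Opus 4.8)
The plan is to rewrite the welfare constraint $\alphaw\geq1-\gamma$ as a family of per-player marginal constraints on the resampling distribution $\pi$, and then rerun the optimization argument of Theorem~\ref{thm:vcg-opt} with these constraints in place of the precision constraint $\pi([n])\geq(1-\gamma)^n$. Throughout I use that Theorem~\ref{thm:vcg-char} identifies a (normalized, VCG) reduction with a distribution $\pi$ over sets $M\subseteq[n]$ having finite coefficients $c_i^\pi(M)$ (so $\pi(M)>0$ for all $M$, as in the Remark), and that by the bid-normalization computation in Section~\ref{sec:vcg-char-opt} it suffices to control $\max_i\var_{M\sim\pi}c_i^\pi(M)$ and $\max_{i,M}|c_i^\pi(M)|$.

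\textbf{Step 1: welfare approximation of a generic reduction.} I first extend Lemma~\ref{lem:vcg-swapx} from $\optMidrRedn$ to an arbitrary reduction, showing that, for nonnegative valuations, $\alphaw=\min_i\Pr_\pi(i\in M)$. For the lower bound, fix $\al$ and $b$: since $\al(\hat b^M)$ maximizes $\sum_{i\in M}b_i(\cdot)$ over the distributional range, which contains $\al(b)$, we get $\ex_{o\sim\al(\hat b^M)}[\sum_{i\in M}b_i(o)]\geq\ex_{o\sim\al(b)}[\sum_{i\in M}b_i(o)]$, and nonnegativity gives $\ex_{o\sim\al(\hat b^M)}[\sum_i b_i(o)]\geq\ex_{o\sim\al(b)}[\sum_{i\in M}b_i(o)]$; averaging over $M\sim\pi$ then bounds the welfare of $\asc$ below by $\sum_i\Pr_\pi(i\in M)\,\ex_{o\sim\al(b)}[b_i(o)]\geq(\min_i\Pr_\pi(i\in M))\sum_i\ex_{o\sim\al(b)}[b_i(o)]$. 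For the matching upper bound I reuse the tight instance from the proof of Lemma~\ref{lem:vcg-swapx}, taking $i^\star$ to be a player minimizing $\Pr_\pi(i\in M)$: a distributional range consisting of a do-nothing distribution and one giving $i^\star$ all the welfare, with ties broken toward do-nothing (a legitimate MIDR choice), so that the reduction's welfare is exactly $\Pr_\pi(i^\star\in M)$ times the optimum.

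\textbf{Step 2: the optimization.} By Step~1, requiring $\alphaw\geq1-\gamma$ is equivalent to $\Pr_\pi(i\in M)\geq1-\gamma$ for every $i$. For the worst-case payment, averaging the identity $\Pr_\pi(i\in M)=\sum_{M\not\ni i}\pi(M)\,\frac{\pi(M\cup\{i\})}{\pi(M)}$ yields $\max_{M\not\ni i}\frac{\pi(M\cup\{i\})}{\pi(M)}\geq\frac{\Pr_\pi(i\in M)}{1-\Pr_\pi(i\in M)}\geq\frac{1-\gamma}{\gamma}$ (the map $x\mapsto x/(1-x)$ is increasing); since also $|c_i^\pi(M)|=1$ on the always-positive-probability event $i\in M$, this gives $\max_{i,M}|c_i^\pi(M)|\geq\max\{1,\tfrac{1-\gamma}{\gamma}\}=\max_{i,M}|c_i^{\pibar}(M)|$. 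For the variance, the identity $\var_{M\sim\pi}c_i^\pi(M)=\Pr_\pi(i\in M)+\sum_{M\not\ni i}\frac{\pi(M\cup\{i\})^2}{\pi(M)}$ together with Cauchy--Schwarz (equivalently Lemma~\ref{lem:babaioff-hfunc2}) gives $\var_{M\sim\pi}c_i^\pi(M)\geq\frac{\Pr_\pi(i\in M)}{1-\Pr_\pi(i\in M)}\geq\frac{1-\gamma}{\gamma}$, with equality only when agent $i$'s membership in $M$ is independent of the other agents' and $\Pr_\pi(i\in M)=1-\gamma$; hence $\max_i\var_{M\sim\pi}c_i^\pi(M)\geq\frac{1-\gamma}{\gamma}=\max_i\var_{M\sim\pibar}c_i^{\pibar}(M)$, with equality forcing $\pi=\pibar$. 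Since $\optMidrRedn(\al,\gamma)$ has $\alphaw=1-\gamma$ by Lemma~\ref{lem:vcg-swapx}, it attains both minima.

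\textbf{Main obstacle.} I expect Step~1 to be the crux: pinning down $\alphaw(\pi)=\min_i\Pr_\pi(i\in M)$ for a general reduction, and in particular building the tight MIDR instance with a tie-breaking rule that keeps a zeroed player's contribution at zero --- nonnegativity of valuations is indispensable here, since otherwise (as noted before Lemma~\ref{lem:vcg-swapx}) the welfare ratio is unbounded. Once the welfare constraint is rewritten as the marginal constraints $\Pr_\pi(i\in M)\geq1-\gamma$, Step~2 is a routine variant of Theorem~\ref{thm:vcg-opt}: the only change is that the binding quantities are the individual marginals $\Pr_\pi(i\in M)$ rather than the joint probability $\pi([n])$, and both the Cauchy--Schwarz bound for the variance and the averaging bound for the worst case are monotone in these marginals and tight precisely at $\pibar$.
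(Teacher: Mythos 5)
Your proposal is correct and follows essentially the same route as the paper: Step~1 is exactly Lemma~\ref{lem:vcg-swapx} (lower bound via the MIDR property plus the one-bidder tight instance), and Step~2's averaging bound $\max_{M\not\ni i}\frac{\pi(M\cup\{i\})}{\pi(M)}\geq\frac{\Pr_\pi(i\in M)}{1-\Pr_\pi(i\in M)}$ is the same mediant inequality that drives Lemma~\ref{lem:vcg-sw-maxbound}, stated directly rather than contrapositively. Your explicit Cauchy--Schwarz treatment of the variance (with the identity $\var_{M\sim\pi}c_i^\pi(M)=\Pr_\pi(i\in M)+\sum_{M\not\ni i}\pi(M\cup\{i\})^2/\pi(M)$) is a more self-contained rendering of the step the paper dispatches by citing Lemma~\ref{lem:babaioff-hfunc2}, but it is the same decomposition and the same argument.
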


The proof is given in
Appendix~\ref{sec:a-midr-opt}.

\subsubsection{Optimizing Risk vs. Revenue}

The following lemma implies that a lower bound on 
the factor of approximation to revenue is equivalent to a lower bound on
precision. 
\begin{lemma}\label{lem:vcg-rapx} 
The reduction $\optMidrRedn(\al,\gamma)$ obtains an $\alpha_\pi=\pi([n])=(1-\gamma)^n$ approximation to the revenue, and this is tight.
\end{lemma}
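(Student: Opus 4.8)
The plan is to evaluate the expected revenue $\ex_{\hat b}\big[\sum_i\psci(b)\big]$ of $\optMidrRedn(\al,\gamma)$ directly from the payment formula of Theorem~\ref{thm:vcg-char}, reorganize it into a weighted sum indexed by the \emph{surviving} set $N\subseteq[n]$, and compare it term by term with $\sum_i\pricei(b)$. For a subset $N$ write $W(N)=\sum_{j\in N}b_j(\al(\hat b^N))$; since $b_j(\al(\hat b^N))=0$ whenever $j\notin N$ (such agents have the zero valuation), $W(N)$ is also the full expected welfare of $\al(\hat b^N)$. Theorem~\ref{thm:vcg-char} gives $\lambda_i(\al(\hat b^M),\hat b^M,b)=c_i^{\pibar}(M)\sum_{j\neq i}b_j(\al(\hat b^M))$, so
\[
\ex_{\hat b}\Big[\sum_i\psci(b)\Big]=\sum_i\sum_{M\subseteq[n]}\pibar(M)\,c_i^{\pibar}(M)\sum_{j\neq i}b_j(\al(\hat b^M)).
\]
Using $c_i^{\pibar}(M)=-1$ for $i\in M$ and $c_i^{\pibar}(M)=\tfrac{1-\gamma}{\gamma}$ with the identity $\tfrac{1-\gamma}{\gamma}\pibar(M)=\pibar(M\cup\{i\})$ for $i\notin M$, I would substitute $N=M\cup\{i\}$ in the $i\notin M$ contributions and pair them with the $i\in N$ contributions, collecting everything into
\[
\ex_{\hat b}\Big[\sum_i\psci(b)\Big]=\sum_{N\subseteq[n]}\pibar(N)\,\Delta(N),\qquad \Delta(N):=\sum_{i\in N}W(N\setminus\{i\})-(|N|-1)\,W(N).
\]
A direct check against the Clarke--Pivot formula for $\al$ itself (here $h_i(b_{-i})=W([n]\setminus\{i\})$) shows $\Delta([n])=\sum_i\pricei(b)$, the revenue of the original mechanism.

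For the lower bound I would show $\Delta(N)\ge 0$ for every $N$. Fix $i\in N$: because $\al$ is MIDR, $\al(\hat b^{N\setminus\{i\}})$ maximizes $\sum_{j\in N\setminus\{i\}}b_j(\cdot)$ over the distributional range of $\al$, and that range contains $\al(\hat b^N)$; hence $W(N\setminus\{i\})\ge\sum_{j\in N\setminus\{i\}}b_j(\al(\hat b^N))$. Summing over $i\in N$ and observing that each $j\in N$ appears in exactly $|N|-1$ of the right-hand sides yields $\sum_{i\in N}W(N\setminus\{i\})\ge(|N|-1)W(N)$, i.e.\ $\Delta(N)\ge 0$. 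Taking $N=[n]$ this also gives $\sum_i\pricei(b)=\Delta([n])\ge 0$, so the denominator in $\alphar$ is nonnegative; discarding the nonnegative terms with $N\neq[n]$ leaves $\ex_{\hat b}\big[\sum_i\psci(b)\big]\ge\pibar([n])\Delta([n])=(1-\gamma)^n\sum_i\pricei(b)$. Thus the revenue approximation is at least $(1-\gamma)^n$ (the degenerate case $\sum_i\pricei(b)=0$ is covered by the definition, since $\ex_{\hat b}\big[\sum_i\psci(b)\big]\ge 0$).

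For tightness I would exhibit one MIDR instance in which every $N\subsetneq[n]$ contributes zero. Take outcomes $o_1,\dots,o_n$, let agent $i$ value $o_k$ at $1$ for $k\neq i$ and at $0$ for $k=i$, and let the range consist of all point distributions (equivalently, ``reject exactly one agent''). Then for any active set $M\subsetneq[n]$ the welfare-maximizing outcome rejects an agent outside $M$, so $W(M)=|M|$, while $W([n])=n-1$. Substituting, $\Delta(N)=|N|(|N|-1)-(|N|-1)|N|=0$ for every $N\subsetneq[n]$ and $\Delta([n])=n-1>0$, whence $\ex_{\hat b}\big[\sum_i\psci(b)\big]=(1-\gamma)^n(n-1)=(1-\gamma)^n\sum_i\pricei(b)$. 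Combined with the lower bound, the revenue approximation equals exactly $(1-\gamma)^n=\pibar([n])=\pi([n])$, which is the precision.

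The step I expect to be the main obstacle is the reorganization: carefully matching the $i\in M$ and $i\notin M$ contributions after the reindexing $N=M\cup\{i\}$, which hinges on the identity $\tfrac{1-\gamma}{\gamma}\pibar(M)=\pibar(M\cup\{i\})$, together with the short but essential MIDR telescoping argument that makes every $\Delta(N)$ nonnegative. Once those are in place the tightness construction is routine.
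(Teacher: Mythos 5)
Your plan follows the same route as the paper's proof: expand the expected revenue set by set using the payments of Theorem~\ref{thm:vcg-char}, note that the $N=[n]$ term is exactly the Clarke--Pivot revenue of $\al$, discard the remaining terms, and finish with a tight instance. The reindexing identity $\tfrac{1-\gamma}{\gamma}\pibar(M)=\pibar(M\cup\{i\})$ and the identification of $\Delta([n])$ with $\sum_i\pricei(b)$ are correct. The genuine gap is the claim that $b_j(\al(\hat b^N))=0$ whenever $j\notin N$. Resampling to $\hat b^N$ zeroes $j$'s bid only inside the call to $\al$ (it changes which distribution in the range gets selected); it does not change $j$'s reported valuation of the realized outcome, and it is precisely these unzeroed values that appear in the payments $\lambda_i(\al(\hat b^M),\hat b^M,b)=c_i^{\pibar}(M)\sum_{j\neq i}b_j(\al(\hat b^M))$, where the sum runs over all $j\neq i$, not only $j\in M$. (Compare the paper's proof of Lemma~\ref{lem:vcg-swapx}, which keeps the terms $b_j(\al(\hat b^M))$ with $j\notin M$ and discards them by nonnegativity, not because they vanish.) Consequently your decomposition $\ex_{\hat b}[\sum_i\psci(b)]=\sum_N\pibar(N)\Delta(N)$ with the $W$-based $\Delta$ is not an identity: the true per-set contribution is $\sum_{i\in N}\bigl(\sum_{j\neq i}b_j(\al(\hat b^{N\setminus\{i\}}))-\sum_{j\neq i}b_j(\al(\hat b^{N}))\bigr)$, which carries the values of agents outside $N$ at both outcomes. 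Your MIDR telescoping controls only $\sum_{j\in N\setminus\{i\}}$ and says nothing about the inactive agents; indeed, dropping $i$ from the active set can shift the outcome away from one that an inactive agent values highly, making such a corrected term strictly negative even with nonnegative valuations (two outcomes, an inactive third agent who strongly prefers the outcome chosen when $i$ is active, already does it). So the nonnegativity step, which is the heart of your lower bound, is not established.

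The tightness construction inherits the same issue: once the outside-$N$ values are restored, the $N\neq[n]$ contributions in your ``reject one agent'' instance depend on how $\al$ breaks ties (at $\hat b^{N\setminus\{i\}}$ the rule may reject $i$ or an agent outside $N$, and the two choices give different others'-welfare), so the computation $\ex_{\hat b}[\sum_i\psci(b)]=(1-\gamma)^n(n-1)$ does not follow as written. The paper's own proof starts from exactly your corrected per-set expansion, keeps only the $M=[n]$ term without routing through a per-set identity, and for tightness prescribes the value profile directly with $b_i(\al(\hat b^M))=0$ whenever $i\notin M$, so that the cross terms vanish by construction. To repair your argument you would need either to prove that the discarded per-set terms are nonnegative with the outside-agent values included (which requires more than MIDR optimality over the active set), or to build the tight instance, as the paper does, so that inactive agents genuinely have zero value for every realized outcome.
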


Since Theorem~\ref{thm:vcg-opt} says that $\optMidrRedn(\al,\gamma)$ optimizes payments
with respect to precision, it similarly follows that it optimizes payments with respect to revenue:

\begin{theorem}\label{thm:vcg-ropt} The reduction $\optMidrRedn(\al,\gamma)$
minimizes payment variance and the worst-case payment among all reductions that guarantee
an $\alphar=(1-\gamma)^n$ approximation to revenue.
\end{theorem}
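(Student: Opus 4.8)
The plan is to deduce Theorem~\ref{thm:vcg-ropt} directly from Lemma~\ref{lem:vcg-rapx} and Theorem~\ref{thm:vcg-opt}, with no essentially new argument. The point is that for the zeroing reductions in play the precision is exactly $\alphap=\Pr(\hat b=b\mid b)=\pi([n])$ for every $b$, and Lemma~\ref{lem:vcg-rapx} (together with the remark preceding it) says that a reduction's revenue approximation is governed by this same quantity, $\alphar=\pi([n])$. Hence the family of reductions guaranteeing an $(1-\gamma)^n$-approximation to revenue is the \emph{same} family as the reductions guaranteeing precision at least $(1-\gamma)^n$, and the risk-optimality already established in Theorem~\ref{thm:vcg-opt} over the latter family transfers unchanged.

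In detail I would proceed in three steps. First, record that $\optMidrRedn(\al,\gamma)$ achieves revenue approximation $\alphar=(1-\gamma)^n$ by Lemma~\ref{lem:vcg-rapx}, so it is itself a legal competitor in the optimization of Theorem~\ref{thm:vcg-ropt}. Second, translate the revenue constraint into a precision constraint: the half of Lemma~\ref{lem:vcg-rapx} that I need is that an \emph{arbitrary} reduction $(\pi,\{\lambda_i\})$ has $\alphar\le\pi([n])$ --- a lower bound on its revenue approximation is a lower bound on its precision --- so $\alphar\ge(1-\gamma)^n$ forces $\pi([n])\ge(1-\gamma)^n$, i.e.\ precision at least $(1-\gamma)^n$. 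Third, invoke Theorem~\ref{thm:vcg-opt}: among all reductions of precision at least $(1-\gamma)^n$, $\optMidrRedn(\al,\gamma)$ minimizes both the payment variance $\max_i\var_{M\sim\pi}c_i^\pi(M)$ and the worst-case payment $\max_{i,M}|c_i^\pi(M)|$ --- equivalently, of the bid-normalized payments $(n-1)c_i^\pi(M)$. Putting the steps together: $\optMidrRedn(\al,\gamma)$ lies in the revenue-constrained family, which is contained in the precision-constrained family, over which it is the minimizer; so it minimizes the payment variance and the worst-case payment over the revenue-constrained family too, which is exactly Theorem~\ref{thm:vcg-ropt}. (Since the minimizer in Theorem~\ref{thm:vcg-opt} is unique and belongs to the subfamily, the minimization here is unique as well, although the statement only claims minimality.)

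I do not expect a real obstacle: once Lemma~\ref{lem:vcg-rapx} and Theorem~\ref{thm:vcg-opt} are in hand, the argument is bookkeeping about nested constraint sets. The one ingredient that is a theorem rather than a rearrangement is the inequality $\alphar\le\pi([n])$ for arbitrary reductions, which I take from the proof of Lemma~\ref{lem:vcg-rapx}. Were one to reprove it in place, the natural route is to exhibit an MIDR allocation rule and bid vector on which the single-call mechanism reproduces the original VCG outcome and revenue in the event $M=[n]$ but contributes nothing in expectation in every event $M\neq[n]$, so that its expected revenue equals $\pi([n])$ times the original revenue; this is the instance witnessing tightness in Lemma~\ref{lem:vcg-rapx}.
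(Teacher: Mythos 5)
Your proposal is correct and follows exactly the paper's route: Lemma~\ref{lem:vcg-rapx} (including its tightness construction, which gives $\alphar\le\pi([n])$ for an arbitrary zeroing reduction) converts the revenue constraint $\alphar\ge(1-\gamma)^n$ into the precision constraint $\pi([n])\ge(1-\gamma)^n$, after which Theorem~\ref{thm:vcg-opt} yields the claimed optimality. Your remark on how one would reprove the inequality in place is precisely the tight instance used in the paper's proof of Lemma~\ref{lem:vcg-rapx}, so there is nothing new or missing here.
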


\section{A Single-call application --- PPC AdAuctions}
\label{sec:apps}

Pay-per-click (PPC) AdAuctions are a prime example of mechanisms in which uncertainty
can destroy truthfulness. There is a deep literature on truthful ad auctions, much of
which makes a powerful assumption: the likelihood that a user clicks in any given 
setting is a commonly-held belief.
In reality, this simply is not true. Auctioneers make their best effort to estimate the likelihood
of a click; however,
anecdotal evidence~\cite{J10} suggests that advertisers manipulate their bids according to the perceived accuracy of the
auctioneer's estimates. As we will illustrate in this section, even if the auctioneer's
estimates are good enough to (say) maximize welfare given the current bids, they are not
sufficient to compute truthful prices. We show that single-call mechanisms can
recover truthfulness in PPC ad auctions in spite of these conflicting beliefs.

In a standard PPC ad auction, $n$ advertisers compete for $m\ll n$ slots. The value to
an advertiser depends on the likelihood of a click, called the click-through-rate (CTR) $c$,
and the value to the advertiser once the user has clicked, the value-per-click $v$. The
expected value to an advertiser is thus $cv$. The auctioneer's job is to assign advertisers
to slots and compute per-click payments --- bidders are only charged when a click occurs.
Both tasks require knowing the CTRs for common objectives
like welfare or revenue maximization, so the auctioneer
must also maintain estimates of the CTRs, which we denote by $c'$.

Researchers generally acknowledge that, in reality,
both $c$ and $v$ may depend arbitrarily on the outcome --- they certainly depend on
the quality and relevance of the particular ad being shown, but they also depend on where
the ad is shown and on which other ads are shown nearby. However, for analytical tractability, the
parameters $c$ and $v$ are often assumed to have a very restricted structure. We discuss
two different structures to illustrate the pervasiveness of the problem caused by estimation error
and to show how different single-call reductions may be applied.

\paragraph{Outcome-Independent Values and Separable CTRs} In the ad auction literature,
it is common to assume that a bidder's value-per-click $v_i$ is independent of the assignment and
that the CTR is separable, that is, it takes the form $c=\alpha_j\beta_i$, where $\beta_i$ depends only on the ad
and $\alpha_j$ depends only on the slot $j\in[m]$ where the ad is shown. Unfortunately, even in this
restricted setting, estimation errors may break the truthfulness of VCG prices. We give an example in
Appendix~\ref{sec:a-ppc} showing that even if the auctioneer's estimates correctly identify the
welfare-maximizing allocation, they may not yield truthful prices, even in the special case where $\beta_i=1$.

In the language of allocations and payments, truthfulness is broken because the
auctioneer only knows an estimate of $\al$ and thus does not have enough information
to compute true VCG prices. However, once ads are shown, clicks may be measured, giving an
unbiased estimate of bidders' values. Unfortunately,
this can only be done once --- since the auctioneer only has one opportunity to show ads to the user,
these unbiased estimates can only be measured under a single advertiser-slot assignment. Fortunately, {\em these unbiased
estimates are exactly the information required to compute truthful payments using a single-call mechanism.}

Since a player's bid $b_i$ is merely its value-per-click $v_i$, this version of a PPC ad auction is a
single-parameter domain and we can apply the result of~\cite{BKS10}. Their result says that we
can turn any monotone allocation rule into a truthful-in-expectation mechanism --- maximizing welfare
subject to estimates $\alpha_j'$ and $\beta_i'$ is a monotone allocation rule as long as the estimates
$\alpha_j'$ have the same order as $\alpha_j$ (i.e. $\alpha_{j_1}'\geq\alpha_{j_2}'$ if $\alpha_{j_1}\geq\alpha_{j_2}$), so~\cite{BKS10} gives a truthful mechanism for almost any estimates:
\begin{theorem} 
Consider a single-parameter PPC auction with separable CTRs and let $\al^{PPC}$ be the allocation rule
that maximizes welfare using estimated CTR parameters $\alpha_j'$ and $\beta_i'$, where the estimates $\alpha_j'$ are properly ordered. Then $\BKSRedn(\al^{PPC},\gamma)$,
the single-call reduction of~\cite{BKS10}, gives a mechanism that is truthful in expectation and has expected
welfare within a factor of $(1-\gamma)^n$ of $\al^{PPC}$.
\end{theorem}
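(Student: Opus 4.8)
The plan is to check the two hypotheses needed to invoke the single-parameter reduction of~\cite{BKS10} --- that $\al^{PPC}$ is a bounded, monotone, single-parameter allocation rule --- and then to read off the welfare factor from the structure of that reduction.

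First I would verify monotonicity. Fix a bidder $i$ and the bids $b_{-i}$ of the others, and let $\sigma$ denote an assignment of ads to slots; by definition $\al^{PPC}$ selects the $\sigma$ maximizing the estimated welfare $\sum_k b_k\,\beta'_k\,\alpha'_{\sigma(k)}$, where a bidder receiving no slot contributes $\alpha'=0$. Writing $x_k=b_k\beta'_k$, the rearrangement inequality shows that an optimal $\sigma$ orders the bidders by decreasing $x_k$ and hands them the available slots in decreasing order of $\alpha'$. Hence, as $b_i$ increases with $b_{-i}$ held fixed, $x_i$ increases, so bidder $i$ only moves up in this order and is assigned a slot of weakly larger estimated quality $\alpha'_{\sigma(i)}$ (weak monotonicity, even under tie-breaking, is all that is required). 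Since the estimates $\alpha'_j$ are properly ordered, i.e., have the same order as the true $\alpha_j$, a weakly larger $\alpha'_{\sigma(i)}$ forces a weakly larger true quality $\alpha_{\sigma(i)}$, and hence a weakly larger $f_i(\al^{PPC}(b))=\alpha_{\sigma(i)}\beta_i$, since $\beta_i$ is fixed and only the slot assigned to $i$ varies. Thus $\al^{PPC}$ is monotone, and it is bounded because click-through-rates lie in $[0,1]$.

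With monotonicity in hand, the reduction of~\cite{BKS10} applies directly: $\BKSRedn(\al^{PPC},\gamma)$ is truthful in expectation and invokes $\al^{PPC}$ exactly once --- which is all the setting permits, since the auctioneer may realize only one ad-slot assignment and can measure clicks only under that assignment. For the welfare bound, recall that in the reduction of~\cite{BKS10} each bidder's bid is left unchanged independently with probability $1-\gamma$, so with probability $(1-\gamma)^n$ no bid is altered, $\hat b=b$, and the realized outcome is exactly $\al^{PPC}(b)$, contributing its full welfare $\sum_i b_i\, f_i(\al^{PPC}(b))$; on the complementary event the realized outcome is merely some ad-slot assignment, and its welfare $\sum_i b_i\, f_i(\al^{PPC}(\hat b))$ is nonnegative because every value-per-click and every CTR is nonnegative. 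Therefore
\[
\ex_{\hat b}\Bigl[\,\sum_i b_i\, f_i(\al^{PPC}(\hat b))\,\Bigr]\ \geq\ (1-\gamma)^n \sum_i b_i\, f_i(\al^{PPC}(b)),
\]
which is exactly the claimed $(1-\gamma)^n$ approximation to welfare. As in Lemma~\ref{lem:vcg-swapx}, nonnegativity of values is what makes this estimate meaningful; unlike there, here we use only the crude no-resampling event rather than a per-bidder argument.

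The main obstacle is the monotonicity step, and within it the careful use of the properly-ordered hypothesis: monotonicity with respect to the estimated objective is immediate from the rearrangement inequality, but the quantity required to be monotone is the realized CTR $f_i(\al^{PPC}(\cdot))$, so one needs order-agreement between $\alpha'_j$ and $\alpha_j$ to pass from ``$i$ gets a better estimated slot'' to ``$i$ gets a better true slot.'' Everything else --- boundedness, citing~\cite{BKS10}, and the one-line welfare computation --- is immediate.
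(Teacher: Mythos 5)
Your proposal is correct and takes essentially the same route as the paper, which proves this theorem only in the surrounding prose: the properly-ordered estimates make $\al^{PPC}$ a bounded monotone single-parameter rule, the BKS reduction is then invoked as a black box for truthfulness, and the $(1-\gamma)^n$ welfare factor comes from the probability-$(1-\gamma)^n$ event that no bid is resampled together with nonnegativity of values --- matching the (tight) bound in Lemma~\ref{lem:babaioff-wapx}. Your rearrangement-inequality verification of monotonicity simply fills in details the paper leaves implicit, so nothing further is needed.
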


\paragraph{Outcome-Dependent Values and CTRs} While most research uses single-parameter
models for analytical tractability, an advertiser's value-per-click
$v$ really depends on the advertiser-slot assignment chosen by the auctioneer as noted earlier.
As in the preceding single-parameter setting, estimated CTRs are insufficient to guarantee
truthfulness; however, the reduction of~\cite{BKS10} no-longer applies in such a multi-parameter domain --- we show
how our MIDR single-call reduction can be used to recover truthfulness.

To capture the dependence on the advertiser-slot assignment, we assume that a bidder's CTR $c_{i,j}$ and value-per-click $v_{i,j}$ depend
arbitrarily on both the bidder $i$ and the slot $j$. Since the only allocation rules that have truthful prices in general multi-parameter domains
are MIDR, we assume that the auctioneer can generate a MIDR allocation, specifically
we assume the auctioneer can query an oracle to determine the allocation that maximizes the welfare of any set of bidders under the actual bid $b$ (but not necessarily for an arbitrary bid $b$) and apply our MIDR reduction:
\begin{theorem}\label{thm:adwordsApp} 
Consider a multi-parameter PPC auction where a bidder's value-per-click $v_{i,j}$ depends on the bidder and the slot.
Let $\al^{PPC}$ be an allocation rule that chooses the advertiser-slot assignment returned by the welfare-maximizing oracle described above. Then the mechanism $\optMidrRedn(\al^{PPC},\gamma)$
is truthful in expectation and approximates
the welfare of $\al^{PPC}$ to within a factor of $(1-\gamma)$.
\end{theorem}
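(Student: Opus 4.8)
The plan is to verify the two claimed properties — truthfulness in expectation and the $(1-\gamma)$ welfare guarantee — by checking that $\al^{PPC}$ falls into the class of allocation rules covered by Corollary~\ref{cor:vcg-redn} and Lemma~\ref{lem:vcg-swapx}, and that the measurement restriction of the PPC setting is compatible with what $\optMidrRedn$ actually queries. The first step is to argue that $\al^{PPC}$ is MIDR: by assumption the auctioneer's oracle returns, for the true bid vector $b$, the assignment maximizing the welfare of any specified subset of bidders, with valuations $b_i(\ocm) = c_{i,j}v_{i,j}$ for the advertiser-slot pair $\ocm$ determined by the assignment. Since the oracle exactly maximizes welfare over the (finite) range of advertiser-slot assignments, it is trivially maximal-in-range, hence MIDR, as a degenerate case (point-mass distributions). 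This is the key conceptual point: we only ever need the oracle at $b$ and at the "zeroed" bid vectors $\hat b^M$, which by the definition of $\hat b^M$ is exactly "maximize the welfare of the bidders in $M$ under $b$" — precisely the query the oracle supports.

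Next I would invoke Corollary~\ref{cor:vcg-redn} directly: since $\al^{PPC}$ is MIDR and the reduction $\optMidrRedn(\al^{PPC},\gamma)$ makes a single call realizing $\al^{PPC}(\hat b^M)$ for $M$ sampled from $\pibar$, the corollary gives truthfulness in expectation (together with IR and NPT in the ex-post sense). The only subtlety to spell out is that the payments $\lambda_i(\al(\hat b^M),\hat b^M,b) = c_i^{\pibar}(M)\sum_{j\neq i} b_j(\al(\hat b^M))$ depend only on the \emph{realized} advertiser-slot assignment and the measured per-click values $v_{i,j}$ along that assignment — so the per-click charges can be computed from clicks observed under the single assignment actually shown, which is exactly the informational point the surrounding discussion makes. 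Because bidders are charged per click, the stated payment is the expected payment, and $\ex[\cdot]$ over the click realization gives the claimed $\ex[p_i]$ in~\eqref{eqn:vcg_payment}; combined with the expectation over $M\sim\pibar$, truthfulness in expectation follows from Theorem~\ref{thm:vcg-char}.

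For the welfare bound, I would apply Lemma~\ref{lem:vcg-swapx}, which states that $\optMidrRedn(\al,\gamma)$ achieves $\alphaw = \min_i \Pr_\pibar(i\in M) = 1-\gamma$ whenever valuations are nonnegative. Here the valuations $c_{i,j}v_{i,j}$ are products of a click-through-rate and a value-per-click, both nonnegative, so the hypothesis of the lemma is met and the expected welfare of $\asc$ is at least $(1-\gamma)$ times the welfare of $\al^{PPC}(b)$. This is where the argument differs from the single-parameter PPC theorem (which got $(1-\gamma)^n$ via the $\BKSRedn$ precision bound): the MIDR reduction's welfare guarantee comes from the fact that dropping agents can only \emph{shift} welfare toward the retained bidders, not destroy it below what they receive in the full allocation — so the loss is governed by the marginal inclusion probability $1-\gamma$ rather than the joint probability $(1-\gamma)^n$.

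I expect the main obstacle to be not any hard inequality but the modeling check that the oracle's limited power (welfare-maximization over subsets, but only at the true $b$) genuinely suffices to instantiate $\optMidrRedn$ — i.e. confirming that $\optMidrRedn$ never needs $\al$ evaluated at a bid vector other than some $\hat b^M$, and that each such evaluation is a legal oracle query. This is exactly the assumption flagged in the footnote accompanying the zeroing-reduction discussion in Section~\ref{sec:vcg-char}, so the argument is a matter of pointing to it; once that is granted, truthfulness and the welfare bound are immediate consequences of Corollary~\ref{cor:vcg-redn} and Lemma~\ref{lem:vcg-swapx} respectively.
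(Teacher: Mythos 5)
Your proposal is correct and matches the paper's (implicit) argument: the paper states Theorem~\ref{thm:adwordsApp} without a separate proof precisely because it follows by observing that the welfare-maximizing oracle makes $\al^{PPC}$ maximal-in-range (hence MIDR) over the zeroed bid vectors $\hat b^M$, so truthfulness is Corollary~\ref{cor:vcg-redn} and the $(1-\gamma)$ welfare bound is Lemma~\ref{lem:vcg-swapx}. Your additional remarks about measured clicks providing unbiased estimates of $b_j(\al(\hat b^M))$ and about the oracle only being queried at subsets of the true bid $b$ are exactly the points the surrounding discussion and footnote in Section~\ref{sec:vcg-char} rely on, so nothing is missing.
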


\section{Single-parameter reductions}

\label{sec:sp-char}
In this section, we characterize truthful reductions for single-parameter
domains and show that the construction of~\cite{BKS10} is optimal. Theorem~\ref{thm:sp-char} characterizes all
reductions that are truthful for an arbitrary monotone, bounded,
single-parameter allocation function $\al$. Our characterization is more general
than the self resampling procedures
described by Babaioff et al. and shows that a wide variety of
probability measures may be used to construct a truthful reduction.
Theorem~\ref{thm:babaioff-opt} shows that the construction given in Babaioff et
al. is optimal among such reductions for a fixed bound on the precision, welfare
approximation, or revenue approximation of the reduction.

As in the MIDR setting, truthful payments give intuition for the structure of a
single-call reduction. As noted in Section~\ref{sec:prelim}, payments are
truthful if and only if they are given by the Archer-Tardos characterization:
\begin{equation}\label{eqn:sp_truthful_payment}
p_i(b) = b_i\al_i(b)-\int_{0}^{b_i}{\al_i(u,b_{-i})du}\enspace.
\end{equation}
Loosely speaking, this says ``charge $i$ the value she receives
minus what she would expect if she lowered her bid.'' Thus, a single call reduction should, with some probability, 
lower agents' bids to compute the value of allocation function at $(u,b_{-i})$
for $u \leq b_i$.

\subsection{Characterizing Single-Call Reductions} For the sake of intuition, we start with the
special case that the resampling measure $\mub$ has a nicely behaved density representation
$f_b(\hat b)$ (the resampling density) that is continuous in $\hat b$ and $b$. The
proof for arbitrary measures $\mu_b$ requires significant measure theory and is deferred until Appendix~\ref{sec:a-sp-char}. 

Define the coefficients $c^f_i(\hat b,b)$ as
$c^f_i(\hat b,b)=1-\frac{1}{b_i}\int_{0}^{b_i}\frac{f_{u,b_{-i}}(\hat
b)}{f_b(\hat b)}du$ when $b_i\neq0$, and to be $0$ when $b_i=0$.
We characterize truthful reductions as follows:
\begin{theorem} \label{thm:sp-char} 
A normalized single-parameter reduction $(f,\{\lambda_i\})$ for the set of all monotone
 bounded
single-parameter allocation functions satisfies truthfulness, individual
rationality and no positive transfers in an ex-post sense if and only if the following conditions are met: 
\begin{enumerate} 
\item The resampling density $f_b$ is such that the single-call mechanism's randomized allocation procedure $\asci(b)$ is monotone 
in expectation, i.e., for all agents $i$, for all $b$, and $b_i' \geq b_i$,
$\ex_{\hat b\sim f_b}[\asci(b'_i,b_{-i})]\geq \ex_{\hat b\sim f_b}[\asci(b)]$. (See below.)
\item The resampling density $f_b$ is such that $f_b(\hat b)\neq 0$ if
$\int_0^{b_i}f_{u,b_{-i}}(\hat b)du\neq 0$.\footnote{This condition
effectively requires $c^f_i(\hat b, b)$ to be finite.}
\item The payment functions $\lambda_i(A(\hat b),\hat b,b)$ satisfy: 
$\lambda_i(A(\hat b),\hat b,b)=b_ic_i^f(\hat b,b)A_i(\hat b)$ almost surely,
i.e. for all $\hat b$ except possibly a set with probability zero under $f_b$.
\end{enumerate} 
\end{theorem}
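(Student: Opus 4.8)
The plan is to mirror the structure of the proof of Theorem~\ref{thm:vcg-char}: extract the necessary conditions on the resampling density by equating expected single-call payments with the unique truthful (Archer--Tardos) payments for the induced allocation rule $\asc$, then check that these conditions are also sufficient. First I would observe that $\asc(b)$ is a valid monotone single-parameter allocation rule precisely when condition~(1) holds --- this is just the Archer--Tardos requirement (Theorem~\ref{thm:MAT}) applied to $\asc_i(b)=\ex_{\hat b\sim f_b}[\al_i(\hat b)]$, and it is exactly what is needed for truthful payments $\psc_i(b)$ to exist. Given that, the truthful payments for $\asc$ are forced to be the normalized Archer--Tardos payments
\[
\ex[\psc_i(b)]=b_i\,\ex_{\hat b\sim f_b}[\al_i(\hat b)]-\int_0^{b_i}\ex_{\hat b\sim f_{u,b_{-i}}}[\al_i(\hat b)]\,du,
\]
using normalization to kill the $h_i(b_{-i})$ term.

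Next I would rewrite both sides in terms of the density. The left-hand side is $\int \lambda_i(\al(\hat b),\hat b,b)\,f_b(\hat b)\,d\hat b$ by definition of the reduction. On the right-hand side I would push the integral over $u$ inside the expectation and change the reference measure: $\int_0^{b_i}\ex_{\hat b\sim f_{u,b_{-i}}}[\al_i(\hat b)]\,du = \int \al_i(\hat b)\bigl(\int_0^{b_i} f_{u,b_{-i}}(\hat b)\,du\bigr)\,d\hat b$, so that the whole right-hand side becomes $\int \al_i(\hat b)\bigl(b_i f_b(\hat b)-\int_0^{b_i}f_{u,b_{-i}}(\hat b)\,du\bigr)\,d\hat b = \int b_i c_i^f(\hat b,b)\,\al_i(\hat b)\,f_b(\hat b)\,d\hat b$, recognizing the definition of $c_i^f$. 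Equating the two integrals, and using that $\al$ ranges over \emph{all} monotone bounded single-parameter allocation functions --- so the values $\al_i(\hat b)$ can be varied essentially independently on any positive-measure set of $\hat b$ via the same two-allocation-functions trick used in the MIDR proof --- forces $\lambda_i(\al(\hat b),\hat b,b)\,f_b(\hat b)=b_i c_i^f(\hat b,b)\,\al_i(\hat b)\,f_b(\hat b)$ for $f_b$-almost every $\hat b$, which is condition~(3). Condition~(2) then appears as the constraint that makes $c_i^f$ finite wherever $f_b(\hat b)=0$ while $\int_0^{b_i}f_{u,b_{-i}}(\hat b)\,du\neq0$: on such a set the right-hand side would be nonzero even though $f_b$ assigns it zero mass, contradicting the integral identity for a suitably chosen $\al$; so that set must be null, which is exactly~(2). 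For sufficiency, I would run the computation in reverse: conditions~(1)--(3) guarantee that the expected payment equals the Archer--Tardos payment for the monotone rule $\asc$, hence truthfulness in expectation; normalization and the sign of $\lambda_i$ (note $\al_i\ge0$ and, with NPT/IR built into the normalized Archer--Tardos form, $b_i c_i^f(\hat b,b)\ge0$ wherever it is nonzero) give IR and NPT ex post.

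The main obstacle, and the reason the fully general statement is deferred to the appendix, is the measure-theoretic bookkeeping: when $\mu_b$ has no density, the ratio $f_{u,b_{-i}}(\hat b)/f_b(\hat b)$ must be replaced by Radon--Nikodym derivatives of $\mu_{u,b_{-i}}$ with respect to $\mu_b$, one must check these exist (i.e. the relevant absolute-continuity fails only on a null set --- the measure-theoretic incarnation of condition~(2)), and Fubini/Tonelli must be justified when interchanging the $du$ integral with the integral against $\mu_b$, which requires the integrability hypothesis on $\lambda_i$ and boundedness of $\al$. In the density case presented here these steps are routine given continuity of $f_b(\hat b)$ in both arguments, so the argument above goes through cleanly; the delicate part is purely the passage to arbitrary $\mu_b$, handled in Appendix~\ref{sec:a-sp-char}.
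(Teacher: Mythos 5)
Your overall plan does follow the paper's route (Archer--Tardos payments for $\asc$, rewriting both sides against the density, Tonelli, equating, then reversing for sufficiency), but there is a genuine gap at the heart of the necessity argument for condition~(3). You assert that, because $\al$ ranges over \emph{all} monotone bounded allocation functions, the values $\al_i(\hat b)$ ``can be varied essentially independently on any positive-measure set'' via the same two-allocation-function trick as in the MIDR proof. That is precisely what monotonicity forbids: zeroing or perturbing $\al_i$ on an arbitrary positive-measure set of resampled bids generally destroys monotonicity, and the MIDR trick (two rules agreeing except on a single set $\bar M$, one having zero welfare there) has no direct analogue over a continuous bid space. The paper's proof supplies exactly the missing device: it localizes the integral identity to an arbitrary rectangle $\rect$ by writing $1_{\rect}=1_{\rect}^{+}-1_{\rect}^{-}$ as a difference of two monotone $\{0,1\}$-valued functions, so that $\al^{+}=1_{\rect}^{+}\al$ and $\al^{-}=1_{\rect}^{-}\al$ are again monotone and bounded; plugging both into the identity and subtracting gives the identity restricted to $\rect$. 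Even then one must still pass from ``equal integrals over every rectangle'' to ``integrands equal $f_b$-almost everywhere,'' which the paper does through the $\pi$--$\lambda$ (Dynkin) argument of Lemma~\ref{lem:sp-char-restrictedint} together with Fact~\ref{fact:ana-fzeroae}. Without some construction of this kind, your step ``forces $\lambda_i(\al(\hat b),\hat b,b)f_b(\hat b)=b_ic_i^f(\hat b,b)\al_i(\hat b)f_b(\hat b)$ a.e.'' is unjustified, and this is the main technical content of the theorem.

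A smaller error: in your sufficiency sketch you claim $b_i c_i^f(\hat b,b)\ge 0$ wherever it is nonzero and derive ex-post NPT/IR from it. This is false in general --- in the BKS reduction $c_i^{\fbks}(\hat b,b)=1-\frac{1}{\gamma}<0$ whenever $i$'s bid is resampled, i.e.\ the mechanism pays rebates --- so the sign argument cannot work; the paper's own sufficiency direction only verifies that conditions (1)--(3) reproduce the Archer--Tardos payments in expectation, which is the part you should keep.
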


\begin{proof} (See Appendix~\ref{sec:a-sp-char} for the proof when $\mub$ is an arbitrary measure.)

{\bf Necessity.} 
The first condition, that $\asc$ must be monotone in expectation, follows directly from
Archer-Tardos characterization of truthful allocation functions. The second and
third conditions, as we prove below, are necessary for the expected payment to
take the form required by the Archer-Tardos characterization.

The allocation function $\asc$ is a single-parameter
allocation function, so the Archer-Tardos characterization gives truthful
prices if they exist:
\begin{align*}
\ex[\psci]&=b_i\ex_{\hat b\sim f_{b}}[\asci(b)]-\int_0^{b_i}\ex_{\hat
b\sim f_{u,b_{-i}}}[\asci(u,b_{-i})]du\\
&=b_i\ex_{\hat b\sim f_b}[\al_i(\hat b)]-\int_0^{b_i}\ex_{\hat b\sim f_{u,b_{-i}}}[\al_i(\hat b)]du\\
&=b_i\int_{\hat b\in\Re^n}\al_i(\hat b)f_b(\hat b)d\hat b-\int_0^{b_i}\int_{\hat
b\in\Re^n}\al_i(\hat b)f_{u,b_{-i}}(\hat b)d\hat bdu\enspace.
\end{align*}
Rearranging, where changing the order of integration may be justified by Tonelli's theorem, gives
\[\ex[\psci]=\int_{\hat b\in\Re^n}f_b(\hat b)b_i\al_i(\hat
b)\left(1-\frac{1}{b_i}\int_{0}^{b_i}\frac{f_{u,b_{-i}}(\hat b)}{f_b(\hat
b)}du\right)d\hat b\enspace.\] 
By construction, we can express the expected
price as 
\[\ex[\psci]=\int_{\hat b\in\Re^n}f_b(\hat b)\lambda_i(\al(\hat b),\hat
b,b)d\hat b\enspace.\] 
Thus truthfulness in expectation necessarily implies 
\begin{equation}\label{eqn:sp-char-eq}
\int_{\hat b\in\Re^n}f_b(\hat b)\lambda_i(\al(\hat b),\hat b,b)d\hat b=\int_{\hat
b\in\Re^n}f_b(\hat b)b_i\al_i(\hat
b)\left(1-\frac{1}{b_i}\int_{0}^{b_i}\frac{f_{u,b_{-i}}(\hat b)}{f_b(\hat
b)}du\right)d\hat b\enspace.
\end{equation}
Note that proving the necessity of condition three in the theorem is equivalent to proving that
the integrands in the LHS and the RHS of~\eqref{eqn:sp-char-eq} are equal
almost everywhere. That is, we have to show that the only way for Equation~\eqref{eqn:sp-char-eq}
to hold for all monotone bounded $\al$ is when the integrands are equal almost everywhere. To
show this, it is sufficient to show that Equation~\eqref{eqn:sp-char-eq} must
still hold if we restrict the range of integration to an arbitrary rectangular
parallelepiped (hence forth called as rectangle) $\rect\subseteq\Re^n$ (see why this is enough in Appendix~\ref{sec:a-sp-char} for a
more general setting), that is, it is sufficient to show that for all rectangles
$\rect\subseteq\Re^n$
\begin{equation}\label{eqn:sp-char-eqr}
\int_{\hat b\in \rect}f_b(\hat b)\lambda_i(\al(\hat b),\hat b,b)d\hat b=\int_{\hat
b\in \rect}f_b(\hat b)b_i\al_i(\hat
b)\left(1-\frac{1}{b_i}\int_{0}^{b_i}\frac{f_{u,b_{-i}}(\hat b)}{f_b(\hat
b)}du\right)d\hat b\enspace.
\end{equation}
Showing~\eqref{eqn:sp-char-eqr} would be straight-forward if we are given
that~\eqref{eqn:sp-char-eq} holds for {\em all} $\al$ --- we could take any
$\al$ and make it zero for all points not in $\rect$, and
then~\eqref{eqn:sp-char-eq} immediately implies~\eqref{eqn:sp-char-eqr}. 
However~\eqref{eqn:sp-char-eq} is guaranteed to be true only for monotone bounded $\al$,
since those are the allocation functions that could possibly be input to our
reduction. To see that it is still true when~\eqref{eqn:sp-char-eq} is
only guaranteed for monotone bounded $\al$, define the function $1_{\rect}(\hat b)$ as
\[1_{\rect}(\hat b)=\begin{cases}1,&\hat b\in \rect\\0,& otherwise.\end{cases}\]
Observe that $1_{\rect}$ can be written as $1_{\rect}(\hat b)=1_{\rect}^+(\hat
b)-1_{\rect}^-(\hat b)$
where $1_{\rect}^+$ and $1_{\rect}^-$ are both $\{0,1\}$, monotone functions. Moreover,
the functions $\al^+(b)=1_{\rect}^+(b)\al(b)$ and $\al^-(b)=1_{\rect}^-(b)\al(b)$ are also monotone, and they
agree with $\al$ on $\rect$. If we plug $\al^+$ and $\al^-$ into (\ref{eqn:sp-char-eq}) and
subtract the results, we get precisely (\ref{eqn:sp-char-eqr}). Thus condition three is necessary.

For the necessity of condition two, note that if it were not to hold, the coefficients $c^f_i$ will become
$-\infty$, and hence the payments as defined in condition three will not be
finite. Clearly finiteness of payments is a requirement. 

This proves that all three conditions in the theorem are necessary for truthfulness.

{\bf Sufficiency.}  We now show that the three stated conditions are
sufficient. In a single-parameter setting, for a mechanism to be truthful,
we need the allocation function to be monotone in expectation and the
payment function to satisfy the Archer-Tardos payment functions. 
Condition one guarantees that the allocation function output by the single-call
reduction is a monotone in expectation allocation function. It remains to show
that the second and third conditions result in payments that agree with
Archer-Tardos payments. Given condition two, finiteness of payments as defined
in condition three is satisfied. All we
need to show is that under the formula of $\lambda_i(A(\hat b), \hat b, b))$
described in condition three, the single-call payments match in expectation with
Archer-Tardos payments, i.e.,~\eqref{eqn:sp-char-eq} holds.
Since $c^f_i(\hat b,b)=1-\frac{1}{b_i}\int_{0}^{b_i}\frac{f_{u,b_{-i}}(\hat
b)}{f_b(\hat b)}du$, taking
\[\lambda_i(A(\hat b),\hat b,b)=b_ic_i^f(\hat b,b)A_i(\hat b)\quad a.s.\]
trivially satisfies (\ref{eqn:sp-char-eq}), implying that the reduction is truthful.
\end{proof}

Unfortunately, our assumption that $\mub$ has a density representation is
unreasonable. Most significantly, one would expect $\hat b=b$ with some nonzero
probability, implying that $\mub$ would have at least one atom for most
interesting distributions. In particular, the distribution used in the BKS
transformation has such an atom, so it cannot be analyzed in this fashion.

To handle general measures $\mub$ we apply the same ideas using tools
from measure theory. A full proof is given in Appendix~\ref{sec:a-sp-char}.

\subsection{The BKS Reduction for Positive Types} The central construction
of Babaioff, Kleinberg, and Slivkins~\shortcite{BKS10} is a reduction for scenarios where bidders have positive types.\footnote{They also give a reduction that applies to more general type spaces, but we do not state it here.}

Their resampling procedure (implicitly defining $\mub$) is described Algorithm~\ref{alg:bkssc}.
In the language of our characterization, the coefficients $c_i^\fbks$ are
\begin{equation*}
c_i^\fbks(\hat b,b)=\begin{cases}
1,&\hat b_i=b_i\\
1-\frac1{\gamma}&otherwise.
\end{cases}
\end{equation*}
They proved that $\BKSRedn(\al,\gamma)$ is truthful. This fact can be easily
derived from Theorem~\ref{thm:sp-char}:
\begin{theorem}[Babaioff, Kleinberg, and Slivkins 2010.] For all monotone, bounded, single-parameter allocation rules $\al$, the single-call mechanism given by
$\BKSRedn(\al,\gamma)$ satisfies truthfulness and no positive transfers in an ex-post sense and is ex-post universally individually rational.
\end{theorem}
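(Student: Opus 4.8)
The plan is to derive everything directly from the characterization in Theorem~\ref{thm:sp-char} --- or, since the resampling measure $\mub$ has an atom at $\hat b=b$ and hence no density, from its general-measure form proved in Appendix~\ref{sec:a-sp-char}. First I would identify $\BKSRedn(\al,\gamma)$ as a reduction $(f,\{\lambda_i\})$ in the framework of Section~\ref{sec:sc}: the resampling measure $\mub$ is the one implicitly defined by Algorithm~\ref{alg:bkssc} (each agent's bid resampled independently, the $i$-th marginal having an atom at $b_i$ together with mass spread on $[0,b_i]$), and the payments are $\lambda_i(\al(\hat b),\hat b,b)=b_i\,c_i^\fbks(\hat b,b)\,\al_i(\hat b)$ with $c_i^\fbks$ as displayed above. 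The mechanism is single-call by construction, since $\al$ is evaluated only at the resampled vector $\hat b$. What remains is to check the three conditions of Theorem~\ref{thm:sp-char} and then upgrade individual rationality to the ex-post universal form.

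For condition~3 I would verify that the displayed $c_i^\fbks$ is exactly the canonical coefficient $c_i^f$ associated with $\mub$, i.e. that $c_i^\fbks(\hat b,b)=1-\tfrac1{b_i}\int_0^{b_i}\tfrac{d\mu_{u,b_{-i}}}{d\mu_b}(\hat b)\,du$. Since resampling does not touch the other coordinates, $\tfrac{d\mu_{u,b_{-i}}}{d\mu_b}(\hat b)$ reduces to a one-dimensional Radon--Nikodym derivative in the $i$-th coordinate, and the self-resampling procedure is built precisely so that $\tfrac1{b_i}\int_0^{b_i}\tfrac{d\mu_{u,b_{-i}}}{d\mu_b}(\hat b)\,du$ equals $0$ on the atom $\{\hat b_i=b_i\}$ (there $\mu_b$'s atom dominates and every $\mu_{u,b_{-i}}$ with $u<b_i$ is null) and equals $1/\gamma$ on $\{\hat b_i\neq b_i\}$, giving $c_i^f=1$ and $c_i^f=1-1/\gamma$ respectively. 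Condition~2 (finiteness of $c_i^f$, equivalently $f_b(\hat b)\neq0$ whenever $\int_0^{b_i}f_{u,b_{-i}}(\hat b)\,du\neq0$) is immediate, because $\gamma\in(0,1)$ makes $1-1/\gamma$ finite and the support of each $\mu_{u,b_{-i}}$ with $u\le b_i$ lies inside the support of $\mu_b$. Condition~1, that $\asci(b)=\ex_{\hat b\sim\mub}[\al_i(\hat b)]$ is monotone in expectation, is the substantive point: raising $b_i$ changes the resampled profile only through its $i$-th marginal, which moves up in the stochastic order, and since $\al_i(\cdot,\hat b_{-i})$ is nondecreasing by monotonicity of $\al$, the expectation can only increase. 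This is exactly the core property of the canonical self-resampling procedure established in~\cite{BKS10}, and it is the one step that is more than bookkeeping. As the reduction is also normalized ($\lambda_i=0$ whenever $\al_i(\hat b)=0$), Theorem~\ref{thm:sp-char} then yields truthfulness, individual rationality, and no positive transfers in an ex-post sense (the last holding in expectation over the reduction's randomness --- individual realizations may pay an agent, since $1-1/\gamma<0$).

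Finally I would promote individual rationality to ex-post universal IR by a direct computation that avoids expectations. Under truthful bidding $b_i=v_i$, for every realization $\hat b$ the realized utility of agent $i$ is
\[ v_i\al_i(\hat b)-\lambda_i(\al(\hat b),\hat b,b)=v_i\al_i(\hat b)\bigl(1-c_i^\fbks(\hat b,b)\bigr), \]
which equals $0$ when $\hat b_i=b_i$ and $v_i\al_i(\hat b)/\gamma\ge0$ when $\hat b_i\neq b_i$ (using $v_i\ge0$ for positive types, $\al_i\ge0$, and $\gamma\in(0,1)$); hence utility is nonnegative for every coin toss of the reduction, which is exactly ex-post universal IR. I expect the single genuine obstacle to be condition~1 --- arguing that the BKS resampling keeps the induced allocation monotone in expectation --- since conditions~2 and~3 and the universal-IR claim are essentially bookkeeping once the coefficients are read off the characterization.
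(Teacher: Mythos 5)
Your proposal is correct and follows essentially the same route as the paper, which simply observes that the result is "easily derived" from the characterization theorem (in its general-measure form, Theorem~\ref{thm:a-sp-char}, since $\mub$ has an atom at $\hat b=b$): you verify the three conditions for the BKS resampling measure and read off the coefficients $c_i^\fbks$. The only addition is your explicit realization-wise computation upgrading IR to ex-post universal IR, which the paper leaves implicit and which you carry out correctly.
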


\begin{algorithm}[tb]\label{alg:bkssc}
\SetKwInOut{Input}{input}\SetKwInOut{Output}{output}
\SetKwIF{WP}{ElseWP}{Otherwise}{with probability}{}{with probability}{otherwise}{end}
\Input{Bounded, monotone allocation function $\al$.}
\Output{Truthful-in-expectation mechanism $\me=(\asc,\{\psci\})$.}
\BlankLine
\nl Solicit bids $b$ from agents\;
\nl \For{$i\in[n]$}{
 \eWP {$1-\gamma$}{
   Set $\hat b_i=b_i$\;}{
   Sample $x_i$ uniformly at random from $[0,\hat b_i]$\;
   Set $\hat b_i=b_i x_i^{\frac1{1-\gamma}}$\;}}
\nl Realize the outcome $\al(\hat b)$\;
\nl Charge payments\\
$\lambda_i(\al(\hat b^M),\hat b^M,b)= b_i\al_i(\hat b)\times\begin{cases}1,&\hat b_i=b_i\\\frac{1-\gamma}{\gamma},&\hat b_i<b_i\end{cases}$\;
\caption{$\BKSRedn(\al,\gamma)$ --- The BKS reduction for single-parameter domains}
\end{algorithm}

\subsection{Optimal Single-Call Reductions}
\label{sec:babaioff}

\label{sec:sp-char-opt}

Analogous to our MIDR construction, we show that, the BKS construction for positive types
is optimal with respect to precision, welfare, and revenue as defined in Section~\ref{sec:sc} (other type spaces are discussed in Appendix
\ref{sec:a-sp-char}). Using our characterization from Theorem~\ref{thm:sp-char},
the bid-normalized payments we wish to optimize will be
\[\sum_j \frac{\lambda_{ij}(b_j(\al(\hat
b)),\hat b,b)}{b_j(\al(\hat b))}=\frac{c_i^\mu(\hat
b,b)b_i\al_i(\hat b)}{b_i\al_i(\hat b)}=c_i^\mu(\hat
b,b)\enspace.\]
Thus, optimizing variance of normalized payments is equivalent
to optimizing $\max_{i}\var_{\hat b\sim\mub}c_i^\mu(\hat b,b)$,
and optimizing the worst-case normalized payment is equivalent to
optimizing $\sup_{i,\hat b}|c_i^\mu(\hat b,b)|$.

For this section, we make a ``nice
distribution'' assumption that for any $u\neq b_i$, $\Pr(\hat b_i=u|b)=0$.
That is, if we compute the marginal distribution of $\hat b_i$, the only bid $\hat b_i$ that has an atom is $b_i$
(other bids only have positive density). We handle the
general case in the full proofs in Appendix~\ref{sec:a-babaioff}.

Our main result is that the BKS transformation is optimal:
\begin{theorem}
\label{thm:babaioff-opt}
The single-call reduction $\BKSRedn(\al,\gamma)$ optimizes the variance of bid-normalized payments and the worst-case bid-normalized payment

for every $b$ subject to a lower bound $\alpha=(1-\gamma)^n\in(\frac1e,1)$ on the precision, the welfare approximation, or the revenue approximation.
\end{theorem}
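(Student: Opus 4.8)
The plan is to leverage the characterization of Theorem~\ref{thm:sp-char} to reduce the optimization over all truthful reductions to an optimization over the marginal resampling behavior of each bidder. By Theorem~\ref{thm:sp-char}, any truthful reduction is determined (for payment purposes) by the coefficients $c_i^\mu(\hat b,b) = 1 - \frac{1}{b_i}\int_0^{b_i} \frac{f_{u,b_{-i}}(\hat b)}{f_b(\hat b)}\,du$, and the bid-normalized payments equal exactly these coefficients. Because $c_i^\mu$ depends on $f$ only through the ratio of densities obtained by varying $b_i$ alone, the relevant quantity is the conditional law of $\hat b_i$ given $b$ (holding $b_{-i}$ fixed); the other coordinates $\hat b_{-i}$ integrate out and do not affect $\var_{\hat b}c_i^\mu$ or $\sup_{\hat b}|c_i^\mu|$ by a conditioning/Fubini argument. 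So the first step is: fix $i$ and $b$, and rewrite the variance and worst-case objectives purely in terms of the one-dimensional marginal density $g(\cdot)$ of $\hat b_i$ (with an atom of mass $\alpha_i$ at $b_i$ under the ``nice distribution'' assumption), and of the kernel describing how $g$ changes as $b_i$ varies.

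The second step is to identify the constraints. Monotonicity in expectation (condition 1) plus the atom-at-$b_i$ structure forces the resampling to lower bids, i.e.\ $\hat b_i \le b_i$, mirroring the structure of $\BKSRedn$. The precision constraint is $\prod_i \Pr(\hat b_i = b_i \mid b) \ge (1-\gamma)^n$, which I would combine with the per-bidder decomposition: since the overall objective is $\max_i$ over per-bidder quantities and each per-bidder quantity is a decreasing function of the per-bidder atom mass $\alpha_i = \Pr(\hat b_i=b_i\mid b)$, the worst bidder's mass can be taken as small as the product constraint allows, and the symmetric choice $\alpha_i = 1-\gamma$ for all $i$ is forced at optimum — exactly as in the proof of Theorem~\ref{thm:vcg-opt}. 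For welfare and revenue, I would invoke the single-parameter analogues of Lemmas~\ref{lem:vcg-swapx} and~\ref{lem:vcg-rapx} (the welfare/revenue approximation of $\BKSRedn$ equals $(1-\gamma)^n$ and is tight; these are the single-parameter counterparts referenced for Appendix~\ref{sec:a-babaioff}) to show that a lower bound of $(1-\gamma)^n$ on either quantity is equivalent to (or implies) the same lower bound on precision, so all three cases collapse to the precision case.

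The third step is the actual one-dimensional optimization: among resampling kernels on $[0,b_i]$ with a prescribed atom mass $1-\gamma$ at $b_i$ and satisfying the self-consistency required for the $c_i^\mu$ formula to be well-defined (condition 2, finiteness of the coefficients), show that the choice $\hat b_i = b_i x_i^{1/(1-\gamma)}$ with $x_i$ uniform — which yields $c_i^\fbks \in \{1, 1-\tfrac1\gamma\}$ — uniquely minimizes both $\sup_{\hat b}|c_i^\mu|$ and $\var_{\hat b}c_i^\mu$. For the worst-case objective: the atom contributes coefficient $1$; on the continuous part, the average of $c_i^\mu$ against $f_b$ must be $0$ (since $\ex_{\hat b}[c_i^\mu] = 0$, which drops out of the Archer–Tardos computation), so with atom mass $1-\gamma$ the continuous part must average to $-\tfrac{1-\gamma}{\gamma}$ — forcing $\sup|c_i^\mu| \ge \tfrac1\gamma - 1$, with equality only when $c_i^\mu$ is constant $=1-\tfrac1\gamma$ on the continuous part, pinning down the BKS kernel. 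For the variance objective: $\var_{\hat b}c_i^\mu = \ex[(c_i^\mu)^2]$ since the mean is $0$; a convexity / Jensen argument (the $L^2$ norm of a mean-$-(1-\gamma)/\gamma$, mass-$\gamma$ distribution is minimized when it is a point mass) shows the same kernel is optimal. I expect the main obstacle to be the measure-theoretic bookkeeping when the ``nice distribution'' assumption is dropped (deferred to Appendix~\ref{sec:a-babaioff}): without a density one must argue via the Radon–Nikodym derivative of $\mu_{u,b_{-i}}$ with respect to $\mu_b$ that the coefficient structure and the mean-zero identity still hold, and that the extremal kernel is still essentially unique; handling the boundary $\alpha=(1-\gamma)^n > 1/e$ (equivalently $\gamma < 1 - e^{-1/n}$, ensuring each $\gamma_i$ stays in range and the kernel $x^{1/(1-\gamma)}$ is integrable) is a routine side condition but must be checked.
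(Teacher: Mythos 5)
Your outline gets several local facts right (the bid-normalized payments are the coefficients $c_i^\mu$, $\ex_{\hat b\sim\mu_b}[c_i^\mu(\hat b,b)]=0$, the coefficient equals $1$ on the atom under the nice-distribution assumption, and the welfare/revenue cases should collapse to the precision case), but the central step --- reducing the problem to $n$ independent one-dimensional problems over the marginal law of $\hat b_i$ --- does not work, and it is exactly where the real content of the paper's proof lies. First, the precision constraint is on the joint event $\{\hat b=b\}$; it is not $\prod_i\Pr(\hat b_i=b_i\mid b)\geq(1-\gamma)^n$, since nothing forces the resampling to be independent across bidders, and correlated resampling can make the joint probability far exceed the product of marginals. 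Second, $c_i^\mu(\hat b,b)$ depends on the joint density ratio at the full vector $\hat b$, so $\sup_{\hat b}|c_i^\mu|$ and $\var(c_i^\mu)$ do \emph{not} depend only on the marginal of $\hat b_i$; the coordinates $\hat b_{-i}$ do not ``integrate out.'' If you run your mean-zero/atom argument using only the marginal atom mass $\alpha_i=\Pr(\hat b_i=b_i\mid b)$, the precision bound only guarantees $\alpha_i\geq(1-\gamma)^n$, so you obtain $\sup_{\hat b}|c_i^\mu|\geq(1-\gamma)^n/(1-(1-\gamma)^n)$, which is strictly weaker than the BKS value $\frac{1-\gamma}{\gamma}$ for $n\geq2$; hence optimality of $\BKSRedn$ against correlated schemes does not follow. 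The paper closes this gap by conditioning on the resampling pattern of the other bidders: with $\pi^\mu(M,b)$ the probability that exactly the set $M$ keeps its bids, monotonicity yields $\sup_{\hat b}|c_i^\mu(\hat b,b)|\geq\pi^\mu(M\cup\{i\},b)/\pi^\mu(M,b)$ up to terms that vanish almost everywhere (Lemma~\ref{lem:babaioff-monoratio}, Corollary~\ref{cor:babaioff-piratio}), and then the combinatorial Lemmas~\ref{lem:babaioff-hfunc} and~\ref{lem:babaioff-hfunc2} show that when $\pi^\mu([n],b)\geq(1-\gamma)^n$ some such conditional ratio is at least $\frac{1-\gamma}{\gamma}$. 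That subset-conditioned argument is what rules out correlation; in Theorem~\ref{thm:vcg-opt}, which you invoke, $\pi$ is the full joint distribution over subsets and independence is \emph{derived}, whereas your proposal assumes a product structure from the outset.

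A second, smaller gap: you assert that monotonicity in expectation ``forces'' $\hat b_i\leq b_i$. It does not --- upward resampling is perfectly consistent with condition~1 of the characterization. The paper instead proves that any reduction which resamples upward with positive probability is strictly suboptimal (the $\gamma^{(>)}$ bound in Lemma~\ref{lem:babaioff-relwrgamma}), and only then do the welfare and revenue approximations, which are governed by $\min_i\Pr(\hat b_i\geq b_i\wedge\hat b_{-i}=b_{-i}\mid b)$, coincide with the precision (Lemma~\ref{lem:sp-swrapx} and its generalization). So the equivalence of the three constraints is a consequence of the optimality analysis, not of the characterization theorem, and your sketch needs that argument rather than the claimed structural restriction.
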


To prove Theorem~\ref{thm:babaioff-opt}, we first show that the three metrics we study are equivalent for interesting reductions in the single parameter setting:
\begin{lemma}\label{lem:sp-swrapx}
For $\alpha>\frac1e$ and $n\geq2$, a reduction that optimizes the variance of normalized payments or the maximum normalized payment subject to a precision constraint of $\Pr(\hat b= b|b)\geq\alpha$ also optimizes the maximum payment subject to a welfare or revenue approximation of $\alpha$.
\end{lemma}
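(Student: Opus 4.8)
The plan is to establish three facts: (i) a lower bound on the welfare approximation $\alphaw$ of a single-parameter reduction is \emph{equivalent} to a lower bound on the precision $\alphap$; (ii) likewise for the revenue approximation $\alphar$; and then (iii) conclude that optimizing risk (variance or worst-case of the $c_i^\mu$) subject to any one of the three constraints is the same optimization problem, so a reduction optimal for the precision-constrained problem is optimal for the other two. Fact (iii) is immediate once (i) and (ii) are in hand, since by the discussion preceding Theorem~\ref{thm:babaioff-opt} the risk objectives depend only on the coefficients $c_i^\mu(\hat b, b)$ and the feasible set is unchanged. So the work is all in (i) and (ii).

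For the welfare direction, I would first observe that precision trivially bounds welfare: with probability $\alphap$ the reduction returns $\al(b)$ itself, and by considering a monotone bounded $\al$ that puts all of any given bidder's allocation mass where it is ``hardest'' to preserve — concretely, an allocation function concentrated so that the welfare under any resampled $\hat b \neq b$ is zero — the welfare ratio $\ex_{\hat b}[\sum_i b_i(\asci(b))]/\sum_i b_i(\al_i(b))$ can be driven down to exactly $\Pr(\hat b = b \mid b)$. Hence $\alphaw \le \alphap$ always, and in fact the worst case forces $\alphaw = \alphap$ for such instances. For the reverse inequality — that a welfare guarantee of $\alpha$ forces precision at least $\alpha$ — I would use the ``nice distribution'' assumption: since for $u \neq b_i$ we have $\Pr(\hat b_i = u \mid b) = 0$, the resampled bids are a.s. either equal to $b_i$ or strictly below $b_i$ in each coordinate; pick the adversarial $\al$ above (welfare $0$ off the atom, full welfare $\sum_i b_i(\al_i(b))$ on it) and the welfare ratio is exactly $\Pr(\hat b = b \mid b)$, so a welfare approximation of $\alpha$ is attainable for \emph{all} monotone bounded $\al$ only if $\Pr(\hat b = b\mid b) \ge \alpha$. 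This is where the constraint $\alpha > 1/e$ and $n \ge 2$ enter: I expect they are needed to rule out degenerate reductions where per-coordinate resampling probabilities are large enough that correlated choices of $M$ could otherwise help, mirroring the role of $(1-\gamma)^n \in (1/e, 1)$ in Theorem~\ref{thm:babaioff-opt} — the product structure $\prod_i(1-\gamma_i)$ only behaves well in this range.

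For the revenue direction I would run the analogous argument using Lemma~\ref{lem:vcg-rapx} as a template: the revenue of $\asc$ is $\ex_{\hat b}[\sum_i \psci(b)] = \sum_i \ex_{\hat b}[\psci(b)]$, and by the Archer--Tardos form~\eqref{eqn:sp_truthful_payment} the expected single-call payment equals the truthful payment of $\asc$; choosing $\al$ so that the ``integral'' term in the truthful payment vanishes and all revenue is concentrated at $\hat b = b$ again pins the revenue ratio to $\Pr(\hat b = b\mid b)$, giving $\alphar \le \alphap$, while the matching lower bound follows as before. The main obstacle I anticipate is not these extremal constructions but verifying that the \emph{same} distribution simultaneously witnesses tightness for welfare, for revenue, \emph{and} has the adversarial structure compatible with an arbitrary target bid $b$ — i.e. checking that the monotone bounded allocation functions I build are legitimately monotone and bounded (the $1_\rect$-decomposition trick from the proof of Theorem~\ref{thm:sp-char} should supply this) and that the ``nice distribution'' hypothesis is genuinely what forces the per-coordinate atom structure. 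Once those are nailed down, (iii) gives the lemma, and Theorem~\ref{thm:babaioff-opt} then reduces to the single clean statement that $\BKSRedn(\al,\gamma)$ minimizes $\max_i \var_{\hat b}c_i^\mu$ and $\sup_{i,\hat b}|c_i^\mu|$ subject to $\Pr(\hat b=b\mid b)\ge(1-\gamma)^n$, which is proved by the same ``make the marginals independent, then equalize them'' argument used in Theorem~\ref{thm:vcg-opt}.
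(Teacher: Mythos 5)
There is a genuine gap, and it sits exactly at the step you flag as ``(i)/(ii)'': a lower bound on the welfare or revenue approximation is \emph{not} equivalent to a lower bound on precision. Your adversarial construction --- an allocation whose welfare is zero whenever $\hat b\neq b$ --- is not available, because $\al$ must be monotone: if $\al_i(b)>0$ and the reduction resamples $\hat b_i\geq b_i$ with $\hat b_{-i}=b_{-i}$, monotonicity forces $\al_i(\hat b)\geq\al_i(b)$, so upward resampling cannot be punished. Consequently the welfare approximation at a bid $b$ equals $\min_i\Pr(\hat b_i\geq b_i\wedge\hat b_{-i}=b_{-i}\mid b)$ (this is Lemma~\ref{lem:babaioff-wapx}; Lemma~\ref{lem:babaioff-rapx} gives the analogous two-sided bound for revenue), and this quantity can strictly exceed $\Pr(\hat b=b\mid b)$: e.g., with $n=2$, keep $b$ with probability $\tfrac12$ and raise exactly one (uniformly chosen) player's bid otherwise; the precision is $\tfrac12$ but the welfare approximation is $\tfrac34$. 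So the feasible set under a welfare/revenue constraint is strictly larger than under the precision constraint, and your step (iii) does not follow. Relatedly, your use of the nice-distribution assumption is a non sequitur: it only rules out atoms of the marginal of $\hat b_i$ at points other than $b_i$; it does not say bids are never resampled upward.

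The paper closes this gap differently: it shows (Lemma~\ref{lem:babaioff-relwrgamma}, building on Corollary~\ref{cor:babaioff-piratio} and Lemma~\ref{lem:babaioff-hfunc}) that any reduction which raises bids with positive probability, i.e.\ $\Pr(\hat b\not\leq b\mid b)>0$, incurs a worst-case normalized payment at least $\tfrac{1-\gamma^{(>)}}{\gamma^{(>)}}$ with $\gamma^{(>)}$ driven by $\min_i\Pr(\hat b_i>b_i\wedge\hat b_{-i}=b_{-i}\mid b)$, and that when $\alpha>2^{-n}\geq n^{-n}$ this exceeds the BKS value $\alpha^{1/n}/(1-\alpha^{1/n})$; hence an optimal reduction must have $\Pr(\hat b\not\leq b\mid b)=0$, at which point the welfare/revenue approximation collapses to the precision and the two optimization problems coincide. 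This is also where the hypotheses $\alpha>\tfrac1e$ (more precisely $\alpha>2^{-n}$) and $n\geq2$ actually enter --- not, as you conjecture, to control correlations among per-coordinate resampling probabilities (for $n=1$ a bid-raising reduction would trivially achieve welfare approximation $1$ with precision $0$, which is why $n\geq 2$ is needed). To repair your proposal you would need to add precisely this ``raising bids is costly'' argument; the extremal allocations you do construct (indicators of upward-closed boxes, as in the paper's sketch with $\al_i(b)=1$ for $b\geq\bar b$) are fine and are the same ones the paper uses for the tightness direction.
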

\begin{proof} (Sketch. The full proof is in Appendix~\ref{sec:a-babaioff}.) Consider the following allocation function:
\[\al_i(b)=\begin{cases}
1,&b\geq\bar b\\
0,&otherwise.
\end{cases}\]
Intuitively, a reduction should not resample to higher bids because
Archer-Tardos payments do not depend on higher bids, and hence no useful
information is obtained through raising bids. However, if a reduction never raises bids 
(i.e. $\Pr(\hat b\leq b|b)=1$), then the welfare and revenue of a single-call
reduction will both be precisely $\pr(\hat b=b|b)$ if we consider the above
mentioned $\al$ at a bid of $\bar b$. 
\end{proof}

Thus, to prove Theorem~\ref{thm:babaioff-opt}, it is sufficient to prove that the BKS reduction optimizes precision.
\begin{theorem}
\label{thm:babaioff-relnr}
The single-call reduction $\BKSRedn(\al,\gamma)$ optimizes the variance of normalized payments and the worst-case normalized payment among reductions with a precision of at least $\alphap=(1-\gamma)^n>\frac1e$.
\end{theorem}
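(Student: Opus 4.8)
The plan is to follow the template of the MIDR optimality proof, Theorem~\ref{thm:vcg-opt}, transported to the continuous single-parameter setting. Fix a bid vector $b$ with every $b_i>0$ (if $b_i=0$ then $c_i^\mu\equiv0$ and bidder $i$ contributes nothing). By Theorem~\ref{thm:sp-char} any valid reduction $(\mu,\{\lambda_i\})$ has $\lambda_i=b_ic_i^\mu(\hat b,b)\al_i(\hat b)$, so bidder $i$'s bid-normalized payment is exactly $c_i^\mu(\hat b,b)$, and a direct computation (using that the Archer--Tardos payment of a constant allocation rule vanishes) shows $\ex_{\hat b\sim\mub}[c_i^\mu(\hat b,b)]=0$. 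Because Archer--Tardos payments for $\asc$ never reference $\al$ at bids above $b$, a reduction that ever resamples a bid upward only wastes probability mass, so we may assume without loss of generality that the reduction never raises bids (this is made precise in Appendix~\ref{sec:a-babaioff}); then $\hat b_i\le b_i$ almost surely and $c_i^\mu(\hat b,b)=1$ on the event $\{\hat b_i=b_i\}$, whose $\mub$-probability I denote $q_i:=\Pr(\hat b_i=b_i\mid b)$.

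The first concrete step is the per-bidder lower bound. Writing $\phi_i:=1-c_i^\mu\ge0$, we have $\ex_{\mub}[\phi_i]=1$ and $\phi_i=0$ on $\{\hat b_i=b_i\}$, so $\phi_i$ is supported on $\{\hat b_i<b_i\}$ (probability $1-q_i$) with average $\tfrac1{1-q_i}$ there. Cauchy--Schwarz then gives $\var_{\mub}c_i^\mu=\ex_{\mub}[\phi_i^2]-1\ge\tfrac1{1-q_i}-1=\tfrac{q_i}{1-q_i}$, while $\esssup\phi_i\ge\tfrac1{1-q_i}$ gives $\sup_{\hat b}|c_i^\mu(\hat b,b)|\ge\tfrac{q_i}{1-q_i}$; equality in either bound forces $\phi_i\equiv\tfrac1{1-q_i}$ on $\{\hat b_i<b_i\}$, i.e.\ $c_i^\mu\in\{\,1,\ 1-\tfrac1{1-q_i}\,\}$, which is precisely the coefficient $c_i^\fbks$ of $\BKSRedn$ with $1-\gamma$ replaced by $q_i$. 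Next, exactly as in the variance half of Theorem~\ref{thm:vcg-opt}, I would invoke an analog of the $h$-function lemma (Lemma~\ref{lem:babaioff-hfunc2}): for a fixed marginal keep-probability $q_i$, making the event $\{\hat b_i=b_i\}$ independent of the other bidders' resampling simultaneously minimizes bidder $i$'s variance and worst case, and all $n$ bidders can be made independent at once. For such a product reduction the precision is $\prod_i q_i$, so minimizing $\max_i\tfrac{q_i}{1-q_i}$ subject to $\prod_i q_i\ge(1-\gamma)^n$ is attained at $q_1=\dots=q_n=1-\gamma$ — i.e.\ at $\BKSRedn(\al,\gamma)$ — with optimal value $\tfrac{1-\gamma}{\gamma}$. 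For the worst case I would additionally run the MIDR-style rigidity argument: the bound $|c_i^\mu|\le\tfrac{1-\gamma}{\gamma}$ is equivalent, after integrating $\phi_i\le\tfrac1\gamma$ against $\mub$ over an arbitrary set $A$, to $\tfrac1{b_i}\int_0^{b_i}\mu_{u,b_{-i}}(A)\,du\le\tfrac1\gamma\,\mub(A)$ for all $A$ (the continuous counterpart of $c_i^\pi(M)\le\tfrac{1-\gamma}{\gamma}$, i.e.\ $\pi(M\cup\{i\})\le\tfrac{1-\gamma}{\gamma}\pi(M)$ in Theorem~\ref{thm:vcg-opt}); integrating this relation down the bid interval $[0,b_i]$ together with precision $\ge(1-\gamma)^n$ pins the resampling measure down, up to the almost-sure freedom in Theorem~\ref{thm:sp-char}, to the BKS measure.

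I expect two ingredients to be where the real work lies. The first is the measure theory: every feasible reduction has positive precision, hence an atom at $\hat b=b$, hence no density, so the clean density-based coefficient $c_i^f$ of Theorem~\ref{thm:sp-char} does not literally apply; one must instead work with Radon--Nikodym derivatives / disintegrations of the family $\{\mu_{u,b_{-i}}\}_{u\le b_i}$, which is exactly what the general proof in Appendix~\ref{sec:a-babaioff} handles. The second and genuine crux is establishing that independence is without loss of generality — the continuous $h$-function lemma — because, just as in the MIDR case, the keep-events $\{\hat b_j=b_j\}$ of a feasible reduction may be positively correlated, so the per-bidder bound $\tfrac{q_i}{1-q_i}$ is not by itself enough; one must quantify the excess variance that correlation forces and show it more than compensates, and carrying this out over the continuum of resampled bids rather than over the finite lattice of subsets $M\subseteq[n]$ is the main obstacle. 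The hypothesis $(1-\gamma)^n>\tfrac1e$, with $n\ge2$, enters only to keep $\gamma$ small enough that $\tfrac{1-\gamma}{\gamma}>1$, so that the optimal worst-case normalized payment equals $\tfrac{1-\gamma}{\gamma}$ rather than $1$.
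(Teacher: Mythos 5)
Your setup is fine as far as it goes: under the nice-distribution assumption the bid-normalized payment is indeed $c_i^\mu(\hat b,b)$ with $\ex_{\mub}[c_i^\mu]=0$, $c_i^\mu\le1$, and $c_i^\mu=1$ a.s.\ on $\{\hat b_i=b_i\}$, and your Cauchy--Schwarz step correctly yields the per-bidder bounds $\var(c_i^\mu)\ge q_i/(1-q_i)$ and $\sup|c_i^\mu|\ge q_i/(1-q_i)$ in terms of the \emph{marginal} keep-probability $q_i$. But the statement constrains only the \emph{joint} probability $\Pr(\hat b=b\mid b)\ge(1-\gamma)^n$, and the marginal bounds alone give only $\max_i q_i/(1-q_i)\ge(1-\gamma)^n/(1-(1-\gamma)^n)$, far weaker than $\frac{1-\gamma}{\gamma}$. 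You acknowledge this and defer the crux to an unproven ``continuous $h$-function lemma'' asserting that independence of the keep-events is without loss, together with a ``WLOG the reduction never raises bids.'' Neither assertion is established, and neither is how the paper closes the gap. The paper never assumes away upward resampling in the precision-constrained problem (it carries the mass $\beta=\Pr(\hat b\le b\mid b)\le1$ through Lemmas~\ref{lem:babaioff-relgamma} and~\ref{lem:babaioff-vargamma}, and the hypothesis $(1-\gamma)^n>\frac1e$ is used precisely to show the variance bound is worst at $\beta=1$, not merely to make $\frac{1-\gamma}{\gamma}>1$), and it never proves an independence-is-WLOG statement over the continuum of resampled bids.

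What actually powers the paper's proof, and is absent from your proposal, is the monotonicity condition that truthfulness for all monotone $\al$ forces on the resampling measure (Lemma~\ref{lem:a-sp-char-mono}): probabilities of leftward-closed events are weakly decreasing in $b_i$. Lemma~\ref{lem:babaioff-monoratio} uses exactly this to lower-bound the \emph{conditional} average of $-c_i^\mu$ on the event that a given set $M$ of bidders keep their bids while the others (and $i$) lower them, by $\pi^\mu(M\cup\{i\},b)/\pi^\mu(M,b)$ up to a correction term that vanishes a.e.\ (Lemma~\ref{lem:babaioff-ae}, or identically zero under the nice-distribution assumption). This step discretizes the continuous problem onto the finite lattice of subsets $M\subseteq[n]$, after which the purely combinatorial Lemmas~\ref{lem:babaioff-hfunc} and~\ref{lem:babaioff-hfunc2} --- the same ones as in the MIDR proof --- finish both the worst-case and variance bounds; no continuous analog and no independence reduction is needed. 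Your global identity $\ex_{\mub}[c_i^\mu]=0$ cannot by itself yield these event-conditional bounds (restricted to an event defined by the other coordinates, the mean of $c_i^\mu$ is not controlled without the monotonicity of $u\mapsto\mu_{u,b_{-i}}$), so the central step of your argument is missing rather than merely deferred, and the route you sketch for it would require substantial new work that the paper's actual mechanism renders unnecessary.
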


\begin{proof} {\em (Sketch. The full proof is in Appendix~\ref{sec:a-babaioff}.)} When $\Pr(\hat b=b|b)$ is large, the mechanism extracts a modest payment from $i$ when $\hat b_i=b_i$ and pays a large rebate otherwise. Thus, we bound $\inf_{\hat b,i} c_i^\mu(\hat b,b)$. Let $\pi^\mu(M,b)$ be the probability (given $b$) that $\hat b_i=b_i$ for all $i\in M$ and $\hat b_i<b_i$ for all $i\not\in M$. Then the key step is to prove the following lower bound on $\inf c_i^\mu$:
\begin{equation*}
\inf_{\hat b} c_i^\mu(\hat b, b)\leq-\frac{\pi^\mu(M\cup\{i\},b)}{\pi^\mu(M,b)}\enspace.
\end{equation*}
Notably, this bound takes the same form as the truthful payment coefficients for MIDR reductions. Applying the same logic as Theorem \ref{thm:vcg-opt} shows that the BKS transformation is optimal.
\end{proof}

\section{Acknowledgments} First, we would like to thank Kamal Jain for enlightening us about the problems that can arise in pay-per-click advertising auctions and for suggesting~\cite{BKS10} as a possible solution.

Second, we would like to thank many people who have provided invaluable feedback and suggestions. In particular, we would like to thank Alex Slivkins for his substantial help in the revising process, Robert Kleinberg for his technical suggestions, and Christos Papadimitriou for his helpful comments.

\bibliographystyle{alpha}
\bibliography{sc}

\appendix

\section{A PPC Auction Example}
\label{sec:a-ppc}
The following example
illustrates how the welfare optimal assignment may be robust to inaccuracies in the
CTR estimates $c'$ but the truthful payments are quite fragile.

\begin{example}\label{ex:ppc-sp}
Consider a 2-slot, 2-advertiser setting with CTRs $c_j$ and bids $b_i$. Assume that $b_1>b_2$ and $c_1>c_2$, so that the welfare-optimizing assignment
 is to assign ad-1 to slot-1 and ad-2 to slot-2, i.e.,
 \begin{equation}\label{eqn:actWelfare}
 c_{1}b_{1} + c_{2}b_{2} \geq c_{1}b_{2} + c_{2}b_{1}\enspace.
\end{equation}
The auctioneer wishes to optimize welfare, so he uses $c_j'$ to implement the VCG allocation.
It is quite plausible that maximizing welfare w.r.t $c_{j}'$ results in the same
welfare maximizing allocation, namely given~\eqref{eqn:actWelfare}, it is not unreasonable to assume
that the following is true if the auctioneer's estimates are good enough:
\begin{equation*}\label{eqn:oraclePPC}
c_{1}'b_{1} + c_{2}'b_{2} \geq c_{1}'b_{2} + c_{2}'b_{1}\enspace.
\end{equation*}
However, we will show that this is not enough to guarantee truthfulness.

We show that advertiser-1 may have an incentive to lie. According to the estimates $c_j'$,
The expected VCG payment should be $c_1'b_2-c_2'b_2$. Since advertiser 1 will only be charged
when he actually receives a click, the price-per-click charged will be
\[p_1=\frac1{c_1'}[c_1'b_2-c_2'b_2]\enspace.\]
and the expected utility to bidder $i$ will be
\[u_1=c_1\left(b_1-\frac1{c_1'}[c_1'b_2-c_2'b_2]\right)\enspace,\]
where the extra $c_{1}$ gets multiplied because the utility is non-zero
only upon a click, which happens with probability $c_{1}$.

Now, for example, let the inaccurate $c_{j}'$ be as follows: $c_{1}' = \alpha c_{1}$, $c_{2}'
= c_{2}$ where $\alpha>1$. Notice that in this example we always have 
 \begin{equation*}
 \alpha c_{1}b_{1} + c_{2}b_{2} \geq \alpha c_{1}b_{2} + c_{2}b_{1}
\end{equation*}
and thus the mechanism will always maximize welfare in spite of the estimation errors.

The utility of advertiser-1 will be
$$u_1=c_{1}\left(b_{1} - \frac{1}{\alpha c_{1}}[\alpha c_{1}b_{2} -
c_{2}b_{2}]\right).$$
Now, suppose
advertiser-1 decides to lie and bid zero, he gets the second slot, pays zero,
and gets utility of $c_{2}b_{1}$. Lying is clearly profitable if 
$$c_{2}b_{1} > c_{1}\left(b_{1} - \frac{1}{\alpha c_{1}}[\alpha c_{1}b_{2}
- c_{2}b_{2}]\right).$$
Rearranging, lying is profitable if
\begin{equation}\label{eqn:lyingPPC}
\alpha c_{1}b_{1} + c_{2}b_{2} <  \alpha
c_{1}b_{2} + \alpha c_{2}b_{1}
\end{equation}
It is quite possible that 
lying might be profitable, that is inequality~\eqref{eqn:lyingPPC} holds true.
For example, if $c_1=0.1$, $c_2=0.09$, $b_1=1.1$, $b_2=1$, and $\alpha=1.1$,
payments computed using $c_j'$ are nontruthful, even though the mechanism always picks
the welfare-maximizing assignment for any $\alpha>1$.
\end{example}

\section{Optimality Proofs for MIDR Reductions}
\label{sec:a-midr-opt}

\subsection{Optimizing Social Welfare}
\begin{lemma}[Restatement of Lemma~\ref{lem:vcg-swapx}]\label{lem:welfareApprox}
The reduction $\optMidrRedn(\al,\gamma)$ obtains an
$\alpha_\pi=\min_i\Pr_\pi(i\in M)$ approximation to the social welfare, and
there is an allocation function $\al$ and bid $b$ such that this bound is
tight. 
\end{lemma}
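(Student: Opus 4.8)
The plan is to establish the two halves of the statement separately: the lower bound, that the expected welfare of $\asc$ is at least $\min_i\Pr_\pi(i\in M)$ times the welfare of $\al$ whenever valuations are nonnegative, and then a matching instance showing this is tight.

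For the lower bound I would fix an MIDR rule $\al$ with nonnegative valuations and a bid $b$, and for each $M\subseteq[n]$ write $W_M=\ex_\al\big[\sum_{i\in M}b_i(\al(\hat b^M))\big]$ for the expected welfare of the agents in $M$ when $\al$ is evaluated at $\hat b^M$. The crux is that at $\hat b^M$ every agent outside $M$ has the all-zero valuation, so a distribution $D\in\Dist$ maximizes expected welfare at $\hat b^M$ if and only if it maximizes $\ex_{\ocm\sim D}\big[\sum_{i\in M}b_i(\ocm)\big]$; since $\al$ is MIDR, $\al(\hat b^M)$ is a sample from such a maximizer. Because $\al(b)$ is itself a sample from some distribution $D^*\in\Dist$ (the MIDR property at $b$), that maximizer does at least as well on the agents of $M$ as $D^*$ does, so $W_M\ge\ex_\al\big[\sum_{i\in M}b_i(\al(b))\big]$. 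Now using nonnegativity of the valuations, $\sum_i b_i(\al(\hat b^M))\ge\sum_{i\in M}b_i(\al(\hat b^M))$, so taking expectations over $M\sim\pi$ and over the internal randomness of $\al$,
\[
\ex_{M,\al}\Big[\sum_i b_i(\al(\hat b^M))\Big]\ \ge\ \ex_M[W_M]\ \ge\ \ex_\al\Big[\sum_i\Pr_\pi(i\in M)\,b_i(\al(b))\Big]\ \ge\ \Big(\min_i\Pr_\pi(i\in M)\Big)\ex_\al\Big[\sum_i b_i(\al(b))\Big],
\]
where the last step again uses $b_i(\al(b))\ge0$. For the distribution $\pibar$ used by $\optMidrRedn$ we have $\Pr_\pi(i\in M)=1-\gamma$ for every $i$, so this is exactly the asserted $(1-\gamma)$-approximation.

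For tightness I would exhibit the minimal instance in which all the inequalities above become equalities. Take $\Dist=\{\delta_{\ocm^*},\delta_{\ocm_0}\}$, the point masses on two outcomes with $b_1(\ocm^*)=1$, $b_i(\ocm^*)=0$ for $i\ne1$, and $b_i(\ocm_0)=0$ for all $i$, and let $\al$ break ties toward $\ocm_0$. At the true bid $b$ the rule $\al$ selects $\ocm^*$, so its welfare is $1$. When $\al$ is evaluated at $\hat b^M$: if $1\in M$ then $\ocm^*$ strictly beats $\ocm_0$ on the agents of $M$, so $\al$ picks $\ocm^*$ and the total welfare is $1$; if $1\notin M$ then the two outcomes tie at $M$-welfare $0$, the tie-break selects $\ocm_0$, and the total welfare is $0$. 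Hence the expected welfare of $\asc$ is exactly $\Pr_\pi(1\in M)$, which for $\pibar$ is $1-\gamma=\min_i\Pr_\pi(i\in M)$, matching the bound.

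I expect the only genuine subtlety, rather than an obstacle, to lie in the MIDR bookkeeping of the key step: because the rule is randomized, every comparison is between expectations, and one must note explicitly that $\al(b)$'s own output distribution lies in $\Dist$ so that it is a legitimate competitor to the welfare-maximizing distribution at $\hat b^M$. It is also worth stressing that nonnegativity of valuations is used twice and is essential --- as the remark preceding the lemma observes, a MIDR rule remains MIDR under additive shifts of the valuations, and such a shift can drive the welfare ratio to $0$ or make it undefined, so no bound of this shape can hold without the nonnegativity hypothesis.
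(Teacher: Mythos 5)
Your proof is correct and follows essentially the same route as the paper's: the same chain of inequalities (the $M$-agents' welfare at $\al(\hat b^M)$ dominates their welfare at $\al(b)$ by the MIDR property, nonnegativity drops the agents outside $M$, and then one takes the minimum probability over agents), and the same style of tight instance in which a single agent holds all the welfare and the allocation yields zero when that agent is dropped. Your explicit handling of $\al$'s internal randomness and of the tie-breaking in the tight example is only a more careful writing of what the paper does implicitly.
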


\begin{proof}
The expected social welfare of the single-call mechanism, where the expectation is over
the randomness in the resampling function is given by
$\ex\left[\sum_{j\in[n]}b_j(\asc(b))\right]$. 
We now prove the required lower bound on this quantity. 

\begin{align*}
\ex\left[\sum_{j\in[n]}b_j(\asc(b))\right]
=\sum_{j\in[n]}\sum_{M\subseteq[n]}\pi(M)b_j(\al(\hat b^M))
&=\sum_{M\subseteq[n]}\pi(M)\sum_{j\in[n]}b_j(\al(\hat b^M))\\
&\geq\sum_{M\subseteq[n]}\pi(M)\sum_{j\in M}b_j(\al(\hat b^M))\\
&\geq\sum_{M\subseteq[n]}\pi(M)\sum_{j\in M}b_j(\al(\hat b^{[n]}))\\
&=\sum_{j\in[n]}\Pr_\pi(j\in M)b_j(\al(\hat b^{[n]}))\\
&\geq\left(\min_{j\in[n]}\Pr_\pi(j\in M)\right)\sum_{j\in[n]}b_j(\al(\hat
b^{[n]}))\\
&=\left(1-\max_{j\in[n]}\Pr_\pi(j\not\in M)\right)\sum_{j\in[n]}b_j(\al(\hat
b^{[n]}))\enspace.
\end{align*}

Finally, we observe that this is tight. Consider a valuation and allocation
function pair for which, every agent other than some agent $j$ has a zero value
for every outcome,
and agent $j$ has a non-zero value only for those outcomes that were chosen
taking $j$ into consideration, i.e., :
$$
b_k(\al(\hat b^M))=\begin{cases}
0,&k\neq j\\
0,&j\notin M\\
1,&\text{otherwise}
\end{cases}
$$
When $j=\argmax_{k\in[n]}\Pr_\pi(k\notin M)$, the preceding bound is tight.
\end{proof}

\begin{lemma}\label{lem:vcg-sw-maxbound}
Let $\pi$ be a distribution such that
$\max_{i,M}c_i^{\pi}(M)<\max_{i,M}c_i^{\bar\pi}(M)$. Then
$$\max_i\Pr_{\pi}(i\notin M) > \max_i\Pr_{\bar\pi}(i\notin M)\enspace.$$
\end{lemma}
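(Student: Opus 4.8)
The plan is to reduce the statement to two elementary computations about $\bar\pi$ together with a single weighted-average estimate, and then rearrange. First I would record what the $\bar\pi$ side of each inequality is. Since $c_i^{\bar\pi}(M)=-1$ when $i\in M$ and $c_i^{\bar\pi}(M)=\frac{1-\gamma}{\gamma}>0$ when $i\notin M$, we have $\max_{i,M}c_i^{\bar\pi}(M)=\frac{1-\gamma}{\gamma}$; and since under $\bar\pi$ each agent is dropped from $M$ independently with probability $\gamma$, we have $\Pr_{\bar\pi}(i\notin M)=\gamma$ for every $i$, so $\max_i\Pr_{\bar\pi}(i\notin M)=\gamma$. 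Hence it suffices to show that the hypothesis $\max_{i,M}c_i^\pi(M)<\frac{1-\gamma}{\gamma}$ forces $\Pr_\pi(i\notin M)>\gamma$; the argument will in fact deliver this for every $i$, which is more than enough.

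Next I would unpack the hypothesis. By the remark following Theorem~\ref{thm:vcg-char} (equivalently, because a valid reduction must have finite coefficients), $\pi(M)>0$ for all $M\subseteq[n]$, so every ratio $\pi(M\cup\{i\})/\pi(M)$ is well defined and positive and $\Pr_\pi(i\notin M)\geq\pi(\emptyset)>0$. Since the coefficients $c_i^\pi(M)=-1$ for $i\in M$ are negative, the hypothesis $\max_{i,M}c_i^\pi(M)<\frac{1-\gamma}{\gamma}$ is equivalent to the statement that $\frac{\pi(M\cup\{i\})}{\pi(M)}<\frac{1-\gamma}{\gamma}$ for every agent $i$ and every set $M$ with $i\notin M$.

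Then I would fix $i$ and reindex the sets containing $i$ as $M\cup\{i\}$ with $i\notin M$:
\[\Pr_\pi(i\in M)=\sum_{M:\,i\notin M}\pi(M\cup\{i\})=\sum_{M:\,i\notin M}\pi(M)\cdot\frac{\pi(M\cup\{i\})}{\pi(M)}<\frac{1-\gamma}{\gamma}\sum_{M:\,i\notin M}\pi(M)=\frac{1-\gamma}{\gamma}\,\Pr_\pi(i\notin M),\]
the strict inequality using that each ratio is strictly less than $\frac{1-\gamma}{\gamma}$ and that the total weight $\Pr_\pi(i\notin M)$ is strictly positive. Substituting $\Pr_\pi(i\in M)=1-\Pr_\pi(i\notin M)$ and rearranging the resulting inequality $\gamma\bigl(1-\Pr_\pi(i\notin M)\bigr)<(1-\gamma)\Pr_\pi(i\notin M)$ gives $\Pr_\pi(i\notin M)>\gamma$. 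Since $i$ was arbitrary, $\max_i\Pr_\pi(i\notin M)>\gamma=\max_i\Pr_{\bar\pi}(i\notin M)$, which is the claim.

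The computation is short, and the only points requiring care are the positivity claims — that $\pi$ has full support, so the ratios are defined, and that $\Pr_\pi(i\notin M)>0$, so the weighted-average step and the final division are legitimate — both of which follow from finiteness of the coefficients. There is no genuinely hard step; if one preferred to avoid invoking the full-support remark, one could instead observe that $\Pr_\pi(i\notin M)=0$ would force $\pi(\emptyset)=0$ and hence make $c_j^\pi(\emptyset)$ undefined for $j\neq i$, which is excluded.
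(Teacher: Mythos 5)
Your proof is correct and follows essentially the same route as the paper: both arguments reduce the hypothesis to the bound $\Pr_\pi(i\in M)/\Pr_\pi(i\notin M)<\frac{1-\gamma}{\gamma}$ (the paper via a max-versus-mediant inequality over the sums $\sum_{M:i\notin M}\pi(M\cup\{i\})$ and $\sum_{M:i\notin M}\pi(M)$, you via the equivalent weighted-average estimate) and then conclude from $\Pr_\pi(i\in M)+\Pr_\pi(i\notin M)=1$ that $\Pr_\pi(i\notin M)>\gamma=\Pr_{\bar\pi}(i\notin M)$ for every $i$. Your explicit handling of the full-support/positivity issue is a nice touch that the paper leaves implicit.
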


\begin{proof} 

Let $\bar c=\max_{i,M}c_i^{\bar\pi}(M)$. Note that for all $M|i\not\in M$, 
$c_i^{\bar\pi}(M)=\frac{\pibar(M\cup\{i\})}{\pibar(M)}=\bar c$. It follows by algebra that
$\frac{\sum_{M|i\notin M} \bar\pi(M\cup\{i\})}{\sum_{M|i\notin M}\bar\pi(M)}=\bar c$
and therefore by the conditions of the lemma

\begin{equation}\label{eqn:welfareLemmaEq1}
\max_{i,M}c_i^{\pi}(M)<\frac{\sum_{M|i\notin M}
\bar\pi(M\cup\{i\})}{\sum_{M|i\notin M}\bar\pi(M)}\enspace.
\end{equation}
Next we have,
\begin{equation}\label{eqn:welfareLemmaEq2}
\max_{i,M}c_i^\pi(M) \geq \max_{M}c_i^\pi(M) \geq \max_{M|i\notin M}
\frac{\pi(M\cup\{i\})}{\pi(M)} \geq \frac{\sum_{M|i\notin M}
\pi(M\cup\{i\})}{\sum_{M|i\notin M}\pi(M)}
\end{equation}
Combining~\eqref{eqn:welfareLemmaEq1} and~\eqref{eqn:welfareLemmaEq2} gives 
\begin{equation}\label{eqn:welfareLemmaEq3}
\frac{\sum_{M|i\notin M} \pi(M\cup\{i\})}{\sum_{M|i\notin M}\pi(M)}
<\frac{\sum_{M|i\notin M} \bar\pi(M\cup\{i\})}{\sum_{M|i\notin M}\bar\pi(M)}
\end{equation}
Note that since $\pi$ and $\pibar$ are probability distributions, the sum of the
numerator and denominator of both the LHS and the RHS
of~\eqref{eqn:welfareLemmaEq3} equals 1. Thus, it immediately follows that the
denominator of the LHS is larger than the denominator of the RHS, i.e.,
\begin{equation}\label{eqn:welfareLemmaEq4}
\sum_{M|i\notin M}\pi(M) > \sum_{M|i\notin M}\bar\pi(M)
\end{equation}
Inequality~\eqref{eqn:welfareLemmaEq4} when restated, reads as 
$$\Pr_{\pi}(i\notin M) > \Pr_{\pibar}(i\notin M)\enspace.$$
But since the above inequality is true for all $i$, and the RHS of
the above inequality is the same for all $i$ (namely the parameter $\mu$ by
which the reduction is parametrized), the statement of the lemma follows. 
\end{proof}

\begin{theorem}(Restatement of Theorem~\ref{thm:vcg-swopt}.)
The reduction $\optMidrRedn(\al,\gamma)$ minimizes payment variance
and the worst-case payment among all reductions that achieve a welfare approximation of at least $\alphaw=1-\gamma$.
\end{theorem}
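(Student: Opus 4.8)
The plan is to recast both optimality claims purely in terms of the resampling distribution $\pi$. By Lemma~\ref{lem:welfareApprox}, a reduction of the form $(\pi,\{\lambda_i\})$ has welfare approximation exactly $\min_i\Pr_\pi(i\in M)$, so the hypothesis $\alphaw\geq1-\gamma$ is equivalent to requiring $\Pr_\pi(i\in M)\geq1-\gamma$ for every bidder $i$. Moreover, as observed in Section~\ref{sec:vcg-char-opt}, minimizing the bid-normalized worst-case payment of $(\pi,\{\lambda_i\})$ amounts to minimizing $\max_{i,M}|c_i^\pi(M)|$ and minimizing the bid-normalized payment variance amounts to minimizing $\max_i\var_{M\sim\pi}c_i^\pi(M)$; under $\pibar$ these two quantities equal $\max(1,\tfrac{1-\gamma}{\gamma})$ and $\tfrac{1-\gamma}{\gamma}$ respectively. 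Hence it suffices to show that every feasible $\pi$ has $\max_{i,M}|c_i^\pi(M)|\geq\max(1,\tfrac{1-\gamma}{\gamma})$ and $\max_i\var_{M\sim\pi}c_i^\pi(M)\geq\tfrac{1-\gamma}{\gamma}$. (We may assume throughout that all $c_i^\pi(M)$ are finite, since otherwise both quantities are infinite and there is nothing to prove; by the Remark following Theorem~\ref{thm:vcg-char} this means $\pi(M)>0$ for all $M$, so in particular $c_i^\pi(M)=-1$ for some $M\ni i$.)

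For the worst-case payment I would invoke Lemma~\ref{lem:vcg-sw-maxbound} in contrapositive form. A feasible $\pi$ satisfies $\max_i\Pr_\pi(i\notin M)\leq\gamma=\max_i\Pr_{\pibar}(i\notin M)$, so the lemma rules out $\max_{i,M}c_i^\pi(M)<\max_{i,M}c_i^{\pibar}(M)$; since $\max_{i,M}c_i^{\pibar}(M)=\tfrac{1-\gamma}{\gamma}$, this forces $\max_{i,M}c_i^\pi(M)\geq\tfrac{1-\gamma}{\gamma}$. Combining with $c_i^\pi(M)=-1$ whenever $i\in M$ yields $\max_{i,M}|c_i^\pi(M)|\geq\max\!\left(1,\tfrac{1-\gamma}{\gamma}\right)$, which is exactly the value attained by $\pibar$.

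For the variance I would reuse the identity from the proof of Theorem~\ref{thm:vcg-opt}: since $\ex_{M\sim\pi}[c_i^\pi(M)]=0$,
\[\var_{M\sim\pi}c_i^\pi(M)=\sum_{M\subseteq[n]\setminus\{i\}}\bigl(\pi(M)+\pi(M\cup\{i\})\bigr)\frac{\pi(M\cup\{i\})}{\pi(M)}\enspace.\]
Fix a bidder $i$ and put $\gamma_i:=1-\Pr_\pi(i\in M)$, so feasibility gives $\gamma_i\leq\gamma$. By Lemma~\ref{lem:babaioff-hfunc2}, among all distributions with this prescribed marginal for $i$ the right-hand side is minimized by the product distribution with $\pi(M\cup\{i\})/\pi(M)\equiv\tfrac{1-\gamma_i}{\gamma_i}$, for which it evaluates to $\tfrac{1-\gamma_i}{\gamma_i}$. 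Since $t\mapsto\tfrac{1-t}{t}$ is decreasing on $(0,1)$ and $\gamma_i\leq\gamma$, we conclude $\var_{M\sim\pi}c_i^\pi(M)\geq\tfrac{1-\gamma_i}{\gamma_i}\geq\tfrac{1-\gamma}{\gamma}=\var_{M\sim\pibar}c_i^{\pibar}(M)$; taking the maximum over $i$ finishes the proof.

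The routine part is the worst-case bound, which is essentially a direct invocation of Lemma~\ref{lem:vcg-sw-maxbound}. I expect the main obstacle to be the variance step, and specifically the appeal to Lemma~\ref{lem:babaioff-hfunc2}: one must verify that the sum $\sum_{M\not\ni i}(\pi(M)+\pi(M\cup\{i\}))\pi(M\cup\{i\})/\pi(M)$, viewed as a function of the joint distribution with the marginal $\Pr_\pi(i\in M)$ held fixed, is genuinely minimized at the product form — a convexity argument — and to be careful that the feasibility constraint $\Pr_\pi(i\in M)\ge 1-\gamma$ is inherited by each bidder separately (it is, since the welfare approximation is a minimum over bidders), so that no joint coupling of the bidders' marginals can beat $\pibar$.
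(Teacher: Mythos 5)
Your proof is correct and follows essentially the same route as the paper: the worst-case bound via the contrapositive of Lemma~\ref{lem:vcg-sw-maxbound}, and the variance bound via the zero-mean second-moment identity from Theorem~\ref{thm:vcg-opt} together with the product-form minimization, with the welfare constraint translated through Lemma~\ref{lem:vcg-swapx} into per-bidder marginal bounds $\Pr_\pi(i\notin M)\le\gamma$. Your per-bidder variance step (lower-bounding bidder $i$'s variance by $\frac{1-\gamma_i}{\gamma_i}$ with $\gamma_i\le\gamma$) is in fact slightly more explicit than the paper's one-line appeal to the argument of Theorem~\ref{thm:vcg-opt}; the only cosmetic caveat is that Lemma~\ref{lem:babaioff-hfunc2} is literally stated for a precision constraint rather than a fixed marginal, but the inequality you need, $\sum_{M\not\ni i}\pi(M\cup\{i\})^2/\pi(M)\ge(1-\gamma_i)^2/\gamma_i$, follows directly from Cauchy--Schwarz, which matches how the paper itself invokes that lemma.
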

\begin{proof} By Lemma \ref{lem:welfareApprox}, the worst case loss in social welfare of a distribution $\pi$ is given by
\[ 1 - \alpha_\pi=\max_i\Pr_\pi(i\not\in M)\enspace.\]
For worst-case payments, the contrapositive of Lemma \ref{lem:vcg-sw-maxbound} precisely says that if
$1-\alpha_\pi\leq 1-\alpha_{\pibar}$, then the largest payment
$\max_{M,i}c_i^\pi(M)\geq\max_{M,i}c_i^{\bar\pi}(M)$, thus proving that any other reduction will be worse.

For payment variance, arguing along the lines of Theorem~\ref{thm:vcg-char} again says that variance will be minimized
when $\pi$ is an independent distribution and $\pr(i\in M)$ is the same for all $i$. Since $\pibar$ is precisely the distribution
that does this, it follows that it is optimal.
\end{proof}

\subsection{Optimizing Revenue}

\begin{lemma}[Restatement of Lemma~\ref{lem:vcg-rapx}]
The reduction $\optMidrRedn(\al,\gamma)$ obtains an $\alpha_\pi=\pi([n])$ approximation to the revenue, and this is tight.
\end{lemma}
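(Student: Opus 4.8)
The plan is to compute the single-call revenue $\ex_{\hat b}[\sum_i\psci(b)]$ from the characterization in Theorem~\ref{thm:vcg-char} and to rewrite it as $(1-\gamma)^n$ times the original VCG revenue plus a sum of manifestly non-negative terms. Specializing the formula~\eqref{eqn:vcg-char-ep1} for $\ex[\psci]$ to $\pi=\pibar$, summing over $i$, and re-indexing the first double sum by $N=M\cup\{i\}$ so that it combines with the second, I would obtain
\[
\ex_{\hat b}\Bigl[\sum_i\psci(b)\Bigr]=\sum_{M\subseteq[n]}\pibar(M)\sum_{i\in M}\sum_{j\neq i}\bigl(b_j(\al(\hat b^{M\setminus\{i\}}))-b_j(\al(\hat b^{M}))\bigr)\enspace.
\]
The point of this form is that, for each fixed $M$, the inner sum is exactly the VCG (Clarke--Pivot) revenue of the auction restricted to the bidders in $M$: since the bidders outside $M$ are zeroed, $\al(\hat b^{M\setminus\{i\}})$ maximizes the welfare of $M\setminus\{i\}$, so $\sum_{j\neq i}b_j(\al(\hat b^{M\setminus\{i\}}))$ is that maximum welfare and the whole difference is $i$'s VCG payment in the $M$-restricted auction. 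With non-negative valuations this restricted revenue is $\geq 0$, and the single term $M=[n]$ is precisely $\pibar([n])\sum_i\pricei(b)=(1-\gamma)^n\sum_i\pricei(b)$. Dropping the remaining (non-negative) terms yields $\ex_{\hat b}[\sum_i\psci(b)]\geq(1-\gamma)^n\sum_i\pricei(b)$, i.e.\ the reduction achieves an $\alpha_\pi=\pi([n])=(1-\gamma)^n$ approximation to the revenue.

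For tightness I would choose an instance in which every term of the sum other than $M=[n]$ vanishes: take $n$ unit-demand bidders, each of value $1$, competing for $n-1$ identical copies of a single good (a maximal-in-range, hence MIDR, allocation). Whenever $|M|\le n-1$ the $M$-restricted auction has at least as many copies as bidders, so nobody is pivotal and its VCG revenue is $0$; only $M=[n]$ is a genuine competition, with $\sum_i\pricei(b)=n-1$. Hence $\ex_{\hat b}[\sum_i\psci(b)]=(1-\gamma)^n(n-1)$ while the original revenue is $n-1$, so the ratio is exactly $(1-\gamma)^n=\pi([n])$.

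The substantive step is the re-indexing identity together with the observation that each summand is a bona fide restricted-auction VCG revenue; once that is in place the lower bound is immediate from non-negativity of VCG revenue, and the tight instance is a one-line computation. The one point requiring care is that ``removing'' bidder $i$ must be read in the standard VCG sense — a zeroed bidder contributes nothing to anybody's welfare at the realized outcome — so that $\sum_{j\neq i}b_j(\al(\hat b^{M\setminus\{i\}}))$ really is the optimal welfare of $M\setminus\{i\}$ and the restricted revenues are non-negative; this is automatic in the settings where the reduction is applied (combinatorial auctions, machine scheduling, and the PPC auction of Theorem~\ref{thm:adwordsApp}).
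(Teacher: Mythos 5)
Your lower bound is essentially the paper's own argument: the paper likewise specializes the expected-payment formula~\eqref{eqn:vcg-char-ep1}, sums over $i$, and drops every term with $M\neq[n]$ (your re-indexing just makes explicit that the terms with $i\notin M$ cancel). Where you genuinely diverge --- and where there is a gap --- is in the interpretation of the per-$M$ terms and in the tightness instance. In the paper's framework the payments of Theorem~\ref{thm:vcg-char} are $c_i^\pi(M)\sum_{j\neq i}b_j(\al(\hat b^M))$, where the sum runs over \emph{all} $j\neq i$, including bidders outside $M$: a ``zeroed'' bidder is ignored only when the outcome is chosen, but her reported value at the realized outcome still enters every welfare sum in~\eqref{eqn:vcg-char-ep1}. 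Hence $\sum_{j\neq i}b_j(\al(\hat b^{M\setminus\{i\}}))$ is \emph{not} the optimal welfare of $M\setminus\{i\}$, and the per-$M$ inner sum is not literally the Clarke--Pivot revenue of the $M$-restricted auction. Your closing caveat (``a zeroed bidder contributes nothing to anybody's welfare at the realized outcome'') is an extra assumption on $\al$ and its range, not something automatic; it is exactly what your identification, and the non-negativity of the dropped terms, needs. (The paper's proof drops these terms without comment as well, so for the lower bound you are no worse off than the paper, but your stated justification is not correct as written.)

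The gap bites concretely in your tightness example. With $n$ unit-demand bidders of value $1$ and $n-1$ identical copies, the welfare-maximizing rule at $\hat b^M$ is indifferent about leftover copies and may hand them to zeroed bidders; those bidders' true values then appear in the differences $\sum_{j\neq i}b_j(\al(\hat b^{M\setminus\{i\}}))-\sum_{j\neq i}b_j(\al(\hat b^{M}))$, the terms with $M\neq[n]$ no longer vanish (for such tie-breaking they are strictly positive), and the single-call revenue strictly exceeds $(1-\gamma)^n(n-1)$, so the instance is not tight. To repair it you must pin down the range or tie-breaking (e.g.\ copies are only awarded to bidders whose resampled bid is strictly positive, leftovers withheld), which makes $b_j(\al(\hat b^M))=0$ for $j\notin M$ and restores your computation. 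The paper sidesteps this entirely by constructing the tight instance abstractly, prescribing the values $b_i(\al(\hat b^M))$ directly ($1/n$ if $M=[n]$, $1/(n-1)$ if $i\in M$ but $M\neq[n]$, and $0$ otherwise) so that all cross terms vanish by construction.
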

\begin{proof}
For any $b$ with non-negative valuations, the revenue under a single call reduction will be
\begin{eqnarray*}
\sum_{i\in[n]}\ex[\psci]\TCSeas&=&\sum_{i\in[n]}\sum_{M\subseteq[n]}\pi(M)\left(\sum_{k\neq
i} b_k(\al(\hat b^{M\setminus\{i\}}))-\sum_{k\neq i} b_k(\al(\hat b^{M}))\right)\\
&\geq&\pi([n])\sum_{i\in[n]}\left(\sum_{k\neq i} b_k(\al(\hat
b^{[n]\setminus\{i\}}))-\sum_{k\neq i} b_k(\al(\hat b^{[n]}))\right)
\end{eqnarray*}
where $\sum_{i\in[n]}\left(\sum_{k\neq i} b_k(\al(\hat
b^{[n]\setminus\{i\}}))-\sum_{k\neq i} b_k(\al(\hat b^{[n]}))\right)$ is the revenue generated by 
$\al$ under VCG prices. Thus, any distribution $\pi(M)$ gives an $\alpha=\pi([n])$ approximation to the revenue.

To see that this is tight, consider the following allocation function:
$$b_i(\al(\hat b^M))=\begin{cases}
\frac{1}{n}&\mbox{$M = [n]$}\\
\frac{1}{n-1}&i\in M\mbox{ but }M \neq [n]\\
0,&\text{otherwise.}
\end{cases}$$

The revenue under VCG prices is  $\sum_{i\in[n]}\left(\sum_{k\neq i} b_k(\al(\hat
b^{[n]\setminus\{i\}}))-\sum_{k\neq i} b_k(\al(\hat b^{[n]}))\right)$, which is
$n(\frac{n-1}{n-1}-\frac{n-1}{n}) = 1$. 

Under any single-call reduction, the revenue will be given by
\begin{eqnarray*}
\sum_{i\in[n]}\ex[\psci]\TCSeas&=&\sum_{i\in[n]}\sum_{M\subseteq[n]}\pi(M)\left(\sum_{k\neq
i} b_k(\al(\hat b^{M\setminus\{i\}}))-\sum_{k\neq i} b_k(\al(\hat b^{M}))\right)\\
&=&\sum_{i\in[n]}\pi([n])\left(\sum_{k\neq i} b_k(\al(\hat
b^{[n]\setminus\{i\}}))-\sum_{k\neq i} b_k(\al(\hat b^{[n]}))\right)\\
&=&\sum_{i\in[n]}\pi([n])\left(1-\frac{n-1}{n}\right)\\
&=&\pi([n])\enspace.
\end{eqnarray*}
\end{proof}

\section{Characterizing Reductions for Single-Parameter Domains}
\label{sec:a-sp-char}

In this section we characterize truthful single-call reductions for single-parameter domains that
use arbitrary measures $\mub$. We refer the reader to Section~\ref{sec:a-analysis} for some
background and definitions from measure theory.

Before we begin, we must formalize some properties of the functions $\al$ and the measures $\mub$.
The following assumptions would typically be implicit in Algorithmic Mechanism Design;
however, it is necessary that they be formalized for some of the tools in our proof. We assume the following: 
\begin{enumerate}
\item Any allocation function $\al$ that the reduction receives as
input (as a black box) is a Borel measurable function, i.e., each of the $\al_i$'s
as a function from $\mathbb{R}^n \rightarrow \mathbb{R}_+$ is a bounded Borel measurable
function.
\item For every $b$, the resampling measure $\mub(\cdot)$ is 
a Borel probability measure.
\item The function mapping the bid $b$ to the resampling measure
$\mub(\cdot)$ is measurable w.r.t to the Borel $\sigma$-algebra on the space of
Borel probability measures over $\Re^n$.
\end{enumerate}

First, we use the measure $\mub(\cdot)$ to define a signed measure
$\mubpsci(B) = b_i\mub(B)-\int_{0}^{b_i}\mub[u,b_{-i}](B)du$
which has the property:
\[\int_{\hat b\in\Re^n}\al_i(\hat b)d\mubpsci=b_i\ex_{\hat b\sim \mub}[\al_i(\hat
b)]-\int_0^{b_i}\ex_{\hat b\sim \mu_{u,b_{-i}}}[\al_i(\hat b)]du\enspace,\]
that is, integrating $\al_i$ with respect to $\mubpsci$ is equivalent to computing
the Archer-Tardos prices.

\begin{lemma}\label{lem:mubpsci}
The function $\mubpsci(B) = b_i\mub(B)-\int_{0}^{b_i}\mub[u,b_{-i}](B)du$ is a finite signed measure satisfying
\[\int_{\hat b\in\Re^n}\al_i(\hat b)d\mubpsci=b_i\ex_{\hat b\sim \mub}[\al_i(\hat
b)]-\int_0^{b_i}\ex_{\hat b\sim \mu_{u,b_{-i}}}[\al_i(\hat b)]du\]
for any bounded $\al_i$
\end{lemma}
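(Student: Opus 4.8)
The plan is to realize $\mubpsci$ as a difference of two finite nonnegative measures and then establish the integration identity by the standard machine, extending from indicators to simple functions to bounded Borel functions.

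First I would check that $\mubpsci$ is well defined. The term $b_i\mub(\cdot)$ is a finite measure of total mass $b_i$, so only the subtracted term needs attention. For a fixed Borel set $B\subseteq\Re^n$ I claim $u\mapsto\mub[u,b_{-i}](B)$ is a Borel measurable function of $u\in[0,b_i]$, bounded by $1$. Indeed, the evaluation functional $\nu\mapsto\nu(B)$ on the space of Borel probability measures on $\Re^n$ is lower semicontinuous (hence Borel) for open $B$, and the class of sets $B$ for which $\nu\mapsto\nu(B)$ is Borel measurable is a $\lambda$-system containing the open sets, hence contains all Borel sets by Dynkin's theorem; composing this with the continuous map $u\mapsto(u,b_{-i})$ and the assumed Borel measurability of $b\mapsto\mub$ gives the claim. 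Consequently $\int_0^{b_i}\mub[u,b_{-i}](B)\,du$ is well defined and lies in $[0,b_i]$.

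Next I would show that $\nu_2(B):=\int_0^{b_i}\mub[u,b_{-i}](B)\,du$ is itself a finite (nonnegative) measure. Clearly $\nu_2(\emptyset)=0$. For pairwise disjoint Borel sets $B_1,B_2,\dots$, countable additivity of each $\mub[u,b_{-i}]$ gives $\mub[u,b_{-i}]\bigl(\bigcup_k B_k\bigr)=\sum_k\mub[u,b_{-i}](B_k)$ pointwise in $u$; since all summands are nonnegative, monotone convergence applied to the partial sums (equivalently Tonelli) lets me interchange $\int_0^{b_i}$ and $\sum_k$, yielding $\nu_2\bigl(\bigcup_k B_k\bigr)=\sum_k\nu_2(B_k)$. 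Its total mass is $\nu_2(\Re^n)=\int_0^{b_i}1\,du=b_i<\infty$. Hence $\mubpsci=b_i\mub-\nu_2$ is a difference of two finite measures, i.e.\ a finite signed measure.

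Finally I would prove the integration identity. By linearity of the integral against the signed measure $\mubpsci=b_i\mub-\nu_2$, and since $\int\al_i\,d\mub=\ex_{\hat b\sim\mub}[\al_i(\hat b)]$ by definition of expectation, it suffices to show $\int\al_i\,d\nu_2=\int_0^{b_i}\ex_{\hat b\sim\mub[u,b_{-i}]}[\al_i(\hat b)]\,du$ for every bounded Borel $\al_i$. For $\al_i=\mathbf{1}_B$ this is exactly the definition of $\nu_2$; by linearity it holds for nonnegative simple functions; and for a general bounded nonnegative Borel $\al_i$, take simple functions $0\le s_1\le s_2\le\cdots\uparrow\al_i$ and apply monotone convergence on both sides — on the right, first to the inner integral $\int s_k\,d\mub[u,b_{-i}]\uparrow\ex_{\hat b\sim\mub[u,b_{-i}]}[\al_i(\hat b)]$ for each $u$ (which simultaneously shows this map is measurable in $u$ as a pointwise limit of measurable functions), then to the outer integral in $u$. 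A signed bounded $\al_i$ is handled by the decomposition $\al_i=\al_i^+-\al_i^-$. Since in our setting $\al_i\ge0$, the nonnegative case already suffices.

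The routine bookkeeping is light; the only points requiring care are the measurability of $\nu\mapsto\nu(B)$ for \emph{all} Borel $B$ (the Dynkin argument above) and pushing the standard machine through the $d\nu_2$ integral and the iterated integral in tandem, which is where I expect the only real friction.
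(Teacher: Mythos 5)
Your proposal is correct and follows essentially the same route as the paper: verify that $\mubpsci$ is a finite signed measure (finiteness from $\mub(B)\le 1$, countable additivity by interchanging the sum with the integral over $u$), then prove the integration identity for characteristic functions and extend by the standard machine to bounded measurable $\al_i$. You merely spell out details the paper leaves implicit — the measurability of $u\mapsto\mub[u,b_{-i}](B)$ (which the paper delegates to its standing measurability assumptions) and the indicator-to-simple-to-general extension via monotone convergence (which the paper explicitly omits) — so there is nothing substantively different or missing.
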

\begin{proof}
First, we show that $\mubpsci(B)$ is a finite signed measure. Since $\mub$ is a probability measure, we have
$\mub(B)\leq1$ for all $B$. Thus, $\mubpsci(B)$ is well-defined and finite for all Borel sets $B$ (note that
the integral is well defined by our assumptions on the measurability of $\mub$). From this it is easy to see
that $\mubpsci(\emptyset)=0$ because $\mub(\emptyset)=0$. It remains to show countable additivity, i.e.
$\sum_{k=1}^\infty\mubpsci(B_k)=\mubpsci(\cup_k B_k)$, which follows because integrals obey countable
additivity for nonnegative functions (see Fact~\ref{fct:ana-integral-countadd}):
\begin{align*}
\sum_{k=1}^\infty\mubpsci(B_k)&=\sum_{k=1}^\infty\left(b_i\mub(B_k)-\int_{0}^{b_i}\mub[u,b_{-i}](B_k)du\right)=\sum_{k=1}^\infty b_i\mub(B_k)-\int_{0}^{b_i}\sum_{k=1}^\infty\mub[u,b_{-i}](B_k)du\\
&=b_i\mub(\cup_kB_k)-\int_{0}^{b_i}\mub[u,b_{-i}](\cup_kB_k)du=\mubpsci(\cup_kB_k)\enspace.
\end{align*}

Second, we show from first-principles that integrating $\al_i$ with respect to
$\mubpsci$ is equivalent to calculating the Archer-Tardos prices for $\al_i$.
We begin by showing this equality for charateristic functions over Borel
measurable sets. The proof for more general functions (in our case
$\al_i$) can be built-up from characteristic functions precisely as in the
definition of an integral, so we omit it (see Definition~\ref{def:integral}).
Let $1_B$ be the characteristic function of a Borel measurable set. By definition of an integral, $\int 1_Xd\nu=\nu(X)$,
and plugging in we observe the desired equality:
\begin{align*}
b_i\ex_{\hat b\sim \mub}[1_B(\hat
b)]-\int_0^{b_i}\ex_{\hat b\sim \mu_{u,b_{-i}}}[1_B(\hat b)]du&= b_i\mub(B)-\int_{0}^{b_i}\mub[u,b_{-i}](B)du\\
&=\int_{\hat b\in\Re^n}1_B(\hat b)d\mubpsci\enspace.
\end{align*}
\end{proof}

The general version of the characterization theorem shows that the payment functions
precisely correspond to the density function $\rhobi(\hat b)$ relating $\mubpsci$ to $\mub$ 
(i.e. the Radon-Nikodym derivative of $\mubpsci$ with respect to $\mub$ --- its
existence is guaranteed by the absolute continuity that 
figures in the characterization theorem~\ref{thm:a-sp-char} below). In this setting, we
can equivalently define the associated coefficients $c^\mu_i(\hat b,b)$ as the function that satisfies
\begin{equation*}
b_ic_i^\mu(\hat b,b)=\rhobi(\hat b)\enspace.
\end{equation*}

\begin{theorem}[Characterizing single-call reductions] \label{thm:a-sp-char}{\em (Generalization of Theorem \ref{thm:sp-char}.)}

A single-call single-parameter reduction $(\mu,\{\lambda_i\})$ for the set of
all monotone bounded single-parameter allocation functions satisfies
truthfulness, individual rationality, and no positive transfers in expectation
if and only if the following conditions are met: 
\begin{enumerate} 
\item The distribution $\mu$ is such that for all monotone, locally bounded
$\al$, the randomized allocation procedure $\asci(b)$ is monotone in
expectation, i.e., for all agents $i$, for all $b$, and $b_i' \geq b_i$,
$\ex[\asci(b)]\leq\ex[\asci(b',b_{-i})]$ (see Lemma~\ref{lem:a-sp-char-mono} for
further discussion).
\item For all $i$, and for all Borel measurable sets $B$, 
the measure $\mub(B) \neq 0$ if $\int_{0}^{b_i}\mub[u,b_{-i}](B)du \neq 0$, or
equivalently, the signed measure $\mubpsci$ is absolutely continuous w.r.t.
measure $\mub$.
\item The payment functions $\lambda_i(A(\hat b),\hat b,b)$ satisfy 
\[\lambda_i(\al(\hat b),\hat b,b)=\rhobi(\hat b)\al_i(\hat b)+\lambda_i^0(\hat b,b)\enspace a.s.\]
where $\ex_{\hat b\sim \mub}[\lambda_i^0(\hat b,b)]=0$ and $\rhobi(\hat b)$ is the
 density function relating $\mubpsci$ to $\mub$.

(Almost surely, or $a.s.$, means that it holds everywhere except for a set with
measure zero under $\mub(\cdot)$.)
\end{enumerate}
\end{theorem}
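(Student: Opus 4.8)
The plan is to replay the density-case argument in the proof of Theorem~\ref{thm:sp-char}, replacing every pointwise manipulation of densities by a statement about the finite signed measure $\mubpsci$ supplied by Lemma~\ref{lem:mubpsci}, the Radon--Nikodym theorem, and the fact that the $n$-dimensional rectangles form a $\pi$-system generating the Borel $\sigma$-algebra. For \emph{necessity} I would first dispatch condition~1: since $\asc$ is itself a single-parameter allocation function, Theorem~\ref{thm:MAT} says it admits truthful prices exactly when $\ex[\asci(b)]$ is monotone in $b_i$, so a truthful-in-expectation reduction must satisfy condition~1 (the precise quantifiers --- that this is required for \emph{every} monotone locally bounded $\al$ --- are the content of Lemma~\ref{lem:a-sp-char-mono}, which I would invoke here). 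Given condition~1, the expected payment must be of Archer--Tardos form, which by Lemma~\ref{lem:mubpsci} equals $h_i(b_{-i})+\int_{\hat b\in\Re^n}\al_i(\hat b)\,d\mubpsci$. Writing $\lambda_i^0(\hat b,b):=\lambda_i(\mathbf 0,\hat b,b)$ for the payment at the everywhere-zero outcome and evaluating this at the zero allocation function (so the integral term vanishes and $\ex[\psci]=\ex_{\hat b\sim\mub}[\lambda_i^0]$), IR and NPT in expectation force $h_i(b_{-i})=0$ and hence $\ex_{\hat b\sim\mub}[\lambda_i^0]=0$. Thus truthfulness, IR, and NPT in expectation are equivalent to condition~1 together with the payment equation
\[\int_{\hat b\in\Re^n}\lambda_i(\al(\hat b),\hat b,b)\,d\mub(\hat b)=\int_{\hat b\in\Re^n}\al_i(\hat b)\,d\mubpsci\]
holding for every monotone bounded $\al$.

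The crux is to localize this equation. For any rectangle $\rect\subseteq\Re^n$, write $1_{\rect}=1_{\rect}^{+}-1_{\rect}^{-}$ with $1_{\rect}^{+},1_{\rect}^{-}$ both $\{0,1\}$-valued and monotone --- this is inclusion--exclusion over upper orthants, whose indicators are monotone. Then $\al^{\pm}:=1_{\rect}^{\pm}\al$ are again monotone and bounded; $\al^{+}$ coincides with $\al$ on $\rect$, $\al^{-}$ coincides with the zero allocation on $\rect$, and $\al^{+}=\al^{-}$ off $\rect$. Substituting $\al^{+}$ and $\al^{-}$ into the payment equation and subtracting cancels everything outside $\rect$ and leaves
\[\int_{\hat b\in\rect}\bigl(\lambda_i(\al(\hat b),\hat b,b)-\lambda_i^0(\hat b,b)\bigr)\,d\mub=\int_{\hat b\in\rect}\al_i(\hat b)\,d\mubpsci\]
for every rectangle $\rect$. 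Specializing $\al$ to the constant allocation with $\al_i\equiv1$ gives $\mubpsci(\rect)=\int_{\rect}g\,d\mub$ for a fixed $\mub$-integrable $g$; since both sides are finite signed measures agreeing on the $\pi$-system of rectangles and on all of $\Re^n$, the standard extension-uniqueness argument (applied to Jordan components) propagates the identity to every Borel set, yielding $\mubpsci\ll\mub$ (condition~2) with Radon--Nikodym derivative $g=\rhobi$. Feeding $\rhobi$ back in, for arbitrary monotone bounded $\al$ the displayed identity reads $\int_{\rect}(\lambda_i(\al(\hat b),\hat b,b)-\lambda_i^0(\hat b,b))\,d\mub=\int_{\rect}\al_i(\hat b)\rhobi(\hat b)\,d\mub$ for every rectangle, and the same argument forces the integrands to agree $\mub$-almost everywhere, which is exactly condition~3.

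For \emph{sufficiency}, assume conditions~1--3: condition~1 makes $\ex[\asci(b)]$ monotone in $b_i$, condition~2 produces $\rhobi$, and condition~3 together with $\ex_{\hat b\sim\mub}[\lambda_i^0]=0$ gives $\ex[\psci]=\int_{\hat b\in\Re^n}\rhobi(\hat b)\al_i(\hat b)\,d\mub=\int_{\hat b\in\Re^n}\al_i(\hat b)\,d\mubpsci$, i.e.\ exactly the Archer--Tardos price of $\asc$ with $h_i\equiv0$; truthfulness then follows from Theorem~\ref{thm:MAT}, IR from $\al_i\ge0$ (which makes $\int_0^{b_i}\ex[\asci(u,b_{-i})]\,du\ge0$), and NPT from monotonicity in expectation (which gives $b_i\ex[\asci(b)]\ge\int_0^{b_i}\ex[\asci(u,b_{-i})]\,du$).

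I expect the main obstacle to be the localization step: one must check that the $1_{\rect}=1_{\rect}^{+}-1_{\rect}^{-}$ construction keeps $\al^{\pm}$ inside the class of monotone bounded allocation functions, and that the passage from ``equal integral over every rectangle'' to ``equal integrand $\mub$-a.e.'' is legitimate. This is precisely where the Borel-measurability hypotheses on $\al$ and on $b\mapsto\mub$, and the finiteness of $\mubpsci$ from Lemma~\ref{lem:mubpsci}, are needed, and where one must take care that $\mubpsci$ is a signed (not positive) measure. A secondary subtlety, handled separately by Lemma~\ref{lem:a-sp-char-mono}, is phrasing the monotonicity-in-expectation condition with the right quantifiers for locally bounded allocation functions.
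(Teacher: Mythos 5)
Your proposal is correct and follows essentially the same route as the paper's proof: it converts the Archer--Tardos payment into integration against the finite signed measure $\mubpsci$ (Lemma~\ref{lem:mubpsci}), localizes the global payment identity to rectangles by writing $1_{\rect}$ as a difference of two monotone $\{0,1\}$-valued indicators, extends to all Borel sets via Dynkin's $\pi$-$\lambda$ theorem (exactly the content of Lemma~\ref{lem:sp-char-restrictedint}), identifies $\rhobi$ by plugging in the constant allocation, and then reads off conditions 2 and 3, with the same sufficiency computation. The only deviations are cosmetic: the paper extends the general integral identity to all Borel sets before specializing to $\al_i\equiv 1$, whereas you specialize first and then invoke uniqueness of measures --- and there the parenthetical ``applied to Jordan components'' should be replaced by the direct $\lambda$-system argument on the two finite signed measures themselves (agreement on rectangles need not transfer to their Jordan parts), a one-line repair; similarly, your IR/NPT normalization argument, stated for the zero allocation, should be run at $b_i=0$ for an arbitrary $\al$ to force $h_i\equiv 0$ in general.
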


\begin{proof} 
\paragraph{Necessity}
We first prove the necessity of the three conditions above. The
first condition, that $\asc$ is monotone in expectation, follows directly from
Archer-Tardos characterization of truthful allocation functions. The second and
third conditions, as we prove below, are necessary for the expected payment to
take the form required by the Archer-Tardos characterization.

We now write down the truthful payments give by the Archer-Tardos
characterization, and rewrite it using the signed measure $\mubpsci$. 
\begin{align*}
\ex[\psci]\TCSeas&=b_i\ex[\asci(b)]-\int_{0}^{b_i}\ex[\asci(u,b_{-i})]du\\
&=b_i\ex_{\hat b\sim \mub}[\al_i(\hat b)]-\int_0^{b_i}\ex_{\hat b\sim \mu_{u,b_{-i}}}[\al_i(\hat b)]du\\
&=\int_{\hat b\in\Re^n}{A_i(\hat b)d\mubpsci}\enspace.
\end{align*}
where the last equality follows from the definition of the signed measure
$\mubpsci$, and Lemma~\ref{lem:mubpsci}.

By definition of the reduction, we can write the expected payment as:
\begin{equation*}
\ex[\psci]=\int_{\hat b\in\Re^n}\lambda_i(\al(\hat b),\hat b,b)d\mub\enspace.
\end{equation*}
Equating these two gives
\begin{equation}\int_{\hat b\in\Re^n}\lambda_i(\al(\hat b),\hat b,b)d\mub=\ex[\psci]\TCSe=\int_{\hat b\in\Re^n}{A_i(\hat b)d\mubpsci}\enspace.\label{eqn:sp-char-inteq}\end{equation}
Next, we define the normalized payment function $\tilde\lambda$ as
\[\tilde\lambda_i(\al(\hat b),\hat b,b)=\lambda_i(\al(\hat b),\hat b,b)-\lambda_i(0^n,\hat b,b)\enspace.\]
By (\ref{eqn:sp-char-inteq}), $\int_{\hat b\in\Re^n}\lambda_i(0^n,\hat b,b)d\mub(B)=0$, and therefore we may write
\[\int_{\hat b\in\Re^n}\tilde\lambda_i(\al(\hat b),\hat b,b)d\mub=\int_{\hat b\in\Re^n}{\al_i(\hat b)d\mubpsci}\enspace.\]

If the above equality were to hold for all bounded, monotone, measurable
allocation functions $A$, then by Lemma~\ref{lem:sp-char-restrictedint}, this
implies for all Borel measurable sets $X\subseteq\Re^n$:
\begin{equation}\label{eqn:sp-char-allx}\int_{\hat b\in X}\tilde\lambda_i(\al(\hat b),\hat b,b)d\mub=\int_{\hat b\in X}{\al_i(\hat b)d\mubpsci}\enspace.\end{equation}
This statement would be intuitive if we allowed $\al_i$ to be any function ---
we could pick the function $\al_i'(b)=1_X(b)\al_i(b)$, i.e. we could zero
$\al_i$ except on $X$, and plug back into the previous equality. Unfortunately,
this $\al_i'$ is not monotone. The work of Lemma~\ref{lem:sp-char-restrictedint}
is to show that the space of bounded, monotone functions is still sufficiently
general as to guarantee equality for any Borel measurable set $X$.

Having derived Equation~\eqref{eqn:sp-char-allx}, we now show how it makes conditions two and three
in theorem necessary.  If we substitute the constant function $\al_i(b)=1$ into
(\ref{eqn:sp-char-allx}), we see that for all measurable $X$ \[\int_{\hat b\in
X}\tilde\lambda_i(1^n,\hat b,b)d\mub=\int_{\hat b\in X}d\mubpsci\enspace,\] that
is, $\tilde\lambda_i(1^n,\hat b,b)$ satisfies the definition of the derivative
of $\mubpsci$ w.r.t $\mub$, and therefore $\rhobi(\hat b) =
\tilde\lambda_i(1^n,\hat b,b)$.  Thus, given that finite 
payments $\lambda$ exist it follows that the density relating $\mubpsci$ to $\mub$, namely $\rhobi(\hat
b)$, also exists and is finite. But given that both $\mub$ and $\mubpsci$ are finite
measures, this also means that $\mubpsci$ is absolutely continuous w.r.t.
$\mub$. If not, then there exists a Borel measurable set $V$ such
that $\mubpsci(V)\neq0$ but $\mub(V)=0$. We run into an
immediate contradiction as follows:
\[0=\int_{\hat b\in
V}\rhobi(\hat b)d\mub=\int_{\hat b\in V}d\mubpsci=\mubpsci(V)\neq 0.\]

Thus we have proved that condition two, absolute continuity of $\mubpsci$ w.r.t. $\mub$, is necessary.

Returning to (\ref{eqn:sp-char-allx}), by the definition of $\rhobi(\hat b)$ we can write
\[\int_{\hat b\in X}\tilde\lambda_i(\al(\hat b),\hat b,b)d\mub=\int_{\hat b\in X}{\al_i(\hat b)\rhobi(\hat b)d\mub}\]

\[\int_{\hat b\in X}\left(\tilde\lambda_i(\al(\hat b),\hat b,b)-\al_i(\hat b)\rhobi(\hat b)\right)d\mub=0\]
for all Borel measurable sets $X\subseteq\Re^n$. By a standard argument 
(Fact~\ref{fact:ana-fzeroae}), 
this implies
$$\tilde\lambda_i(\al(\hat b),\hat b,b)-\al_i(\hat b)\rhobi(\hat b)=0$$
almost surely with respect to $\mub(B)$, the third condition. Thus we have shown that all the three
conditions are necessary.

\paragraph{Sufficiency} We now show that the three stated conditions are
sufficient. In a single-parameter setting, for a mechanism to be truthful,
we simply need the allocation function to be monotone in expectation, and the
payment function must satisfy the Archer-Tardos payment functions. 
Condition one guarantees that the allocation function output by the single-call
reduction is a monotone in expectation allocation function. It remains to show
that the second and third conditions result in payments that agree with
Archer-Tardos payments. Given condition two, we see that $\mubpsci$ is
absolutely continuous w.r.t the resampling measure $\mub$, and thus by Radon
Nikodym theorem, the density function $\rhobi(.)$ is finite and exists. All we
need to show is that under the formula of $\lambda_i(A(\hat b), \hat b, b))$
described in condition three, we have $$\int_{\hat
b\in\Re^n}\lambda_i(\al(\hat b),\hat b,b)d\mub =
b_i\ex[\asci(b)]-\int_{0}^{b_i}\ex[\asci(u,b_{-i})]du.$$ Once we substitute the
formula for $\lambda_i(A(\hat b),\hat b, b)$ from condition three, this equality 
follows from the definition of $\rhobi(\cdot)$ and $\mubpsci$.
\end{proof}

\begin{lemma}\label{lem:sp-char-restrictedint} 
Let $\mu$ and $\nu$ be finite measures (possibly signed), and let
$g:\Re^n_+\times\Re^n\rightarrow\Re$ be a function with $g(0,\hat b)=0$
satisfying $$\int_{\hat b\in\Re^n}g(\al(\hat b),\hat b)d\mu\TCSdd=\int_{\hat
b\in\Re^n}{A_i(\hat b)d\nu}$$ for all Borel measurable functions
$\al:\Re^n\rightarrow\Re_+^n$ where $\al$ is bounded and monotone in the sense
that $b'\geq b\Rightarrow\al(b')\geq\al(b)$.

Then for any such $\al$ and all Borel measurable sets $X\subseteq\Re^n$,
$$\int_{\hat b\in X}g(\al(\hat b),\hat b)d\mu\TCSdd=\int_{\hat b\in X}{A_i(\hat b)d\nu}\enspace.$$
\end{lemma}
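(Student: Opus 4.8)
The plan is to prove the claim by a $\pi$--$\lambda$ (Dynkin) argument on the collection of ``good'' sets. Define $\mathcal{L}$ to be the family of Borel sets $X\subseteq\Re^n$ for which the desired identity $\int_{X}g(\al(\hat b),\hat b)\,d\mu=\int_{X}A_i(\hat b)\,d\nu$ holds for \emph{every} bounded, monotone, Borel measurable $\al:\Re^n\to\Re_+^n$. By hypothesis $\Re^n\in\mathcal{L}$. Since for any fixed $\mu$-integrable $h$ the set function $X\mapsto\int_X h\,d\mu$ is a finite signed measure (and likewise for $\nu$), $\mathcal{L}$ is closed under complementation --- subtract the identity for $X$ from the identity for $\Re^n$ using additivity of the integral over the disjoint union $X\cup X^c$ --- and, by dominated convergence (dominating by $|g(\al(\cdot),\cdot)|$, which is integrable since otherwise the displayed integrals would be undefined), it is closed under increasing countable unions. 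Hence $\mathcal{L}$ is a $\lambda$-system.

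Next I would exhibit a $\pi$-system contained in $\mathcal{L}$ that generates $\borel(\Re^n)$. The right choice is the family of closed upper orthants $U_c:=\{\hat b\in\Re^n:\hat b\geq c\}$, $c\in\Re^n$ (coordinatewise): one has $U_c\cap U_{c'}=U_{c\vee c'}$, so these form a $\pi$-system, and a standard argument gives $\sigma(\{U_c\})=\borel(\Re^n)$. The key observation is that $1_{U_c}$ is a $\{0,1\}$-valued \emph{monotone} function, so for any bounded monotone $\al$ the coordinatewise product $\al^{(c)}:=1_{U_c}\cdot\al$ is again bounded, Borel, and monotone: on $U_c$ it equals $\al$ and $U_c$ is upward closed, while off $U_c$ it is the zero vector, which is dominated by everything because $\al\geq0$. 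Applying the hypothesis to $\al^{(c)}$ and using $g(0,\hat b)=0$ to annihilate the contribution from outside $U_c$ gives $\int_{U_c}g(\al(\hat b),\hat b)\,d\mu=\int_{\Re^n}g(\al^{(c)}(\hat b),\hat b)\,d\mu=\int_{\Re^n}(\al^{(c)})_i\,d\nu=\int_{U_c}A_i\,d\nu$, i.e.\ $U_c\in\mathcal{L}$. By Dynkin's theorem, $\mathcal{L}\supseteq\sigma(\{U_c\})=\borel(\Re^n)$, which is exactly the lemma.

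The only real subtlety --- and the step I would be most careful about --- is the interplay between the monotonicity restriction on admissible $\al$ and the need to ``localize'' the identity to an arbitrary Borel set: multiplying by $1_{U_c}$ is essentially the \emph{only} truncation that respects monotonicity, and it is precisely the hypothesis $g(0,\cdot)=0$ together with the upward closedness of orthants that lets the localized identity on $U_c$ drop out of the global one. Everything else (the $\lambda$-system closure properties, the generation claim $\sigma(\{U_c\})=\borel$, and the integrability of $g(\al(\cdot),\cdot)$ against the finite signed measures) is routine. An equivalent route reduces first to axis-parallel boxes $\rect$ --- also a $\pi$-system, with $1_{\rect}$ an inclusion--exclusion combination of orthant indicators --- and then invokes the same generation argument; working with orthants directly avoids the inclusion--exclusion bookkeeping.
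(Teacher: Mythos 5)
Your proof is correct, and its skeleton --- localize the identity by multiplying $\al$ by a monotone $\{0,1\}$-valued truncation, then extend from a generating $\pi$-system to all Borel sets via Dynkin's $\pi$--$\lambda$ theorem --- matches the paper's. The genuine difference is in the base case: the paper works with the $\pi$-system of rectangular parallelepipeds $\rect$, whose indicators are not monotone, and therefore writes $1_{\rect}=f^+-f^-$ as a difference of two monotone $\{0,1\}$ functions, applies the hypothesis to both $\al\cdot f^+$ and $\al\cdot f^-$, and subtracts (using that the two products agree off $\rect$ and that $\al^-=0$ on $\rect$ together with $g(0,\cdot)=0$); you instead use the $\pi$-system of upper orthants $U_c$, whose indicators are themselves monotone, so a single application of the hypothesis to $1_{U_c}\cdot\al$ plus $g(0,\cdot)=0$ suffices. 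This buys you a cleaner base case with no decomposition or subtraction, at the negligible cost of the (standard) observation that $\sigma(\{U_c\})=\borel(\Re^n)$; the paper's choice of boxes makes the generation step more obviously routine but forces the two-function trick. One point of hygiene: under the paper's definition a $\lambda$-system must be closed under countable \emph{disjoint} unions, and closure under complementation plus increasing unions does not by itself imply this; in your setting it is immediate --- for each fixed admissible $\al$ the set functions $X\mapsto\int_X g(\al(\hat b),\hat b)\,d\mu$ and $X\mapsto\int_X \al_i(\hat b)\,d\nu$ are finite, countably additive signed measures, so the family on which they agree for all such $\al$ is closed under disjoint countable unions as well --- but this is worth saying explicitly. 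Your integrability caveat (needed for these set functions to be well-defined finite signed measures) is at the same level of implicitness as the paper's own argument, so it is not a gap relative to the original.
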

\begin{proof} First, assume that the characteristic function of $X$ can be written
as the difference of two $\{0,1\}$ monotone functions, that is, $1_X(b)=f^+(b)-f^-(b)$
where $f^+$ and $f^-$ are monotone functions mapping $\Re^n$ to $\{0,1\}$. Note that
this includes all rectangular parallelepipeds (a product of open, closed, or half-open
intervals).

Define as $\al_i^+(b)=\al_i(b)\cdot f^+(b)$ and $\al_i^-(b)=\al_i(b)\cdot f^-(b)$.
Note that for any bounded, monotone, measurable $\al$, the functions $\al^+$ and
$\al^-$ are similarly bounded and monotone. Therefore the conditions of the lemma imply
\[\int_{\hat b\in\Re^n}g(\al^+(\hat b),\hat
b)d\mu\TCSdd=\int_{\hat b\in\Re^n}A_i^+(\hat b)d\nu\]
and
\[\int_{\hat
b\in\Re^n}g(\al^-(\hat b),\hat b)d\mu\TCSdd=\int_{\hat b\in\Re^n}A_i^-(\hat
b)d\nu\]
Taking the difference, we get $$\int_{\hat b\in\Re^n}\left(g(\al^+(\hat
b),\hat b)-g(\al^-(\hat b),\hat b)\right)d\mu\TCSdd=\int_{\hat
b\in\Re^n}\left(A_i^+(\hat b)-\al_i^-(\hat b)\right)d\nu\enspace.$$ Note that
$\al^+=\al^-$ everywhere except on the set $X$, so the integrands are only nonzero on $X$,
thus we can replace $\Re^n$ with $X$ in the integrals: $$\int_{\hat b\in
X}\left(g(\al^+(\hat b),\hat b)-g(\al^-(\hat b),\hat
b)\right)d\mu\TCSdd=\int_{\hat b\in X}\left(A_i^+(\hat b)-\al_i^-(\hat
b)\right)d\nu\enspace.$$

Now, note that on $X$, $\al^+=\al$ and $\al^-=0$. Thus, also using the fact
$g(\al^-(\hat b),\hat b)=0$, we have $$\int_{\hat b\in X}g(\al(\hat b),\hat
b)d\mu\TCSdd=\int_{\hat b\in X}A_i(\hat b)d\nu\enspace,$$ as desired.

To show that the lemma holds for all Borel measurable sets $X$, we observe that it holds for
all rectangular parallelepipeds (a product of open,
closed, or half-open intervals) by the above argument. Since the set of rectangular parallelepipeds is closed
under finite intersections, the lemma applies to all finite intersections of
rectangular parallelepipeds, which is the $\pi$-system that generates the Borel $\sigma$-algebra
of $\Re^n$.

Additionally, if the lemma holds for a countable sequence of disjoint sets $X_k$, then it
clearly holds for their union as well, implying that the sets for which the lemma is true
must be a $\lambda$-system.

Therefore, by Dynkin's $\pi$-$\lambda$ theorem, the $\lambda$-system (the sets satisfying
the lemma) must contain all sets in the $\sigma$-algebra generated by the $\pi$-system (the set
of rectangular parallelepipeds) --- namely, it must contain all sets in the Borel $\sigma$-algebra
of $\Re^n$. Thus, the lemma must hold for all Borel measurable sets $X$.
\end{proof}

\subsection{Monotonicity and $\mub$}
Theorem~\ref{thm:a-sp-char} requires $\mub$ to be such that $\asci(b)$ is monotone in expectation. The following lemma gives a necessary condition:
\begin{lemma}
\label{lem:a-sp-char-mono}
Let $B$ be a set of bids that is {\em leftward closed} with respect to $b_i$, i.e. if $\hat b\in B$, then $(u,\hat b_{-i})\in B$ for all $u\in (-\infty,b_i]\cap\Tyi$. If $\mub(B)$ satisfies the monotonicity condition
$$\macroCPr{\hat b\in B}{b}=\mub(B)$$
is weakly decreasing in $b_i$. Similarly, if $B$ is {\em rightward closed} with respect to $b_i$ (i.e. $\hat b\in B$ implies $\hat b_{-i}u\in B$ for $u\in[b_i,\infty)$), then $\pr({\hat b\in B}|{b})$ is weakly increasing in $b_i$, and if $B$ is both rightward and leftward closed with respect to $b_i$ then $\pr({\hat b\in B}|{b})$ is constant in $b_i$.
\end{lemma}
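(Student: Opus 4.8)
The plan is to probe the monotonicity-in-expectation requirement with a small family of $\{0,1\}$-valued test allocation functions read off directly from $B$. Recall that the standing hypothesis (condition~1 of Theorem~\ref{thm:a-sp-char}) is that for every monotone bounded single-parameter allocation function $\al$ the induced quantity $\ex[\asci(b)]=\int\al_i\,d\mub$ is weakly increasing in $b_i$, and that $\al$ is monotone precisely when each coordinate $\al_j$ is weakly increasing in its own argument $\hat b_j$ (with $\hat b_{-j}$ held fixed). So it suffices, for each closure hypothesis on $B$, to build a monotone bounded $\al$ whose $i$-th coordinate is the indicator of $B$ (or of $B^c$), and then to read off the conclusion. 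We may assume $B$ is Borel measurable, since otherwise $\mub(B)$ is undefined for the reductions under consideration.

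The one non-bookkeeping observation is that complementation swaps the two closure notions in the distinguished coordinate: if $B$ is leftward closed with respect to coordinate $i$ --- i.e. $\hat b\in B$ and $u\le\hat b_i$ with $u\in\Tyi$ imply $(u,\hat b_{-i})\in B$ --- then $B^c$ is rightward closed with respect to coordinate $i$. Indeed, if $\hat b\in B^c$ but $(u,\hat b_{-i})\in B$ for some $u\ge\hat b_i$, then applying leftward closure at the point $(u,\hat b_{-i})$, lowering its $i$-th coordinate to $\hat b_i$, forces $\hat b\in B$, a contradiction. Hence $1_{B^c}(\hat b_i,\hat b_{-i})$ is weakly increasing in $\hat b_i$; symmetrically, if $B$ itself is rightward closed with respect to coordinate $i$ then $1_B(\hat b_i,\hat b_{-i})$ is weakly increasing in $\hat b_i$.

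With this in hand I would finish as follows. For the leftward-closed case, set $\al_i=1_{B^c}$ and $\al_j\equiv0$ for $j\neq i$; this is nonnegative, bounded, Borel measurable, and monotone (coordinate $i$ by the observation above, the other coordinates being constant), so the hypothesis applies and $\ex[\asci(b)]=\mub(B^c)=1-\mub(B)$ is weakly increasing in $b_i$, i.e. $\mub(B)=\macroCPr{\hat b\in B}{b}$ is weakly decreasing in $b_i$. For the rightward-closed case, take instead $\al_i=1_B$ (and $\al_j\equiv0$ otherwise), which is likewise admissible, giving that $\mub(B)$ is weakly increasing in $b_i$. If $B$ is both leftward and rightward closed with respect to coordinate $i$, then $\mub(B)$ is simultaneously weakly decreasing and weakly increasing in $b_i$, hence constant in $b_i$.

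There is no real obstacle here. The only points that need care are (i) using the single-parameter notion of monotonicity (monotone in each own coordinate, not the full product order), so that the one-coordinate indicator allocation functions genuinely qualify as admissible inputs to the reduction, and (ii) the elementary set-theoretic fact from the second paragraph that complementing a leftward-closed set yields a rightward-closed set in coordinate $i$. Once the test functions are identified, all three cases drop out immediately.
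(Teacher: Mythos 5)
Your proof is correct and follows essentially the same route as the paper's: the paper also tests the monotonicity-in-expectation hypothesis with the indicator allocation $\al_i=1_B$ for the rightward-closed case (phrased as a contradiction rather than directly) and then obtains the leftward-closed case by the same complementation observation, with the both-closed case following from the two directions. The only differences are cosmetic (order of cases, direct argument versus contradiction, and your explicit verification that the complement of a leftward-closed set is rightward closed, which the paper states in one line).
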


\begin{proof} First, we prove the case where $B$ rightward closed. For contradiction, let $B$ be a rightward closed set on which $f$ violates the statement of the lemma for some $b$ and $b_i'>b_i$, i.e.
$$\macroCPr{\hat b\in B}{b}=\mub(B)>\mu_{b_i',b_{-i}}(B)=\macroCPr{\hat b\in B}{b_i',b_{-i}}\enspace.$$
Consider the monotone allocation function
\begin{equation*}
A_i(b)=\begin{cases}
1,&b\in B\\
0,&otherwise.
\end{cases}
\end{equation*}
Noting that the $\ex[\asc(b)]=\mub(B)$, we have
$$E[\asci(b_i',b_{-i})]=\mub[b_i',b_{-i}](B)<\mub(B)=E[\asci(b)]\enspace.$$
Thus, under this allocation function, bidder $i$ lowers her expected utility by raising her bid to $b_i'$, contradicting the monotonicity condition.

Finally, any leftward closed set $B$ is the complement (probabilistically) of a rightward closed set, therefore $\pr({\hat b\in B}|{b})$ must be weakly decreasing. For a set $B$ that is both leftward and rightward closed, the theorem follows because $\pr({\hat b\in B}|{b})$ must be both weakly increasing and weakly decreasing.
\end{proof}

\section{Optimality proofs for generalized BKS}

\label{sec:a-babaioff}

In this section, we generalize our optimality result of Section~\ref{sec:babaioff} to arbitrary probability measures and give a complete proof. Theorem~\ref{thm:a-sp-char} shows that truthful payments take the form
\[\lambda_i(\al(\hat b),\hat b,b)=\rhob(\hat b)\al_i(\hat b)+\lambda_i^0(\hat b,b)\enspace a.s.\]
and thus optimizing the bid-normalized payments means optimizing the following quantity:
\[\sum_j \frac{\lambda_{ij}(b_j(\al(\hat
b)),\hat b,b)}{b_j(\al(\hat b))}=\frac{\rhob(\hat b)\al_i(\hat b)}{b_i\al_i(\hat b)}=\frac{\rhob(\hat b)}{b_i}\enspace.\]
This means that for worst-case payments we will optimize $\sup_{i,\hat b}\left|\frac{\rhob(\hat b)}{b_i}\right|$, and for payment variance we will optimize $\max_i\var_{\hat b\sim\mub}\left(\frac{\rhob(\hat b)}{b_i}\right)$.
We show that the BKS transformation is optimal for both, subject to an almost everywhere caveat:
\begin{theorem}[Optimality of the BKS Transformation]
\label{thm:a-babaioff-opt}
(Generalization of Theorem~\ref{thm:babaioff-opt})
The BKS reduction $\BKSRedn(\al,\gamma)$ optimizes the payment variance and worst-case normalized payment subject to a lower bound of $\alpha=(1-\gamma)^n\in(\frac1e,1)$ on the precision, the welfare approximation ($n\geq2$), or the revenue approximation ($n\geq2$). That is, for any other truthful reduction $(\mu,\{\lambda_i\})$ that achieves a precision, welfare approximation, or revenue approximation of $\alpha$, the worst-case normalized payments are at least as large almost everywhere over $b$:
\[\sup_{\al,i}\var_{\hat b\sim\mub}\left(\sum_j \frac{\lambda_{ij}(b_j(\al(\hat
b)),\hat b,b)}{b_j(\al(\hat b))}\right)=\max_{i}\var_{\hat b\sim\mub}\left(\frac{\rhob(\hat b)}{b_i}\right)\geq\max_{i}\var_{\hat b\sim\mub}\left(\frac{\rhob[\fbks](\hat b)}{b_i}\right)\enspace a.e.\]
and
\[\sup_{\al,i,\hat b}\left|\sum_j \frac{\lambda_{ij}(b_j(\al(\hat
b)),\hat b,b)}{b_j(\al(\hat b))}\right|=\sup_{i,\hat b}\left|\frac{\rhob(\hat b)}{b_i}\right|\geq\sup_{i,\hat b}\left|\frac{\rhob[\fbks](\hat b)}{b_i}\right|\enspace a.e.\]
Under the nice distribution assumption, this holds for every $b$.
\end{theorem}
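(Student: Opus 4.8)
The plan is to proceed in three stages: reduce all six quantities to statements about the density $\rhob(\hat b)$; collapse the three expectation metrics onto the single precision metric; and then prove optimality against the precision constraint by an argument parallel to that of Theorem~\ref{thm:vcg-opt}.

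\emph{Stage 1 (reduction to the density).} Theorem~\ref{thm:a-sp-char} already supplies the structure we need: for a normalized truthful reduction the bid-normalized payment of bidder $i$ equals $\rhob(\hat b)/b_i$, as computed just before the theorem statement. Writing $c_i^\mu(\hat b,b)\equiv\rhob(\hat b)/b_i$, it therefore suffices to show that $\esssup_{i,\hat b}\lvert c_i^\mu(\hat b,b)\rvert$ and $\max_i\var_{\hat b\sim\mub}\bigl(c_i^\mu(\hat b,b)\bigr)$ are, over admissible $\mu$, at least what $\BKSRedn(\al,\gamma)$ achieves. The latter is immediate from Algorithm~\ref{alg:bkssc}: $c_i^\fbks$ equals $1$ on $\{\hat b_i=b_i\}$ and $1-\tfrac1\gamma$ on $\{\hat b_i<b_i\}$, so $\esssup_{i,\hat b}\lvert c_i^\fbks\rvert=\max\{1,\tfrac{1-\gamma}\gamma\}$ and each bidder's payment variance equals $(1-\gamma)/\gamma$.

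\emph{Stage 2 (welfare and revenue reduce to precision).} Evaluating an arbitrary reduction on the threshold allocation $\al_i(b)=\mathbf{1}[b\ge\bar b]$ at the bid $\bar b$, a direct computation with the Archer--Tardos prices of $\asc$ shows that on this instance both the welfare approximation and the revenue approximation equal $\Pr(\hat b=\bar b\mid\bar b)$, provided the reduction never resamples to a strictly larger bid; this is the arbitrary-measure version of Lemma~\ref{lem:sp-swrapx}, and it is the step that uses $\alpha>\tfrac1e$ and $n\ge2$. The proviso is harmless: Stage~3 shows that any reduction matching the BKS worst case must, almost surely, never raise a bid, so requiring welfare or revenue approximation $\ge\alpha$ is, for the purpose of identifying the optimum, equivalent to requiring precision $\ge\alpha$.

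\emph{Stage 3 (optimality under a precision bound, the crux).} Fix $b$, and for $M\subseteq[n]$ let $\pi^\mu(M,b)$ be the probability under $\mub$ that $\hat b_j=b_j$ exactly for $j\in M$ (and $\hat b_j<b_j$ otherwise); write $\pi^\fbks(M)=(1-\gamma)^{\lvert M\rvert}\gamma^{n-\lvert M\rvert}$ for the corresponding profile of $\BKSRedn(\al,\gamma)$. The underlying events over $M\subseteq[n]$ are disjoint with union $\{\hat b\le b\}$, so $\sum_M\pi^\mu(M,b)\le1=\sum_M\pi^\fbks(M)$, while $\pi^\mu([n],b)=\Pr(\hat b=b\mid b)$ is the precision. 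The key estimate to establish is that for a.e.\ $b$, every $i$, and every $M\not\ni i$ with $\pi^\mu(M,b)>0$,
\[\essinf_{\hat b}c_i^\mu(\hat b,b)\ \le\ -\frac{\pi^\mu(M\cup\{i\},b)}{\pi^\mu(M,b)}\enspace.\]
Its proof is the heart of the argument: one uses $b_ic_i^\mu(\hat b,b)=\rhob(\hat b)=\tfrac{d\mubpsci}{d\mub}$ with $\mubpsci(B)=b_i\mub(B)-\int_0^{b_i}\mub[u,b_{-i}](B)\,du$, notes that $\{\hat b_i=b_i\}$ is $\mub[u,b_{-i}]$-null for every $u<b_i$ (by the nice-distribution hypothesis applied to the bid $(u,b_{-i})$, with a measure-theoretic surrogate in general), deduces $c_i^\mu=1$ on $\{\hat b_i=b_i\}$, and then integrates $\rhob$ over the appropriate level set inside $\{\hat b_i<b_i\}$, discarding the nonnegative $\int_0^{b_i}\mub[u,b_{-i}](\cdot)\,du$ term, to bound its average there by $-\pi^\mu(M\cup\{i\},b)/\pi^\mu(M,b)$. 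This bound has exactly the form of the MIDR coefficients of Section~\ref{sec:vcg-char}, so the remainder transcribes the proof of Theorem~\ref{thm:vcg-opt}: if some $\mu$ does at least as well as $\BKSRedn(\al,\gamma)$ in the worst case, then $\pi^\mu(M\cup\{i\},b)/\pi^\mu(M,b)\le(1-\gamma)/\gamma$ for all relevant $M$, whence (with minor care when some $\pi^\mu(M,b)$ vanishes) a downward induction on $\lvert M\rvert$ from $\pi^\mu([n],b)\ge(1-\gamma)^n$ gives $\pi^\mu(M,b)\ge\pi^\fbks(M)$ for every $M$; since the two sides sum to at most, resp.\ exactly, $1$, this forces $\pi^\mu(\cdot,b)=\pi^\fbks(\cdot)$ (in particular no bid is ever raised), so $\esssup_{i,\hat b}\lvert c_i^\mu\rvert\ge\max\{1,(1-\gamma)/\gamma\}$, i.e.\ $\BKSRedn(\al,\gamma)$ is optimal. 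For the variance, the analogous expansion of $\var_{\hat b\sim\mub}(c_i^\mu)$ over the level sets yields a function of the $\pi^\mu(M,b)$ that, subject to the precision bound, is minimized by $\pi^\fbks$ (Lemma~\ref{lem:babaioff-hfunc2}) --- again $\BKSRedn(\al,\gamma)$.

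\emph{Main obstacle.} The hard part is the measure-theoretic key estimate for arbitrary $\mub$: the level sets need not carry all the relevant mass, one must explicitly ``pay for'' any probability placed on $\{\hat b_i>b_i\}$, and every step --- the essential infimum, the identity $c_i^\mu=1$ on $\{\hat b_i=b_i\}$, and the integration of $\rhob$ --- is valid only up to $\mub$-null sets. This is precisely where the almost-everywhere-in-$b$ qualifier originates, through Lebesgue differentiation of $b_i\mapsto\int_0^{b_i}\mub[u,b_{-i}](\cdot)\,du$ and the measurability hypotheses on $b\mapsto\mub$ of Section~\ref{sec:a-sp-char}. Everything downstream of the key estimate is a faithful transcription of the MIDR optimality proof of Section~\ref{sec:vcg-char-opt}.
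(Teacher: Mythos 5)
Your overall architecture matches the paper's (reduce everything to the density $\rhob(\hat b)/b_i$ via Theorem~\ref{thm:a-sp-char}, introduce the level-set probabilities $\pi^\mu(M,b)$, prove a ratio bound of MIDR type, and finish with the combinatorial lemmas), but there are two concrete problems. The more serious one is the welfare/revenue part of the theorem, which you dismiss circularly. On the threshold instance the welfare and revenue approximations are \emph{not} controlled by the precision: as in Lemmas~\ref{lem:babaioff-wapx} and~\ref{lem:babaioff-rapx}, they are governed by $\min_i\Pr(\hat b_i\geq b_i\wedge\hat b_{-i}=b_{-i}\mid b)$, so a reduction can meet a welfare or revenue constraint of $\alpha$ while having precision far below $\alpha$, by placing mass on $\{\hat b_i>b_i\}$. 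Your Stage~3 induction, which concludes ``no bid is ever raised,'' starts from the precision bound $\pi^\mu([n],b)\geq(1-\gamma)^n$ — exactly what is unavailable under a welfare/revenue constraint — so you cannot invoke it to justify the proviso in Stage~2. Closing this loop requires a separate argument showing that any mass on raised bids itself forces a worst-case coefficient exceeding $\frac{1-\gamma}{\gamma}$: this is the paper's Lemma~\ref{lem:babaioff-relwrgamma}, which applies the ratio bound inside the region $\{\hat b_j>b_j\}$ (via the probabilities $\nu^\mu(M,j,b)$ and a pigeonhole over $j$), together with the numeric comparison that uses $\alpha>2^{-n}$, and only then does Lemma~\ref{lem:a-sp-swrapx} conclude that an optimal $\mu$ has $\Pr(\hat b\not\leq b\mid b)=0$. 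Your proposal contains no substitute for this step.

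Second, your sketch of the key estimate has the mechanics backwards. Writing
\[\int_{B^{(<)}}\frac{\rhob(\hat b)}{b_i}\,d\mub=\mub(B^{(<)})-\frac1{b_i}\int_0^{b_i}\mub[u,b_{-i}](B^{(<)})\,du\enspace,\]
``discarding the nonnegative $\int_0^{b_i}\mub[u,b_{-i}](\cdot)\,du$ term'' yields the useless upper bound $\pi^\mu(M,b)\geq0$, not the negative bound $-\pi^\mu(M\cup\{i\},b)$. The sign comes from the monotonicity condition on $\mub$ (Lemma~\ref{lem:a-sp-char-mono}): because $B^{(<)}\cup B^{(=)}$ is leftward closed in coordinate $i$, its probability under $\mub[u,b_{-i}]$ is weakly decreasing in $u$, so on average over $u\in[0,b_i]$ it is at least $\mub(B^{(<)})+\mub(B^{(=)})$; combined with the fact that $B^{(=)}$ carries (a.e.\ in $b$, by the Tonelli argument of Lemma~\ref{lem:babaioff-ae}) negligible average mass under $\mub[u,b_{-i}]$, this forces the lost mass $\pi^\mu(M\cup\{i\},b)$ into $B^{(<)}$ and gives the bound of Lemma~\ref{lem:babaioff-monoratio}. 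Your proposal never invokes monotonicity of the resampling measure, yet it is the only hypothesis that produces the inequality; as written, the crux step would not go through. The remainder of Stage~3 (the induction forcing $\pi^\mu=\pi^\fbks$ under a precision bound, and the variance analogue via Lemma~\ref{lem:babaioff-hfunc2}) is a faithful transcription of the paper's argument and is fine once these two gaps are repaired.
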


The theorem is proven in two steps. First, we prove in Theorem~\ref{thm:a-babaioff-pre} that the BKS transform optimizes precision. Second, we show in Lemma~\ref{lem:a-sp-swrapx} that a distribution which optimizes precision also optimizes the welfare and revenue approximations.

\begin{theorem}[Precision Optimality of the BKS Transformation]
\label{thm:a-babaioff-pre}
(Generalization of Theorem~\ref{thm:babaioff-opt})
The BKS reduction $\BKSRedn(\al,\gamma)$ optimizes the variance of normalized payments and the worst-case normalized payment subject to a lower bound of $\alphap=(1-\gamma)^n\in(\frac1e,1)$ on the precision almost everywhere over $b$. Under the nice distribution assumption, it is optimal for every $b$.
\end{theorem}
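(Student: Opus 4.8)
\noindent\emph{Proof plan.} The plan is to reduce this to the same combinatorial optimization over drop-distributions that was settled for MIDR reductions in Theorem~\ref{thm:vcg-opt}. Fix any truthful reduction $(\mu,\{\lambda_i\})$ with precision $\pr(\hat b=b\mid b)\geq\alphap=(1-\gamma)^n$; for clarity I will work under the nice distribution assumption and recover the ``$a.e.$ over $b$'' statement at the end. By Theorem~\ref{thm:a-sp-char} the bid-normalized payment of agent $i$ equals $\mub$-almost everywhere the coefficient $c_i^\mu(\hat b,b)=\rhobi(\hat b)/b_i$, where $\rhobi=d\mubpsci/d\mub$. First I would record two elementary facts: since $\mubpsci(\Re^n)=0$ we have $\ex_{\hat b\sim\mub}[c_i^\mu]=0$, so $\var_{\hat b\sim\mub}(c_i^\mu)=\ex_{\hat b\sim\mub}[(c_i^\mu)^2]$; and since $\mubpsci(B)\leq b_i\mub(B)$ we have $c_i^\mu\leq1$ almost everywhere, so the worst-case normalized payment is $\max\bigl(1,-\essinf_{i,\hat b}c_i^\mu(\hat b,b)\bigr)$ (the value $1$ is attained on $\{\hat b_i=b_i\}$, whose $\mub$-mass is at least $\alphap>0$). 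Hence it suffices to lower bound $\essinf c_i^\mu$ and $\ex[(c_i^\mu)^2]$ over all such reductions; a direct computation from the two-point law $c_i^\fbks\in\{1,-(1-\gamma)/\gamma\}$ shows $\BKSRedn(\al,\gamma)$ achieves $-\essinf c_i^\fbks=(1-\gamma)/\gamma$ and $\var(c_i^\fbks)=(1-\gamma)/\gamma$.

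The heart of the argument --- and the step I expect to be the main obstacle --- is an analogue of the MIDR coefficient identity. For $i\notin M$ write $R_M=\{\hat b:\hat b_j=b_j\ (j\in M),\ \hat b_j<b_j\ (j\notin M)\}$ and $\pi^\mu(M,b)=\mub(R_M)$. I would prove the bound
\[
\essinf_{\hat b}c_i^\mu(\hat b,b)\ \leq\ -\frac{\pi^\mu(M\cup\{i\},b)}{\pi^\mu(M,b)}\enspace.
\]
Set $\tilde B=R_M\sqcup R_{M\cup\{i\}}=\{\hat b:\hat b_j=b_j\ (j\in M),\ \hat b_j<b_j\ (j\notin M\cup\{i\}),\ \hat b_i\leq b_i\}$, so $\mub(\tilde B)=\pi^\mu(M,b)+\pi^\mu(M\cup\{i\},b)$. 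Since $\tilde B$ is downward closed in the coordinate $\hat b_i$, Lemma~\ref{lem:a-sp-char-mono} gives $\mu_{u,b_{-i}}(\tilde B)\geq\mub(\tilde B)$ for every $u\leq b_i$. The nice distribution assumption forces $\mu_{u,b_{-i}}(\{\hat b_i=b_i\})=0$ for $u\neq b_i$, so $\mubpsci$ agrees with $b_i\mub$ on $\{\hat b_i=b_i\}$; hence $c_i^\mu=1$ almost everywhere there, in particular on $R_{M\cup\{i\}}$. Integrating $c_i^\mu$ over $\tilde B$ two ways --- once via the split $\tilde B=R_M\sqcup R_{M\cup\{i\}}$, once via $\int_{\tilde B}c_i^\mu\,d\mub=\tfrac1{b_i}\mubpsci(\tilde B)=\mub(\tilde B)-\tfrac1{b_i}\int_0^{b_i}\mu_{u,b_{-i}}(\tilde B)\,du\leq0$ --- yields $\int_{R_M}c_i^\mu\,d\mub\leq-\pi^\mu(M\cup\{i\},b)$, and averaging over $R_M$ gives the displayed bound. (Absolute continuity of $\mubpsci$ w.r.t. $\mub$, condition~2 of Theorem~\ref{thm:a-sp-char}, forces $\pi^\mu(M,b)>0$ whenever $\pi^\mu(M\cup\{i\},b)>0$, so descending from $\pi^\mu([n],b)\geq(1-\gamma)^n>0$ all these ratios are well defined.)

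Granting the bound, the rest follows the proof of Theorem~\ref{thm:vcg-opt} essentially verbatim. For worst-case payments: if a reduction matches the BKS worst case, i.e. $-\essinf_{\hat b}c_i^\mu\leq(1-\gamma)/\gamma$ for every $i$, then $\pi^\mu(M\cup\{i\},b)/\pi^\mu(M,b)\leq(1-\gamma)/\gamma=\pibar(M\cup\{i\})/\pibar(M)$ for all $i\notin M$; telescoping down a chain from $[n]$ and using $\pi^\mu([n],b)\geq\pibar([n])$ gives $\pi^\mu(M,b)\geq\pibar(M)$ for all $M$; and since the $R_M$ are pairwise disjoint, $\sum_M\pi^\mu(M,b)\leq1=\sum_M\pibar(M)$, forcing equality throughout, so $\pi^\mu(\cdot,b)=\pibar(\cdot)$ --- a strict improvement would make one link of the chain strict and produce $\sum_M\pi^\mu(M,b)>1$, a contradiction. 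For payment variance: reusing the intermediate estimate $\int_{R_M}c_i^\mu\,d\mub\leq-\pi^\mu(M\cup\{i\},b)$, partitioning $\Re^n$ into the $R_M$'s and a leftover set, using $c_i^\mu=1$ on $R_M$ for $i\in M$, and applying Cauchy--Schwarz on $R_M$ for $i\notin M$ gives
\[
\var_{\hat b\sim\mub}(c_i^\mu)\ \geq\ \sum_{M\not\ni i}\pi^\mu(M\cup\{i\},b)\Bigl(1+\frac{\pi^\mu(M\cup\{i\},b)}{\pi^\mu(M,b)}\Bigr)\enspace,
\]
and, exactly as in Theorem~\ref{thm:vcg-opt}, the maximum of this over $i$ is minimized over feasible $\pi^\mu(\cdot,b)$ when the events $\{\hat b_i=b_i\}$ are mutually independent (Lemma~\ref{lem:babaioff-hfunc2}) and, subject to $\prod_i\pr(\hat b_i=b_i\mid b)\geq(1-\gamma)^n$, when the marginals are equal (by $\prod_i(1-\gamma_i)\leq(1-\min_i\gamma_i)^n$) --- i.e. precisely by $\pibar$, with value $(1-\gamma)/\gamma$.

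The hard part will be discharging the nice distribution assumption to obtain the stated ``$a.e.$ over $b$'' conclusion: without it, $c_i^\mu$ need not equal $1$ on $\{\hat b_i=b_i\}$, since the resampling measures $\mu_{u,b_{-i}}$ may themselves carry atoms at $b_i$, and one must argue that for a measurable family $u\mapsto\mu_{u,b_{-i}}$ this can happen only for a $\mub$-null set of values $b_i$, so that $c_i^\mu=1$ on $\{\hat b_i=b_i\}$ --- and hence the whole chain above --- holds for almost every $b$; these refinements parallel those already carried out in the proof of Theorem~\ref{thm:a-sp-char}. The hypothesis $\alphap=(1-\gamma)^n\in(\tfrac1e,1)$ guarantees $(1-\gamma)/\gamma>1$, so that the worst case is governed by the negative coefficient rather than the trivial value $1$, and is the regime in which the welfare- and revenue-approximation equivalences of Lemma~\ref{lem:sp-swrapx} are applied in Theorem~\ref{thm:babaioff-opt}; it is invoked only where needed.
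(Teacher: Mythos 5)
Your proposal follows essentially the same route as the paper's: your key inequality $\essinf_{\hat b}c_i^\mu(\hat b,b)\leq-\pi^\mu(M\cup\{i\},b)/\pi^\mu(M,b)$ is the paper's Lemma~\ref{lem:babaioff-monoratio}/Corollary~\ref{cor:babaioff-piratio} (your trick of integrating over $\tilde B=R_M\sqcup R_{M\cup\{i\}}$ and using $c_i^\mu=1$ on the atom set is just a clean way of absorbing the paper's $z^\mu(M,i,b)$ correction under the nice-distribution assumption), the Cauchy--Schwarz second-moment bound and the combinatorial optimization over $\pi^\mu(\cdot,b)$ are Corollary~\ref{cor:babaioff-piratio} and Lemmas~\ref{lem:babaioff-hfunc}/\ref{lem:babaioff-hfunc2}, and the telescoping-plus-sum-to-one contradiction for the worst case is exactly the paper's argument. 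The step you defer for general measures (showing the atom at $b_i$ under resampling from lower bids contributes nothing for $\mub$-a.e.\ $b$) is the paper's Lemma~\ref{lem:babaioff-ae}, proved by a Tonelli argument; your description of what must be shown is right, though it lives in its own lemma rather than in the proof of Theorem~\ref{thm:a-sp-char}.

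There is, however, one concrete gap, in the variance half. Your reduction to ``independent events with equal marginals'' implicitly assumes all resampling mass lies on $\{\hat b\leq b\}$, but an adversarial measure may resample some coordinate upward, so the events $R_M$ only carry total mass $\beta=\macroCPr{\hat b\leq b}{b}$, which can be strictly less than $1$ (the worst-case telescoping is immune to this since it only needs $\sum_M\pi^\mu(M,b)\leq 1$, but the variance bound is not). Applying Lemma~\ref{lem:babaioff-hfunc2} with $\eta(S)=\pi^\mu(S,b)$, $\alpha=(1-\gamma)^n$ and this $\beta$ yields only $\beta\frac{1-\phi}{\phi}$ with $\phi=1-(\alpha/\beta)^{1/n}$, and one must still check that this quantity is decreasing in $\beta$ on $[\,\alpha e,1]$ so that $\beta=1$ is the worst case and the bound $\frac{1-\gamma}{\gamma}$ survives; this monotonicity check is precisely where the hypothesis $\alphap>\frac1e$ enters the paper's Lemma~\ref{lem:babaioff-vargamma}. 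Your closing paragraph attributes the $\frac1e$ hypothesis to ensuring $\frac{1-\gamma}{\gamma}>1$ and to Lemma~\ref{lem:sp-swrapx}, which is not where it is needed here: $\frac{1-\gamma}{\gamma}>1$ already follows from $(1-\gamma)^n>2^{-n}$ (for $n\geq2$), and Lemma~\ref{lem:sp-swrapx} concerns the welfare/revenue versions, not the precision version being proved. Patching your sketch with the $\beta$-dependent form of Lemma~\ref{lem:babaioff-hfunc2} and this monotonicity check closes the gap.
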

Theorem~\ref{thm:a-babaioff-pre} is given in Sections~\ref{sec:a-babaioff-def}-\ref{sec:a-babaioff-tech}. Section~\ref{sec:a-babaioff-def} defines probabilities that are used in the proof. Section~\ref{sec:a-babaioff-atb} proves Theorem~\ref{thm:a-babaioff-pre} with forward references to two important technical lemmas given in Section~\ref{sec:a-babaioff-tech}.

\begin{lemma}\label{lem:a-sp-swrapx} (Generalization of Lemma~\ref{lem:sp-swrapx}) For $\alpha>\frac1{e}$ and $n\geq2$, a probability measure that optimizes the variance of normalized payments or the maximum normalized payment subject to a precision constraint of $\Pr(\hat b= b|b)\geq\alpha$ also optimizes the maximum normalized payment almost everywhere subject to a welfare or revenue approximation of $\alpha$.
\end{lemma}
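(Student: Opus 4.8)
\begin{proofsketch}
The plan is to reduce the welfare‑ and revenue‑constrained problems to the precision‑constrained problem already settled by Theorem~\ref{thm:a-babaioff-pre}. Write $\alpha=(1-\gamma)^n$ and let $V^\star$ denote the worst‑case bid‑normalized payment of $\BKSRedn(\al,\gamma)$; since $\alpha>\tfrac1e$ and $n\ge2$ force $\gamma<1-e^{-1/2}<\tfrac12$, we have $V^\star=\tfrac{1-\gamma}{\gamma}>1$. By Theorem~\ref{thm:a-babaioff-pre}, any reduction that minimises the variance (or the worst case) of bid‑normalized payments subject to $\Pr(\hat b=b\mid b)\ge\alpha$ is, $a.e.$ over $b$, the BKS reduction; in particular it never raises a bid ($\Pr(\hat b\le b\mid b)=1$) and it attains $V^\star$. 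So it suffices to prove that $V^\star$ is also the optimum of ``minimise the worst‑case bid‑normalized payment subject to $\alphaw\ge\alpha$'', and likewise with $\alphar$ in place of $\alphaw$.

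The first ingredient is that $\BKSRedn(\al,\gamma)$ is feasible for both constrained problems, and that for a non‑raising reduction neither approximation can exceed the precision. Since $\al$ is monotone, $\al_i(\hat b)\ge\mathbf 1[\hat b=b]\,\al_i(b)\ge0$, so $\ex_{\hat b}\!\big[\sum_ib_i\al_i(\hat b)\big]\ge\Pr(\hat b=b\mid b)\sum_ib_i\al_i(b)$, giving $\alphaw\ge\alphap$ for every reduction and hence $\alphaw(\BKSRedn)\ge\alpha$; the single‑parameter analogue of Lemma~\ref{lem:vcg-rapx} gives $\alphar(\BKSRedn)=\alpha$. Conversely, for any reduction that never raises a bid, evaluate it on the threshold allocation $\al_i(b)=\mathbf 1[b\ge\bar b]$ at the bid $\bar b$: since $\hat b\le\bar b$ $a.s.$, we have $\al_i(\hat b)=\mathbf 1[\hat b=\bar b]$, so both the (positive) welfare and the (positive) revenue of $\asc$ at $\bar b$ — the latter an Archer--Tardos payment whose $\int_0^{\bar b_i}$ term vanishes because $i$'s bid is never raised above $\bar b_i$ — equal $\Pr(\hat b=\bar b\mid\bar b)$ times those of $\al$; taking $\bar b$ within $\epsilon$ of the precision‑worst bid shows $\alphaw\le\alphap$ and $\alphar\le\alphap$ for every non‑raising reduction.

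The heart of the argument, and the step I expect to be the main obstacle, is to show that raising bids is never helpful: from a truthful reduction $(\mu,\{\lambda_i\})$ with $\alphaw\ge\alpha$ (resp.\ $\alphar\ge\alpha$) I would construct a truthful reduction that never raises a bid, has worst‑case bid‑normalized payment no larger, and has welfare (resp.\ revenue) approximation no smaller. The guiding intuition is that $\ex[\psci]$ is an Archer--Tardos payment of $\asc$ and therefore depends only on the law of $\al_i(\hat b)$ at bids $\le b_i$, so the mass that $\mub$ places on $\{\hat b_i>b_i\}$ carries no information about the implicit payment and — given the mean‑zero identity $\ex_{\hat b\sim\mub}[c_i^\mu(\hat b,b)]=0$ that follows from Theorem~\ref{thm:a-sp-char} with $\al_i\equiv1$ — only forces the coefficients $c_i^\mu$ to be more extreme. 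Concretely one would obtain the modified reduction by capping every coordinate of $\hat b$ at $b_i$ (in the general‑measure case, by an analogous truncation of the part of $\mub$ supported on $\{\hat b_i>b_i\}$), then verify via the characterisation Theorem~\ref{thm:a-sp-char} and Lemma~\ref{lem:a-sp-char-mono} that the result is again a valid truthful single‑call reduction with the claimed monotonicity, payment‑magnitude, and approximation guarantees. Checking that capping preserves monotonicity‑in‑expectation of $\asc$ and does not enlarge any $|c_i^\mu(\hat b,b)|$, and handling the $a.s.$/atomic subtleties already present in Appendix~\ref{sec:a-sp-char}, is the delicate part.

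Granting this step, the proof closes quickly. The optimum of the welfare‑ (resp.\ revenue‑) constrained problem is attained at some non‑raising reduction $\mu'$, for which the second paragraph yields $\alphap(\mu')\ge\alphaw(\mu')\ge\alpha$ (resp.\ $\alphap(\mu')\ge\alphar(\mu')\ge\alpha$), so Theorem~\ref{thm:a-babaioff-pre} forces its worst‑case bid‑normalized payment to be at least $V^\star$; together with the matching bound $V^\star$ achieved by $\BKSRedn(\al,\gamma)$ itself, the optimal value of all three problems equals $V^\star$ and is achieved by the BKS reduction, which is exactly the variance/worst‑case‑optimal reduction for the precision constraint. The hypotheses $\alpha>\tfrac1e$ and $n\ge2$ enter through $V^\star=\tfrac{1-\gamma}{\gamma}$ (so that the threshold‑allocation bound is the binding one) and through Theorem~\ref{thm:a-babaioff-pre}; the $a.e.$ qualifier over $b$ is inherited from that theorem and strengthens to ``every $b$'' under the nice‑distribution assumption.
\end{proofsketch}
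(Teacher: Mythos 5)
Your outer scaffolding (BKS feasibility, the threshold allocation showing that a \emph{non-raising} reduction has welfare/revenue approximation at most its precision, and the final bookkeeping that makes the three constrained problems share the optimal value $\tfrac{1-\gamma}{\gamma}$) is sound and matches the paper's intent. But the step you yourself flag as ``the heart of the argument'' --- converting an arbitrary truthful reduction with $\alphaw\ge\alpha$ (or $\alphar\ge\alpha$) into a non-raising one by capping $\hat b_i$ at $b_i$, with no increase in $\sup_{i,\hat b}|c_i^\mu(\hat b,b)|$ and no loss in the approximation --- is exactly the content of the lemma, and you do not prove it. The difficulty is not only the measure-theoretic bookkeeping you mention: the capped measure $\mu'_b$ is a $b$-dependent pushforward of $\mu_b$, so it is not clear it still satisfies the monotonicity condition of Lemma~\ref{lem:a-sp-char-mono}, and its payments are governed by a \emph{new} Radon--Nikodym derivative $\rho'$ of the new signed measure $\nu'_{b,i}$ with respect to $\mu'_b$; there is no a priori domination of $\sup|\rho'_b/b_i|$ by $\sup|\rho_b/b_i|$. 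Your stated intuition (``mass on $\{\hat b_i>b_i\}$ carries no information and only forces the coefficients to be more extreme'') is essentially a restatement of the claim to be proved, so as written the proposal is circular at its crux. Note also that the feasible set under a welfare constraint is genuinely larger than under a precision constraint --- by Lemma~\ref{lem:babaioff-wapx} the welfare approximation of a raising reduction is $\min_i\Pr(\hat b_i\ge b_i\wedge\hat b_{-i}=b_{-i}\mid b)$, which credits raised-bid events --- so one cannot dismiss raising reductions without an argument.

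The paper closes this gap by a different route that never constructs a modified reduction. It first characterizes the welfare and revenue approximations of an \emph{arbitrary} (possibly raising) reduction in terms of $\min_i\Pr(\hat b_i\ge b_i\wedge\hat b_{-i}=b_{-i}\mid b)$ (Lemmas~\ref{lem:babaioff-wapx} and~\ref{lem:babaioff-rapx}), and then lower-bounds the payments of the raising reduction itself: Lemma~\ref{lem:babaioff-monoratio} applied to the slices $\{\hat b_j>b_j\}$ yields the coefficient ratios $\nu^\mu(M\cup\{i\},j,b)/\nu^\mu(M,j,b)$, and Lemma~\ref{lem:babaioff-hfunc} plus a pigeonhole over $j$ (this is where $\tfrac1n\Pr(\hat b\not\le b\mid b)$ and the hypotheses $\alpha>\tfrac1e$, $n\ge2$ enter) gives $\sup_{i,\hat b}|\rho_b(\hat b)/b_i|\ge\frac{1-\gamma^{(>)}}{\gamma^{(>)}}>\frac{\alpha^{1/n}}{1-\alpha^{1/n}}$ whenever $\Pr(\hat b\not\le b\mid b)>0$ (Lemma~\ref{lem:babaioff-relwrgamma}). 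Hence any optimal measure must have $\Pr(\hat b\not\le b\mid b)=0$, at which point the welfare/revenue approximation collapses to the precision and the constrained problems coincide. If you want to salvage your capping construction, you would need to supply, at minimum, a proof that the pushforward preserves monotonicity-in-expectation and a pointwise comparison between the new and old density coefficients; the paper's direct lower-bound argument is the simpler path.
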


Lemma~\ref{lem:a-sp-swrapx} is proven in Section~\ref{sec:a-sp-swrapx}, building on technical lemmas form Section~\ref{sec:a-babaioff-tech}.

\subsection{Definitions}
\label{sec:a-babaioff-def}

To prove Theorem~\ref{thm:a-babaioff-pre}, we give names to certain probabilities. As in the MIDR setting, we use a set $M\subseteq[n]$ to denote the set of bidders with $\hat b_i=b_i$. Bidders $i\not\in M$ have their bids lowered, that is $\hat b_i<b_i$. We define the probability $\pi^\mu(M,b)$ to be the probability that such an event occurs, that is, $\pi^\mu(M,b)$ is the probability when $b$ is bid that $\hat b_i=b_i$ if $i\in M$, and $\hat b_i<b_i$ if $i\not\in M$:
\[\pi^\mu(M,b)\equiv\macroCPr{(\hat b_i=b_i\mbox{ for }i\in M)\mbox{ and }(\hat b_i<b_i\mbox{ for }i\not\in M)}{b}\enspace.\]
Note that for the BKS transformation, $\pi^\mu(M,b)=(1-\gamma)^{|M|}\gamma^{n-|M|}$ so $\frac{\pi^\mu(M\cup\{i\},b)}{\pi^\mu(M,b)}=\frac{1-\gamma}{\gamma}$.

The second probability quantifies the behavior of $\mub$ near $b$ as follows. Fix a bid $b$ and assume player $i$ actually bids $b_i-\delta$. Does the distribution $\mub[b_i-\delta,b_{-i}]$ cause the reduction to select $\hat b=b$ with positive probability in spite of the fact that $i$ said $b_i-\delta$? In particular, we care about the average behavior for $\delta\in[0,b_i]$, which we represent by $z^\mu(M,i,\bar b)$. Formally, we define
\[\zeta^\mu(M,i,b,z)\equiv\macroCPr{\hat b_i=z\mbox{ and }(\hat b_j=b_j\mbox{ for }j\in M\setminus\{i\})\mbox{ and }(\hat b_j<b_j\mbox{ for }j\not\in M\cup\{i\})}{b}\]
and
\[z^\mu(M,i,b)\equiv\frac1{b_i}\int_0^{b_i}\zeta^\mu(M,i,(u,b_{-i}),b_i)\enspace.\]
Of particular importance, we will show $z^\mu(M,i,b)=0$ almost everywhere in general and everywhere under the nice distribution assumption.

\subsection{Precision Optimality of the BKS Transformation}
\label{sec:a-babaioff-atb}

The optimality proof for the BKS transformation 

The first result follows as a corollary of Lemma~\ref{lem:babaioff-monoratio}:
\begin{corollary}
\label{cor:babaioff-piratio} (of Lemma~\ref{lem:babaioff-monoratio})
If a resampling distribution $\mu$ satisfies the monotonicity condition, then for all $M$, $i\not\in M$:
\[\sup_{\hat b} \left|\frac{\rhob(\hat b)}{b_i}\right|\geq\frac{\pi^\mu(M\cup\{i\},b)-z^\mu(M,i,b)}{\pi^\mu(M,b)}\]
and
\begin{align*}
&\int_{\hat b_i\leq b_i\wedge(j\in M\Rightarrow\hat b_j=b_j)\wedge(j\not\in M\cup\{i\}\Rightarrow\hat b_j<b_j)}\left(\frac{\rhob(\hat b)}{b_i}\right)^2d\mub\\
&\quad\geq\left(\pi^\mu(M,b)+\pi^\mu(M\cup\{i\},b)\right)\frac{\pi^\mu(M\cup\{i\},b)}{\pi^\mu(M,b)}\left(1-\frac{z^\mu(M,i,b)}{\pi^\mu(M\cup\{i\},b)}\right)^2\enspace.
\end{align*}
\end{corollary}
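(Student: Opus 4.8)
The plan is to instantiate Lemma~\ref{lem:babaioff-monoratio} on the leftward‑closed region
\[\mathcal R_M\equiv\{\hat b:\hat b_i\le b_i,\ \hat b_j=b_j\ \forall j\in M,\ \hat b_j<b_j\ \forall j\notin M\cup\{i\}\},\]
which is exactly the set over which the displayed integral is taken. Write $E_M=\{\hat b_i<b_i\}\cap\mathcal R_M$ and $E_{M\cup\{i\}}=\{\hat b_i=b_i\}\cap\mathcal R_M$, so that $\mathcal R_M$ is their disjoint union, $\mub(E_M)=\pi^\mu(M,b)$ and $\mub(E_{M\cup\{i\}})=\pi^\mu(M\cup\{i\},b)$. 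Recall from Theorem~\ref{thm:a-sp-char} that $\rhob$ is the Radon--Nikodym density of the signed measure $\mubpsci$ with respect to $\mub$, so that $\int_B\frac{\rhob(\hat b)}{b_i}\,d\mub=\frac{\mubpsci(B)}{b_i}$ for every Borel $B$, and $\frac{\rhob}{b_i}=c_i^\mu(\cdot,b)$.

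First I would evaluate $\mubpsci$ on the two pieces. Expanding $\mubpsci(E_{M\cup\{i\}})=b_i\mub(E_{M\cup\{i\}})-\int_0^{b_i}\mub[u,b_{-i}](E_{M\cup\{i\}})\,du$ and observing that $\mub[u,b_{-i}](E_{M\cup\{i\}})$ is precisely $\zeta^\mu(M,i,(u,b_{-i}),b_i)$ — the probability that $i$ is resampled up to $b_i$ while the remaining coordinates hit the required pattern — gives, by the definition of $z^\mu$,
\[\tfrac1{b_i}\mubpsci(E_{M\cup\{i\}})=\pi^\mu(M\cup\{i\},b)-z^\mu(M,i,b).\]
For $E_M$ the same expansion, now combined with the monotonicity‑based lower bound supplied by Lemma~\ref{lem:babaioff-monoratio} (this is where the requirement that $\asc$ be monotone in expectation is genuinely used, via the leftward‑closedness of $\mathcal R_M$), yields $\tfrac1{b_i}\int_0^{b_i}\mub[u,b_{-i}](E_M)\,du\ge\pi^\mu(M,b)+\pi^\mu(M\cup\{i\},b)-z^\mu(M,i,b)$, hence
\[\tfrac1{b_i}\mubpsci(E_M)\le-\big(\pi^\mu(M\cup\{i\},b)-z^\mu(M,i,b)\big).\]

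With these two facts the corollary is bookkeeping. Recall that $z^\mu(M,i,b)=0$ almost everywhere (and everywhere under the nice‑distribution assumption), so $\alpha\equiv1-z^\mu(M,i,b)/\pi^\mu(M\cup\{i\},b)\ge0$. For the first inequality I would use that the essential supremum of a Radon--Nikodym density dominates any average value:
\[\sup_{\hat b}\Big|\tfrac{\rhob(\hat b)}{b_i}\Big|\ge\frac{|\mubpsci(E_M)|}{b_i\,\mub(E_M)}\ge\frac{\pi^\mu(M\cup\{i\},b)-z^\mu(M,i,b)}{\pi^\mu(M,b)},\]
the claim being vacuous when the right‑hand side is negative. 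For the second I would apply Jensen's inequality to $x\mapsto x^2$ separately on $E_M$ and on $E_{M\cup\{i\}}$, giving $\int_E(\rhob/b_i)^2\,d\mub\ge(\tfrac1{b_i}\mubpsci(E))^2/\mub(E)$ on each piece. Substituting the two evaluations above (and $|\tfrac1{b_i}\mubpsci(E_M)|\ge\alpha\,\pi^\mu(M\cup\{i\},b)$), the two contributions are $\alpha^2\pi^\mu(M\cup\{i\},b)^2/\pi^\mu(M,b)$ and $\alpha^2\pi^\mu(M\cup\{i\},b)$, whose sum rearranges — using $\pi^\mu(M\cup\{i\},b)^2/\pi^\mu(M,b)+\pi^\mu(M\cup\{i\},b)=\pi^\mu(M\cup\{i\},b)(\pi^\mu(M,b)+\pi^\mu(M\cup\{i\},b))/\pi^\mu(M,b)$ — to the stated bound.

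The main obstacle is the inequality $\tfrac1{b_i}\int_0^{b_i}\mub[u,b_{-i}](E_M)\,du\ge\pi^\mu(M,b)+\pi^\mu(M\cup\{i\},b)-z^\mu(M,i,b)$: one must show that monotonicity of the resampled allocation forces enough of the mass of $\mub[u,b_{-i}]$, for $u$ below $b_i$, to remain on the ``$M$‑pattern'' region $E_M$ as well as on the higher‑bid region that becomes $E_{M\cup\{i\}}$ when $i$ bids $b_i$. This is exactly the content of Lemma~\ref{lem:babaioff-monoratio}; the present argument is its specialization to the region $\mathcal R_M$ together with the identification of the correction term as $z^\mu$. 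A minor technical wrinkle, also inherited from that lemma, is that the ``$=$'' constraints defining $E_M$ and $E_{M\cup\{i\}}$ make these slabs rather than rectangles, so the monotonicity facts of Lemma~\ref{lem:a-sp-char-mono} must be applied through the difference‑of‑monotone‑functions device of Lemma~\ref{lem:sp-char-restrictedint} (or a limiting argument).
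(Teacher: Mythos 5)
Your proposal is correct and takes essentially the paper's route: the paper's entire proof consists of applying Lemma~\ref{lem:babaioff-monoratio} with $B_{-i}$ equal to the $M$-pattern set $\{\hat b_{-i}: \hat b_j=b_j \text{ for } j\in M,\ \hat b_j<b_j \text{ for } j\notin M\cup\{i\}\}$, which is exactly your instantiation on $\mathcal{R}_M$; your intermediate steps (evaluating the signed measure on the two pieces, the monotonicity bound, and Jensen on each piece) merely unfold the lemma's own proof with $\pi^\mu(M\cup\{i\},b)$, $\pi^\mu(M,b)$, and $z^\mu(M,i,b)$ substituted for the lemma's probabilities and averaged integral. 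The closing worry about slabs versus rectangles is unnecessary --- Lemma~\ref{lem:a-sp-char-mono} only needs leftward/rightward closedness in coordinate $i$, which $[0,b_i]\times B_{-i}$ satisfies irrespective of the equality constraints on the other coordinates --- but this does not affect the correctness of your argument.
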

\begin{proof} Apply Lemma~\ref{lem:babaioff-monoratio} where $B_{-i}$ is the set of $\hat b_{-i}$ where $\hat b_j=b_j$ if $j\in M$ and $\hat b_j<b_j$ for $j\not\in M$.
\end{proof}

If we ignore the $z^\mu(M,i,b)$ terms, this looks precisely like the normalized payments from the MIDR setting. Fortunately, $z^\mu(M,i,b)$ is almost always zero:
\begin{corollary}
\label{cor:babaioff-zae} (of Lemma~\ref{lem:babaioff-ae})
For any resampling distribution $\mu$ and a fixed $M$ and $i$,
\[z^\mu(M,i,b)=0\enspace a.e.\]
(i.e. for all but a set of $b$ with zero measure).

Under the nice distribution assumption, $z^\mu(M,i,b)=0$ for all $b$.
\end{corollary}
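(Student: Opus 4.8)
The plan is to establish the two assertions separately: the nice-distribution statement by a short direct argument, and the general almost-everywhere statement by reduction to Lemma~\ref{lem:babaioff-ae}.

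Start with the nice distribution case. By definition $\zeta^\mu(M,i,(u,b_{-i}),b_i)\le\macroCPr{\hat b_i=b_i}{(u,b_{-i})}$, and the nice distribution assumption --- applied with the bid vector $(u,b_{-i})$ in place of $b$ --- says that $\macroCPr{\hat b_i=v}{(u,b_{-i})}=0$ for every $v\neq u$. In particular this probability is $0$ whenever $u\neq b_i$. Hence the integrand in $z^\mu(M,i,b)=\frac1{b_i}\int_0^{b_i}\zeta^\mu(M,i,(u,b_{-i}),b_i)\,du$ vanishes for every $u\in[0,b_i]$ except possibly the single point $u=b_i$, so $z^\mu(M,i,b)=0$ for every $b$.

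For the general statement, fix $M$, $i$, and the coordinates $b_{-i}$, and consider the bivariate function $g(u,v)\equiv\zeta^\mu(M,i,(u,b_{-i}),v)$, so that $z^\mu(M,i,(b_i,b_{-i}))=\frac1{b_i}\int_0^{b_i}g(u,b_i)\,du$. The structural fact that makes this work is that for each fixed $u$, $g(u,v)\le\macroCPr{\hat b_i=v}{(u,b_{-i})}$ is bounded by the mass a probability distribution on $\mathbb{R}$ places on the point $v$; since a probability distribution has at most countably many atoms, $\{v:g(u,v)>0\}$ is countable and hence Lebesgue-null. This is exactly the hypothesis of Lemma~\ref{lem:babaioff-ae}: the set $E=\{(u,v):g(u,v)>0\}$ has every horizontal slice null, so by Tonelli's theorem $E$ is a Lebesgue-null subset of $\mathbb{R}^2$, and therefore for almost every $v$ the vertical slice $\{u:g(u,v)>0\}$ is null as well. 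For such $v=b_i$ the bounded integrand $g(\cdot,b_i)$ vanishes for almost every $u\in[0,b_i]$, so $z^\mu(M,i,(b_i,b_{-i}))=0$ for almost every $b_i$; since $b_{-i}$ was arbitrary, Fubini's theorem gives $z^\mu(M,i,b)=0$ for almost every $b$.

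The step I expect to require the most care --- and presumably the reason this is phrased as a corollary of Lemma~\ref{lem:babaioff-ae} rather than a bare Fubini computation --- is the joint measurability of $g$ in $(u,v)$ needed for ``$E$ is Lebesgue-null'' to be meaningful. This rests on the standing measurability assumptions of Appendix~\ref{sec:a-sp-char} (that $b\mapsto\mu_b$ is Borel measurable) together with Borel measurability of the atom-mass function $v\mapsto\nu(\{v\})$ of a fixed measure $\nu$; I would let Lemma~\ref{lem:babaioff-ae} absorb this technical burden, so that the only genuinely new input here is the observation that the relevant fibers of $g$ are countable, plus the elementary reduction of the nice-distribution case.
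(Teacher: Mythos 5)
Your proof is correct, but it takes a genuinely different route from the paper's. The paper's entire proof of Corollary~\ref{cor:babaioff-zae} is the domination step you also perform --- $\zeta^\mu(M,i,(u,b_{-i}),b_i)\le\Pr_\mu(\hat b_i=b_i\mid u,b_{-i})$ --- followed by a direct citation of Lemma~\ref{lem:babaioff-ae}, whose own proof establishes $\int_0^{b_i}\Pr_\mu(\hat b_i=b_i\mid u,b_{-i})\,du=0$ a.e.\ by a double-integral manipulation: swap the roles of $b_i$ and $u$ in the double integral, rewrite the atom mass as an integral of $1_{\{u\}}(\hat b_i)$ against the marginal measure, use $\int_{u\in\Re}1_{\{u\}}(\hat b_i)\,du=0$, and finish with Tonelli and Fact~\ref{fact:ana-nnpfzeroae}. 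You instead re-derive the content of that lemma from scratch via the observation that a probability measure on $\Re$ has at most countably many atoms, so every horizontal slice of the bad set is Lebesgue-null, and Tonelli slicing then yields null vertical slices for almost every $b_i$; this is an equally valid and arguably more transparent argument, and it makes visible that the monotonicity hypothesis stated in Lemma~\ref{lem:babaioff-ae} is never actually needed (consistent with the corollary's ``any resampling distribution'' phrasing). Both arguments require joint measurability of quantities like $(u,v)\mapsto\mu_{(u,b_{-i})}(\{\hat b:\hat b_i=v\})$ in order to invoke Tonelli; the paper glosses over this exactly as you do, so your explicit caveat keeps you at the paper's level of rigor rather than opening a gap. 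You also supply something the paper's proof omits: an explicit treatment of the nice-distribution clause, and your one-line argument for it --- at bid $(u,b_{-i})$ the only possible atom of $\hat b_i$ is $u$, so the integrand vanishes off the single point $u=b_i$ and the integral is zero for every $b$ --- is exactly right.
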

\begin{proof}
Note that $\zeta^\mu(M,i,(u,b_{-i}),b_i)\leq\Pr_\mu(\hat b_i=b_i|u,b_{-i})$, so by Lemma~\ref{lem:babaioff-ae}
\[z^\mu(M,i,b)=\frac1{b_i}\int_0^{b_i}\zeta^\mu(M,i,(u,b_{-i}),b_i)\leq\int_0^{b_i}\Pr_\mu(\hat b_i=b_i|u,b_{-i})=0\enspace a.e.\]
\end{proof}

Thus, Corollaries~\ref{cor:babaioff-piratio} and~\ref{cor:babaioff-zae} together imply the following bound:
\begin{lemma}
\label{lem:babaioff-relgamma}
If a resampling disrtibution $\mu$ with precision $\alpha\geq(1-\gamma)^n$ satisfies the monotonicity condition, then
\[\sup_{i,\hat b} \left|\frac{\rhob(\hat b)}{b_i}\right|\geq\frac{1-\gamma}{\gamma}\enspace a.e.\]
that is, for all $b$ but a set with measure zero. This holds everywhere if $z^\mu(M,i,b)=0$ everywhere.
\end{lemma}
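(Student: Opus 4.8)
The analytic content of this lemma is already packaged in Corollaries~\ref{cor:babaioff-piratio} and~\ref{cor:babaioff-zae}; what remains is to reprise, pointwise in $b$, the combinatorial counting argument from the MIDR setting (Theorem~\ref{thm:vcg-opt}). The plan is therefore to work at a fixed $b$, turn the worst-case-payment hypothesis into a family of ratio inequalities on the quantities $\pi^\mu(M,b)$, and then argue by a downward induction on $|M|$ that these $\pi^\mu(M,b)$ would have to dominate the BKS weights $\pibar(M)=\gamma^{n-|M|}(1-\gamma)^{|M|}$ strictly off $M=[n]$, which is impossible because the $\pi^\mu(M,b)$ sum to at most one.

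Concretely, I would first fix $b$ outside the measure-zero exceptional set obtained by taking the union, over the finitely many pairs $(M,i)$ with $i\notin M$, of the null sets in Corollary~\ref{cor:babaioff-zae}; on its complement $z^\mu(M,i,b)=0$ for all such $(M,i)$ (and under the nice-distribution assumption this is every $b$, which is what yields the final sentence). Assume for contradiction that $\sup_{i,\hat b}|\rhob(\hat b)/b_i|<\tfrac{1-\gamma}{\gamma}$ at this $b$. Plugging $z^\mu=0$ into Corollary~\ref{cor:babaioff-piratio} then gives, for every $M$ and every $i\notin M$,
\[\frac{\pi^\mu(M\cup\{i\},b)}{\pi^\mu(M,b)}\le\sup_{\hat b}\Bigl|\frac{\rhob(\hat b)}{b_i}\Bigr|\le\sup_{i,\hat b}\Bigl|\frac{\rhob(\hat b)}{b_i}\Bigr|<\frac{1-\gamma}{\gamma}\]
(reading a ratio with zero denominator as $+\infty$, which would immediately contradict the hypothesis). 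The base point of the induction is $\pi^\mu([n],b)=\Pr(\hat b=b\mid b)\ge\alpha\ge(1-\gamma)^n=\pibar([n])$; the inductive step uses the displayed bound in the form $\pi^\mu(M\setminus\{i\},b)>\tfrac{\gamma}{1-\gamma}\pi^\mu(M,b)$, which, combined with $\pi^\mu(M,b)\ge\pibar(M)$, yields $\pi^\mu(M\setminus\{i\},b)>\pibar(M\setminus\{i\})$. Since every $M$ with $|M|<n$ is reached from $[n]$ by deleting elements one at a time, I get $\pi^\mu(M,b)>\pibar(M)$ for all $M\neq[n]$ and $\pi^\mu([n],b)\ge\pibar([n])$. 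Because the events underlying the $\pi^\mu(M,b)$ are pairwise disjoint, $\sum_M\pi^\mu(M,b)\le1$, yet the previous line forces $\sum_M\pi^\mu(M,b)>\sum_M\pibar(M)=1$ (there is at least one $M\neq[n]$, e.g.\ $M=\emptyset$) --- a contradiction. Hence $\sup_{i,\hat b}|\rhob(\hat b)/b_i|\ge\tfrac{1-\gamma}{\gamma}$ off a null set, as claimed, and everywhere when $z^\mu\equiv0$.

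The only subtle point --- and the step I expect to be the real obstacle --- is making sure the strict inequality survives the whole descent, so that the final summation overshoots $1$ rather than merely meeting it; this is why I track $\pi^\mu(M,b)>\pibar(M)$ for $|M|<n$ rather than settling for $\ge$. Everything else is bookkeeping, together with the observation (the one structural difference from Theorem~\ref{thm:vcg-opt}) that $\pi^\mu(\cdot,b)$ need only be a sub-probability measure --- $\sum_M\pi^\mu(M,b)\le1$ --- for the counting contradiction to close, since in this setting $\mu$ may in principle place mass on $\hat b_i>b_i$.
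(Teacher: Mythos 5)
Your proposal is correct and follows essentially the same route as the paper: you combine Corollaries~\ref{cor:babaioff-piratio} and~\ref{cor:babaioff-zae} (handling the null sets by taking the finite union over pairs $(M,i)$, just as the paper implicitly does) and then run the ratio-chaining contradiction on $\pi^\mu(\cdot,b)$. The only cosmetic difference is that you inline this counting argument, whereas the paper packages it as Lemma~\ref{lem:babaioff-hfunc} applied with $\eta(S)=\pi^\mu(S,b)$, $\alpha=(1-\gamma)^n$, $\beta=1$ --- the underlying contradiction (products of ratios forcing $\sum_M\pi^\mu(M,b)>1$) is identical.
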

\begin{proof}
We first prove the bound on the worst-case normalized payment. By assumption on the precision of $\mu$, we have $\pi^\mu([n],b)\geq(1-\gamma)^n$ for some $\gamma$ and all $b$. By Corollary~\ref{cor:babaioff-piratio}, we know that
\[\sup_{\hat b} \left|\frac{\rhob(\hat b)}{b_i}\right|\geq\frac{\pi^\mu(M\cup\{i\},b)-z^\mu(M,i,b)}{\pi^\mu(M,b)}\enspace.\]
Applying Lemma~\ref{lem:babaioff-hfunc} with $\eta(S)=\pi^\mu(S,b)$, $\alpha=(1-\gamma)^n$, and $\beta=1$ we get that
\[\max_{M,i\not\in M}\frac{\pi^\mu(M\cup\{i\},b)}{\pi^\mu(M,b)}\geq\frac{1-\phi}{\phi}\]
where
\[\phi=1-\left(\frac{(1-\gamma)^n}{1}\right)^{\frac1n}=\gamma\enspace.\]
Thus,
\[\max_{M,i\not\in M}\frac{\pi^\mu(M\cup\{i\},b)}{\pi^\mu(M,b)}\geq\frac{1-\gamma}{\gamma}\enspace.\]

Aggregating Corollary~\ref{cor:babaioff-piratio} over all $M$ and $i\not\in M$, we have

\begin{align*}
\sup_{i,\hat b} \left|\frac{\rhob(\hat b)}{b_i}\right|&\geq\max_{M,i\not\in M}\frac{\pi^\mu(M\cup\{i\},b)}{\pi^\mu(M,b)}-\max_{M,i\not\in M}\frac{z^\mu(M,i,b)}{\pi^\mu(M,b)}\\
&\geq\frac{1-\gamma}{\gamma}-\max_{M,i\not\in M}\frac{z^\mu(M,i,b)}{\pi^\mu(M,b)}\enspace.
\end{align*}
If we assume $z^\mu(M,i,b)=0$ everywhere (e.g. by the nice distribution assumption), then we get
\[\sup_{i,\hat b} \left|\frac{\rhob(\hat b)}{b_i}\right|\geq\frac{1-\gamma}{\gamma}\enspace.\]
Otherwise, Corollary~\ref{cor:babaioff-zae} says that $z^\mu(M,i,b)=0$ almost everywhere, giving the more general bound.
\end{proof}

\begin{lemma}
\label{lem:babaioff-vargamma}
If a resampling distribution $\mu$ with precision $\alpha\geq(1-\gamma)^n\geq\frac1e$ satisfies the monotonicity condition, then
\[\max_i\var_{\hat b}\left(\frac{\rhob(\hat b)}{b_i}\right)\geq\frac{1-\gamma}{\gamma} \enspace a.e.\]
that is, for all $b$ but a set with measure zero. This holds everywhere if $z^\mu(M,i,b)=0$ everywhere.
\end{lemma}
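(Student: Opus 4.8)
The plan is to follow the worst-case argument of Lemma~\ref{lem:babaioff-relgamma}, but to aggregate Corollary~\ref{cor:babaioff-piratio} over all pairs $(M,i)$ instead of selecting the single worst one --- exactly the way the proof of Theorem~\ref{thm:vcg-opt} upgraded its worst-case payment bound to a variance bound in the MIDR setting.

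First I would observe that $\rhob(\hat b)/b_i$ is centered under $\mub$. Since $\rhob$ is the Radon--Nikodym density of $\mubpsci$ with respect to $\mub$,
\[\ex_{\hat b\sim\mub}\!\left[\rhob(\hat b)\right]=\mubpsci(\Re^n)=b_i\,\mub(\Re^n)-\int_0^{b_i}\mub[u,b_{-i}](\Re^n)\,du=b_i-b_i=0,\]
so $\var_{\hat b\sim\mub}(\rhob(\hat b)/b_i)=\ex_{\hat b}[(\rhob(\hat b)/b_i)^{2}]=\int_{\Re^n}(\rhob(\hat b)/b_i)^{2}\,d\mub$. This is the step that lets us lower-bound the variance by a second moment rather than the reverse.

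Next, fix an agent $i$ and partition the region $\{\hat b\le b\}$ (coordinatewise) into the sets indexed by $M\subseteq[n]\setminus\{i\}$ on which $\hat b_j=b_j$ for $j\in M$, $\hat b_j<b_j$ for $j\notin M\cup\{i\}$, and $\hat b_i\le b_i$. These are pairwise disjoint, the integrand is nonnegative, and the mass on $\{\hat b\not\le b\}$ only helps, so restricting to these sets and applying the second displayed inequality of Corollary~\ref{cor:babaioff-piratio} piece by piece gives
\[\var_{\hat b}\!\left(\frac{\rhob(\hat b)}{b_i}\right)\ge\sum_{M\subseteq[n]\setminus\{i\}}\big(\pi^\mu(M,b)+\pi^\mu(M\cup\{i\},b)\big)\frac{\pi^\mu(M\cup\{i\},b)}{\pi^\mu(M,b)}\left(1-\frac{z^\mu(M,i,b)}{\pi^\mu(M\cup\{i\},b)}\right)^{2}.\]
By Corollary~\ref{cor:babaioff-zae}, $z^\mu(M,i,b)=0$ for all but a measure-zero set of $b$ (and everywhere under the nice-distribution assumption), so for such $b$ the right-hand side is precisely the quantity $\var_{M\sim\pi}c_i^\pi(M)$ that appeared in the proof of Theorem~\ref{thm:vcg-opt}, now with the set function $\pi^\mu(\cdot,b)$ playing the role of $\pi$. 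Applying the combinatorial lemma used there, Lemma~\ref{lem:babaioff-hfunc2}, with $\eta(S)=\pi^\mu(S,b)$, $\beta=1$ (legitimate since $\sum_M\pi^\mu(M,b)\le1$), and $\alpha=(1-\gamma)^n$ (the precision hypothesis gives $\pi^\mu([n],b)\ge(1-\gamma)^n$) forces $\max_i$ of this expression to be at least $\frac{1-\phi}{\phi}$ with $\phi=1-((1-\gamma)^n)^{1/n}=\gamma$, i.e.\ $\max_i\var_{\hat b}(\rhob(\hat b)/b_i)\ge\frac{1-\gamma}{\gamma}$; this is exactly where the hypothesis $(1-\gamma)^n\ge\frac1e$ is consumed, as a hypothesis of Lemma~\ref{lem:babaioff-hfunc2}. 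Under the nice-distribution assumption the same conclusion holds for every $b$ because then $z^\mu\equiv0$. The bound is tight: the BKS reduction realizes the product distribution $\pi^{\fbks}(M,b)=(1-\gamma)^{|M|}\gamma^{n-|M|}$, for which $\var_{\hat b}(\rhob[\fbks](\hat b)/b_i)=\frac{1-\gamma}{\gamma}$ for every $i$ by the computation in Theorem~\ref{thm:vcg-opt}.

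The main obstacle I anticipate is bookkeeping rather than a new idea: checking that the sets indexed by $M$ genuinely partition $\{\hat b\le b\}$, that Corollary~\ref{cor:babaioff-piratio} applies on each with the correct $\pi^\mu$ and $z^\mu$ terms, and that the reduction to Lemma~\ref{lem:babaioff-hfunc2} remains clean when the reduction occasionally raises bids, so that $\sum_M\pi^\mu(M,b)$ may be strictly below $1$ --- which only strengthens the bound, since decreasing $\beta$ increases $\frac{1-\phi}{\phi}$.
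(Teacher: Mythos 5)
Your overall route is the same as the paper's (center $\rhob(\hat b)/b_i$ so variance equals second moment, partition $\{\hat b\le b\}$ by the sets indexed by $M$, apply Corollaries~\ref{cor:babaioff-piratio} and~\ref{cor:babaioff-zae}, then invoke Lemma~\ref{lem:babaioff-hfunc2}), but the last step has a genuine gap. Lemma~\ref{lem:babaioff-hfunc2} requires $\sum_{S}\eta(S)=\beta$ exactly, so you are not entitled to set $\beta=1$ when the reduction sometimes raises bids and $\sum_M\pi^\mu(M,b)=\macroCPr{\hat b\le b}{b}<1$; and your justification that a smaller total mass ``only strengthens the bound, since decreasing $\beta$ increases $\frac{1-\phi}{\phi}$'' ignores that the lemma's conclusion is $\beta\,\frac{1-\phi}{\phi}$, where the prefactor $\beta$ shrinks at the same time as $\frac{1-\phi}{\phi}$ grows. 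Whether the product gets smaller or larger is exactly the nontrivial issue, and it is also where the hypothesis $(1-\gamma)^n\ge\frac1e$ is actually consumed: Lemma~\ref{lem:babaioff-hfunc2} has no such hypothesis, contrary to your claim.

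The paper's fix is the step you skipped: apply Lemma~\ref{lem:babaioff-hfunc2} with the true $\beta=\macroCPr{\hat b\le b}{b}$, obtaining the lower bound $\beta\,\frac{1-\phi}{\phi}$ with $\phi=1-\bigl((1-\gamma)^n/\beta\bigr)^{1/n}$, and then check by a short calculus argument that this expression is decreasing in $\beta$ whenever $(1-\gamma)^n/\beta\ge\frac1e$ (equivalently $\bigl((1-\gamma)^n/\beta\bigr)^{1/n}\ge 1-\frac1n$, since $(1-\frac1n)^n\le\frac1e$); hence the worst case is $\beta=1$, which yields $\frac{1-\gamma}{\gamma}$. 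Without this monotonicity-in-$\beta$ argument your chain of inequalities only delivers something like $\macroCPr{\hat b\le b}{b}\cdot\frac{1-\gamma}{\gamma}$, which falls short of the stated bound. Everything before that point (the centering computation, the partition, the almost-everywhere removal of the $z^\mu$ terms, and the role of the nice-distribution assumption) matches the paper and is fine; the tightness remark about BKS is not needed for the lemma but is harmless.
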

\begin{proof} The proof for variance is similar to Lemma~\ref{lem:babaioff-relgamma}, but we apply Lemma~\ref{lem:babaioff-hfunc2} instead of Lemma~\ref{lem:babaioff-hfunc}. First, note that since $\mub$ is a probability measure, $\mub(\Re^n)=1$ and thus
\[\int_{\hat b\in\Re^n}\frac{\rhob(\hat b)}{b_i}d\mub=\frac1{b_i}\mubpsci(\Re^n)=\mub(\Re^n)-\frac1{b_i}\int_0^{b_i}\mub[u,b_{-i}](\Re^n)du=1-\frac1{b_i}\int_0^{b_i}1du=0\enspace.\]
We begin with the variance for player $i$, applying Corollaries~\ref{cor:babaioff-piratio} and~\ref{cor:babaioff-zae}:
\begin{align*}
\var_{\hat b}\left(\frac{\rhob(\hat b)}{b_i}\right)&=\int_{\hat b\in\Re^n}\left(\frac{\rhob(\hat b)}{b_i}\right)^2d\mub-\left(\int_{\hat b\in\Re^n}\frac{\rhob(\hat b)}{b_i}d\mub\right)^2\\
&=\int_{\hat b\in\Re^n}\left(\frac{\rhob(\hat b)}{b_i}\right)^2d\mub\\
&\geq\sum_{M|i\not\in M}\int_{\hat b_i\leq b_i\wedge(j\in M\Rightarrow\hat b_j=b_j)\wedge(j\not\in M\cup\{i\}\Rightarrow\hat b_j<b_j)}\left(\frac{\rhob(\hat b)}{b_i}\right)^2d\mub\\
&\geq\sum_{M|i\not\in M}\left(\pi^\mu(M,b)+\pi^\mu(M\cup\{i\},b)\right)\frac{\pi^\mu(M\cup\{i\},b)}{\pi^\mu(M,b)}\enspace a.e.
\end{align*}
Applying Lemma~\ref{lem:babaioff-hfunc2} with $\eta(S)=\pi^\mu(S,b)$, $\alpha=(1-\gamma)^n$ and $\beta=\macroCPr{\hat b\leq b}{b}$ immediately implies
\[\max_i\var_{\hat b}\left(\frac{\rhob(\hat b)}{b_i}\right)\geq\macroCPr{\hat b\leq b}{b}\frac{1-\phi}{\phi} \enspace a.e.\]
where
\[\phi=1-\left(\frac{(1-\gamma)^n}{\macroCPr{\hat b\leq b}{b}}\right)^{\frac1n}\enspace.\]
One can check that when $\frac{(1-\gamma)^n}{\macroCPr{\hat b\leq b}{b}}\geq\frac1e$, the quantity $\macroCPr{\hat b\leq b}{b}\frac{1-\phi}{\phi} $ is decreasing in $\macroCPr{\hat b\leq b}{b}$. Taking the worst case $\macroCPr{\hat b\leq b}{b}=1$ implies the desired result:
\[\max_i\var_{\hat b}\left(\frac{\rhob(\hat b)}{b_i}\right)\geq\max_i\geq\frac{1-\gamma}{\gamma} \enspace a.e.\]
\end{proof}

Theorem \ref{thm:a-babaioff-pre} -- optimality of the BKS transformation with respect to a precision bound -- follows from the two previous lemmas:

\begin{proof}[ of Theorem \ref{thm:a-babaioff-pre}] For worst-case payments, we show that for any measure $\mu$, with precision at least $2^{-n}$,
\[\sup_{i,\hat b} \left|\frac{\rhob[\fbks](\hat b)}{b_i}\right|\leq\sup_{i,\hat b} \left|\frac{\rhob(\hat b)}{b_i}\right|\enspace a.e.\]
For $\Pr(\hat b=b|b)=(1-\gamma)^n$, the BKS transform achieves $\sup_{i,\hat b} \left|\frac{\rhob[\fbks](\hat b)}{b_i}\right|=\max\left(1,\frac{1-\gamma}{\gamma}\right)$ for all $b$. Provided $\gamma>\frac12$, the dominant term is $\frac{1-\gamma}{\gamma}$ and Lemma~\ref{lem:babaioff-relgamma} shows that this is a lower bound for any such $\mu$ almost everywhere. When $\alpha>2^{-n}$ we get $\gamma>\frac12$, and thus $\fbks$ is optimal.

Moreover, under the nice distribution assumption (implying $z^\mu(M,i,b)=0$), Lemma~\ref{lem:babaioff-relgamma} says that this holds everywhere.

For the variance of normalized payments, we need to show that for any measure $\mu$ with precision at least $\frac1e$:
\[\var_{\hat b\sim\mub} \left(\frac{\rhob[\fbks](\hat b)}{b_i}\right)\leq\var_{\hat b\sim\mub} \left(\frac{\rhob(\hat b)}{b_i}\right)\enspace a.e.\]
Again, for $\Pr(\hat b=b|b)=(1-\gamma)^n$, the BKS transform achieves $\var_{\hat b\sim\mub} \left|\frac{\rhob[\fbks](\hat b)}{b_i}\right|=\frac{1-\gamma}{\gamma}$ for all $b$. Lemma~\ref{lem:babaioff-vargamma} shows that this is a lower bound for any such $\mu$ almost everywhere.
\end{proof}

\subsection{Technical Lemmas}
\label{sec:a-babaioff-tech}

The next lemma gives our main lower bound on the worst coefficient:

\begin{lemma}
\label{lem:babaioff-monoratio}
If a measure $\mu$ satisfies the monotonicity condition, then for any player $i$, bid $b$, and set of bids $B_{-i}\subseteq\Re^{n-1}_+$:
\[\sup_{\hat b} \left|\frac{\rhob(\hat b)}{b_i}\right|\geq\frac{\macroCPr{\hat b_i=b_i\wedge\hat b_{-i}\in B_{-i}}{b}-\frac1{b_i}\int_0^{b_i}\macroCPr{\hat b_i=b_i\wedge\hat b_{-i}\in B_{-i}}{u,b_{-i}}}{\macroCPr{\hat b_i< b_i\wedge\hat b_{-i}\in B_{-i}}{b}}\enspace,\]
and
\begin{align*}
\int_{\hat b_i\leq b_i\wedge\hat b_{-i}\in B_{-i}}\left(\frac{\rhob(\hat b)}{b_i}\right)^2d\mub\geq&{}\macroCPr{\hat b_i\leq b_i\wedge\hat b_{-i}\in B_{-i}}{b}\frac{\macroCPr{\hat b_i=b_i\wedge\hat b_{-i}\in B_{-i}}{b}}{\macroCPr{\hat b_i<b_i\wedge\hat b_{-i}\in B_{-i}}{b}}\\
&{}\times\left(1-\frac{\frac1{b_i}\int_0^{b_i}\macroCPr{\hat b_i=b_i\wedge\hat b_{-i}\in B_{-i}}{u,b_{-i}}}{\macroCPr{\hat b_i=b_i\wedge\hat b_{-i}\in B_{-i}}{b}}\right)^2\enspace,
\end{align*}
where the integral terms are zero almost everywhere in $b$ by Lemma~\ref{lem:babaioff-ae}.
\end{lemma}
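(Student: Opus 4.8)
The plan is to rewrite everything in terms of the finite signed measure $\mubpsci$ of Lemma~\ref{lem:mubpsci}, whose Radon--Nikodym density with respect to $\mub$ is $\rhob$, so that $\int_E\rhob\,d\mub=\mubpsci(E)$ for every Borel $E$. Abbreviate $B^{=}=\{\hat b:\hat b_i=b_i,\ \hat b_{-i}\in B_{-i}\}$, $B^{<}=\{\hat b:\hat b_i<b_i,\ \hat b_{-i}\in B_{-i}\}$ and $B^{\le}=B^{=}\cup B^{<}$. Expanding $\mubpsci(E)=b_i\mub(E)-\int_0^{b_i}\mub[u,b_{-i}](E)\,du$ shows that $\tfrac1{b_i}\mubpsci(B^{=})$ is exactly the numerator of the first claimed inequality, that $\mub(B^{<})=\macroCPr{\hat b_i<b_i\wedge\hat b_{-i}\in B_{-i}}{b}$ is its denominator, and that $1-\big(\tfrac1{b_i}\int_0^{b_i}\macroCPr{\hat b_i=b_i\wedge\hat b_{-i}\in B_{-i}}{u,b_{-i}}\,du\big)\big/\macroCPr{\hat b_i=b_i\wedge\hat b_{-i}\in B_{-i}}{b}$ equals $\tfrac{\mubpsci(B^{=})/b_i}{\mub(B^{=})}$. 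So the two targets become $\sup_{\hat b}|\rhob/b_i|\ge \tfrac{\mubpsci(B^{=})/b_i}{\mub(B^{<})}$ and $\int_{B^{\le}}(\rhob/b_i)^2\,d\mub\ \ge\ \tfrac{\mub(B^{\le})}{\mub(B^{<})\mub(B^{=})}\,(\mubpsci(B^{=})/b_i)^2$. (I may assume $\mub(B^{<})>0$, else there is nothing to prove.)

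The only place monotonicity enters is the inequality $\mubpsci(B^{\le})\le 0$. The set $B^{\le}$ is leftward closed with respect to $b_i$, so Lemma~\ref{lem:a-sp-char-mono} gives that $u\mapsto\mub[u,b_{-i}](B^{\le})$ is weakly decreasing; hence $\mub[u,b_{-i}](B^{\le})\ge\mub(B^{\le})$ for all $u\le b_i$, and therefore $\mubpsci(B^{\le})=b_i\mub(B^{\le})-\int_0^{b_i}\mub[u,b_{-i}](B^{\le})\,du\le 0$. By countable additivity of the signed measure $\mubpsci$, this gives $\mubpsci(B^{=})=\mubpsci(B^{\le})-\mubpsci(B^{<})\le-\mubpsci(B^{<})$. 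I will also use $\mubpsci(B^{=})\ge 0$: since $\mub[u,b_{-i}](B^{=})\le\Pr_\mu(\hat b_i=b_i\mid u,b_{-i})$ and $\int_0^{b_i}\Pr_\mu(\hat b_i=b_i\mid u,b_{-i})\,du=0$ for a.e.\ $b$ by Lemma~\ref{lem:babaioff-ae} (and everywhere under the nice distribution assumption), we have $\mubpsci(B^{=})=b_i\mub(B^{=})\ge 0$ wherever these ``raising'' integral terms vanish.

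With these in hand the first inequality is immediate: $\tfrac{\mubpsci(B^{=})/b_i}{\mub(B^{<})}\le\tfrac{-\mubpsci(B^{<})/b_i}{\mub(B^{<})}=\tfrac{-\int_{B^{<}}(\rhob/b_i)\,d\mub}{\mub(B^{<})}\le\tfrac{\int_{B^{<}}|\rhob/b_i|\,d\mub}{\mub(B^{<})}\le\sup_{\hat b}|\rhob/b_i|$, and when $\mubpsci(B^{=})<0$ the right side is $\le 0$ so the bound is trivial. For the second inequality, apply Cauchy--Schwarz separately over $B^{=}$ and over $B^{<}$ to get $\int_{B^{=}}(\rhob/b_i)^2\,d\mub\ge(\mubpsci(B^{=})/b_i)^2/\mub(B^{=})$ and $\int_{B^{<}}(\rhob/b_i)^2\,d\mub\ge(\mubpsci(B^{<})/b_i)^2/\mub(B^{<})$; adding these, writing $\mub(B^{\le})=\mub(B^{=})+\mub(B^{<})$ and comparing with the target, what remains is the elementary inequality $\mubpsci(B^{<})^2\ge\mubpsci(B^{=})^2$. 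This holds because $\mubpsci(B^{=})\ge 0$ together with $\mubpsci(B^{=})+\mubpsci(B^{<})\le 0$ force $\mubpsci(B^{<})\le-\mubpsci(B^{=})\le 0$, so $|\mubpsci(B^{<})|\ge\mubpsci(B^{=})=|\mubpsci(B^{=})|$.

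I expect the main obstacle to be the treatment of the ``raising'' term $\int_0^{b_i}\mub[u,b_{-i}](B^{=})\,du$: the clean form of the bounds --- in particular the sign $\mubpsci(B^{=})\ge 0$ needed in the second inequality --- is only guaranteed where this term vanishes, which is exactly why the statement carries the ``almost everywhere'' qualifier and defers to Lemma~\ref{lem:babaioff-ae}. Everything else is routine bookkeeping with the signed measure $\mubpsci$, the leftward-closed monotonicity property, and Cauchy--Schwarz.
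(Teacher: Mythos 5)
Your proposal is correct and is essentially the paper's own argument: the same split into the slices $\{\hat b_i=b_i,\ \hat b_{-i}\in B_{-i}\}$ and $\{\hat b_i<b_i,\ \hat b_{-i}\in B_{-i}\}$, the same translation into the signed measure $\mubpsci$ and its density $\rhob$, the same use of Lemma~\ref{lem:a-sp-char-mono} (leftward-closedness) to conclude $\mubpsci$ of the union is nonpositive, and the same averaging/Cauchy--Schwarz steps. Your explicit observation that the second-moment bound additionally needs $\mubpsci(\{b_i\}\times B_{-i})\geq 0$ --- i.e., that the raising term vanishes, which Lemma~\ref{lem:babaioff-ae} guarantees almost everywhere (and everywhere under the nice distribution assumption) --- merely makes precise a step the paper's chain of inequalities leaves implicit, and you handle it correctly.
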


\begin{proof} Define the sets
\[B^{(=)}=\{b_i\}\times B_{-i}\quad\mbox{and}\quad B^{(<)}=[0,b_i)\times B_{-i} \enspace,\]
i.e. the set $B^{(=)}$ contains bids $\hat b$ where $\hat b_i=b_i$ and $\hat b_{-i}\in B_{-i}$, and the set $B^{(<)}$ contains bids $\hat b$ where $\hat b_i< b_i$ and $\hat b_{-i}\in B_{-i}$.
The main work of the lemma is to bound the following term:
\begin{align*}
\int_{\hat b\in B^{(<)}}\frac{\rhob(\hat b)}{b_i}d\mub&=\frac{\mubpsci(B^{(<)})}{b_i}\\
&=\frac{b_i\mub(B^{(<)})-\int_0^{b_i}\mub[u,b_{-i}](B^{(<)})du}{b_i}\\
&=\mub(B^{(<)})-\frac{1}{b_i}\int_0^{b_i}\mub[u,b_{-i}](B^{(<)})du\\
&=\macroCPr{\hat b\in B^{(<)}}{b}-\frac1{b_i}\int_{0}^{b_i}\macroCPr{\hat b\in B^{(<)}}{u,b_{-i}}du\enspace.
\end{align*}
By monotonicity, $\macroCPr{\hat b\in B^{(<)}\cup B^{(=)}}{u,b_{-i}}$ is weakly decreasing in $u$ (Lemma~\ref{lem:a-sp-char-mono}). This implies
\[\macroCPr{\hat b\in B^{(=)}}{b}+\macroCPr{\hat b\in B^{(<)}}{b}\leq\frac1{b_i}\int_{0}^{b_i}\left(\macroCPr{\hat b\in B^{(=)}}{u,b_{-i}}+\macroCPr{\hat b\in B^{(<)}}{u,b_{-i}}\right)du\]
and thus
\begin{align*}
\int_{\hat b\in B^{(<)}}\frac{\rhob(\hat b)}{b_i}d\mub&=\macroCPr{\hat b\in B^{(<)}}{b}-\frac1{b_i}\int_{0}^{b_i}\macroCPr{\hat b\in B^{(<)}}{u,b_{-i}}du\\
&\leq-\left(\macroCPr{\hat b\in B^{(=)}}{b}-\frac1{b_i}\int_{0}^{b_i}\macroCPr{\hat b\in B^{(=)}}{u,b_{-i}}du\right)\enspace.
\end{align*}

To bound $\sup_{\hat b\in B^{(<)}}\left|\frac{\rhob(\hat b)}{b_i}\right|$, we have
\[\sup_{\hat b\in B^{(<)}}\left|\frac{\rhob(\hat b)}{b_i}\right|\geq\left|\frac{\int_{\hat b\in B^{(<)}}\frac{\rhob(\hat b)}{b_i}d\mub}{\mub(B^{(<)})}\right|\geq\frac{\macroCPr{\hat b\in B^{(=)}}{b}-\frac1{b_i}\int_0^{b_i}\macroCPr{\hat b\in B^{(=)}}{u,b_{-i}}}{\macroCPr{\hat b\in B^{(<)}}{b}}\]
Lemma \ref{lem:babaioff-ae} implies that the limit term is zero almost everywhere in $b$.

For our partial bound on the second moment, we write
\begin{align*}
\int_{\hat b\in B^{(<)}\cup B^{(=)}}\left(\frac{\rhob(\hat b)}{b_i}\right)^2d\mub\geq&{}\int_{\hat b\in B^{(=)}}\left(\frac{\rhob(\hat b)}{b_i}\right)^2d\mub+\int_{\hat b\in B^{(<)}}\left(\frac{\rhob(\hat b)}{b_i}\right)^2d\mub\\
\geq&{}\mub(B^{(=)})\left(\frac{\int_{\hat b\in B^{(=)}}\frac{\rhob(\hat b)}{b_i}d\mub}{\mub(B^{(=)})}\right)^2+\mub(B^{(<)})\left(\frac{\int_{\hat b\in B^{(<)}}\frac{\rhob(\hat b)}{b_i}d\mub}{\mub(B^{(<)})}\right)^2\\
\geq&{}\mub(B^{(=)})\left(\frac{\macroCPr{\hat b\in B^{(=)}}{b}-\frac1{b_i}\int_{0}^{b_i}\macroCPr{\hat b\in B^{(=)}}{u,b_{-i}}du}{\mub(B^{(=)})}\right)^2\\
&{}+\mub(B^{(<)})\left(\frac{\macroCPr{\hat b\in B^{(=)}}{b}-\frac1{b_i}\int_{0}^{b_i}\macroCPr{\hat b\in B^{(=)}}{u,b_{-i}}du}{\mub(B^{(<)})}\right)^2\\
\geq&{}\left(\mub(B^{(<)})+\mub(B^{(=)})\right)\frac{\mub(B^{(=)})}{\mub(B^{(<)})}\\
&{}\times\left(1-\frac{\frac1{b_i}\int_{0}^{b_i}\macroCPr{\hat b\in B^{(=)}}{u,b_{-i}}du}{\macroCPr{\hat b\in B^{(=)}}{b}}\right)^2
\end{align*}
Which is the desired bound.
\end{proof}

\begin{lemma}
\label{lem:babaioff-hfunc}
Let $\eta:\{0,1\}^n$ be a function over subsets $S\subseteq[n]$ with $\eta([n])\geq\alpha\in[0,1]$ and $\sum_{S\subseteq[n]}\eta(S)\leq\beta\in[0,1]$. Then
\[\max_{S,i\in[n]\setminus S}\frac{\eta(S\cup\{i\})}{\eta(S)}\geq\frac{1-\phi}{\phi}\]
where $\phi=1-\left(\frac\alpha\beta\right)^{\frac1n}$.
\end{lemma}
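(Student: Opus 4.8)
The plan is to lower bound the quantity $r := \max_{S,\, i\in[n]\setminus S}\frac{\eta(S\cup\{i\})}{\eta(S)}$ by a maximal‑chain argument, using only that $\eta$ is nonnegative, $\eta([n])\ge\alpha$, and $\sum_{S}\eta(S)\le\beta$. First I would dispose of the degenerate cases. If $\alpha=0$ then $\phi=1$ and the claimed bound $\frac{1-\phi}{\phi}=0$ holds trivially. If $\eta(S)=0$ for some $S\subsetneq[n]$ while $\eta([n])>0$, then walking up any maximal chain $S=T_0\subset T_1\subset\cdots\subset[n]$ (each $|T_{j}|=|T_{j-1}|+1$) forces some ratio $\eta(T_{j})/\eta(T_{j-1})$ to be $+\infty$, so $r=\infty$ and there is nothing to prove. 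Hence I may assume $\eta(S)>0$ for every $S\subseteq[n]$, so in particular $r\in(0,\infty)$.

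The key structural step is then the following. For each $S\subseteq[n]$, fix a maximal chain $S=T_0\subset T_1\subset\cdots\subset T_{n-|S|}=[n]$. Telescoping, $\eta([n])=\eta(S)\prod_{j=1}^{n-|S|}\frac{\eta(T_j)}{\eta(T_{j-1})}\le\eta(S)\,r^{\,n-|S|}$, i.e. $\eta(S)\ge\eta([n])\,r^{\,|S|-n}$. Summing over all $2^n$ subsets and grouping by cardinality (using $\sum_{k=0}^n\binom nk r^k=(1+r)^n$) gives
\[
\beta\;\ge\;\sum_{S\subseteq[n]}\eta(S)\;\ge\;\eta([n])\,r^{-n}\sum_{k=0}^n\binom nk r^{k}\;=\;\eta([n])\Bigl(1+\tfrac1r\Bigr)^{n}\;\ge\;\alpha\Bigl(1+\tfrac1r\Bigr)^{n}.
\]

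Finally I would solve this inequality for $r$. From $(1+\tfrac1r)^n\le\beta/\alpha$ we get $\tfrac1r\le(\beta/\alpha)^{1/n}-1$, and since $(\beta/\alpha)^{1/n}-1=\frac{1-(\alpha/\beta)^{1/n}}{(\alpha/\beta)^{1/n}}=\frac{\phi}{1-\phi}$ with $\phi=1-(\alpha/\beta)^{1/n}$, this is exactly $r\ge\frac{1-\phi}{\phi}$, as claimed. The only real content is the chain bound $\eta(S)\ge\eta([n])\,r^{\,|S|-n}$ together with the binomial identity; the one thing to be careful about is the direction of the telescoping (we push \emph{up} from $S$ to $[n]$, so each factor is at most $r$) and the treatment of zeros so that the chain products are well defined, after which the rest is routine algebra. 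I do not anticipate a genuine obstacle — the estimate is tight precisely when $\eta(S)/\eta([n])=r^{\,|S|-n}$ for all $S$, matching the product form of the BKS distribution's $\pi^\mu$.
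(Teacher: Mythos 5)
Your proof is correct and is essentially the paper's argument: the paper runs the same chain-telescoping bound $\eta(S)\geq\eta([n])\left(\frac{\phi}{1-\phi}\right)^{n-|S|}$ and the same binomial summation, only phrased as a proof by contradiction (assuming every ratio is below $\frac{1-\phi}{\phi}$ and deriving $\alpha>\alpha$) rather than solving directly for $r$. Your explicit handling of zero values of $\eta$ is a minor point of extra care not present in the paper, but the substance is identical.
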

\begin{proof} By contradiction. Assume that for every $S$ and $i\not\in S$,
\[\frac{\eta(S\cup\{i\})}{\eta(S)}<\frac{1-\phi}{\phi}\]
where $\phi=1-\left(\frac\alpha\beta\right)^{\frac1n}$.

Then by multiplying $\frac{\eta(S)}{\eta(S\cup\{i\})}$ terms together we get
\[\eta(S)\geq\eta([n])\left(\frac{\phi}{1-\phi}\right)^{n-|S|}\enspace.\]
Summing over all $S\subseteq[n]$, substituting for $\alpha$ and $\beta$, and algebra gives
\begin{align*}
\sum_{S\subseteq[n]}\eta(S)&>\eta([n])\sum_{S\subseteq[n]}\left(\frac{\phi}{1-\phi}\right)^{n-|S|}\\
\beta&>\alpha\sum_{S\subseteq[n]}\left(\frac{\phi}{1-\phi}\right)^{n-|S|}\\
\beta(1-\phi)^n&>\alpha\sum_{S\subseteq[n]}(1-\phi)^{|S|}\phi^{n-|S|}\\
\beta\left(1-\phi\right)^n&>\alpha\\
\beta\left(1-\left(1-\left(\frac\alpha\beta\right)^{\frac1n}\right)\right)^n&>\alpha\\
\alpha&>\alpha\enspace.
\end{align*}
Which is a contradiction.
\end{proof}

\begin{lemma}
\label{lem:babaioff-hfunc2}
Let $\eta:\{0,1\}^n$ be a function over subsets $S\subseteq[n]$ with $\eta([n])\geq\alpha\in[0,1]$ and $\sum_{S\subseteq[n]}\eta(S)=\beta\in[0,1]$. Then
\[\max_i\sum_{S|i\not\in S}\left(\eta(S)+\eta(S\cup\{i\})\right)\frac{\eta(S\cup\{i\})}{\eta(S)}\geq\beta\frac{1-\phi}{\phi}\]
where $\phi=1-\left(\frac\alpha\beta\right)^{\frac1n}$.
\end{lemma}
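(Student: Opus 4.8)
The plan is to mirror the structure of Lemma~\ref{lem:babaioff-hfunc}: first pass to a normalized problem, then show the quantity to be bounded is minimized by a product distribution, and finally finish with the same geometric-mean argument used there.

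First I would normalize. Write $\eta(S)=\beta\,\pi(S)$, where $\pi$ is a probability distribution on $2^{[n]}$ with $\pi([n])\geq\rho:=\alpha/\beta$. Pulling out the factor $\beta$, it suffices to prove $\max_i\sum_{S|i\notin S}(\pi(S)+\pi(S\cup\{i\}))\frac{\pi(S\cup\{i\})}{\pi(S)}\geq\frac{1-\phi}{\phi}$ with $\phi=1-\rho^{1/n}$. I would then split the inner sum for bidder $i$ as $\sum_{S|i\notin S}\pi(S\cup\{i\})+\sum_{S|i\notin S}\frac{\pi(S\cup\{i\})^2}{\pi(S)}$. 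The first sum equals $p_i:=\Pr_\pi(i\in M)$, and Cauchy--Schwarz (in the $\sum a_k^2/b_k\geq(\sum a_k)^2/\sum b_k$ form) bounds the second by $\frac{p_i^2}{1-p_i}$, with equality precisely when $\frac{\pi(S\cup\{i\})}{\pi(S)}$ is a constant $\frac{1-\gamma_i}{\gamma_i}$ independent of $S$ --- that is, when the event $\{i\in M\}$ is independent of the other coordinates. Hence for any fixed marginal $p_i$ the inner sum is at least $p_i+\frac{p_i^2}{1-p_i}=\frac{p_i}{1-p_i}$, achieved by making $\{i\in M\}$ independent of the rest, so it is enough to treat product distributions.

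For a product distribution with $\Pr(i\in M)=1-\gamma_i$, the inner sum for bidder $i$ is exactly $\frac{1-\gamma_i}{\gamma_i}$ and the precision is $\pi([n])=\prod_j(1-\gamma_j)$. The claim thus reduces to: if $\prod_j(1-\gamma_j)\geq\rho$ then $\max_i\frac{1-\gamma_i}{\gamma_i}\geq\frac{1-\phi}{\phi}$ with $\phi=1-\rho^{1/n}$. This is the $n$-factor statement already established in Lemma~\ref{lem:babaioff-hfunc} (apply that lemma to the product measure $\eta(S)=\prod_{j\in S}(1-\gamma_j)\prod_{j\notin S}\gamma_j$, whose successive ratios are the $\frac{1-\gamma_i}{\gamma_i}$; equivalently, the geometric mean of the $\frac{1-\gamma_i}{\gamma_i}$ is at least $\frac{1-\phi}{\phi}$), with equality when every $\gamma_i=\phi$, i.e. when $\pi=\pibar$. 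Chaining the three reductions gives the lemma and its equality case.

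The step I expect to be the main obstacle is the reduction to product distributions. Cauchy--Schwarz lower-bounds bidder $i$'s sum only in terms of bidder $i$'s own marginal $p_i$, whereas the precision constraint $\pi([n])\geq\rho$ couples all the marginals through the joint law: a $\pi$ in which the events $\{i\in M\}$ are positively correlated can keep $\pi([n])$ large while keeping every $p_i$ small, so the inequality ``inner sum $\geq\frac{p_i}{1-p_i}$'' together with ``$p_i\geq\rho$'' is too weak to conclude directly. Closing this requires showing that the slack Cauchy--Schwarz leaves on a correlated $\pi$ always pays for the smaller marginals; a promising route is to invoke the joint convexity of $(\pi(S),\pi(S\cup\{i\}))\mapsto(\pi(S)+\pi(S\cup\{i\}))\frac{\pi(S\cup\{i\})}{\pi(S)}$ --- hence convexity of the inner sum and of its maximum over $i$ --- together with the permutation symmetry of the constraint set to reduce to symmetric $\pi$, and then to optimize the resulting level-indexed problem directly against $\pi([n])\geq\rho$.
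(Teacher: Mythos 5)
There is a genuine gap, and it is exactly the step you flag yourself and never close: the reduction to product-form distributions. Your normalization, the Cauchy--Schwarz computation, and the finishing pigeonhole for honest product distributions (if $\prod_i(1-\gamma_i)\geq\alpha/\beta$ then some $\tfrac{1-\gamma_i}{\gamma_i}\geq\tfrac{1-\phi}{\phi}$) are all correct, and that last step coincides with how the paper finishes. But Cauchy--Schwarz only yields the marginal bound $\tfrac{p_i}{1-p_i}$ with $p_i\geq\pi([n])\geq\alpha/\beta$, i.e. a lower bound of $\tfrac{\alpha/\beta}{1-(\alpha/\beta)}$, which is strictly weaker than the claimed $\tfrac{(\alpha/\beta)^{1/n}}{1-(\alpha/\beta)^{1/n}}$; as you note, a positively correlated $\pi$ can keep $\pi([n])$ large while all marginals stay near $\alpha/\beta$, so ``it is enough to treat product distributions'' does not follow from this computation. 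The paper supplies precisely the missing ingredient by a variational argument: fixing $i$ and differentiating bidder $i$'s sum with respect to each $\eta(S)$ under the constraints $\sum_S\eta(S)=\beta$ and $\eta([n])\geq\alpha$, it argues that at a minimizer the ratios $\eta(S\cup\{i\})/\eta(S)$ are constant over $S\not\ni i$, so the extremal $\eta$ has product form with parameters $\phi_i$, bidder $i$'s sum equals $\beta\tfrac{1-\phi_i}{\phi_i}$, and the constraints force $\prod_i(1-\phi_i)\geq\alpha/\beta$. Your proposal contains no substitute for this extremality step.

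Your sketched repair is sound only up to the point where the real work begins. The joint convexity of $(x,y)\mapsto(x+y)\tfrac yx=y+\tfrac{y^2}{x}$, hence of the objective, together with the permutation invariance of the constraints, does legitimately let you restrict to $\pi(S)$ depending only on $|S|$. But the resulting ``level-indexed'' problem --- lower-bounding $\sum_{k=0}^{n-1}\binom{n-1}{k}(q_k+q_{k+1})\tfrac{q_{k+1}}{q_k}$ subject to $\sum_k\binom nk q_k=1$ and $q_n\geq\alpha/\beta$ --- still contains essentially the entire content of the lemma: it is a weighted average of the level ratios $r_k=q_{k+1}/q_k$ whose weights depend on $q$ itself, so neither weighted AM--GM nor Lemma~\ref{lem:babaioff-hfunc} (which only controls $\max_k r_k$, and a weighted average can sit far below the maximum) closes it, and ``optimize the resulting problem directly'' is not carried out. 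Until either that symmetric optimization is solved or a constant-ratio/extremality argument in the spirit of the paper's proof is given, the proposal does not prove the lemma.
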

\begin{proof} We lower-bound the sum. Fix $i$ and differentiate the sum:
\[\frac{\partial}{\partial\eta(S)}\left(\sum_{T|i\not\in T}\left(\eta(T)+\eta(T\cup\{i\})\right)\frac{\eta(T\cup\{i\})}{\eta(T)}\right)=\begin{cases}
2\frac{\eta(S)}{\eta(S\setminus\{i\})}+1,&i\in S\\
-\left(\frac{\eta(S\cup\{i\})}{\eta(S)}\right)^2,&i\not\in S\enspace.
\end{cases}\]
The conditions of the lemma bound $\sum_S\eta(S)$ and $\eta([n])$, otherwise the values of $\eta$ are only constrained to be in $[0,1]$.  The derivative tells us that in an optimal assignment, for all sets $S$ that do not contain $i$, the ratio $\frac{\eta(S\cup\{i\})}{\eta(S)}$ is constant. Construct such an optimal assignment and define $\phi_i$ as satisfying
\[\frac{\eta(S\cup\{i\})}{\eta(S)}=\frac{1-\phi_i}{\phi_i}\]
for all $S$ that do not contain $i$. Note that this implies
\[\sum_{S|i\not\in S}\left(\eta(S)+\eta(S\cup\{i\})\right)\frac{\eta(S\cup\{i\})}{\eta(S)}\geq\beta\frac{1-\phi_i}{\phi_i}\]

For any set $S$ it follows that
\begin{align*}
\eta(S)&=\eta([n])\prod_{i\not\in S}\frac{\phi_i}{1-\phi_i}\\
\sum_{S\subseteq[n]}\eta(S)&=\eta([n])\sum_{S\subseteq[n]}\prod_{i\not\in S}\frac{\phi_i}{1-\phi_i}\\
\beta\prod_{i\in[n]}(1-\phi_i)&\geq\alpha\sum_{S\subseteq[n]}\prod_{i\in S}(1-\phi_i)\prod_{i\not\in S}\phi_i\\
\prod_{i\in[n]}(1-\phi_i)&\geq\frac\alpha\beta\enspace.
\end{align*}
This implies there is some $i$ such that $\phi_i\leq1-\left(\frac\alpha\beta\right)^{\frac1n}$, which implies the lemma.
\end{proof}

The next lemma is our main analysis lemma. We will ultimately use it to claim that our lower bound must hold almost everywhere for any $\mu$:

\begin{lemma}
\label{lem:babaioff-ae}
For any resampling distribution $\mu$ that satisfies the monotonicity condition, any bid $b$, and any bidder $i$,
\[\int_0^{b_i}\Pr_\mu(\hat b_i=b_i|u,b_{-i})=0\enspace a.e.\]
(i.e. for all but a set of $b$ with zero measure).
\end{lemma}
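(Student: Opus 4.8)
The plan is to reduce the statement to the elementary fact that a Borel probability measure on $\Re$ has at most countably many atoms, and then transport this through two applications of Tonelli's theorem; the monotonicity hypothesis is not actually needed. Fix the bidder $i$, and for a bid vector $b'$ write $g(b',v)$ for the mass the marginal distribution of $\hat b_i$ under $\mu_{b'}$ assigns to the single point $v$, i.e. $g(b',v)=\Pr_\mu(\hat b_i=v\mid b')$; equivalently $g(b',v)=F(b',v)-F(b',v^-)$ is the jump at $v$ of the distribution function $F(b',v)=\mu_{b'}(\{\hat b:\hat b_i\le v\})$. The quantity to be bounded is then $h(b):=\int_0^{b_i}g((u,b_{-i}),b_i)\,du$, and the goal is to show $h(b)=0$ for all $b$ outside a Lebesgue-null set.

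The first ingredient is the observation that for every fixed $b'$ the set $\{v:g(b',v)>0\}$ is countable, hence Lebesgue-null, since it is precisely the set of atoms of a probability measure on $\Re$. The second ingredient is joint Borel measurability of $g$ in $(b',v)$: for fixed $v$, the map $b'\mapsto F(b',v)=\mu_{b'}(\{\hat b:\hat b_i\le v\})$ is measurable by the standing assumption that $b\mapsto\mu_b$ is Borel measurable into the space of Borel probability measures on $\Re^n$, together with the standard fact that $\mu\mapsto\mu(A)$ is measurable for each fixed Borel $A$; while for fixed $b'$ the map $v\mapsto F(b',v)$ is a nondecreasing right-continuous distribution function. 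A routine dyadic-approximation argument (write $F(b',v)=\lim_k F\bigl(b',2^{-k}\lceil 2^kv\rceil\bigr)$, each approximant being a countable sum of products of a function of $b'$ alone and a function of $v$ alone, and similarly express $F(b',v^-)$ as such a limit) then gives joint measurability of $F$, hence of $g=F-F(\cdot,\cdot^-)\ge 0$.

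With these in place I would first fix $b_{-i}$. Joint measurability together with ``every vertical slice $\{v:g((u,b_{-i}),v)>0\}$ is null'' gives, by Tonelli, that the planar set $\{(u,v):g((u,b_{-i}),v)>0\}$ has two-dimensional Lebesgue measure zero; running Tonelli in the other order, for all $v$ outside a Lebesgue-null set the horizontal slice $\{u:g((u,b_{-i}),v)>0\}$ is null, so (using $g\ge 0$) $\int_0^v g((u,b_{-i}),v)\,du=0$ for a.e. $v$. Thus for each fixed $b_{-i}$ the section $\{b_i:h(b_i,b_{-i})>0\}$ is Lebesgue-null. Since $(u,b)\mapsto \mathbf 1[0\le u\le b_i]\,g((u,b_{-i}),b_i)$ is jointly measurable, $h$ is a measurable function of $b$, and a final application of Tonelli in the $b_{-i}$ variable upgrades ``for every $b_{-i}$, null section'' to ``$\{b:h(b)>0\}$ has $n$-dimensional Lebesgue measure zero'', which is exactly the assertion.

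The content of the lemma is entirely in the measure-theoretic bookkeeping rather than in any new idea, so the step I expect to require genuine care is establishing the joint measurability of $g$ (without which Tonelli cannot legitimately be invoked), and, relatedly, chaining the two Tonelli applications correctly so that a conclusion quantified as ``for each fixed $b_{-i}$, for a.e. $b_i$'' becomes the fully $n$-dimensional ``for a.e. $b$'' that the statement demands. I would cite the measure-theory background collected in Section~\ref{sec:a-analysis} for the routine facts used along the way.
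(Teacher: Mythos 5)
Your proof is correct and is essentially the paper's argument: both rest on the fact that the atoms of the resampled marginal of $\hat b_i$ form a Lebesgue-null set, turned into the almost-everywhere statement over $b$ via Tonelli (the paper phrases the atom fact as $\int_{u\in\Re}1_{\{u\}}(\hat b_i)\,du=0$ inside a swapped double integral rather than via countability of atoms). Your write-up is somewhat more explicit about the joint measurability needed to justify the Tonelli exchanges, and your remark that the monotonicity hypothesis is never used is consistent with the paper, whose proof likewise does not invoke it.
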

\begin{proof} Define the marginalized measure $\mub^i$ for a set of bids $B\subseteq\Re$ as
\[\mub^i(B)\equiv\mub(\{b\in\Re^n|b_i\in B\})\enspace.\]
Note that
\[\mub[u,b_{-i}]^i(\{b_i\})=\Pr_\mu(\hat b_i=b_i|u,b_{-i})\]
and therefore our task is to show that
\[\lim_{u\rightarrow{}^-b_i}\mub[u,b_{-i}]^i(\{b_i\})=0\enspace a.e.\]

Next we show that for any $b$ we can prove the desired limit is zero by proving that a related integral is zero. Assume that for some $b$ we have
\[\lim_{u\rightarrow{}^-b_i}\mub[u,b_{-i}]^i(\{b_i\})>0\enspace.\]
Then there exists a $\delta_b$ such that
\[\forall u\in(b_i-\delta_b,b_i):\quad \mub[u,b_{-i}]^i(\{b_i\})>0\enspace.\]
Since $\mub[u,b_{-i}]^i(\{b_i\})$ is nonnegative, this implies
\[\int_{u\in\Re}\mub[u,b_{-i}]^i(\{b_i\})du\geq\int_{u\in(b_i-\delta_b,b_i)}\mub[u,b_{-i}]^i(\{b_i\})du>0\enspace.\]
Taking the contrapositive, it follows that if the integral is zero at a bid $b$ then the limit is also zero:
\begin{equation}
\label{eqn:babaioff-ae-imp}
\int_{u\in\Re}\mub[u,b_{-i}]^i(\{b_i\})du=0\Rightarrow\lim_{u\rightarrow{}^-b_i}\mub[u,b_{-i}]^i(\{b_i\})=0\enspace.
\end{equation}
Henceforth, we will prove that $\int_{u\in\Re}\mub[u,b_{-i}]^i(\{b_i\})du=0$ almost everywhere.

We start with the integral
\begin{equation*}
\label{eqn:babaioff-ae-int}
\int_{b\in\Re}\int_{u\in\Re}\mub[u,b_{-i}]^i(\{b_i\})du db\enspace.
\end{equation*}
Manipulating the integral and noting that $\int_{u\in\Re} 1_{\{u\}}(\hat b_i)du=0$, we get
\begin{align*}
\int_{b\in\Re}\int_{u\in\Re}\mub[u,b_{-i}]^i(\{b_i\})du db&=\int_{b\in\Re}\int_{u\in\Re}\mub[b_i,b_{-i}]^i(\{u\})du db\\
&=\int_{b\in\Re^n}\int_{u\in\Re}\int_{\hat b_i\in \Re} 1_{\{u\}}(\hat b_i)d\mub^i du db\\
&=\int_{b\in\Re^n}\int_{\hat b_i\in \Re}\int_{u\in\Re} 1_{\{u\}}(\hat b_i)du d\mub^i db\\
&=\int_{b\in\Re^n}\int_{\hat b_i\in \Re} 0 d\mub^i db\\
&=0
\end{align*}
(where integral rearrangements may be justified by Tonelli's Theorem). By Fact~\ref{fact:ana-nnpfzeroae}, this implies
\[\int_{u\in\Re}\mub[u,b_{-i}]^i(\{b_i\})du=0\enspace\mbox{almost everywhere over $b$,}\]
which implies the desired result.
\end{proof}

\subsection{Welfare and Revenue Optimality}
\label{sec:a-sp-swrapx}

Under mild assumptions, one can show that optimizing precision is equivalent to optimizing the social welfare approximation or the revenue approximation. We include only the worst-case optimality proofs; the variance proof is similar, applying ideas from Lemma~\ref{lem:babaioff-vargamma}.

The optimality proof is divided into two steps:
\begin{enumerate}
\item {\em Lemmas~\ref{lem:babaioff-wapx} and~\ref{lem:babaioff-rapx}:} Show that the welfare/revenue approximation of a resampling distribution $\mu$ is essentially
\[\inf_b\min_{i\in[n]}\macroCPr{\hat b_i\geq b_i\mbox{ and }\hat b_{-i}=b_{-i}}{b}\enspace.\]
The welfare and revenue lemmas use different techniques to give a lower bound on the approximation; however, they use the same ``bad'' allocation function.
\item {\em Lemma~\ref{lem:babaioff-relwrgamma} and finally Lemma~\ref{lem:a-sp-swrapx}:} Show that a distribution that optimizes the worst-case normalized payment with respect to
\[\min_{i\in[n]}\macroCPr{\hat b_i\geq b_i\mbox{ and }\hat b_{-i}=b_{-i}}{b}\geq\alpha\]
must take $\Pr(\hat b\not\leq b|b)=0$ and, therefore
\[\min_{i\in[n]}\macroCPr{\hat b_i\geq b_i\mbox{ and }\hat b_{-i}=b_{-i}}{b}=\macroCPr{\hat b=b}{b}\]
implying that it is sufficient to optimize with respect to $\Pr(\hat b=b|b)\geq(1-\gamma)^n=\alpha$.
\end{enumerate}

The following lemmas characterize the welfare and revenue approximations of the reduction generated by a resampling distribution $\mu$:

\begin{lemma}\label{lem:babaioff-wapx}
The welfare approximation of a resampling distribution $\mu$ for a bid $b$ is
\[\alpha=\min_{i\in[n]}\macroCPr{\hat b_i\geq b_i\mbox{ and }\hat b_{-i}=b_{-i}}{b}\enspace.\]
\end{lemma}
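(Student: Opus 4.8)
The plan is to identify the welfare approximation at a fixed bid $b$ with the minimum, over all monotone bounded single-parameter $\al$, of the ratio between the single-call welfare $\ex_{\hat b\sim\mub}[\sum_i b_i\al_i(\hat b)]$ and the original welfare $\sum_i b_i\al_i(b)$, and to sandwich this minimum between two matching quantities: a lower bound valid for \emph{every} admissible $\al$, and one explicit admissible $\al$ that meets it with equality.

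For the lower bound, fix $\al$ and for each bidder $i$ let $E_i$ denote the set of vectors $\hat b$ with $\hat b_i\geq b_i$ and $\hat b_{-i}=b_{-i}$. On $E_i$ the resampled vector is $(\hat b_i,b_{-i})$ with $\hat b_i\geq b_i$, so (since $\al_i$ is nondecreasing in its $i$-th argument) $\al_i(\hat b)=\al_i(\hat b_i,b_{-i})\geq\al_i(b_i,b_{-i})=\al_i(b)$; using that every $\al_i\geq0$ and, for positive types, every $b_i\geq0$,
\[
\ex_{\hat b}\Big[\sum_i b_i\al_i(\hat b)\Big]=\sum_i b_i\,\ex_{\hat b}[\al_i(\hat b)]\geq\sum_i b_i\,\ex_{\hat b}[\al_i(\hat b)\,1_{E_i}]\geq\sum_i b_i\al_i(b)\,\macroCPr{\hat b\in E_i}{b}\geq\Big(\min_j\macroCPr{\hat b\in E_j}{b}\Big)\sum_i b_i\al_i(b),
\]
so the ratio is at least $\min_j\macroCPr{\hat b_j\geq b_j\text{ and }\hat b_{-j}=b_{-j}}{b}$. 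For tightness, let $i^{\star}$ attain this minimum and take the rule with $\al_j\equiv0$ for $j\neq i^{\star}$ and $\al_{i^{\star}}(\hat b)=1$ exactly on $E_{i^{\star}}$ and $0$ elsewhere; this is bounded, and holding $\hat b_{-i^{\star}}$ fixed it is either identically zero (when $\hat b_{-i^{\star}}\neq b_{-i^{\star}}$) or the $0/1$ step that jumps up at $\hat b_{i^{\star}}=b_{i^{\star}}$, hence monotone. Since $\al_{i^{\star}}(b)=1$, the original welfare at $b$ is $b_{i^{\star}}$ and the single-call welfare is $b_{i^{\star}}\macroCPr{\hat b\in E_{i^{\star}}}{b}$, so the ratio equals $\macroCPr{\hat b\in E_{i^{\star}}}{b}=\min_i\macroCPr{\hat b_i\geq b_i\text{ and }\hat b_{-i}=b_{-i}}{b}$, matching the lower bound.

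The main obstacle, such as it is, is purely a matter of care with the extremal rule: one must verify it genuinely lies in the admissible class — monotone (the $0/1$ step observation) and Borel measurable (the set $E_{i^{\star}}$ is Borel) — and one must be careful at the boundary, using $\hat b_i\geq b_i$ rather than a strict inequality in $E_i$ so that both $\al_{i^{\star}}(b)=1$ and the monotonicity inequality $\al_i(\hat b)\geq\al_i(b)$ hold at the point $b_i$ itself. The degenerate case $\sum_i b_i\al_i(b)=0$ is handled by the convention governing $\alphaw$, and with positive types the extremal example has strictly positive welfare $b_{i^{\star}}$ at $b$, so the ratio is well defined. The companion variance statement alluded to in the surrounding text follows by running exactly the same argument on $\ex[\al_{i^{\star}}(\hat b)^2]$ in place of the mean.
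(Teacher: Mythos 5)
Your proof is correct and follows essentially the same route as the paper's: the same sets $\{\hat b:\hat b_i\geq b_i,\ \hat b_{-i}=b_{-i}\}$, the same monotonicity-based lower bound on $\ex_{\hat b}[\sum_i b_i\al_i(\hat b)]$, and the same single-bidder $0/1$ indicator allocation for tightness (your choice of $i^{\star}=\argmin_i\Pr(\hat b\in E_i\mid b)$, without the $b_i$ weight the paper writes in its $\argmin$, is in fact the right one for making the welfare ratio hit the minimum). The extra checks of monotonicity, measurability, and the degenerate zero-welfare case are fine but not a different argument.
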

\begin{proof} For a bid $b$, define the set $B^i\subset\Re^n_+$ as
$$B^i=\{\hat b|\hat b_i\geq b_i\mbox{ and }\hat b_{-i}=b_{-i}\}\enspace.$$
Monotonicity of $\al$ requires that for all $u\geq b_i$,
$$\al_i(u,b_{-i})\geq\al_i(b)\enspace.$$
Thus, the allocation received by player $i$ under $\asc$ is at least
$$\macroCPr{\hat b_i\geq b_i\mbox{ and }\hat b_{-i}=b_{-i}}{b}\al_i(b)\TCSdd=\macroCPr{\hat b\in B^i}{b}\al_i(b)$$
and thus the social welfare is at least
\begin{eqnarray*}
\sum_{i\in[n]}b_i\asci(b)&\geq&\sum_{i\in[n]}b_i\macroCPr{\hat b\in B^i}{b}\asci(b)\\
&\geq&\min_{i\in[n]}\left(\macroCPr{\hat b\in B^i}{b}\right)\sum_{i\in[n]}b_i\asci(b)\enspace.
\end{eqnarray*}
This lower bound is tightin the following allocation rule
\[\al_i(\hat b)=\begin{cases}
1& i=j\mbox{ and }\hat b\in B^{i}\\
0& otherwise
\end{cases}\]
when $j=\argmin_{i\in[n]}b_i\Pr(\hat b\in B^i|b)$.
\end{proof}

\begin{lemma}\label{lem:babaioff-rapx} The revenue approximation $\alphar$ of a reduction given by a resampling distribution $\mu$ is bounded from below by the precision
\[\alphap=\inf_b\macroCPr{\hat b= b}{b}\leq\alphar\]
and above by
\[\alphar\leq\inf_{b}\min_{i\in[n]}\macroCPr{\hat b_i\geq b_i\wedge\hat b_{-i}=b_{-i}}{b}\enspace.\]
\end{lemma}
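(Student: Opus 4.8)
The plan is to prove the two bounds by independent arguments: the upper bound by exhibiting, for each bid and each coordinate, one bad allocation rule (the same family used in Lemma~\ref{lem:babaioff-wapx}), and the lower bound by showing that the single-call allocation reacts ``strongly enough'' to each agent's report that its Archer--Tardos revenue cannot drop below $\alphap$ times that of $\al$. For the upper bound, fix a bid vector $b$ and an index $j$, put $B^j=\{\hat b:\hat b_j\ge b_j\ \text{and}\ \hat b_{-j}=b_{-j}\}$, and take $\al_j(\hat b)=1$ exactly on $B^j$, with $\al_i\equiv 0$ for $i\ne j$; this is bounded and monotone, since on $\hat b_{-j}=b_{-j}$ the map $\hat b_j\mapsto\al_j(\hat b)$ is the threshold indicator $1_{\hat b_j\ge b_j}$ and is identically $0$ otherwise. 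Its normalized Archer--Tardos revenue at $b$ is $p_j(b)=b_j\al_j(b)-\int_0^{b_j}\al_j(u,b_{-j})\,du=b_j$, because $(u,b_{-j})\notin B^j$ when $u<b_j$, and $p_i(b)=0$ for $i\ne j$. The single-call revenue at $b$ is $\ex[\psc_j]=b_j\,\ex_{\mu_b}[\al_j(\hat b)]-\int_0^{b_j}\ex_{\mu_{u,b_{-j}}}[\al_j(\hat b)]\,du\le b_j\,\macroCPr{\hat b\in B^j}{b}$, since $\al_j\ge0$ makes the subtracted term nonnegative, and $\ex[\psc_i]=0$ for $i\ne j$. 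Thus this instance has revenue ratio at most $\macroCPr{\hat b_j\ge b_j\wedge\hat b_{-j}=b_{-j}}{b}$; taking the infimum over $b$ and the minimum over $j$ yields $\alphar\le\inf_b\min_i\macroCPr{\hat b_i\ge b_i\wedge\hat b_{-i}=b_{-i}}{b}$.

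For the lower bound, write $a_i(u)=\ex_{\mu_{u,b_{-i}}}[\al_i(\hat b)]$ for the expected allocation to $i$ when she reports $u$; by the monotonicity-in-expectation condition $a_i$ is nondecreasing, and $\al_i(\cdot,b_{-i})$ is nondecreasing as well. Integration by parts rewrites the two revenues as $p_i(b)=\int_0^{b_i}u\,d\al_i(u,b_{-i})$ and $\ex[\psc_i]=\int_0^{b_i}u\,da_i(u)$, so it suffices to prove that, on $[0,b_i]$, the nonnegative Stieltjes measure $da_i$ dominates $\alphap\,d\al_i(\cdot,b_{-i})$; integrating $u\,d(\cdot)$ and summing over $i$ then gives $\sum_i\ex[\psc_i]\ge\alphap\sum_ip_i(b)$, i.e. $\alphar\ge\alphap$. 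The factor $\alphap$ is the natural one: the same bound also follows by splitting $\mu_x=\alphap\delta_x+(1-\alphap)\nu_x$ (legitimate because $\macroCPr{\hat b=x}{x}\ge\alphap$ for every $x$), under which $\ex[\psc_i]$ separates into $\alphap p_i(b)$ plus the Archer--Tardos revenue of the residual allocation $\ex_{\nu_b}[\al_i(\hat b)]$. By a standard reduction to threshold-type allocations the required domination amounts to the statement that, as $i$'s report $u$ crosses a level $t\le b_i$, the probability $\macroCPr{\hat b_i\ge t}{u,b_{-i}}$ jumps by at least $\alphap$ at $u=t$: once $i$ reports at least $t$, the precision atom puts mass at least $\alphap$ on $\{\hat b_i=t,\ \hat b_{-i}=b_{-i}\}$, while Lemma~\ref{lem:a-sp-char-mono} controls the monotone contribution of the remaining mass and Lemma~\ref{lem:babaioff-ae} shows that, for almost every $b$, $i$'s bid is resampled strictly above its true value only on a null set of reports.

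The main obstacle is precisely this domination step: the split in the previous paragraph only guarantees $\ex_{\nu_b}[\al_i(\hat b)]\ge0$, which is \emph{not} sufficient, because pointwise domination of monotone allocations does not imply domination of their Archer--Tardos revenues -- one genuinely needs that the single-call allocation's marginal increase in $b_i$ majorizes $\alphap$ times that of $\al_i$. Turning the threshold heuristic into a proof requires handling atoms in the marginal law of $\hat b_i$ and the possibility that $\mu$ sometimes raises bids, which is exactly where the measure theory behind Lemma~\ref{lem:babaioff-ae} -- and the ``almost everywhere in $b$'' qualification it forces -- is needed; under the nice-distribution assumption these complications disappear and the bound holds for every $b$.
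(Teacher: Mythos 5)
Your upper bound is exactly the paper's argument: the same threshold allocation supported on $B^j=\{\hat b:\hat b_j\ge b_j,\ \hat b_{-j}=b_{-j}\}$ with all other coordinates zero, the observation that its Archer--Tardos revenue at $b$ is $b_j$ while the single-call revenue is at most $b_j\macroCPr{\hat b\in B^j}{b}$, and then the minimum over $j$ and infimum over $b$. That half is correct and needs no further comment.

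The lower bound is where the proposal has a genuine gap. The paper's proof is precisely the decomposition you set aside: write $\mub=\alphap\delta_b+(1-\alphap)\nu_b$, use linearity of the Archer--Tardos formula in the (expected) allocation to split the single-call revenue at $b$ into $\alphap$ times the revenue of $\al$ plus $(1-\alphap)$ times the Archer--Tardos revenue of the residual allocation $\al^{rs}_i(u,b_{-i})=\ex_{\nu_{u,b_{-i}}}[\al_i(\hat b)]$, and then argue that for positive types this residual revenue is nonnegative. You correctly observe that pointwise nonnegativity of $\al^{rs}_i$ is not enough, but the replacement you propose --- the measure domination $da_i\ge\alphap\, d\al_i(\cdot,b_{-i})$ on $[0,b_i]$ --- is, after your own integration by parts, exactly the statement that the residual expected allocation is nondecreasing in $i$'s report, i.e.\ a per-bidder strengthening of the very step the paper asserts; you have relocated the key step, not resolved it, and you explicitly leave it unproven (``the main obstacle''). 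Moreover the heuristic you offer for it is off in two ways: when $i$ reports $u\ge t$ the precision atom sits at $\hat b_i=u$, so it witnesses $\{\hat b_i\ge t\}$ rather than $\{\hat b_i=t\}$, and an atom of mass $\alphap$ above level $t$ for $u\ge t$ does not by itself force $u\mapsto\macroCPr{\hat b_i\ge t\wedge\hat b_{-i}=b_{-i}}{u,b_{-i}}$ to increase by $\alphap$ across $u=t$, since for $u<t$ the residual may already place mass above $t$; ruling that out is exactly where the absolute-continuity/monotonicity machinery (Lemma~\ref{lem:a-sp-char-mono}, Lemma~\ref{lem:babaioff-ae}) would have to be deployed, and you do not carry that out. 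Finally, even granting the route, appealing to Lemma~\ref{lem:babaioff-ae} only yields the inequality for almost every $b$, whereas the statement $\alphap=\inf_b\macroCPr{\hat b=b}{b}\le\alphar$ needs the per-bid bound for every $b$, since $\alphar$ is an infimum over all $\al$ and $b$. So the lower-bound half is a plan with its central step missing, while the paper closes that step (tersely) by treating the residual resampling as an allocation rule in its own right whose truthful revenue is nonnegative.
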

\begin{proof} To see that the precision $\alphap=\inf_b\macroCPr{\hat b= b}{b}$ is a lower bound on the revenue approximation, consider decomposing the mechanism produced by the reduction as follows: with probability $\alphap$, the mechanism uses the original allocation function, and with probability $1-\alphap$ it chooses an allocation function $\al^{rs}$ that resamples bids more frequently. Since prices are linear, the final expected price will be the weighted sum of the truthful prices for $\al$ and the truthful prices for $\al^{rs}$.

For positive types, revenue from both $\al$ and $\al^{rs}$ will be nonnegative, and the revenue of the resulting mechanism will be the weighted sum of the revenues from $\al$ and $\al^{rs}$. Thus, since $\al$ is chosen with probability $\alphap$, the revenue of their combination will be at least $\alphap$ times the revenue from $\al$.

Next we use the allocation function from Lemma~\ref{lem:babaioff-wapx} to give an upper bound. For clarity, we assume that the infimum in the bound of $\alpha$ is attained by some $b$. (The proof when the infimum is not attained is messier but fundamentally the same.) Let $b$ be a bid such that
$$\min_{i\in[n]}\macroCPr{\hat b_i\geq b_i\wedge\hat b_{-i}=b_{-i}}{b}=\alpha\enspace.$$
 Again, let $B^i\subset\Re^n_+$ be the set
$$B^i=\{\hat b|\hat b_i\geq b_i\mbox{ and }\hat b_{-i}=b_{-i}\}\enspace,$$
and consider following allocation function, where $j=\argmin_{i\in[n]} b_i\Pr(\hat b\in B^i|b)$:
$$\al_i(\hat b)=\begin{cases}
1& i=j\mbox{ and }\hat b\in B^{i}\\
0& otherwise.
\end{cases}$$
When this allocation function is implemented directly with the Archer-Tardos pricing rule, the revenue when bidders say $b$ will be
$$\sum_{i\in[n]}b_iA_i(b)-\int_{-\infty}^{b_i}A_i(u,b_{-i})du=b_j\enspace.$$
Now, for any single call reduction, the expected revenue will be
\begin{eqnarray*}
\sum_{i\in[n]}b_i\ex[A_i^{sc}(b)]-\int_{-\infty}^{b_i}\ex[A_i^{sc}(u,b_{-i})]du\TCSeas
&\leq&b_j\ex[A_j^{sc}(b)]\\
&=&b_j\macroCPr{\hat b\in B^j}{b}\enspace.
\end{eqnarray*}
Thus, the revenue approximation when players bid $b$ is at most $\Pr(\hat b\in B^{j}| b)$.
\end{proof}

\begin{lemma} \label{lem:babaioff-relwrgamma} The worst-case bid-normalized payment for a resampling distribution $\mu$ is at least
$$\sup_{\hat b} \left|\frac{\rhob(\hat b)}{b_i}\right|\geq\max\left(\frac{1-\gamma^{(=)}}{\gamma^{(=)}},\frac{1-\gamma^{(>)}}{\gamma^{(>)}}\right)\enspace a.e.$$
where
$$\gamma^{(=)}=1-\left(\frac{\Pr(\hat b=b|b)}{\Pr(\hat b\leq b|b)}\right)^{\frac1n}$$
and
$$\gamma^{(>)}=1-\left(\frac{\min_{i\in[n]}\Pr(\hat b_i> b_i\wedge\hat b_{-i}=b_{-i}|b)}{\frac1nPr(\hat b\not\leq b|b)}\right)^{\frac1{n-1}}\enspace.$$
The bound holds everywhere under the nice distribution assumption.
\end{lemma}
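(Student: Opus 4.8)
\emph{Proof plan.} The idea is to split the worst-case normalized payment $\sup_{i,\hat b}\left|\rhob(\hat b)/b_i\right|$ into a contribution coming from ``downward'' resampling events ($\hat b\leq b$), which yields the $\gamma^{(=)}$ bound, and one coming from ``upward'' resampling events ($\hat b\not\leq b$), which yields the $\gamma^{(>)}$ bound; the lemma is the larger of the two. Each contribution follows the two-step template of Lemma~\ref{lem:babaioff-relgamma}: Lemma~\ref{lem:babaioff-monoratio}, applied in the spirit of Corollary~\ref{cor:babaioff-piratio}, turns $\sup_{\hat b}\left|\rhob(\hat b)/b_j\right|$ into a lower bound by a ratio of resampling probabilities, and then Lemma~\ref{lem:babaioff-hfunc} turns that ratio into the explicit $\tfrac{1-\gamma}{\gamma}$ form. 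The ``a.e.'' qualifier is present only because Lemma~\ref{lem:babaioff-ae} (equivalently Corollary~\ref{cor:babaioff-zae}) is what discards the limit/integral correction terms; under the nice-distribution hypothesis those terms vanish for every $b$, giving the everywhere version.

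For the $\gamma^{(=)}$ term I would re-run the argument of Lemma~\ref{lem:babaioff-relgamma} but keep the \emph{sharp} denominator $\beta=\macroCPr{\hat b\leq b}{b}$ in place of the crude $\beta=1$ used there. Concretely, Corollary~\ref{cor:babaioff-piratio} together with Corollary~\ref{cor:babaioff-zae} (killing the $z^\mu(M,i,b)$ terms for the finitely many pairs $(M,i)$ simultaneously) gives $\sup_{i,\hat b}\left|\rhob(\hat b)/b_i\right|\geq\max_{M,\,i\notin M}\pi^\mu(M\cup\{i\},b)/\pi^\mu(M,b)$ for a.e.\ $b$. Applying Lemma~\ref{lem:babaioff-hfunc} with $\eta(S)=\pi^\mu(S,b)$, $\alpha=\pi^\mu([n],b)=\macroCPr{\hat b=b}{b}$, and $\beta=\sum_S\pi^\mu(S,b)=\macroCPr{\hat b\leq b}{b}$ yields $\max_{M,\,i\notin M}\pi^\mu(M\cup\{i\},b)/\pi^\mu(M,b)\geq\frac{1-\gamma^{(=)}}{\gamma^{(=)}}$ with $\gamma^{(=)}=1-(\alpha/\beta)^{1/n}$, exactly the first term.

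For the $\gamma^{(>)}$ term I would first note that the events $E_i=\{\hat b_i>b_i\ \text{and}\ \hat b_{-i}\leq b_{-i}\}$ are pairwise disjoint subsets of $\{\hat b\not\leq b\}$, so $\sum_i\Pr(E_i\mid b)\leq\macroCPr{\hat b\not\leq b}{b}$ and hence some index $i^*$ satisfies $\Pr(E_{i^*}\mid b)\leq\frac1n\macroCPr{\hat b\not\leq b}{b}$. Fixing this $i^*$, define for $S\subseteq[n]\setminus\{i^*\}$ the quantity $\eta(S)=\Pr\bigl(\hat b_{i^*}>b_{i^*},\ \hat b_k=b_k\ (k\in S),\ \hat b_k<b_k\ (k\in[n]\setminus(S\cup\{i^*\}))\mid b\bigr)$. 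Then $\sum_S\eta(S)=\Pr(E_{i^*}\mid b)\leq\frac1n\macroCPr{\hat b\not\leq b}{b}=:\beta$ and $\eta([n]\setminus\{i^*\})=\Pr(\hat b_{i^*}>b_{i^*}\wedge\hat b_{-i^*}=b_{-i^*}\mid b)\geq\min_i\Pr(\hat b_i>b_i\wedge\hat b_{-i}=b_{-i}\mid b)=:\alpha$. Running Lemma~\ref{lem:babaioff-hfunc} over the $(n{-}1)$-dimensional cube of subsets of $[n]\setminus\{i^*\}$ produces a pair $(S,j)$ with $j\notin S\cup\{i^*\}$ and $\eta(S\cup\{j\})/\eta(S)\geq\frac{1-\gamma^{(>)}}{\gamma^{(>)}}$, where $\gamma^{(>)}=1-(\alpha/\beta)^{1/(n-1)}$. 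Finally, I would apply Lemma~\ref{lem:babaioff-monoratio} with player $j$, bid $b$, and $B_{-j}=\{\hat b_{-j}:\hat b_{i^*}>b_{i^*},\ \hat b_k=b_k\ (k\in S),\ \hat b_k<b_k\ (k\in[n]\setminus(S\cup\{j,i^*\}))\}$: its ``$=$'' event has probability $\eta(S\cup\{j\})$, its ``$<$'' event has probability $\eta(S)$, and its integral correction is at most $\frac1{b_j}\int_0^{b_j}\Pr(\hat b_j=b_j\mid u,b_{-j})\,du$, which is $0$ a.e.\ by Lemma~\ref{lem:babaioff-ae}; hence $\sup_{\hat b}\left|\rhob(\hat b)/b_j\right|\geq\eta(S\cup\{j\})/\eta(S)\geq\frac{1-\gamma^{(>)}}{\gamma^{(>)}}$ for a.e.\ $b$.

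The step I expect to be the main obstacle is this last ($\gamma^{(>)}$) case: one must choose the disjoint family $\{E_i\}$ and the subset-indexed family $\eta$ so that their totals line up \emph{exactly} with the numerator $\min_i\Pr(\hat b_i>b_i\wedge\hat b_{-i}=b_{-i}\mid b)$ and the denominator $\frac1n\macroCPr{\hat b\not\leq b}{b}$ appearing in $\gamma^{(>)}$, and one must verify that the chosen $B_{-j}$ is a legitimate input to Lemma~\ref{lem:babaioff-monoratio}---here the relevant leftward-closed set is $\{\hat b_j\leq b_j\}\times B_{-j}$, so Lemma~\ref{lem:a-sp-char-mono} and the monotonicity condition supply the needed weakly-decreasing-in-$b_j$ behaviour. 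The remainder is bookkeeping: every ``a.e.'' exceptional set comes from finitely many invocations of Lemma~\ref{lem:babaioff-ae} (one per bidder) and of Corollary~\ref{cor:babaioff-zae} (one per pair $(M,i)$), so their union is still null, and under the nice-distribution assumption each invocation holds for every $b$. Taking the maximum of the $\gamma^{(=)}$ and $\gamma^{(>)}$ bounds completes the argument.
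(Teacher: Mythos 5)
Your proposal is correct and follows essentially the same route as the paper: the $\gamma^{(=)}$ bound via Corollary~\ref{cor:babaioff-piratio} and Lemma~\ref{lem:babaioff-hfunc} with $\beta=\Pr(\hat b\leq b\,|\,b)$, and the $\gamma^{(>)}$ bound via the disjoint events $\{\hat b_i>b_i\wedge\hat b_{-i}\leq b_{-i}\}$, Lemma~\ref{lem:babaioff-hfunc} on the $(n-1)$-cube, and Lemma~\ref{lem:babaioff-monoratio}, with the a.e.\ caveat coming from Lemma~\ref{lem:babaioff-ae}. The only difference is cosmetic: you apply the pigeonhole choice of the exceeding coordinate before invoking the ratio lemmas, whereas the paper defines $\nu^\mu(M,j,b)$, derives $\gamma^{(j)}$ for each $j$, and selects $j$ at the end.
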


\begin{proof} For the sake of clarity, we assume the nice distribution assumption. The general case follows naturally by carrying extra terms through the analysis.

Corollary~\ref{cor:babaioff-piratio} says that for any $M\subset[n]$ and $i\not\in M$,
\[\sup_{\hat b} \left|\frac{\rhob(\hat b)}{b_i}\right|\geq\frac{\pi^\mu(M\cup\{i\},b)}{\pi^\mu(M,b)}\enspace.\]
Since $\sum_{M\subseteq[n]}\pi(M,\bar b)=\macroCPr{\hat b\leq\bar b}{\bar b}$, applying Lemma~\ref{lem:babaioff-hfunc} with $\eta(S)=\pi^\mu(M,b)$ implies that
\[\max_{M\subseteq[n]}\frac{\pi^\mu(M\cup\{i\},b)}{\pi^\mu(M,b)}\geq\frac{1-\gamma^{(=)}}{\gamma^{(=)}}\]
where $\gamma^{(=)}$ is
\[\gamma^{(=)}=1-\left(\frac{\Pr(\hat b=b|b)}{\Pr(\hat b\leq\bar b|\bar b)}\right)^{\frac1n}\enspace.\]
Thus, 
\[\sup_{\hat b} \left|\frac{\rhob(\hat b)}{b_i}\right|\geq\frac{1-\gamma^{(=)}}{\gamma^{(=)}}\enspace.\]

Next, define $\nu^\mu(M,j,b)$ as the probability that $\hat b_j>b_j$ while bids $i\neq j$ obey $M$ (that is, $\hat b_i=b_i$ for $i\in M$ and $\hat b_i<b_i$ if $i\not\in M$). Lemma~\ref{lem:babaioff-monoratio} implies that for all $j$, $M\subseteq[n]\setminus\{j\}$ and $i\not\in M\cup\{j\}$,
\[\sup_{\hat b} \left|\frac{\rhob(\hat b)}{b_i}\right|\geq\frac{\nu^\mu(M\cup\{i\},j,b)}{\nu^\mu(M,j,b)}\enspace a.e.\]
For any particular $j$, applying Lemma~\ref{lem:babaioff-hfunc} with $\eta(S)=\nu^\mu(S,j,b)$ as above implies that
\[\max_{M\subset[n]\setminus\{j\}}\frac{\nu^\mu(M\cup\{i\},j,b)}{\nu^\mu(M,j,b)}\geq\frac{1-\gamma^{(j)}}{\gamma^{(j)}}\]
where $\gamma^{(j)}$ is
\[\gamma^{(j)}=1-\left(\frac{\Pr(\hat b_j> b_j\wedge\hat b_{-j}=b_{-j}|b)}{\Pr(\hat b_j> b_j\wedge\hat b_{-j}\leq b_{-j}|b)}\right)^{\frac1{n-1}}\enspace.\]

Since the probabilities $\Pr(\hat b_j> b_j\wedge\hat b_{-j}\leq b_{-j}|b)$ are disjoint, there must be some $j$ such that
\[\left(1-\gamma^{(j)}\right)^{n-1}\geq\frac{\min_{i\in[n]}\Pr(\hat b_i> b_i\wedge\hat b_{-i}=b_{-i}|b)}{\frac1n\Pr(\hat b\not\leq b|b)}\enspace.\]
\[\frac{\Pr(\hat b_j> b_j\wedge\hat b_{-j}=b_{-j}|b)}{\Pr(\hat b_j> b_j\wedge\hat b_{-j}\leq b_{-j}|b)}\geq\frac{\min_{i\in[n]}\Pr(\hat b_i> b_i\wedge\hat b_{-i}=b_{-i}|b)}{\frac1n\Pr(\hat b\not\leq b|b)}\enspace.\]
Thus, it must be that
\[\max_{j,M\subset[n]\setminus\{j\},i\not\in M\cup\{j\}}\frac{\nu^\mu(M\cup\{i\},j,b)}{\nu^\mu(M,j,b)}\geq\frac{1-\gamma^{(>)}}{\gamma^{(>)}}\]
where $\gamma^{(>)}$ satisfies
\[\gamma^{(>)}=1-\left(\frac{\min_{i\in[n]}\Pr(\hat b_i> b_i\wedge\hat b_{-i}=b_{-i}|b)}{\frac1n\Pr(\hat b\not\leq b|b)}\right)^{\frac1{n-1}}\enspace.\]
Consequently,
\[\sup_{\hat b} \left|\frac{\rhob(\hat b)}{b_i}\right|\geq\frac{1-\gamma^{(>)}}{\gamma^{(>)}}\]
as desired.
\end{proof}

We now have the tools to prove that a resampling distribution that optimizes payments subject to a precision bound also optimizes them subject to a welfare approximation or revenue approximation bound:
\begin{proof}[of Lemma~\ref{lem:a-sp-swrapx}] For clarity, we argue under the nice distribution assumption. Subject to $\min_{i\in[n]}\Pr_\mu(\hat b_i> b_i\wedge\hat b_{-i}=b_{-i}|b)\geq\alpha>2^{-n}$, the BKS transformation achieves
\[\sup_{\hat b} \left|\frac{\rhob[\fbks](\hat b)}{b_i}\right|=\frac{\alpha^{\frac1n}}{1-\alpha^{\frac1n}}\enspace,\]
so any optimal distribution must do at least as well.

Let $\mu$ be some resampling distribution. If $\Pr_\mu(\hat b\not\leq b|b)\neq 0$, either
$$\frac{\Pr_\mu(\hat b=b|b)}{\Pr_\mu(\hat b\leq\bar b|\bar b)}>\alpha\enspace,$$
or
$$\frac{\min_{i\in[n]}\Pr_\mu(\hat b_i> b_i\wedge\hat b_{-i}=b_{-i}|b)}{\Pr_\mu(\hat b\not\leq b|b)}\geq \alpha\enspace.$$

In the first case, applying Lemma~\ref{lem:babaioff-relwrgamma} gives
$$\sup_{\hat b} \left|\frac{\rhob(\hat b)}{b_i}\right|\geq\frac{1-\gamma^{(=)}}{\gamma^{(=)}}>\frac{\alpha^{\frac1n}}{1-\alpha^{\frac1n}}=\sup_{\hat b} \left|\frac{\rhob[\fbks](\hat b)}{b_i}\right|$$
and therefore $\mu$ cannot be optimal.

In the second case, Lemma~\ref{lem:babaioff-relwrgamma} and the assumption that $\alpha>2^{-n}\geq\frac1{n^n}$ gives
\begin{eqnarray*}
 \gamma^{(>)}&\leq&1-(n\alpha)^{\frac1{n-1}}\\
&<&1-(\alpha^{-\frac1n}\alpha)^{\frac1{n-1}}\\
&=&1-\alpha^{\frac1{n}}\enspace.
\end{eqnarray*}
Thus, $\gamma^{(>)}<1-\alpha^{\frac1{n}}$, so
$$\sup_{\hat b} \left|\frac{\rhob(\hat b)}{b_i}\right|\geq\frac{1-\gamma^{(>)}}{\gamma^{(>)}}>\frac{\alpha^{\frac1n}}{1-\alpha^{\frac1n}}=\sup_{\hat b} \left|\frac{\rhob[\fbks](\hat b)}{b_i}\right|$$
so again $\mu$ cannot be optimal.

It follows that any optimal distribution $\mu$ must have $\Pr(\hat b\not\leq b|b)=0$ and, therefore
$$\min_{i\in[n]}\macroCPr{\hat b_i> b_i\wedge\hat b_{-i}=b_{-i}}{b}=\macroCPr{\hat b=b}{b}\enspace.$$
Thus, a distribution which wishes to optimize the worst-case normalized payment subject to $\Pr(\hat b=b|b)\geq\alpha$ will also optimize payments subject to $\min_{i\in[n]}\Pr(\hat b_i> b_i\wedge\hat b_{-i}=b_{-i}|b)\geq\alpha$, and will have $\Pr(\hat b=b|b)=\min_{i\in[n]}\Pr(\hat b_i> b_i\wedge\hat b_{-i}=b_{-i}|b)$.
\end{proof}

\section{Analysis Definitions, Facts, and Lemmas}
\label{sec:a-analysis}

This section provides a limited background on analysis concepts.

\subsection{Measures and Integrals}
We begin with various possible set of axioms a collection of sets may satisfy,
and their technical names. 

\begin{definition}[$\sigma$-algebra]
The $\sigma$-algebra over a set $U$ is a non-empty collection $\Sigma$ of
subsets of $U$ that is closed under complementation and countable union of its
members. The pair $(U,\Sigma)$ is called a measurable space. 
\end{definition}
\begin{definition}[Generated $\sigma$-algebra]
Given a set $U$ and a collection of subsets $F$ of $U$, there is a unique
smallest $\sigma$-algebra over $U$ containing all the elements of $F$. This
$\sigma$-algebra is denoted by $\sigma(F)$ and is called as the $\sigma$-algebra
generated by $F$.
\end{definition}
\begin{definition}[Borel $\sigma$-algebra]
The Borel $\sigma$-algebra $\mathcal{B}(U)$ of a metric space 
$U$ is the $\sigma$-algebra generated by the collection of all open sets of $U$.
\end{definition}

\begin{definition}[Measurable sets]
Once we fix a measurable space $(U,\Sigma)$, the sets $X \in \Sigma$
are called measurable sets.
\end{definition}

\begin{definition}[Measurable functions]
Given two measurable spaces $(U,\Sigma)$ and $(U',\Sigma')$, a function
$f:U\rightarrow U'$ is measurable if for each $X' \in \Sigma'$, $f^{-1}(X') \in \Sigma$. 
\end{definition}
We are now ready for the definition of a measure. 
\begin{definition}[Measure] Given a measurable space $(U,\Sigma)$, we equip it
with a {\em measure} $\nu$, which is function $\nu:\Sigma \rightarrow [0,\infty]$ that
satisfies 
\begin{enumerate}
\item $\nu(\emptyset)=0$ 
\item Countable additivity, i.e. for all countable
sequences $\{X_i\}_{i\in Z}$ of pairwise-disjoint sets in $\Sigma$, $\nu(\cup_{i
\in Z} X_i)=\sum_{i\in Z}\nu(X_i)$.
\end{enumerate}
A measure $\nu$ is said to be {\em finite} if $\nu(U)$ is finite.
\end{definition}
\begin{definition}[Probability measure]
A measure is a {\em probability measure} if $\nu(U)=1$.
\end{definition}
\begin{definition}[Signed measure]
A {\em signed measure} is a function $\nu:\Sigma \rightarrow
[-\infty,\infty]$ that satisfies $\nu(\emptyset)=0$ and countable additivity.
\end{definition}
\begin{fact}\label{fact:sp-char-smadd}
If $\nu_1$ and $\nu_2$ are finite (signed) measures, then $\nu_3(X)=\nu_1(X)-\nu_2(X)$ is a finite signed measure.
\end{fact}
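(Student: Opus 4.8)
The plan is to verify directly that $\nu_3$ satisfies the defining properties of a finite signed measure. First, finiteness and well-definedness: since $\nu_1$ and $\nu_2$ are finite, $\nu_1(X)$ and $\nu_2(X)$ are finite real numbers for every measurable set $X$, so $\nu_3(X) = \nu_1(X) - \nu_2(X)$ is well-defined (there is no $\infty - \infty$ ambiguity), maps into $\mathbb{R} \subseteq [-\infty,\infty]$, and in particular satisfies $|\nu_3(U)| \le |\nu_1(U)| + |\nu_2(U)| < \infty$. Second, $\nu_3(\emptyset) = \nu_1(\emptyset) - \nu_2(\emptyset) = 0 - 0 = 0$.

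The only step with any content is countable additivity. I would take a countable family $\{X_k\}_{k \in Z}$ of pairwise-disjoint sets in $\Sigma$ and use countable additivity of $\nu_1$ and $\nu_2$ to write $\nu_1(\cup_k X_k) = \sum_k \nu_1(X_k)$ and $\nu_2(\cup_k X_k) = \sum_k \nu_2(X_k)$. Since finite (signed) measures have bounded total variation, both series converge absolutely, so they may be combined term by term, giving $\nu_3(\cup_k X_k) = \sum_k \nu_1(X_k) - \sum_k \nu_2(X_k) = \sum_k \bigl(\nu_1(X_k) - \nu_2(X_k)\bigr) = \sum_k \nu_3(X_k)$, which is exactly countable additivity for $\nu_3$. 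Together with the two observations above, this shows $\nu_3$ is a finite signed measure.

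If one prefers to avoid invoking total variation, the nonnegative-measure case is immediate — the series $\sum_k \nu_1(X_k)$ and $\sum_k \nu_2(X_k)$ have nonnegative terms bounded above by $\nu_1(U)$ and $\nu_2(U)$ respectively, hence converge and may be rearranged freely — and the general signed case reduces to this by applying the nonnegative case to the Jordan decompositions of $\nu_1$ and $\nu_2$. The main obstacle, to the extent there is one, is merely being careful that this termwise manipulation of infinite series is legitimate, which it is precisely because finiteness forces absolute (hence unconditional) convergence; the rest of the argument is purely formal. This fact is exactly what is needed, e.g., in Lemma~\ref{lem:mubpsci} to conclude that $\mubpsci$ is a finite signed measure.
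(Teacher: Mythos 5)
Your proof is correct. Note, though, that the paper never proves this statement: it is listed in Appendix~\ref{sec:a-analysis} as a background fact from measure theory and used as such (e.g., implicitly alongside Lemma~\ref{lem:mubpsci}), so there is no in-paper argument to compare against; your direct verification of the three defining properties is exactly the standard argument one would supply. One small simplification: for the countable-additivity step you do not actually need absolute convergence or total variation --- since $\sum_k \nu_1(X_k)$ and $\sum_k \nu_2(X_k)$ converge (to $\nu_1(\cup_k X_k)$ and $\nu_2(\cup_k X_k)$) for \emph{every} enumeration of the disjoint family, linearity of limits of partial sums already gives $\sum_k \nu_3(X_k)=\nu_3(\cup_k X_k)$ unconditionally; your Jordan-decomposition fallback is a fine alternative but likewise more machinery than required.
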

\paragraph{Convention}
According to standard convention, a measure is not signed unless explicitly stated.
For the purposes of this paper, the set $U$ will always be $\Re^n$. 

Apart from the set collections defined via $\sigma$-algebras, we also need some
weaker set collections, which we define below. 
\begin{definition}[$\pi$-system]
The $\pi$-system over a set $U$ is a non-empty collection $P$ of
subsets of $U$ that is closed under finite intersection of its members, i.e.,
$X_1 \cap X_2 \in P$ whenever $X_1$ and $X_2 \in P$. 
\end{definition}
\begin{definition}[$\lambda$-system, or Dynkin system]
The $\lambda$-system over a set $U$ is a non-empty collection $L$ of
subsets of $U$ that is closed under complementation and countable disjoint union of its
members.
\end{definition}
\begin{fact}[Dynkin's theorem]
If $P$ is a $\pi$-system and $L$ is a $\lambda$-system over the same set $U$, 
and $P \subseteq L$, then $\sigma(P) \subseteq L$, i.e., the $\sigma$-algebra generated by $P$ is
contained in $L$. 
\end{fact}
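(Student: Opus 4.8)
The plan is the classical $\lambda$-system bootstrap (the measure-theoretic monotone class argument). Let $\ell(P)$ denote the smallest $\lambda$-system over $U$ containing $P$, i.e. the intersection of all $\lambda$-systems over $U$ that contain $P$; this intersection is itself a $\lambda$-system and contains $P$, so it is well defined and nonempty. Since $L$ is one of the $\lambda$-systems being intersected, $P \subseteq \ell(P) \subseteq L$, so it suffices to prove $\sigma(P) \subseteq \ell(P)$. For that it is in turn enough to show that $\ell(P)$ is itself a $\sigma$-algebra, because a $\sigma$-algebra that contains $P$ automatically contains $\sigma(P)$.

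The first ingredient I would establish is the elementary observation that \emph{a $\lambda$-system that is also closed under finite intersections is a $\sigma$-algebra}. From nonemptiness and closure under complementation one gets $U \in \ell(P)$ (write $U = X \sqcup X^c$ for any $X\in\ell(P)$) and hence $\emptyset \in \ell(P)$. Closure under finite intersection then gives closure under differences of nested sets, since for $E \subseteq F$ one has $F \setminus E = (F^c \sqcup E)^c$ with $F^c$ and $E$ disjoint. This in turn yields closure under finite unions via $A \cup B = (A \setminus (A\cap B)) \sqcup (B \setminus (A\cap B)) \sqcup (A \cap B)$, a disjoint decomposition. Finally, disjointifying an arbitrary countable union in the usual way upgrades closure under countable \emph{disjoint} unions to closure under arbitrary countable unions. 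Thus the whole statement reduces to showing that \emph{$\ell(P)$ is closed under finite intersections}.

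This last point is the one genuinely non-routine step, and I would prove it by invoking the minimality of $\ell(P)$ twice through two auxiliary collections. First set $D_1 = \{\,A \in \ell(P) : A \cap B \in \ell(P)\text{ for all }B \in P\,\}$, and check that $D_1$ is a $\lambda$-system: $U \in D_1$ because $U \cap B = B \in P \subseteq \ell(P)$; if $A \in D_1$ and $B \in P$ then $(U\setminus A)\cap B = B \setminus (A \cap B)$, a difference of the nested members $A\cap B \subseteq B$ of the $\lambda$-system $\ell(P)$, hence in $\ell(P)$, so $U\setminus A \in D_1$; and intersection with a fixed $B$ commutes with countable disjoint unions. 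Because $P$ is a $\pi$-system, $P \subseteq D_1$, so minimality forces $\ell(P) \subseteq D_1$ — equivalently, $A \cap B \in \ell(P)$ whenever $A \in \ell(P)$ and $B \in P$. Now repeat one level up with $D_2 = \{\,B \in \ell(P) : A \cap B \in \ell(P)\text{ for all }A \in \ell(P)\,\}$: the previous line shows $P \subseteq D_2$, the same three checks show $D_2$ is a $\lambda$-system, and minimality again gives $\ell(P) \subseteq D_2$, which says precisely that $\ell(P)$ is closed under finite intersections. Combined with the second paragraph, $\ell(P)$ is a $\sigma$-algebra, and the theorem follows.

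I expect the only real obstacle to be the verification of the $\lambda$-system axioms for $D_1$ and $D_2$, in particular the complementation axiom, where the trick is to rewrite $(U\setminus A)\cap B$ as $B \setminus (A\cap B)$ and invoke the ``difference of nested sets'' closure derived for $\ell(P)$. Everything else is bookkeeping with elementary set algebra; no analysis is involved.
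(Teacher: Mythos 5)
Your argument is the classical $\pi$--$\lambda$ bootstrap and it is correct; note that the paper itself states this only as a background fact in Appendix~\ref{sec:a-analysis} and gives no proof, so there is nothing to compare it against. One small remark on wording: in your second paragraph you present closure under nested differences ($F\setminus E=(F^c\sqcup E)^c$ for $E\subseteq F$) as a consequence of intersection-closure, but the identity you give uses only the $\lambda$-system axioms (complementation and disjoint unions, after obtaining $U$ and $\emptyset$); this matters, because in the $D_1$/$D_2$ step you apply nested-difference closure to $\ell(P)$ \emph{before} knowing it is intersection-closed, and it is precisely the fact that the property needs only the $\lambda$-axioms that keeps that step non-circular. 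With that clarification the proof is complete as proposed.
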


The Hahn and Jordan decompositions decompose a signed measure into two measures. They will be useful when we discuss the integral with respect to a signed measure.
\begin{fact}[Hahn decomposition theorem]\label{fact:ana-hahn} 
The {\em Hahn decomposition} of a signed measure $\nu$ over a measurable space
$(U,\Sigma)$ consists of two sets $P, N \in \Sigma$ such that
$P\cup N=U$, $P\cap N=\emptyset$, and for all measurable sets $X \subseteq P$,
$\nu(X)\geq0$ and for all measurable sets $X \subseteq N$, $\nu(X)\leq0$. 
The Hahn decomposition is guaranteed to exist
and be unique (up to a set of measure 0) 
\end{fact}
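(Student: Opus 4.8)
The plan is to carry out the classical exhaustion (or ``peeling'') argument. Adopt the usual convention that a signed measure omits at least one of the values $\pm\infty$; say $\nu$ never equals $-\infty$ (the other case is symmetric, working with $-\nu$). Call a measurable set $A$ \emph{negative} if $\nu(X)\le 0$ for every measurable $X\subseteq A$, and \emph{positive} if $\nu(X)\ge 0$ for every measurable $X\subseteq A$. First I would record two elementary closure facts: (i) a countable union $\bigcup_j E_j$ of negative sets is negative --- rewrite it as a countable \emph{disjoint} union of pieces, each contained in some $E_j$, and apply countable additivity; and (ii) if $A\subseteq A'$ are both negative, then $\nu(A')=\nu(A)+\nu(A'\setminus A)\le\nu(A)$ since $A'\setminus A\subseteq A'$.

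The crux is the following lemma: every measurable set $A$ with $\nu(A)<0$ contains a negative set $B$ with $\nu(B)\le\nu(A)$. To prove it I would greedily strip off positive chunks. If $A$ is already negative, take $B=A$. Otherwise let $n_1$ be the smallest positive integer for which some $A_1\subseteq A$ has $\nu(A_1)\ge 1/n_1$, remove $A_1$, and iterate inside $A\setminus A_1$ to get $n_2,A_2$, and so on. Put $B=A\setminus\bigcup_k A_k$. Because $\nu(A)$ is finite (it is negative, hence not $+\infty$, and not $-\infty$ by our convention), countable additivity forces $\sum_k\nu(A_k)$ to converge; hence $\nu(A_k)\to 0$, so $n_k\to\infty$. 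This in turn makes $B$ negative: if some $C\subseteq B$ had $\nu(C)>0$, then for large $k$ we would have $\nu(C)\ge 1/(n_k-1)$ with $C\subseteq A\setminus(A_1\cup\cdots\cup A_{k-1})$, contradicting the minimality of $n_k$. Finally $\nu(B)=\nu(A)-\sum_k\nu(A_k)\le\nu(A)$.

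With the lemma available, set $m=\inf\{\nu(A):A\text{ negative}\}$; note $m\le 0$ since $\emptyset$ is negative. Choose negative sets $E_j$ with $\nu(E_j)\to m$ and let $N=\bigcup_j E_j$. By closure fact (i), $N$ is negative, and by (ii), $\nu(N)\le\nu(E_j)$ for every $j$, so $\nu(N)=m$; in particular $m>-\infty$. Let $P=U\setminus N$. If $P$ were not positive, some $A\subseteq P$ would have $\nu(A)<0$, and the lemma would give a negative $B\subseteq A\subseteq P$ with $\nu(B)<0$; then $N\cup B$ is negative (disjoint union of negative sets) with $\nu(N\cup B)=m+\nu(B)<m$, contradicting the definition of $m$. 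Hence $P$ is positive, $N$ is negative, and $\{P,N\}$ partitions $U$.

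For uniqueness, suppose $(P_1,N_1)$ and $(P_2,N_2)$ are two such decompositions. Any measurable $E\subseteq P_1\cap N_2$ satisfies both $\nu(E)\ge 0$ (as $E\subseteq P_1$) and $\nu(E)\le 0$ (as $E\subseteq N_2$), so $\nu(E)=0$; thus $P_1\setminus P_2$ is a null set, and symmetrically so are $P_2\setminus P_1$, $N_1\setminus N_2$, and $N_2\setminus N_1$ --- which is exactly uniqueness up to a set of measure zero. The one genuinely delicate point is the peeling lemma: it requires tracking that every set encountered has finite $\nu$-measure so that the subtractions and the convergence of $\sum_k\nu(A_k)$ are legitimate, together with the ``least $n_k$'' bookkeeping needed to extract the final contradiction. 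Everything else is routine manipulation of countable additivity.
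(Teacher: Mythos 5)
Your proof is correct: it is the classical exhaustion ("peeling") argument for the Hahn decomposition, with the two delicate points handled properly --- the finiteness of $\nu(A)$ (and the fact that $\nu$ omits $-\infty$) justifying the convergence of $\sum_k\nu(A_k)$ in the peeling lemma, and the minimal-$n_k$ bookkeeping used to show the residual set $B$ is negative --- together with the standard symmetric-difference argument for uniqueness up to $\nu$-null sets. Note, however, that the paper offers no proof to compare against: it states this as a background Fact from measure theory (Appendix~\ref{sec:a-analysis}), so your argument simply supplies the standard textbook proof of that cited fact.
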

\begin{fact}[Jordan decomposition theorem]\label{fact:ana-jordan} This
theorem is a consequence of Hahn decomposition theorem, and states that 
every signed measure $\nu$ can be decomposed as two (non-negative) measures
$\nu^+(X)=\nu(X\cap P)$ and $\nu^-(X)=-\nu(X\cap N)$, where $P$ and $N$ are the
Hahn decomposition of $\nu$. The measures satisfy $\nu(X)=\nu^+(X)-\nu^-(X)$.
The Jordan decomposition is guaranteed to exist and to be unique, and at least
one of $\nu^+$ and $\nu^-$ is guaranteed to be a finite measure. If $\nu$ is
finite, then both $\nu^+$ and $\nu^-$ are finite.
\end{fact}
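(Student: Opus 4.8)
The plan is to obtain the Jordan decomposition as an immediate corollary of the Hahn decomposition (Fact~\ref{fact:ana-hahn}), which is already in hand. Fix a signed measure $\nu$ on a measurable space $(U,\Sigma)$, and let $(P,N)$ be a Hahn decomposition of $\nu$, so that $P\cup N=U$, $P\cap N=\emptyset$, $\nu(X)\ge 0$ for every measurable $X\subseteq P$, and $\nu(X)\le 0$ for every measurable $X\subseteq N$. Define $\nu^+(X)=\nu(X\cap P)$ and $\nu^-(X)=-\nu(X\cap N)$ for $X\in\Sigma$. First I would verify that $\nu^+$ and $\nu^-$ are non-negative measures: non-negativity is immediate since $X\cap P\subseteq P$ forces $\nu(X\cap P)\ge 0$ and $X\cap N\subseteq N$ forces $\nu(X\cap N)\le 0$; the empty-set axiom follows from $\nu(\emptyset)=0$; and countable additivity is inherited from $\nu$, because if $\{X_i\}$ are pairwise disjoint then so are $\{X_i\cap P\}$, with union $(\bigcup_i X_i)\cap P$, giving $\nu^+(\bigcup_i X_i)=\sum_i\nu^+(X_i)$, and symmetrically for $\nu^-$.

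Next I would establish the decomposition identity. Since $X=(X\cap P)\sqcup(X\cap N)$ is a disjoint union, additivity of $\nu$ gives $\nu(X)=\nu(X\cap P)+\nu(X\cap N)=\nu^+(X)-\nu^-(X)$, exactly as asserted.

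For the finiteness claims I would invoke the standard convention that a signed measure takes at most one of the values $+\infty$ and $-\infty$. If $\nu$ never equals $-\infty$, then $\nu(N)>-\infty$, and since $\nu(N)\le 0$ this forces $\nu^-(U)=-\nu(N)\in[0,\infty)$, so $\nu^-$ is finite; symmetrically, if $\nu$ never equals $+\infty$ then $\nu^+(U)=\nu(P)<\infty$ and $\nu^+$ is finite. At least one of these cases always holds, so at least one of $\nu^+,\nu^-$ is finite; and if $\nu$ itself is finite then both $\nu(P)$ and $\nu(N)$ are finite, whence both $\nu^+$ and $\nu^-$ are finite measures.

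Finally, for uniqueness I would note that $\nu^+$ is concentrated on $P$ and $\nu^-$ on $N$, with $P,N$ disjoint, so the pair is mutually singular; given any other representation $\nu=\mu^+-\mu^-$ as a difference of mutually singular non-negative measures supported on a disjoint pair $(P',N')$, a routine argument shows that the symmetric difference $(P\setminus P')\cup(P'\setminus P)$ is $\nu$-null, which yields $\mu^+=\nu^+$ and $\mu^-=\nu^-$; the phrase ``unique up to a set of measure $0$'' then just records that the Hahn sets themselves are only determined up to a $\nu$-null set. The only place any care is needed is this last step, which leans on the notion of mutual singularity and on the uniqueness clause of Fact~\ref{fact:ana-hahn}; the rest is bookkeeping inherited directly from $\nu$, so I do not expect a real obstacle.
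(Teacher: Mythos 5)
Your proposal is correct, and it follows exactly the route the paper gestures at: the paper states this Jordan decomposition only as a background fact, explicitly ``a consequence of Hahn decomposition theorem'' (Fact~\ref{fact:ana-hahn}), with no proof of its own, and your derivation --- defining $\nu^+(X)=\nu(X\cap P)$, $\nu^-(X)=-\nu(X\cap N)$, checking measure axioms, additivity over $X=(X\cap P)\sqcup(X\cap N)$, and uniqueness among mutually singular decompositions via the essential uniqueness of the Hahn sets --- is the standard argument. The only point worth recording is that your finiteness step uses the (provable, standard) convention that a signed measure attains at most one of the values $\pm\infty$, which the paper's bare definition of a signed measure does not state explicitly; this is not a gap in substance.
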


\begin{definition}[Characteristic Function]\label{def:ana-char-func}
The {\em characteristic function} $1_S(x)$ of a set $S$ is the function that 
is 1 if $x\in S$ and zero elsewhere, i.e.
$$1_S(x)=\begin{cases}
1,&x\in S\\
0,&otherwise.
\end{cases}$$
\end{definition}

\begin{definition}[Simple Function]\label{def:ana-simp-func}
Given a measurable space $(U,\Sigma)$, a function $s:U\rightarrow \Re$ is a {\em simple function} 
if it can be written as a finite linear combination of indicator function of
measurable sets. That is, $$s(x)=\sum_{k=1}^n a_k1_{S_k}(x)$$
for finite sequences of measurable sets $\{S_k\} \in \Sigma$ and coefficients $\{a_k\} \in
R$.
\end{definition}

\begin{fact}\label{fact:f-lim-simp}
For any non-negative, measurable function $f$, there is a monotonic increasing sequence of 
non-negative simple functions $\{s_k\}$ such that
$$f=\lim_{k\rightarrow\infty}s_k\enspace.$$
\end{fact}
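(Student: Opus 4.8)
The plan is to exhibit the sequence explicitly by dyadic truncation and then to verify, in turn, that each term is a non-negative simple function, that the sequence is nondecreasing, and that it converges pointwise to $f$; no part of the argument is deep, so I would present the construction first and check the three properties afterwards.

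First I would fix the measurable space $(U,\Sigma)$ on which $f$ is defined and, for each $k = 1, 2, \dots$, set
$$s_k(x) = \min\big(k,\ 2^{-k}\lfloor 2^k f(x)\rfloor\big)\enspace.$$
Equivalently, let $I_{k,j} = [\,j 2^{-k},\,(j+1)2^{-k})$ for $j = 0,\dots,k2^k-1$, put $A_{k,j} = f^{-1}(I_{k,j})$ and $B_k = f^{-1}([k,\infty])$, and write $s_k = \sum_{j=0}^{k2^k-1} j2^{-k}\,1_{A_{k,j}} + k\,1_{B_k}$. Since $f$ is measurable, every $A_{k,j}$ and $B_k$ lies in $\Sigma$, so $s_k$ is a finite linear combination of characteristic functions of measurable sets and hence a simple function in the sense of Definition~\ref{def:ana-simp-func}; all coefficients are nonnegative, so $s_k \ge 0$.

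Second I would prove $s_k(x) \le s_{k+1}(x)$ for every $x$. The one arithmetic fact needed is that refining the grid cannot decrease the floor: writing $2^k t = m + r$ with $m = \lfloor 2^k t\rfloor$ and $0 \le r < 1$ gives $\lfloor 2^{k+1}t\rfloor = 2m + \lfloor 2r\rfloor \ge 2m$, hence $2^{-k-1}\lfloor 2^{k+1}t\rfloor \ge 2^{-k}\lfloor 2^k t\rfloor$ for all $t \ge 0$. A three-way case split on whether $f(x) \ge k+1$, $k \le f(x) < k+1$, or $f(x) < k$ — using in the middle case that $2^{-k-1}\lfloor 2^{k+1}f(x)\rfloor \ge 2^{-k-1}\lfloor 2^{k+1}k\rfloor = k$, and in the last case that $2^{-k-1}\lfloor 2^{k+1}f(x)\rfloor \le f(x) < k+1$ so no truncation occurs at level $k+1$ — then yields monotonicity.

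Third I would check pointwise convergence. If $f(x) < \infty$, then for all $k > f(x)$ no truncation occurs, so $0 \le f(x) - s_k(x) = f(x) - 2^{-k}\lfloor 2^k f(x)\rfloor < 2^{-k} \to 0$; and if $f(x) = +\infty$ (in the extended-real-valued case) then $s_k(x) = k \to \infty = f(x)$. Thus $f = \lim_{k\to\infty} s_k$ pointwise, completing the proof. The only obstacle here is bookkeeping — getting the monotonicity case split right, since the increase from $s_k$ to $s_{k+1}$ comes both from grid refinement and from the rising truncation level $k$ — but there is no conceptual difficulty; the entire content of the fact lies in choosing the right dyadic sequence.
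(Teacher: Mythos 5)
Your construction and verification are correct: the dyadic truncation $s_k=\min\bigl(k,2^{-k}\lfloor 2^k f\rfloor\bigr)$ is simple and measurable by Definition~\ref{def:ana-simp-func}, your floor-refinement inequality plus the three-way case split gives monotonicity, and the pointwise convergence argument (including the $f(x)=+\infty$ case) is sound. The paper states this as a standard background fact without proof, and what you have written is precisely the standard textbook argument one would cite for it, so there is nothing to reconcile.
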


\begin{definition}[Integral]\label{def:integral} Given a measurable space $(U,\Sigma)$, the 
integral of a function $f:U\rightarrow \Re$ with respect to a measure $\nu$ 
is defined incrementally. For any measurable set $X$, the integral of $1_X$ is
$$\int_U 1_Xd\nu=\nu(X)\enspace.$$
For any simple function $s:U\rightarrow \Re$,
$$\int_U sd\nu=\sum_{k=1}^na_k\nu(X_k)\enspace.$$
For a general non-negative function $f:U\rightarrow \Re$,
$$\int_U fd\nu=\sup\left\{\int_U sd\nu\quad:\quad0\leq s\leq f\mbox{ and }s\mbox{ is simple}\right\}\enspace.$$
For general $f$, let $f^+(x)=\max(f(x),0)$ and $f^-(x)=\max(-f(x),0)$, i.e. $f^+$ and $f^-$ are the positve and negative parts of $f$ respectively. Then
$$\int_U fd\nu=\int_U f^+d\nu-\int_U f^-d\nu\enspace.$$
Finally, for some measurable set $Y$, 
$$\int_Y fd\nu=\int_U fd\nu_Y$$
where $\nu_Y(X)=\nu(U\cap Y)$.
\end{definition}

\begin{fact}[Monotone Convergence Theorem]
For any countable, monotone sequence of measurable functions $\{f_k\}$ (that is, sequence where $f_k\geq f_{k-1}$ pointwise),
\[\lim_{k\rightarrow\infty}\int f_kd\nu=\int\lim_{k\rightarrow\infty}f_kd\nu\enspace.\]
\end{fact}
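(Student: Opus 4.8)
The plan is to prove the two inequalities $\lim_{k\to\infty}\int f_k\,d\nu \le \int f\,d\nu$ and $\lim_{k\to\infty}\int f_k\,d\nu \ge \int f\,d\nu$ separately, where $f=\lim_k f_k$ is the pointwise limit, which exists in $[0,\infty]$ precisely because the sequence is monotone (I work with the relevant case in which the $f_k$ are nonnegative, as is implicit in the applications here). The first inequality is immediate from Definition~\ref{def:integral}: since $f_k\le f_{k+1}\le f$ pointwise, every simple function $s$ with $0\le s\le f_k$ also satisfies $0\le s\le f$, so the integral is monotone in its integrand; hence $\bigl\{\int f_k\,d\nu\bigr\}$ is nondecreasing and bounded above by $\int f\,d\nu$, so its limit exists and is at most $\int f\,d\nu$.

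The substantive direction is $\lim_k\int f_k\,d\nu\ge\int f\,d\nu$. Because $\int f\,d\nu$ is by definition the supremum of $\int s\,d\nu$ over simple $s$ with $0\le s\le f$, it suffices to fix one such $s=\sum_j a_j 1_{S_j}$ and show $\lim_k\int f_k\,d\nu\ge\int s\,d\nu$. Fix also a constant $c\in(0,1)$ and set $E_k=\{x:f_k(x)\ge c\,s(x)\}$. I would argue that $E_k$ increases to all of $U$: where $s(x)=0$ we have $c\,s(x)=0\le f_k(x)$ for every $k$, and where $s(x)>0$ we have $c\,s(x)<s(x)\le f(x)=\lim_k f_k(x)$, so $f_k(x)\ge c\,s(x)$ for all large $k$; monotonicity of $\{f_k\}$ makes the $E_k$ nested. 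Then
\[\int f_k\,d\nu\ \ge\ \int_{E_k} f_k\,d\nu\ \ge\ c\int_{E_k} s\,d\nu\ =\ c\sum_j a_j\,\nu(S_j\cap E_k)\enspace.\]

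The final step is to pass to the limit, first in $k$ and then in $c$. By countable additivity of $\nu$, since $S_j\cap E_k\uparrow S_j$ we get $\nu(S_j\cap E_k)\to\nu(S_j)$ (continuity of measure from below), so $\lim_k\sum_j a_j\nu(S_j\cap E_k)=\int s\,d\nu$; combined with the display this gives $\lim_k\int f_k\,d\nu\ge c\int s\,d\nu$, and letting $c\to1$ yields $\lim_k\int f_k\,d\nu\ge\int s\,d\nu$. Taking the supremum over all simple $0\le s\le f$ completes the proof. I expect the only delicate point to be the bookkeeping in this $c$-truncation argument --- verifying $E_k\uparrow U$ and the continuity of $\int_{E_k}s\,d\nu$ in $k$ --- together with checking the degenerate cases ($a_j=0$ contributes nothing, and $\nu(S_j)=\infty$ forces both sides to $\infty$, consistently); there is no conceptual obstacle beyond this.
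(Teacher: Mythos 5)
The paper itself offers no proof of this statement: it appears among the background Facts in Appendix~\ref{sec:a-analysis} and is invoked as the classical Monotone Convergence Theorem from measure theory, so there is no in-paper argument to compare yours against. Your proposal is the standard textbook proof and it is correct given the paper's definitions: the easy inequality $\lim_k\int f_k\,d\nu\le\int f\,d\nu$ follows from monotonicity of the integral under the supremum-over-simple-functions definition (Definition~\ref{def:integral}), and the substantive inequality follows from the truncation sets $E_k=\{x:f_k(x)\ge c\,s(x)\}$ for a fixed simple $0\le s\le f$ and $c\in(0,1)$, continuity of the measure from below (which is exactly countable additivity applied to $S_j\cap E_k\uparrow S_j$), exchanging the limit in $k$ with the finite sum over $j$, and finally letting $c\to1$ and taking the supremum over $s$. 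Two points you flag are genuinely needed and worth keeping explicit: the restriction to nonnegative $f_k$ (and to an unsigned measure $\nu$, per the paper's convention) --- the Fact as stated in the paper omits nonnegativity, without which the claim can fail (e.g., functions of integral $-\infty$ increasing pointwise to $0$) --- and the verification that $E_k$ is nested with union all of $U$, including the case $s(x)=0$. Since the paper only uses this Fact for nonnegative integrands (notably to deduce Fact~\ref{fct:ana-integral-countadd}), your restriction matches the intended use, and your argument is self-contained and sound.
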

The following fact follows because $g_k=\sum_{i=1}^kf_i$ satisfies the monotone convergence theorem:
\begin{fact}\label{fct:ana-integral-countadd}
For any countable sequence of nonnegative measurable functions $\{f_k\}$
\[\sum_{k=1}^\infty\int f_kd\nu=\int\sum_{k=1}^\infty f_kd\nu\enspace.\]
\end{fact}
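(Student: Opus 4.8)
The plan is to apply the Monotone Convergence Theorem to the sequence of partial sums, exactly as anticipated by the sentence preceding the statement. First I would set $g_k = \sum_{i=1}^k f_i$ for each $k\ge 1$. Since each $f_i$ is a nonnegative measurable function, each $g_k$ is a finite sum of nonnegative measurable functions and is therefore itself nonnegative and measurable; moreover, because $f_{k+1}\ge 0$ pointwise, $g_{k+1}=g_k+f_{k+1}\ge g_k$ pointwise, so $\{g_k\}$ is a monotone increasing sequence of measurable functions. Its pointwise limit is, by the very definition of an infinite series, $\lim_{k\to\infty}g_k=\sum_{i=1}^\infty f_i$ (allowing the value $+\infty$, which is harmless since the integral of a nonnegative function is defined in $[0,\infty]$).

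Next I would invoke finite additivity (linearity) of the integral for nonnegative measurable functions to write $\int g_k\,d\nu=\int\sum_{i=1}^k f_i\,d\nu=\sum_{i=1}^k\int f_i\,d\nu$. This is the only ingredient that is not completely immediate, and it is a standard consequence of Definition~\ref{def:integral}: for characteristic functions it follows from (finite) additivity of $\nu$, it then holds for simple functions by definition, and it extends to general nonnegative measurable functions by approximating each $f_i$ from below by an increasing sequence of simple functions (Fact~\ref{fact:f-lim-simp}) and passing to suprema. I would either supply this one-line argument or simply cite it as standard.

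Finally, applying the Monotone Convergence Theorem to $\{g_k\}$ yields
\[
\int\sum_{i=1}^\infty f_i\,d\nu=\int\lim_{k\to\infty}g_k\,d\nu=\lim_{k\to\infty}\int g_k\,d\nu=\lim_{k\to\infty}\sum_{i=1}^k\int f_i\,d\nu=\sum_{i=1}^\infty\int f_i\,d\nu,
\]
which is the claimed identity. There is no real obstacle here; the only points requiring care are bookkeeping ones, namely having finite-sum linearity of the integral for nonnegative measurable functions on hand (it is not listed as a separate fact in the excerpt), and being comfortable carrying the value $+\infty$ through the chain of equalities, which the conventions for integrals of nonnegative functions permit.
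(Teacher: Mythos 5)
Your proof is correct and follows exactly the path the paper indicates in the sentence preceding the fact: apply the Monotone Convergence Theorem to the partial sums $g_k=\sum_{i=1}^k f_i$. The extra care you take to justify finite additivity of the integral for nonnegative measurable functions is reasonable bookkeeping but does not change the route.
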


\begin{fact}\label{fact:ana-int-cu} Let $\{X_k\}$ be a countable sequence of disjoint sets. Then
$$\sum_k\int_{X_k} fd\nu=\int_{\cup_kX_k}fd\nu\enspace.$$
\end{fact}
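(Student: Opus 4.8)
The plan is to reduce the claim to the countable additivity of integration for nonnegative functions (Fact~\ref{fct:ana-integral-countadd}), exploiting the disjointness of the $X_k$ and the elementary identity $\int_X g\,d\nu=\int g\cdot 1_X\,d\nu$.

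First I would check that for every measurable set $X$ and every $f$ for which the integrals make sense, $\int_X f\,d\nu=\int f\cdot 1_X\,d\nu$. By Definition~\ref{def:integral}, $\int_X f\,d\nu=\int f\,d\nu_X$ with $\nu_X(A)=\nu(A\cap X)$; for a characteristic function $1_A$ both sides equal $\nu(A\cap X)$, and the identity then extends to simple functions by linearity, to nonnegative measurable $f$ via the supremum in Definition~\ref{def:integral} (equivalently, by the Monotone Convergence Theorem applied to an increasing sequence of simple functions approximating $f$, cf.\ Fact~\ref{fact:f-lim-simp}), and to general integrable $f$ by taking positive and negative parts. This is pure bookkeeping.

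Next, for nonnegative $f$: since the $X_k$ are pairwise disjoint, $\sum_k 1_{X_k}=1_{\cup_k X_k}$ pointwise, hence $\sum_k f\cdot 1_{X_k}=f\cdot 1_{\cup_k X_k}$, a countable sum of nonnegative measurable functions. Applying Fact~\ref{fct:ana-integral-countadd} to $\{f\cdot 1_{X_k}\}_k$ gives
\[\sum_k\int_{X_k}f\,d\nu=\sum_k\int f\cdot 1_{X_k}\,d\nu=\int\sum_k f\cdot 1_{X_k}\,d\nu=\int f\cdot 1_{\cup_k X_k}\,d\nu=\int_{\cup_k X_k}f\,d\nu,\]
which is the assertion for $f\ge 0$. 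For general integrable $f$, I would apply this to $f^+$ and $f^-$ separately and subtract; integrability of $f$ makes $\int f^+\,d\nu$ and $\int f^-\,d\nu$ finite, so the subtraction is well defined and no $\infty-\infty$ arises. (If $\nu$ is permitted to be a signed measure, one first passes to its Jordan decomposition $\nu=\nu^+-\nu^-$ via Fact~\ref{fact:ana-jordan} and argues with the two nonnegative measures.) There is no real obstacle here; the only thing to watch is this finiteness/sign bookkeeping and the matching of the paper's restricted-measure notation for $\int_Y$ with the multiplicative form $f\cdot 1_Y$.
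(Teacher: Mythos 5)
Your proof is correct. The paper itself states Fact~\ref{fact:ana-int-cu} without proof, treating it as standard background, and your derivation is exactly the natural route the paper implicitly relies on: rewrite $\int_{X_k}f\,d\nu$ as $\int f\cdot 1_{X_k}\,d\nu$, use disjointness to get $\sum_k f\cdot 1_{X_k}=f\cdot 1_{\cup_k X_k}$, and invoke Fact~\ref{fct:ana-integral-countadd} for the nonnegative case, handling general $f$ via $f^+,f^-$ and signed $\nu$ via the Jordan decomposition (Fact~\ref{fact:ana-jordan}). Your bookkeeping step $\int_X f\,d\nu=\int f\cdot 1_X\,d\nu$ correctly interprets the paper's Definition~\ref{def:integral}, whose formula $\nu_Y(X)=\nu(U\cap Y)$ is evidently a typo for $\nu(X\cap Y)$; with that reading, all steps, including the finiteness caveat that avoids $\infty-\infty$, are sound.
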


\begin{definition}[Integral with respect to a Signed Measure] The integral of a function $f$ with respect to a signed measure $\nu$ is
$$\int_U fd\nu=\int_U fd\nu^+-\int_U fd\nu^-\enspace,$$
where $\nu^+$ and $\nu^-$ are the Jordan decomposition of $\nu$.
\end{definition}

\subsubsection{Densities and Derivatives}

\begin{definition}[Absolute continuity]
Given a signed measure $\nu$ and a measure $\mu$ on the same measurable space, 
$\nu$ is absolutely continuous w.r.t. $\mu$, if for every measurable set $V$ where $\mu(V)
= 0$, we have $\nu(V) = 0$.
\end{definition}
We now state below the Radon-Nikodym theorem the way we use it, though the
theorem itself is more general. 
\begin{fact}[Radon-Nikodym Theorem]
The Radon–Nikodym theorem states that given a finite signed measure $\nu$ and a
finite measure
$\mu$ on the same measurable space such that $\nu$
is absolutely continuous w.r.t. $\mu$, the measure $\nu$ has a density, or
``Radon-Nikodym derivative", with respect to $\mu$, i.e., there exists a
$\mu$-measurable function $\rho$ taking values in $[0, \infty]$, such
that for any $\mu$-measurable set $X$ we have
$$\nu(X)=\int_X\rho d\mu\enspace.$$
\end{fact}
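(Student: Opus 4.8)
The plan is to reduce to the case of two finite non-negative measures and then run the classical ``greedy density'' construction, which uses only tools already set up in this appendix (simple functions, the Monotone Convergence Theorem, Fact~\ref{fact:sp-char-smadd}, and the Hahn/Jordan decompositions). First, using the Jordan decomposition (Fact~\ref{fact:ana-jordan}), write $\nu = \nu^+ - \nu^-$ with $\nu^+,\nu^-$ finite non-negative measures; each inherits absolute continuity with respect to $\mu$ because a $\mu$-null set is $\nu$-null and hence $\nu^\pm$-null. Densities $\rho^+,\rho^-$ for $\nu^+,\nu^-$ then give $\rho = \rho^+ - \rho^-$ as a density for $\nu$ (taking values in $[-\infty,\infty]$ in general, and in $[0,\infty]$ exactly when $\nu$ is itself a non-negative measure). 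So it suffices to construct the density for a finite non-negative $\nu$ with $\nu \ll \mu$.

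For that case, I would work with the family $\mathcal{G}$ of non-negative measurable functions $g$ satisfying $\int_X g\, d\mu \leq \nu(X)$ for every measurable $X$. It contains $0$, and every $g\in\mathcal{G}$ obeys $\int_U g\,d\mu \leq \nu(U) < \infty$, so $\alpha := \sup_{g\in\mathcal{G}}\int_U g\, d\mu$ is finite. Two routine observations do most of the work: (i) $\mathcal{G}$ is closed under pointwise maxima --- given $g_1,g_2\in\mathcal{G}$ and a test set $X$, split $X$ according to which of $g_1,g_2$ is larger and add the two defining inequalities; and (ii) taking a maximizing sequence, replacing it by its running maxima to make it increasing, and invoking the Monotone Convergence Theorem, the pointwise limit $\rho$ again lies in $\mathcal{G}$ and satisfies $\int_U\rho\,d\mu=\alpha$. (Measurability of $\rho$ and the interchange of countable sums with integrals are handled by the facts already collected here.)

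The crux is to show that $\rho$ in fact reproduces $\nu$, i.e. that the finite non-negative measure $\nu_0(X) := \nu(X) - \int_X \rho\,d\mu$ (a measure, by Fact~\ref{fact:sp-char-smadd} together with countable additivity of the integral) is identically zero. Suppose not, so $\nu_0(U)>0$; since $\mu(U)<\infty$, choose $\varepsilon>0$ with $\nu_0(U)-\varepsilon\mu(U)>0$ and apply the Hahn decomposition (Fact~\ref{fact:ana-hahn}) to the finite signed measure $\nu_0-\varepsilon\mu$, obtaining a positive set $P$ and a negative set $N$. One checks $\mu(P)>0$: if $\mu(P)=0$ then $\nu_0(P)=0$ (because $0\leq\nu_0\leq\nu\ll\mu$), which would force $(\nu_0-\varepsilon\mu)(U)=(\nu_0-\varepsilon\mu)(N)\leq 0$, contradicting the choice of $\varepsilon$. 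Then $g := \rho + \varepsilon\,1_P$ still belongs to $\mathcal{G}$ --- for any $X$, $\varepsilon\mu(X\cap P)\leq\nu_0(X\cap P)\leq\nu_0(X)$ since $X\cap P\subseteq P$, so $\int_X g\,d\mu\leq\int_X\rho\,d\mu+\nu_0(X)=\nu(X)$ --- yet $\int_U g\,d\mu=\alpha+\varepsilon\mu(P)>\alpha$, contradicting maximality. Hence $\nu_0\equiv 0$, so $\nu(X)=\int_X\rho\,d\mu$ for all measurable $X$. Uniqueness of $\rho$ up to a $\mu$-null set is the standard fact that a function whose integral over every measurable set vanishes must be zero $\mu$-almost everywhere.

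I expect the last paragraph to be the main obstacle, since it is precisely where finiteness of $\mu$, absolute continuity, and the Hahn decomposition must be combined correctly; everything else is the max-closure of $\mathcal{G}$ plus routine convergence and measurability arguments. A cleaner-looking alternative is von Neumann's Hilbert-space proof --- represent the bounded functional $f\mapsto\int f\,d\nu$ on $L^2(\mu+\nu)$ by Riesz, extract $h$ with $0\leq h<1$ almost everywhere, and set $\rho=h/(1-h)$ --- but it invokes $L^2$ machinery not otherwise developed in the paper, so I would favor the greedy construction described above.
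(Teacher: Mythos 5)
Your argument is correct, but note that the paper offers no proof to compare against: the Radon--Nikodym theorem appears in Appendix~\ref{sec:a-analysis} only as an unproved background Fact imported from standard measure theory. What you give is the classical Lebesgue-style construction --- Jordan decomposition to reduce to a finite non-negative $\nu$, the family $\mathcal{G}$ of subdensities closed under pointwise maxima, a maximizing sequence upgraded by running maxima and the Monotone Convergence Theorem, and a Hahn decomposition of $\nu_0-\varepsilon\mu$ to kill any nonzero remainder --- and every step checks out; it is also the proof best matched to the toolkit the paper actually assembles (simple functions, MCT, Hahn/Jordan, Facts~\ref{fact:ana-nnpfzeroae} and~\ref{fct:ana-integral-countadd}), which justifies your decision to avoid von Neumann's $L^2$ route. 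Two points deserve tightening rather than repair. First, the inheritance of absolute continuity by $\nu^{\pm}$ is not quite ``a $\mu$-null set is $\nu$-null and hence $\nu^{\pm}$-null'': $\nu(X)=0$ alone does not control $\nu^{+}(X)=\nu(X\cap P)$; what you need is that every measurable subset of a $\mu$-null set is $\mu$-null (since $\mu\geq0$), so $\nu$ vanishes on $X\cap P$ and $X\cap N$ separately, giving $\nu^{\pm}(X)=0$. Second, since $\nu^{\pm}$ are finite, $\rho^{\pm}$ are $\mu$-integrable and hence finite $\mu$-a.e., which is what makes the difference $\rho^{+}-\rho^{-}$ well defined up to a null set. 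Finally, your parenthetical about the range of $\rho$ is a genuine correction to the Fact as printed: for a signed $\nu$ the density must be allowed negative values, and the paper itself depends on this --- the density $\rhobi$ of the signed measure $\mubpsci$ used in Theorem~\ref{thm:a-sp-char} is negative exactly where the payment coefficients are --- so the stated range $[0,\infty]$ should be read as applying only to non-negative $\nu$.
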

\begin{fact}
If $\rho$ is a Radon-Nikodym derivative of measure $\nu$ w.r.t. measure $\mu$, then
$$\int_X f(x)d\nu=\int_X \rho(x)f(x)d\mu$$
wherever $\int_X f(x)d\nu$ is well defined.
\end{fact}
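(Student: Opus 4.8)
The plan is to use the standard bootstrapping argument that mirrors the inductive definition of the integral (Definition~\ref{def:integral}): establish the identity first for characteristic functions, then for simple functions, then for nonnegative measurable functions via the Monotone Convergence Theorem, and finally for an arbitrary measurable $f$ via the positive/negative part decomposition.

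First I would check the base case $f=1_A$ for a measurable set $A$. Restricting to $X$ and using the defining property of the Radon--Nikodym derivative, $\int_X 1_A\,d\nu=\nu(A\cap X)=\int_{A\cap X}\rho\,d\mu=\int_X\rho\cdot 1_A\,d\mu$, so the two sides agree. Linearity of the integral in the integrand (for both $\nu$ and $\mu$) then extends this to every nonnegative simple function $s=\sum_k a_k 1_{A_k}$, since $\int_X s\,d\nu=\sum_k a_k\int_X 1_{A_k}\,d\nu=\sum_k a_k\int_X\rho 1_{A_k}\,d\mu=\int_X\rho s\,d\mu$. Next, for a general nonnegative measurable $f$, I would invoke Fact~\ref{fact:f-lim-simp} to pick an increasing sequence of nonnegative simple functions $s_k\uparrow f$; since $\rho\geq 0$ the sequence $\rho s_k$ is also increasing and converges pointwise to $\rho f$, so applying the Monotone Convergence Theorem to both sides gives $\int_X f\,d\nu=\lim_k\int_X s_k\,d\nu=\lim_k\int_X\rho s_k\,d\mu=\int_X\rho f\,d\mu$. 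Everything in these steps is nonnegative, so the argument goes through verbatim even if $\rho$ takes the value $+\infty$ on a set (and when $\nu,\mu$ are finite, as in the Radon--Nikodym statement quoted here, $\rho$ is in any case finite $\mu$-a.e.).

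Finally, for arbitrary measurable $f$ I would write $f=f^+-f^-$ with $f^\pm\geq 0$ and apply the nonnegative case to each part: $\int_X f^\pm\,d\nu=\int_X\rho f^\pm\,d\mu$. Wherever $\int_X f\,d\nu$ is well defined --- i.e.\ at least one of $\int_X f^+\,d\nu$ and $\int_X f^-\,d\nu$ is finite --- the matching integral $\int_X\rho f^\pm\,d\mu$ is finite too, so the subtraction is not an indeterminate $\infty-\infty$ and $\int_X f\,d\nu=\int_X\rho f^+\,d\mu-\int_X\rho f^-\,d\mu=\int_X\rho f\,d\mu$. (If one wished to allow $\nu$ to be a signed measure one would first apply the above to its Jordan decomposition $\nu^+,\nu^-$.) There is no real obstacle in this proof; the only point that demands care is precisely the last one --- ensuring that the ``well defined'' hypothesis transfers across the identity so the decomposition does not hit an indeterminate form on either side --- together with the harmless bookkeeping allowing $\rho\in[0,\infty]$ in the nonnegative steps.
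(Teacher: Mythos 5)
Your proof is correct: the standard bootstrapping argument (characteristic functions via the defining property of the Radon--Nikodym derivative, then simple functions by linearity, then nonnegative $f$ by the Monotone Convergence Theorem, then $f=f^+-f^-$ with the well-definedness hypothesis ruling out $\infty-\infty$) is exactly the right way to establish this, and you correctly flag the only delicate points ($\rho$ finite $\mu$-a.e.\ for finite measures, and the transfer of well-definedness across the identity). The paper states this as a background Fact in its analysis appendix without proof, so there is no in-paper argument to compare against; your proof is the standard one and uses precisely the machinery (Definition~\ref{def:integral}, Fact~\ref{fact:f-lim-simp}, and the Monotone Convergence Theorem) that the paper itself assembles in that appendix.
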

\subsection{Almost Everywhere}

\begin{definition}[Almost Everywhere]
A property $P(s)$ is said to hold {\em almost everywhere} on a set $S$ if the
subset of $S$ on which $P(s)$ is false has measure zero (or is contained in a
set that has measure 0). It is abbreviated {\em a.e.}.The exact measure used
will become clear from the context. 
\end{definition}
\begin{definition}[Almost Surely]
If a property $P(s)$ is false with probability 0 with respect to some distribution over $s$, then it is said to hold {\em almost surely}. This is equivalent to saying $P(s)$ is true almost everywhere with respect to the probability measure associated with the distribution.
\end{definition}

\begin{fact}\label{fact:ana-nnpfzeroae} For a non-negative measurable function
$f$ and measure $\mu$, $\int fd\mu=0$ if and only if $f(x)=0$ almost everywhere.
\end{fact}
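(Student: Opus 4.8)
The plan is to prove the biconditional by establishing its two implications separately: the ``if'' direction is immediate from the supremum form of the integral in Definition~\ref{def:integral}, while the ``only if'' direction is the standard Markov-inequality argument combined with countable subadditivity of $\mu$.

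For the ``if'' direction, assume $f=0$ $\mu$-almost everywhere. Any nonnegative simple function $s=\sum_{k=1}^{m}a_k 1_{S_k}$ with $0\le s\le f$ then also vanishes $\mu$-almost everywhere, so every $S_k$ for which $a_k>0$ is contained in the null set $\{x:f(x)>0\}$; hence $\mu(S_k)=0$ and $\int s\,d\mu=\sum_k a_k\mu(S_k)=0$. Since $\int f\,d\mu$ is by definition the supremum of $\int s\,d\mu$ over precisely such simple $s$, we conclude $\int f\,d\mu=0$.

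For the ``only if'' direction, assume $\int f\,d\mu=0$. For each $n=1,2,\dots$ set $A_n=\{x:f(x)>1/n\}$, which is measurable since $f$ is. The simple function $\tfrac1n 1_{A_n}$ satisfies $0\le\tfrac1n 1_{A_n}\le f$, so it is one of the simple functions in the supremum defining $\int f\,d\mu$, whence $\tfrac1n\mu(A_n)=\int\tfrac1n 1_{A_n}\,d\mu\le\int f\,d\mu=0$ and therefore $\mu(A_n)=0$ for every $n$. Since $\{x:f(x)>0\}=\bigcup_{n\ge1}A_n$, countable subadditivity of $\mu$ (a consequence of its countable additivity) gives $\mu(\{f>0\})\le\sum_{n\ge1}\mu(A_n)=0$, i.e.\ $f=0$ almost everywhere. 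There is no real obstacle here; the only points needing a line of justification — that a nonnegative simple function vanishing a.e.\ has zero integral, and that any simple function dominated by $f$ contributes to the supremum defining $\int f\,d\mu$ — are both read off directly from Definition~\ref{def:integral}, and the mildly delicate step is the passage from ``$\mu(A_n)=0$ for all $n$'' to ``$\mu(\{f>0\})=0$'' via countable subadditivity.
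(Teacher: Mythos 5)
Your proof is correct. The paper states Fact~\ref{fact:ana-nnpfzeroae} as background without proof, so there is no argument of the authors to compare against; what you give is the standard one: the ``if'' direction read off the supremum definition of the integral, and the ``only if'' direction via the Chebyshev-type sets $A_n=\{x: f(x)>1/n\}$ together with countable subadditivity. Two small points are worth a word. First, in the ``if'' direction the paper's definition of a simple function allows arbitrary real coefficients and overlapping sets $S_k$, so to conclude that every $S_k$ with $a_k>0$ lies inside $\{x:f(x)>0\}$ you should first pass to a canonical representation of $s$ with nonnegative coefficients (or pairwise disjoint sets); with that standard normalization, your chain $\int s\,d\mu=\sum_k a_k\mu(S_k)=0$ and hence $\int f\,d\mu=0$ is complete. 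Second, since the fact is stated for an arbitrary (not necessarily finite) measure, note that the inequality $\tfrac1n\mu(A_n)\le\int f\,d\mu=0$ also excludes $\mu(A_n)=\infty$, so $\mu(A_n)=0$ holds in full generality and the rest of your argument goes through unchanged.
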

\begin{fact}\label{fact:ana-fzeroae}
For any measurable function $f$ and signed measure $\nu$, if $\int_Xfd\nu=0$ for all measurable $X$, then $f=0$ almost everywhere.
\end{fact}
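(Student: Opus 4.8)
The plan is to deduce the signed‑measure statement from its nonnegative version, Fact~\ref{fact:ana-nnpfzeroae}, by splitting $\nu$ with the Hahn/Jordan machinery. Let $U=P\cup N$ be a Hahn decomposition of $\nu$ (Fact~\ref{fact:ana-hahn}) and $\nu=\nu^+-\nu^-$ the associated Jordan decomposition (Fact~\ref{fact:ana-jordan}), so that $\nu^+,\nu^-$ are finite nonnegative measures concentrated on $P$ and $N$ respectively. Since ``$f=0$ almost everywhere'' here is relative to $\nu$ — i.e.\ off a set that is null for both $\nu^+$ and $\nu^-$ — it suffices to prove that $f=0$ holds $\nu^+$-a.e.\ and, symmetrically, $\nu^-$-a.e.

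First I would check that $\int_X f\,d\nu^+=0$ for every measurable $X$: because $\nu^-$ is concentrated on $N$ and $\nu^+$ on $P$, we have $\int_X f\,d\nu^+=\int_{X\cap P}f\,d\nu^+=\int_{X\cap P}f\,d\nu$, which vanishes by hypothesis. Now specialize to $X=\{f>0\}$; on this set the integrand agrees with the nonnegative function $f^+=\max(f,0)$, so $\int_{\{f>0\}}f^+\,d\nu^+=0$, and Fact~\ref{fact:ana-nnpfzeroae} forces $\nu^+(\{f>0\})=0$. Specializing instead to $X=\{f<0\}$ and arguing with $-f^-$ in place of $f$ gives $\nu^+(\{f<0\})=0$. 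Hence $\nu^+(\{f\neq0\})=0$. Running the identical argument with $N$ and $\nu^-$ in place of $P$ and $\nu^+$ — where now $\int_X f\,d\nu=-\int_X f\,d\nu^-$ — yields $\nu^-(\{f\neq0\})=0$, and the two conclusions together are exactly the claim.

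The only delicate point — the closest thing to an obstacle — is this first reduction, where one invokes finiteness of $\nu$ so that the Jordan pieces $\nu^\pm$ are genuine finite measures and all the integrals above are well defined (and $f$ is implicitly assumed $\nu$-integrable, which is what makes $\int_X f\,d\nu$ meaningful for every $X$ in the first place). That finiteness holds wherever the fact is applied in this paper — $\mub$ is a probability measure and $\mubpsci$ a finite signed measure by Lemma~\ref{lem:mubpsci} — so the reduction is legitimate, and the remainder is just the routine observation that a nonnegative function of zero integral vanishes almost everywhere.
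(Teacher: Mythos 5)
Your argument is correct and is essentially the paper's own: you use the Hahn/Jordan decomposition of $\nu$ together with the sign decomposition of $f$ (the four pieces $P\cap\{f>0\}$, $P\cap\{f<0\}$, $N\cap\{f>0\}$, $N\cap\{f<0\}$) and reduce each to Fact~\ref{fact:ana-nnpfzeroae}. Your added remarks on finiteness of $\nu^{\pm}$ and on interpreting ``almost everywhere'' as null for both $\nu^{+}$ and $\nu^{-}$ are sound and consistent with how the fact is applied in the paper.
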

The second fact follows from the first by a standard argument --- decompose $f$ into its positive and negative parts and decompose $\nu$ according to its Hahn decomposition. This partitions the space into four sets over which the integral may be written as a non-negative function with respect to a non-negative measure. Apply Fact~\ref{fact:ana-nnpfzeroae} to each of the four sets.

\subsection{Extrema}

\begin{definition}[Supremum/Infimum]
For a set $S$, the {\em supremum} of $S$, denoted $\sup S$, is the smallest
value $x$ such that $x\geq s$ for all $s\in S$. Similarly, the {\em infimum} of
$s$ is the largest value $x$ such that $x\leq s$ for all $s\in S$.
\end{definition}
\begin{definition}[Limit Superior/Inferior]
For a real-valued function $f:\Re^n\rightarrow\Re$, the {\em limit superior}, denoted $\limsup_{u\rightarrow b}f(u)$, may be defined as follows:
$$\limsup_{u\rightarrow b}f(u)=\lim_{\epsilon\rightarrow0}\left(\sup_{u\in BALL(b,\epsilon)}f(u)\right)$$
where $BALL(b,\epsilon)$ is the open ball of radius $\epsilon$ centered at $b$.
It is an upper bound on the limit of $f(u_i)$ for any sequence of values
$\{u_i\}$ that converges to $b$. The $\liminf$ is defined similarly. Note that
while the limit may not exist as $u\rightarrow b$, the $\limsup$ and $\liminf$
are always well defined for real-valued functions.
\end{definition}

It is natural to generalize $\sup$ and $\limsup$ to almost everywhere:
\begin{definition}[Essential Supremum/Infimum]
The {\em essential supremum} of a set $S$, denoted $\esssup S$, is the smallest value $x$ such that $x\geq s$ almost everywhere, i.e. the set of values $T=\{s|s\in S\mbox{ and }s>x\}$ has measure zero. The {\em essential infimum} $\essinf$ is defined similarly.
\end{definition}
\begin{definition}[limesssup/limessinf]
For a function $f:\Re^n\rightarrow\Re$, the $\limesssup_{u\rightarrow b}f(u)$ can be defined as follows:
$$\limesssup_{u\rightarrow b}f(u)=\lim_{\epsilon\rightarrow0}\left(\esssup_{u\in BALL(b,\epsilon)}f(u)\right)\enspace.$$
It can be understood as a version of the $\limsup$ that will ignore values that $f(x)$ only attains on sets with measure zero. The $\limessinf$ is defined similarly. Like the $\limsup$ and $\liminf$, the $\limesssup$ and $\limessinf$ are always well defined for real-valued functions.
\end{definition}

\end{document}